%%%%%%%%%%%%%%%%%%%%%%%%%%%%%%%%%%%%%%%%%%%%%%%%%
\documentclass{amsart}
\footskip 1.5cm
\usepackage[latin1]{inputenc}
\usepackage{latexsym}
\usepackage{amsmath}
\usepackage{amssymb}
\usepackage{amsfonts}
\usepackage{mathrsfs}               % \mathscr
\usepackage{bbm}                    % \mathbbm 
\usepackage{bbold}                  % \mathbb 
\usepackage{textcomp}               
\usepackage{amstext}
\usepackage{enumerate}
\usepackage{units}
\usepackage{stmaryrd}
\usepackage{multicol}
\usepackage{enumitem}
%
% THEOREM Environments (Examples)-----------------------------------------
%
 \newtheorem{thm}{Theorem}[section]
 \newtheorem{cor}[thm]{Corollary}
 \newtheorem{lem}[thm]{Lemma}
 \newtheorem{prop}[thm]{Proposition}
 \theoremstyle{definition}
 
 \theoremstyle{remark}
 \newtheorem{rem}[thm]{Remark}
 \newtheorem{ex}[thm]{Example}
 \newtheorem{hyp}[thm]{Hypothesis}

 \numberwithin{equation}{section}
%
%
%
%
%%%%%%%%%%%%%%%%%%%%%%%%%%%%%%%%%%%%%%%%%%%%%
%   
%     --- mathematical symbols ---
%
%
%       fields, expectation, etc.
%
\newcommand{\CC}{\mathbb{C}}
\newcommand{\EE}{\mathbb{E}}

\newcommand{\NN}{\mathbb{N}}
\newcommand{\PP}{\mathbb{P}}

\newcommand{\RR}{\mathbb{R}}

%
%      symbols in mathrm
%
\newcommand{\supp}{\mathrm{supp}}
\newcommand{\dist}{\mathrm{dist}}

\newcommand{\loc}{\mathrm{loc}}
\newcommand{\Div}{{\mathrm{div}}}
\newcommand{\Id}{\mathrm{d}}

%
%     Scalar products 
%

\newcommand{\SPn}[2]{\langle #1|#2\rangle} 
\newcommand{\SPb}[2]{\big\langle #1\big|#2\big\rangle} 
\newcommand{\SPB}[2]{\Big\langle #1\Big|#2\Big\rangle}
%
%    accents, overline, etc.
%
\newcommand{\ol}[1]{\overline{#1}} 
 
\newcommand{\mr}[1]{\mathring{#1}}
\newcommand{\wh}[1]{\widehat{#1}}
\newcommand{\wt}[1]{\widetilde{#1}}
%
%    nice fractions
%
\newcommand{\nf}[2]{\nicefrac{#1}{#2}}
\newcommand{\eh}{{\nf{1}{2}}}
\newcommand{\mh}{{-\nf{1}{2}}}
%
%
%     MATH LETTERS 
%
%        caligraphic letters
%
\newcommand{\cA}{\mathcal{A}}

\newcommand{\cD}{\mathcal{D}} 
\newcommand{\cQ}{\mathcal{Q}}

\newcommand{\cK}{\mathcal{K}}
 
\newcommand{\cM}{\mathcal{M}}       
 
%
%        script letters
%

\newcommand{\sC}{\mathscr{C}}
\newcommand{\sD}{\mathscr{D}} 
\newcommand{\sE}{\mathscr{E}}
\newcommand{\sF}{\mathscr{F}}
\newcommand{\sS}{\mathscr{S}}

\newcommand{\sU}{\mathscr{U}}

\newcommand{\sK}{\mathscr{K}}\newcommand{\sW}{\mathscr{W}}
         
\newcommand{\sM}{\mathscr{M}}       
 
%
%        gothic letters
%
\newcommand{\fA}{\mathfrak{A}}
\newcommand{\fB}{\mathfrak{B}} 
\newcommand{\fC}{\mathfrak{C}}

\newcommand{\fF}{\mathfrak{F}}
\newcommand{\fS}{\mathfrak{S}}
\newcommand{\fH}{\mathfrak{H}}

\newcommand{\fJ}{\mathfrak{J}}
\newcommand{\fK}{\mathfrak{K}}

%
%   bold
%
\newcommand{\V}[1]{\boldsymbol{#1}}
%
%      alternative small greek letters
%     
\newcommand{\ve}{\varepsilon}
\newcommand{\vp}{\varphi}
\newcommand{\vo}{\varpi}
\newcommand{\vk}{\varkappa}
\newcommand{\vr}{\varrho}
\newcommand{\vt}{\vartheta}

%
%     general notation for vector spaces and operators   
%
\newcommand{\id}{\mathbbm{1}}                   % Identity 
\newcommand{\dom}{\cD}                          % domain of definition
\newcommand{\fdom}{\cQ}                         % form domain 
\newcommand{\HR}{\mathscr{H}}                   % Hilbert space
\newcommand{\HP}{\mathfrak{k}}                  % photon Hilbert space 
\newcommand{\LO}{\mathcal{B}}                   % set of bounded, linear operators
%
%    Fock space operators
%
\newcommand{\ad}{a^\dagger}                     % creation op.

%
%    Special notation
%
\newcommand{\ee}{\mathfrak{e}} 
\newcommand{\Geb}{\Lambda}

\newcommand{\restr}{\mathord{\upharpoonright}}

%
%
%   RENEWCOMMANDS
%
\renewcommand{\Re}{\mathrm{Re}}
\renewcommand{\Im}{\mathrm{Im}}
%
%    greater/less or equal  
%
\renewcommand{\le}{\leqslant}        
\renewcommand{\ge}{\geqslant}  
%
%     
%
%
%%%%%%%%%%%%%%%%%%%%%%%%%%%%%%%%%%%%%%%%%%%%%%

\begin{document}

\title[Feynman-Kac formulas]{Feynman-Kac formulas for Dirichlet-Pauli-Fierz operators 
with singular coefficients}

\author[O.~Matte]{Oliver Matte}

\address{Oliver Matte, Institut for Matematiske Fag, Aalborg Universitet,
Skjernvej 4A, DK-9220 Aalborg, Denmark}
\email{oliver@math.aau.dk}

\begin{abstract}
We derive Feynman-Kac formulas for Dirichlet realizations of Pauli-Fierz operators 
generating the dynamics of nonrelativistic quantum mechanical matter particles, which are minimally
coupled to both classical and quantized radiation fields and confined to an arbitrary open subset of 
the Euclidean space. Thanks to a suitable interpretation of the involved Stratonovich integrals, we are
able to retain familiar formulas for the Feynman-Kac integrands merely assuming local
square-integrability of the classical vector potential and the coupling function in the quantized vector 
potential. Allowing for fairly general coupling functions becomes relevant when the matter-radiation
system is confined to cavities with inward pointing boundary singularities.
\end{abstract}
\maketitle
%
%\setcounter{tocdepth}{1}
%\tableofcontents
%
%%%%%%%%%%%%%%%%%%%%%%%%%%%%%%%%%%%%%%%%%%%%%%%%%
%%%%%%%%%%%%%%%%%%%%%%%%%%%%%%%%%%%%%%%%%%%%%%%%%
%%%%%%%%%%%%%%%%%%%%%%%%%%%%%%%%%%%%%%%%%%%%%%%%%

\section{Introduction and main results}\label{introsec}

\subsection{General introduction}

\noindent
The main objective of this article is to derive Feynman-Kac formulas for Dirichlet realizations on arbitrary
open subsets $\Geb\subset\RR^\nu$ of Pauli-Fierz operators with possibly quite singular coefficients. 
Pauli-Fierz operators are selfadjoint operators generating the dynamics of nonrelativistic quantum 
mechanical matter particles confined to $\Geb$ and interacting with a quantized radiation field. 
Let $\sF$ denote the bosonic Fock space modelled over the one-boson Hilbert space
\begin{align}\label{defHP}
\HP:=L^2(\cK,\fK,\mu).
\end{align}
We assume the measure space $(\cK,\fK,\mu)$ to be $\sigma$-finite and countably generated,
which entails separability of $\HP$. Define 
\begin{align}\label{defsDGeb}
\sD(\Geb):=C_0^\infty(\Geb),\qquad
\sD(\Geb,\sE)&:=\mathrm{span}_{\CC}\big\{f\psi\big|\,f\in\sD(\Geb),\,\psi\in\sE\big\},
\end{align}
for any complex vector space $\sE$. Then the Dirichlet-Pauli-Fierz operator investigated here
-- we denote it by $H_\Geb$ --
acts in the Hilbert space $L^2(\Geb,\sF)$ and represents the closure of the quadratic form given by
\begin{align}\nonumber
\mr{\mathfrak{h}}_{\Geb}[\Psi]&:=\frac{1}{2}\sum_{j=1}^\nu
\int_\Geb\|(\partial_{x_j}-i{A_j}(\V{x})-i\vp({G}_{j,\V{x}}))\Psi(\V{x})\|_{\sF}^2\Id\V{x}
\\\label{introPFform}
&\quad+\int_\Geb\|\Id\Gamma(\omega)^\eh\Psi(\V{x})\|_{\sF}^2\Id\V{x}
+\int_\Geb V(\V{x})\|\Psi(\V{x})\|_{\sF}^2\Id\V{x},
\end{align}
for all
\begin{align}\label{dommrh}
\Psi\in\dom(\mr{\mathfrak{h}}_{\Geb}):=\sD(\Geb,\fdom(\Id\Gamma(1\vee\omega))).
\end{align} 
In the above expressions, 
$\omega\ge0$, the boson dispersion relation, is a multiplication operator in $\HP$, and 
$\Id\Gamma(\omega)$, the radiation field energy, is its differential second quantization; $\dom(\cdot)$ 
denotes domains and $\fdom(\cdot)$ form domains. 
By {\em coefficients} in \eqref{introPFform} we mean the triple comprised of the electrostatic 
potential\footnote{A negative part will be subtracted from $V$ only in Cor.~\ref{cornegpart} and 
Rem.~\ref{remappl}.} 
\begin{align}\label{Vlocint}
V\in L^1_\loc(\Geb,\RR),\quad V\ge0,
\end{align} 
the classical vector potential 
\begin{align}\label{introA}
\V{A}=(A_1,\ldots,A_\nu)\in L^2_\loc(\Geb,\RR^\nu),
\end{align}
and the coupling function 
\begin{align}\label{introG}
\V{G}=(G_1,\ldots,G_\nu)\in L_\loc^2(\Geb,\HP^\nu),
\end{align}
that determines the interaction between the matter particles and the radiation field. 
As usual $\vp({G}_{j,\V{x}})$ stands for the field operator corresponding to ${G}_{j,\V{x}}:=G_j(\V{x})$.

The present article actually continues our earlier study \cite{Matte2017} of Dirichlet-Pauli-Fierz operators
with singular coefficients where we determined the domain and found natural operator cores of 
these operators. While many technical
results of \cite{Matte2017} hold in greater generality, these main results were obtained
under the assumption that $\V{G}\in L^\infty(\Geb,\fdom(\omega^{-1}+\omega)^\nu)$ 
with a weak divergence $\Div\V{G}\in L^\infty(\Geb,\fdom(\omega^{-1}))$.
This is more than enough to cover the standard model of nonrelativistic quantum electrodynamics
on Euclidean space with an ultraviolet cutoff or ultraviolet regularized models of quantum optics in 
bounded cavities with {\em smooth} boundaries. 
Recall that, according to the general quantization scheme for the
electromagnetic field found in physics textbooks (see, e.g., \cite{Dutra2005}), the coupling function has 
the following form in applications to quantum optics in bounded cavities with $\nu=3$:
\begin{align}\label{introGcav}
\V{G}_{\V{x}}&=\ee\sum_{n=1}^\infty\frac{\chi(\omega(n))}{\sqrt{\omega(n)}}\V{E}_n(\V{x}).
\end{align}
Here $0<\omega(1)\le\omega(2)\le\ldots$ are the strictly positive eigenfrequences of the Maxwell
operator on $\Geb$ with perfect electric conductor boundary conditions. The normalized function
$\V{E}_n$ is the electric component of the eigenvector of the Maxwell operator corresponding to the 
frequency $\omega(n)$. Furthermore, $\ee\in\RR$ is a combination of physical constants, and the 
auxiliary, sufficiently fast decaying function $\chi:[0,\infty)\to[0,1]$ implements the ultraviolet cutoff.

The boundary $\partial\Geb$ of a cavity $\Geb$ might, however, not always be smooth. If 
$\partial\Geb$ has singularities, like polyhedral structures with inward pointing edges and 
corners for instance, then the functions $\V{E}_n$ in \eqref{introGcav} are
singular as well at the inward pointing boundary singularities; 
see, e.g., \cite{CostabelDauge2000} and the references given there. 
In particular, the usual $L^\infty$-conditions imposed on $\V{G}$ in \cite{Matte2017} (and in almost 
all other articles on Pauli-Fierz type operators, dipole approximations being one exception) might not be 
fulfilled in the presence of boundary singularities. This motivates keeping the assumptions on $\V{G}$ 
more general while studying basic qualitative features of Dirichlet-Pauli-Fierz operators. 

In this article we choose to consider a situation where the individual terms in the quadratic form 
\eqref{introPFform} are well-defined and finite for every $\Psi$ as in \eqref{dommrh}. Since
$\Psi$ in \eqref{dommrh} can be the product of any function in $\sD(\Geb)$ and the Fock space 
vacuum, this necessitates \eqref{introA}, \eqref{introG}, and the first condition in \eqref{Vlocint}.
We assume the second condition in \eqref{Vlocint}, since it is often convenient to have it in our proofs
and our main results extend by standard arguments to suitable electrostatic potentials that are 
unbounded from below; see Cor.~\ref{cornegpart}.
(Making sufficient effort, magnetic Schr\"{o}dinger operators can actually be constructed even without 
assuming local square-integrability of the vector potential and local integrability of the electrostatic 
potential \cite{LiskevichManavi1997}.)

A good part of this article is made up of analyzing quadratic forms and diamagnetic inequalities and
here the condition \eqref{introG} is in fact sufficient. The Feynman-Kac formulas will, however, only
be valid when the operators $\vp(G_{j,\V{x}})$ admit the interpretation as position observables
of the radiation field. The latter is the case when
\begin{align}\label{introGR}
\V{G}=(G_1,\ldots,G_\nu)\in L_\loc^2(\Geb,\HP_{\RR}^\nu),
\end{align}
where $\HP_{\RR}$ is an arbitrary completely real subspace of $\HP$ satisfying
$e^{-t\omega}\HP_{\RR}\subset\HP_{\RR}$, for all $t>0$. As it turns out, it is possible under the 
conditions \eqref{Vlocint}, \eqref{introA}, and \eqref{introGR} to derive Feynman-Kac formulas for 
Dirichlet-Pauli-Fierz operators given by familiar expressions, provided that the Stratonovich integrals 
involving $\V{A}$ and $\V{G}$ in these formulas are defined as in \eqref{defSxintro} and
\eqref{defKxintro} below. 

%%%%%%%%%%%%%%%%%%%%%%%%%%%%%%%%%%%%%%%%%%%%%%%%%

\subsection{The main theorem}\label{ssecmainthm}

\noindent
In the whole article
$(\Omega,\mathfrak{F},(\mathfrak{F}_t)_{t\ge0},\PP)$
denotes a filtered probability space satisfying the usual assumptions of completeness and right-continuity 
of the filtration $(\mathfrak{F}_t)_{t\ge0}$. The letter $\EE$ denotes expectation with respect to 
$\PP$. Furthermore,  $\V{B}$ denotes a $\nu$-dimensional $(\fF_t)_{t\ge0}$-Brownian motion (starting in $0$) and 
we put $\V{B}^{\V{x}}:=\V{x}+\V{B}$, for all $\V{x}\in\RR^\nu$. Pick some $t>0$ and let 
\begin{align}\label{BMrev}
\V{B}^{t;\V{x}}:=(\V{B}_{t-s}^{\V{x}})_{s\in[0,t]}
\end{align}
denote the time-reversal of $\V{B}^{\V{x}}$ at $t$. This time-reversed process
is a semimartingale when the underlying probability space is equipped
with a suitable new filtration as explained in more detail in Subsect.~\ref{ssecBMBB}; see
\cite{HaussmannPardoux1986,PardouxLNM1204} for the general theory of time-reversed diffusion
processes. With this we define\footnote{Readers who are wondering about the signs in \eqref{defSxintro} should notice that the complex conjugate of $S_t(\V{x})$ appears in our Feynman-Kac
formula; see \eqref{defWWintroscal} and the first equality in \eqref{FKintroscal}.}
\begin{align}\label{defSxintro}
S_t(\V{x})&:=\int_0^t V(\V{B}_s^{\V{x}})\Id s
-\frac{i}{2}\int_0^t\V{A}(\V{B}_s^{\V{x}})\Id\V{B}_s^{\V{x}}
+\frac{i}{2}\int_0^t\V{A}(\V{B}^{t;\V{x}}_s)\Id\V{B}_s^{t;\V{x}},
\\\label{defKxintro}
K_t(\V{x})&:=\frac{1}{2}\int_0^tj_s\V{G}_{\V{B}_s^{\V{x}}}\Id\V{B}_s^{\V{x}}-
\frac{1}{2}\int_0^tj_{t-s}\V{G}_{\V{B}^{t;\V{x}}_s}\Id\V{B}_s^{t;\V{x}}.
\end{align}
In the second line, $\{j_s\}_{s\in\RR}$ is a strongly continuous family of isometries originally introduced 
by E.~Nelson \cite{Nelson1973}. These isometries are defined on $\HP$ 
and attain values in the new Hilbert space 
\begin{align*}
\hat{\HP}:=L^2(\RR\times\cK,\mathfrak{B}(\RR)\otimes\mathfrak{K},\lambda\otimes\mu),
\end{align*} 
with $\lambda$ denoting the one-dimensional Lebesgue measure and
$\fB(\RR)$ the Borel subsets of $\RR$. They are given by the formulas
\begin{align*}
(j_sf)(\kappa,k)&:=\frac{1}{\pi^\eh}\frac{\omega(k)^\eh}{(\kappa^2+\omega(k)^2)^\eh}
e^{-is\kappa}f(k),\quad\text{a.e. $(\kappa,k)\in\RR\times\cK$},
\end{align*}
for all $f\in\HP$ and $s\in\RR$. We apply $j_s$ componentwise to an element of $\HP^\nu$.
The construction of the four stochastic integrals above under the
conditions \eqref{introA} and \eqref{introGR} requires a few simple comments which are given in 
Subsects.~\ref{ssecexpathint} and~\ref{ssecFKIsingAG}; 
their existence is guaranteed for a.e. $\V{x}\in\Geb$ at least. Notice that 
the first and second stochastic integrals in both \eqref{defSxintro} 
and \eqref{defKxintro} are defined with respect to different filtrations; in each line the linear
combination of the two It\^{o} type integrals substitutes more common expressions for
Stratonovich integrals.

Next, let $\V{b}^{t;\V{y},\V{x}}$ be the semimartingale realization of a Brownian bridge
from $\V{y}\in\RR^\nu$ to $\V{x}\in\RR^\nu$ in time $t$ introduced in more detail in
Subsect.~\ref{ssecBMBB}. As verified in \cite[App.~4]{GMM2017}, the relevant results of
\cite{HaussmannPardoux1986,PardouxLNM1204} on time-reversed processes also apply to
Brownian bridges. Putting
\begin{align}\label{BBrev}
\smash{\hat{\V{b}}}^{t;\V{x},\V{y}}:=(\V{b}_{t-s}^{t;\V{y},\V{x}})_{s\in[0,t]},
\end{align} 
we thus obtain a semimartingale realization of a Brownian bridge
from $\V{x}\in\RR^\nu$ to $\V{y}\in\RR^\nu$ in time $t$, provided that the original filtration is
replaced by a suitable new one; see again Subsect.~\ref{ssecFKIsingAG}.
Analogously to \eqref{defSxintro} and \eqref{defKxintro} we define
\begin{align}\nonumber
S_t(\V{x},\V{y})&:=\int_0^t V(\V{b}_s^{t;\V{y},\V{x}})\Id s
\\\label{defSxyintro}
&\quad-\frac{i}{2}\int_0^t\V{A}(\V{b}_s^{t;\V{y},\V{x}})\Id\V{b}_s^{t;\V{y},\V{x}}
+\frac{i}{2}\int_0^t\V{A}(\smash{\hat{\V{b}}}^{t;\V{x},\V{y}}_s)
\Id\smash{\hat{\V{b}}}_s^{t;\V{x},\V{y}},
\\\label{defKxyintro}
K_t(\V{x},\V{y})&:=\frac{1}{2}\int_0^tj_s\V{G}_{\V{b}_s^{t;\V{y},\V{x}}}\Id\V{b}_s^{t;\V{y},\V{x}}-
\frac{1}{2}\int_0^tj_{t-s}\V{G}_{\smash{\hat{\V{b}}}^{t;\V{x},\V{y}}_s}\Id
\smash{\hat{\V{b}}}_s^{t;\V{x},\V{y}}.
\end{align}
Again the existence of the four stochastic integrals appearing here is ensured by \eqref{introA} and 
\eqref{introGR}, for a.e. $(\V{x},\V{y})\in\RR^{2\nu}$ at least; see 
Subsects.~\ref{ssecexpathint} and~\ref{ssecFKIsingAG}. 

We finally list all remaining notation needed to formulate our main theorem:
\begin{enumerate}[leftmargin=0.8cm]
\item[$\triangleright$] 
We abbreviate
\begin{align}\label{defWWintroscal}
W_t(\V{x})&:=e^{-{S}_t(\V{x})}\Gamma(j_t)^*e^{i\vp(K_t(\V{x}))}\Gamma(j_0),
\\\label{introdefWtxy}
W_t(\V{x},\V{y})&:=e^{-{S}_t(\V{x},\V{y})}\Gamma(j_t)^*e^{i\vp(K_t(\V{x},\V{y}))}\Gamma(j_0),
\end{align}
where $\Gamma(j_s)$ denotes the second quantization of the isometry $j_s$. 
\item[$\triangleright$] 
The first exit time of $\V{B}^{\V{x}}$ from $\Geb$ is denoted by
 \begin{align*}
\tau_{\Geb}(\V{x})&:=\inf\{s\ge0|\,\V{B}_s^{\V{x}}\notin\Geb\}.
\end{align*} 
We always employ the common convention $\inf\emptyset:=\infty$.
\item[$\triangleright$] 
The first exit time of $\V{b}^{t;\V{y},\V{x}}$ from $\Geb$ is denoted by
\begin{align*}
\tau_{\Geb}(t;\V{y},\V{x})&:=\inf\big\{s\in[0,t]\big|\,\V{b}_s^{t;\V{y},\V{x}}\notin\Geb\}.
\end{align*}
\item[$\triangleright$]
The symbol $1_\cA$ stands for the indicator function of a set $\cA$.
\item[$\triangleright$]
We denote the Euclidean heat kernel by
\begin{align}\label{Gausskern}
p_t(\V{x},\V{y})&:=(2\pi t)^{-\nf{\nu}{2}}e^{-|\V{x}-\V{y}|^2/2t},\quad \V{x},\V{y}\in\RR^\nu,\,t>0.
\end{align}
\end{enumerate}

\begin{thm}\label{thmFKintro1}
Assume \eqref{Vlocint}, \eqref{introA}, and \eqref{introGR}. Let $t>0$ and $\Psi\in L^2(\Geb,\sF)$.
Then we have the following Feynman-Kac formulas for the Dirichlet-Pauli-Fierz operator 
$H_\Geb$ representing the closure of the form given by \eqref{introPFform} and \eqref{dommrh},
\begin{align}\nonumber
(e^{-tH_\Geb}\Psi)(\V{x})&=\EE\Big[1_{\{\tau_\Geb(\V{x})>t\}}W_t(\V{x})^*\Psi(\V{B}_t^{\V{x}})\Big]
\\\label{FKintroscal}
&=\int_\Geb p_{t}(\V{x},\V{y})\EE\Big[1_{\{\tau_\Geb(t;\V{y},\V{x})=\infty\}}
W_t(\V{x},\V{y})\Psi(\V{y})\Big]\Id\V{y},\quad\text{a.e. $\V{x}\in\Geb$.} 
\end{align}
\end{thm}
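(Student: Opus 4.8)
The strategy is the by-now standard three-step scheme for establishing Feynman--Kac formulas for Pauli--Fierz type operators, adapted to the Dirichlet setting with singular coefficients via the time-reversal interpretation of the Stratonovich integrals. First I would treat the case of bounded coefficients with a smooth boundary (or no boundary) where the operator $H_\Geb$ and the processes $W_t(\V x)$ are already understood from earlier work and from \cite{Matte2017}: for $\V A\in L^\infty(\Geb,\RR^\nu)$, $\V G\in L^\infty(\Geb,\HP_\RR^\nu)$, $V\in L^\infty(\Geb,[0,\infty))$ one verifies that the right-hand side of \eqref{FKintroscal} defines a strongly continuous semigroup of bounded operators on $L^2(\Geb,\sF)$ whose generator agrees with $H_\Geb$. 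The key analytic inputs here are: (i) the Markov/cocycle property $W_{s+t}(\V x)=W_s(\V x)\,\theta_s W_t(\V B_s^{\V x})$ obtained from the isometry relations of Nelson's maps $j_s$ and the shift covariance of $\Gamma(j_s)^*\cdots\Gamma(j_0)$ — this is exactly where the splitting of each Stratonovich integral into two It\^o integrals with respect to the forward and the time-reversed filtration pays off, since time reversal at $t$ converts the ``final'' factor $\Gamma(j_t)^*$ into an ``initial'' one and makes the composition law work; (ii) an It\^o formula computation showing that $s\mapsto W_s(\V x)^*\Psi(\V B_s^{\V x})$, stopped at $\tau_\Geb(\V x)$, has the generator $-H_\Geb$ when $\Psi$ lies in a suitable core (e.g. $\sD(\Geb,\fdom(\Id\Gamma(1\vee\omega)))$); and (iii) the identification of the second line of \eqref{FKintroscal} with the first by disintegrating Brownian motion over its endpoint, i.e. $\EE[F(\V B^{\V x})]=\int p_t(\V x,\V y)\EE[F(\V b^{t;\V x,\V y})]\,\Id\V y$, together with the corresponding time-reversal statement for Brownian bridges quoted from \cite[App.~4]{GMM2017}.

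The second step is the removal of the $L^\infty$-hypotheses, i.e. passing from bounded $(V,\V A,\V G)$ to merely locally square-integrable ones as in \eqref{Vlocint}, \eqref{introA}, \eqref{introGR}. On the quadratic-form side this is handled by the form-convergence and core results already available (the form $\mr{\mathfrak h}_{\Geb,\mathrm D}$ is closable and its closure $H_\Geb$ is approached in strong resolvent sense by the cut-off operators), which the excerpt grants us. On the probabilistic side I would introduce exhausting open sets $\Geb_n\Subset\Geb$ and truncated coefficients $V_n:=V\wedge n$, $\V A_n:=1_{\Geb_n}\V A$, $\V G_n:=1_{\Geb_n}\V G$ (or a smooth truncation), apply Step~1 to each $(\Geb_n,V_n,\V A_n,\V G_n)$, and then pass to the limit $n\to\infty$ in \eqref{FKintroscal}. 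The left-hand sides converge in $L^2(\Geb,\sF)$ by strong resolvent convergence. For the right-hand sides one argues pathwise: on the event $\{\tau_{\Geb_n}(\V x)>t\}$, which increases to $\{\tau_\Geb(\V x)>t\}$ up to a null set, the integrands $S_t^{(n)}(\V x)\to S_t(\V x)$, $K_t^{(n)}(\V x)\to K_t(\V x)$ follow from continuity of the relevant It\^o-integral maps on $L^2_{\mathrm{loc}}$ (this is precisely the content of the ``few simple comments'' promised in Subsects.~\ref{ssecexpathint} and~\ref{ssecFKIsingAG}), the unitary factors $e^{i\vp(K_t^{(n)}(\V x))}$ are uniformly bounded, and $e^{-S_t^{(n)}(\V x)}$ is dominated using $V_n\ge0$. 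A uniform-integrability/dominated-convergence argument, using the diamagnetic inequality to control $\|W_t^{(n)}(\V x)^*\Psi(\V B_t^{\V x})\|_\sF\le\EE[\cdots\|\Psi(\V B_t^{\V x})\|_\sF]$ independently of $n$, then yields convergence of the first expression in \eqref{FKintroscal}; the second follows either by the same truncation or by re-disintegrating as in Step~1(iii).

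The main obstacle, and the place where the singular-coefficient setting genuinely departs from the classical one, is controlling the stochastic integrals $\int_0^t\V A(\V B_s^{\V x})\,\Id\V B_s^{\V x}$ and $\int_0^t j_s\V G_{\V B_s^{\V x}}\,\Id\V B_s^{\V x}$ — and their bridge and time-reversed analogues — when $\V A,\V G$ are only in $L^2_{\mathrm{loc}}$, since then $s\mapsto\V A(\V B_s^{\V x})$ need not even be a.s. square-integrable on $[0,t]$ for \emph{every} starting point; existence is only claimed for a.e. $\V x$. The cleanest route is to work on the \emph{stopped} processes $\V B_{s\wedge\tau_{\Geb_n}(\V x)}^{\V x}$, which stay in $\Geb_n$ where $\V A,\V G$ are genuinely $L^2$, use occupation-time / Khasminskii-type estimates together with $\int_\Geb\EE[\int_0^{t\wedge\tau_{\Geb_n}(\V x)}|\V A(\V B_s^{\V x})|^2\,\Id s]\,\Id\V x=\int_0^t\!\int_{\Geb_n}\!\int p_s(\V x,\V y)|\V A(\V y)|^2\,\Id\V y\,\Id\V x\,\Id s<\infty$ to get the a.e.-in-$\V x$ finiteness and the $L^2$-continuity of the integral maps, and then upgrade the convergence in Step~2 from ``for a.e.\ $\V x$, along a subsequence'' to genuine $L^2(\Geb,\sF)$-convergence by the domination just described. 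Handling the time-reversed integrals in $S_t(\V x)$ and $K_t(\V x)$ on the same footing requires invoking the Haussmann--Pardoux semimartingale description with its drift correction and checking that the drift terms are themselves controlled by the same $L^2_{\mathrm{loc}}$ bounds; this is the technical heart of the argument, but all the needed machinery is precisely what Subsects.~\ref{ssecBMBB}, \ref{ssecexpathint} and \ref{ssecFKIsingAG} are set up to provide.
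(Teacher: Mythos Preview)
Your overall plan---approximate by regular data, prove the formula there, pass to the limit on both sides---matches the paper in spirit, but the \emph{order} in which you separate the two difficulties (singular coefficients vs.\ presence of a boundary) is different, and this is where a genuine gap appears.

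The paper never proves the Feynman--Kac formula directly on a proper domain. It first establishes the formula on all of $\RR^\nu$: for regular $(\V A,\V G,V)$ (Sect.~\ref{secstochana}) via the It\^o computation you describe, and then for singular $\V A\in L^2_\loc(\RR^\nu)$, $\V G\in L^2_\loc(\RR^\nu,\HP_\RR^\nu)$ by $L^2_\loc$-approximation combined with strong resolvent convergence (Thm.~\ref{thmstrresconv}) and the dominated convergence theorem for stochastic integrals (Thm.~\ref{thmdomconvstoch}); this is Prop.~\ref{propFKsingAG}. Only \emph{after} the formula is known on $\RR^\nu$ does the paper pass to general open $\Geb$, using Simon's trick of adding the potential $Y_\infty^\Geb$ that blows up at $\partial\Geb$ (Sect.~\ref{secFKforD}); the point is that this reduction (Lemmas~\ref{lemFKGeb},~\ref{lemFKGebb}) is \emph{purely form-theoretic}---it needs only Hyp.~\ref{hypabD}, verified in Prop.~\ref{propDFKhyp} using the approximation results of Sect.~\ref{secapprox}. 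The final exhaustion by $\Geb_n\Subset\Geb$ you propose is indeed used, but only at the very end, to handle coefficients $\V A,\V G$ that do not extend to $L^2_\loc(\RR^\nu)$.

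Your Step~1, by contrast, asks for the It\^o/generator identification directly on a (smooth) proper domain. The It\^o computation gives only that the generator $G$ of the probabilistic semigroup agrees with $H_\Geb$ on $\sD(\Geb,\dom(\Id\Gamma(\omega)))$; to conclude $G=H_\Geb$ you need this set to be an \emph{operator} core for $H_\Geb$. The paper uses exactly this in Prop.~\ref{propFKreg}, but only on $\RR^\nu$, where essential self-adjointness is available from \cite[Thm.~5.5]{Matte2017}---and the paper is explicit (see the parenthetical remark in the proof of Thm.~\ref{thmstrresconv}) that this step exploits $\Geb=\RR^\nu$. On a proper domain no such result is at hand, even for smooth $\partial\Geb$ and bounded coefficients, so your Step~1 as written has a gap at the generator identification. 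The $Y_\infty^\Geb$ machinery is precisely what lets the paper avoid ever facing this issue.
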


\begin{proof}
This theorem is proven in Subsect.~\ref{ssecFKsingAGGeb}.
\end{proof}

\begin{rem}
Manifestly, $W_t(\V{x})^*$ and $W_t(\V{x},\V{y})$ 
are strongly measurable maps from $\Omega$ to $\LO(\sF)$. Furthermore,
\begin{align}\label{WWWbd}
\|W_t(\V{x})\|\le1,\quad\|W_t(\V{x},\V{y})\|\le1,
\end{align}
pointwise on $\Omega$. In particular, the $\sF$-valued
expectations in \eqref{FKintroscal} are well-defined.
\end{rem}

\begin{rem}\label{remkernig}
Write $\fdom(\omega^{-1})_{\RR}:=\fdom(\omega^{-1})\cap\mathfrak{k}_{\RR}$ and replace 
\eqref{introGR} by the stronger condition
\begin{align}\label{introGRomega}
\V{G}\in L_\loc^2(\Geb,\fdom(\omega^{-1})_{\RR}^\nu),
\end{align}
which is typically fulfilled in physically relevant examples with ultraviolet regularized interaction terms.
Pick some $t>0$ and $\V{x},\V{y}\in\RR^\nu$ such that all integrals in \eqref{defSxyintro} 
and \eqref{defKxyintro} exist.
According to \cite[Rem.~17.7]{GMM2017} we then have the alternative formula
\begin{align}\label{GMMformel}
W_t(\V{x},\V{y})&=e^{-S_t(\V{x},\V{y})-\|K_t(\V{x},\V{y})\|_{\HP}^2/2}
F_{\nf{t}{2}}(ij_t^*K_t(\V{x},\V{y}))F_{\nf{t}{2}}(-ij_0^*K_t(\V{x},\V{y}))^*,
\end{align}
where the Fock space operator-valued maps
\begin{align*}
\fdom(\omega^{-1})\ni g\longmapsto F_{\nf{t}{2}}(g):=\sum_{n=0}^\infty
\ad(g)^ne^{-t\Id\Gamma(\omega)/2}\in\LO(\sF),
\end{align*}
are analytic \cite[Lem.~17.4]{GMM2017}, thus separably valued as $\fdom(\omega^{-1})$ is separable. 
(Here $\ad(g)$ is the bosonic creation operator in $\sF$ associated with
$g$; see, e.g., \cite{Parthasarathy1992}.) In particular, $W_t(\V{x},\V{y}):\Omega\to\LO(\sF)$ is
measurable, separably valued, and bounded, whence the {\em $\LO(\sF)$-valued} expectation in
\begin{align}\label{kernig}
e^{-tH_\Geb}(\V{x},\V{y})&:=
p_t(\V{x},\V{y})\EE\big[1_{\{\tau_\Geb(t;\V{y},\V{x})=\infty\}}W_t(\V{x},\V{y})\big]\in\LO(\sF)
\end{align}
is well-defined. In view of \eqref{FKintroscal}, the operators in \eqref{kernig} thus define a
$\LO(\sF)$-valued integral kernel of $e^{-tH_{\Geb}}$.
The random function $W_t(\V{x})$ can be written in the form \eqref{GMMformel} as well, 
provided that we drop $\V{y}$ on the right hand side, of course.
\end{rem}

In the following corollary we subtract a negative part $U$ from $V$. The form
$\mr{\mathfrak{h}}_{\Geb}^U$ appearing in its statement is defined on 
$\dom(\mr{\mathfrak{h}}_\Geb)$ and  obtained upon putting $V-U$ in place
of $V$ in \eqref{introPFform}.

\begin{cor}\label{cornegpart}
Assume \eqref{Vlocint}, \eqref{introA}, \eqref{introGR}, and
let $U:\Geb\to[0,\infty)$ be form bounded with respect to one-half times the Dirichlet-Laplacian on 
$\Geb$ with relative form bound $b\le1$. Then $U\id_{\sF}$ is form bounded with respect to
$H_\Geb$ with relative form bound $\le b$ and, in particular, $\mr{\mathfrak{h}}_{\Geb}^U$
is semibounded. Assume in addition that $\mr{\mathfrak{h}}_{\Geb}^U$ is closable and denote the
selfadjoint operator representing its closure by $H_\Geb^U$. Then \eqref{FKintroscal}
remains true, when $H_\Geb$ is replaced
by $H_\Geb^U$ and $V-U$ is put in place of $V$ in \eqref{defSxintro} and \eqref{defSxyintro}. 
If \eqref{introGRomega} is satisfied, then Rem.~\ref{remkernig} is still valid under the same
replacements.
\end{cor}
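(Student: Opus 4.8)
The plan is to settle the form-boundedness assertion with the diamagnetic inequality, and then, granted the closability hypothesis, to obtain the Feynman--Kac formulas for $H_\Geb^U$ by approximating $U$ from below by its bounded truncations $U_n:=U\wedge n$ --- for which Theorem~\ref{thmFKintro1} applies after a harmless constant shift of $V$ --- and passing to the limit $n\to\infty$ in \eqref{FKintroscal}.

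For the first part, the diamagnetic inequality established in this article gives $\|\Psi(\cdot)\|_\sF\in H^1_0(\Geb)$ together with $\tfrac12\int_\Geb|\nabla\|\Psi(\V{x})\|_\sF|^2\Id\V{x}\le\mr{\mathfrak{h}}_\Geb[\Psi]$ for every $\Psi\in\dom(\mr{\mathfrak{h}}_\Geb)$. Since $U\id_\sF$ acts as multiplication by $U$ on $\|\Psi(\cdot)\|_\sF^2$ and $V\ge0$, the relative form bound $b$ of $U$ with respect to $-\tfrac12\Delta_\Geb^{\mathrm D}$ immediately yields $\langle\Psi,U\id_\sF\Psi\rangle\le b\,\mr{\mathfrak{h}}_\Geb[\Psi]+C_b\|\Psi\|^2$ on $\dom(\mr{\mathfrak{h}}_\Geb)$, so $U\id_\sF$ is form bounded relative to $H_\Geb$ with relative bound $\le b$ and $\mr{\mathfrak{h}}_\Geb^U\ge(1-b)\mr{\mathfrak{h}}_\Geb-C_b\|\cdot\|^2$ is semibounded. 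For $b<1$ the KLMN theorem already produces a closed form, so the closability hypothesis only enters in the borderline case $b=1$.

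Now put $U_n:=U\wedge n$. As $U_n$ is bounded, $\mr{\mathfrak{h}}_\Geb^{U_n}$ is a bounded form perturbation of the closed form $\mr{\mathfrak{h}}_\Geb$, hence closed on $\dom(\mr{\mathfrak{h}}_\Geb)$ with self-adjoint operator $H_\Geb^{U_n}=H_\Geb^{(n)}-n\id_{\sF}$, where $H_\Geb^{(n)}$ is the operator furnished by Theorem~\ref{thmFKintro1} applied with the admissible potential $V-U_n+n\ge0$, which lies in $L^1_\loc(\Geb)$, in place of $V$. Since this shift amounts to $H_\Geb^{(n)}=H_\Geb^{U_n}+n\id_\sF$ and adds the constant $nt$ to $S_t(\V{x})$ and $S_t(\V{x},\V{y})$, the factors $e^{-nt}$ produced on the two sides of \eqref{FKintroscal} for $H_\Geb^{(n)}$ cancel, so \eqref{FKintroscal} holds for $H_\Geb^{U_n}$ with $V-U_n$ in place of $V$ in \eqref{defSxintro} and \eqref{defSxyintro}; write $W_t^{(n)}$ and $W_t^{(\infty)}$ for the random operators \eqref{defWWintroscal}, \eqref{introdefWtxy} built from $V-U_n$ and from $V-U$. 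The sequence $\mr{\mathfrak{h}}_\Geb^{U_n}$ is nonincreasing, uniformly bounded below because $\int U_n\|\Psi\|_\sF^2\le\int U\|\Psi\|_\sF^2\le b\,\mr{\mathfrak{h}}_\Geb[\Psi]+C_b\|\Psi\|^2$, with pointwise limit $\mr{\mathfrak{h}}_\Geb^U$ on $\dom(\mr{\mathfrak{h}}_\Geb)$ by monotone convergence of the $U_n$-integrals; Simon's monotone convergence theorem for nonincreasing forms together with the closability hypothesis then gives $H_\Geb^{U_n}\to H_\Geb^U$ in the strong resolvent sense, so $e^{-tH_\Geb^{U_n}}\Psi\to e^{-tH_\Geb^U}\Psi$ in $L^2(\Geb,\sF)$ and, along a subsequence, a.e. in $\V{x}$.

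On the stochastic side $\int_0^tU_n(\V{B}_s^{\V{x}})\Id s\uparrow\int_0^tU(\V{B}_s^{\V{x}})\Id s\in[0,\infty]$, so with the convention $e^{-\infty}:=0$ one has $W_t^{(n)}(\V{x})^*\to W_t^{(\infty)}(\V{x})^*$ pointwise on $\Omega$ in $\LO(\sF)$ --- only the scalar factor $e^{-S_t}$ depends on $n$, and $K_t$ and the operator parts are untouched --- and likewise for $W_t^{(n)}(\V{x},\V{y})$. The crude bound $\|W_t^{(n)}(\V{x})^*\cdot\|\le e^{\int_0^tU_n(\V{B}_s^{\V{x}})\Id s}\|\cdot\|$ is useless, but $U_n\le U$ furnishes the $n$-uniform majorant $\|W_t^{(n)}(\V{x})^*\Psi(\V{B}_t^{\V{x}})\|_\sF\le e^{-\int_0^t(V-U)(\V{B}_s^{\V{x}})\Id s}\|\Psi(\V{B}_t^{\V{x}})\|_\sF$, and the right-hand side is $\PP$-integrable for a.e. $\V{x}$: the classical Feynman--Kac formula for the Dirichlet--Schr\"odinger operator $h_\Geb^{U_n}:=-\tfrac12\Delta_\Geb^{\mathrm D}+V-U_n$ identifies $\EE[1_{\{\tau_\Geb(\V{x})>t\}}e^{-\int_0^t(V-U_n)(\V{B}_s^{\V{x}})\Id s}\|\Psi(\V{B}_t^{\V{x}})\|_\sF]$ with $(e^{-th_\Geb^{U_n}}\|\Psi(\cdot)\|_\sF)(\V{x})$, and $\|e^{-th_\Geb^{U_n}}\|_{L^2\to L^2}\le e^{C_bt}$ uniformly in $n$ by the form bound of the first part, so these nonnegative functions increase to an $L^2(\Geb)$-function, finite together with its limit for a.e. $\V{x}$. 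Dominated convergence in $\omega$ therefore passes the first identity of \eqref{FKintroscal} to the limit a.e. in $\V{x}$, matching the a.e. limit of the left-hand sides; the same argument run jointly in $\omega$ and $\V{y}$, with $\V{y}$-integrability again controlled by $e^{-th_\Geb^{U_n}}$, gives the second identity. Under \eqref{introGRomega} one still has $V-U\in L^1_\loc(\Geb)$ (a nonnegative form-bounded $U$ is locally integrable), so \cite[Rem.~17.7]{GMM2017} applies with $V-U$ in place of $V$: the operators \eqref{defWWintroscal}, \eqref{introdefWtxy} keep the form \eqref{GMMformel} and are measurable, separably valued and bounded in $\LO(\sF)$, and $\LO(\sF)$-valued dominated convergence in $\omega$ --- the majorant now carrying the extra scalar factor $p_t(\V{x},\V{y})$, whose a.e. finiteness is seen by testing $\sup_n p_t(\V{x},\V{y})\EE[1_{\{\tau_\Geb(t;\V{y},\V{x})=\infty\}}e^{-\int_0^t(V-U_n)(\V{b}_s^{t;\V{y},\V{x}})\Id s}]$ against a strictly positive element of $L^2(\Geb)\otimes L^2(\Geb)$ --- shows that \eqref{kernig}, read with $H_\Geb^U$ and $V-U$, reproduces the integral kernel of $e^{-tH_\Geb^U}$ obtained above. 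The main obstacle is precisely this domination step: no $n$-uniform majorant for the Feynman--Kac integrands is visible until one exploits $U_n\le U$, and the $\PP$-integrability for a.e. $\V{x}$ of the resulting $U$-integrand is not elementary --- it is exactly the uniform $L^2$-bound on the scalar semigroups $e^{-th_\Geb^{U_n}}$, hence the content of the first part; the remaining work is the bookkeeping of the a.e. exceptional sets and the matching of $L^2$- and pointwise limits.
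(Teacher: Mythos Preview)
Your proof is correct and follows essentially the same route as the paper: truncate $U$ to $U_n=U\wedge n$, apply Theorem~\ref{thmFKintro1} to each truncation (via a constant shift), pass to the limit using monotone convergence of forms on the operator side and dominated convergence on the stochastic side, with the majorant controlled by the scalar Dirichlet--Schr\"odinger semigroup. The only cosmetic difference is that the paper deduces a.e.\ finiteness of the majorant by invoking strong resolvent convergence of $-\tfrac12\Delta_\Geb-U\wedge n$ to a limit operator and citing a.e.\ finiteness of the limiting semigroup, while you obtain it directly from the uniform $L^2$ bound on $e^{-th_\Geb^{U_n}}$ and monotone convergence.
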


Notice that the somewhat implicit assumption that $\mr{\mathfrak{h}}_{\Geb}^U$ be closable is
satisfied when $b<1$. It is also satisfied when $b\le1$, 
$\dom(\mr{\mathfrak{h}}_{\Geb})\subset\dom(H_\Geb)$, and $U$ is locally
{\em square}-integrable, in which case $H_\Geb^U$ is a Friedrichs extension.

In Schr\"{o}dinger operator theory even more singular $U$ than the ones considered here
have been treated; see \cite{BroderixLeschkeMueller2004,LiskevichManavi1997,Voigt1986} 
and the references given therein.

\begin{proof}
Cor.~\ref{cornegpart} is proven at the end of Subsect.~\ref{ssecFKsingAGGeb}.
\end{proof}

Our Feynman-Kac formulas have several immediate and by now well-known applications that we shall
mention only very briefly:

\begin{rem}\label{remappl} 
Assume \eqref{Vlocint}, \eqref{introA}, and \eqref{introGRomega}.

Adopting the notion of positivity on $\sF$ induced by its $\cQ$-space representation, we find that
the semigroup of $H_\Geb^U$ with $U$ as in Corollary~\ref{cornegpart} is ergodic;
compare \cite[\textsection10]{Matte2016}, \cite[\textsection8.1]{MatteMoeller2018}, and the
references therein. If $U$ has an extension to $\RR^\nu$ that belongs to the Kato class of $\RR^\nu$, 
then we obtain $L^p(\Geb,\sF)$ to 
$L^q(\Geb,\sF)$ estimates (with $1\le p\le q\le\infty$) for the semigroup of $H_\Geb^U$ and
Gaussian upper bounds on its operator-valued integral kernel; see \cite{Matte2016} for references and
further extensions in the case $\Geb=\RR^\nu$ with regular coefficients. 
If $U$ is Kato in the above sense and $\omega$ has a strictly positive lower bound,
then the semigroup is hypercontractive simultaneously in the $\V{x}$- and $\cQ$-space-variables; 
see \cite[Thm.~1.9 and \textsection3.1]{HiroshimaMatte2019} for an analogous bound in
the renormalized Nelson model. If the latter hypercontractivity bound is available and $\Geb$ is
bounded and connected, then the infimum of the spectrum of $H_\Geb^U$ is a
non-degenerate eigenvalue and the corresponding eigenvector can be chosen strictly positive; 
see again \cite[\textsection3.1]{HiroshimaMatte2019}.
\end{rem}

%%%%%%%%%%%%%%%%%%%%%%%%%%%%%%%%%%%%%%%%%%%%%%%%

\subsection{Brief remarks on earlier results}

\noindent
For $\Geb=\RR^\nu$, $\V{A}=0$, and under stronger assumptions on $\V{G}$,
the first identity in \eqref{FKintroscal} has been proven earlier by F.~Hiroshima
\cite{Hiroshima1997}, and the second equality in \eqref{FKintroscal} has been shown in 
\cite{GMM2017}. The idea to represent Feynman-Kac integrands in nonrelativistic
quantum field theory in the form \eqref{defWWintroscal} is originally due to E.~Nelson 
\cite{Nelson1973}, who considered scalar matter particles that are linearly coupled to quantized 
radiations fields. 

In \cite{GMM2017} and in \cite{HiroshimaLorinczi2008} different possibilities to account for spin
degrees of freedom in Feynman-Kac formulas for the Pauli-Fierz model are considered. 
An extension of Theorem~\ref{thmFKintro1} to a situation where the matter particles may have spin would, 
however, by no means be trivial and require extra conditions on the magnetic fields generated by 
the classical and quantized vector potentials.

As any meaningful survey of the extensive literature on Feynman-Kac formulas for magnetic 
Schr\"{o}dinger operators and their various generalizations and applications would go beyond the scope 
of the discussion, we kindly ask the interested reader to consult, e.g., the remarks and long reference 
lists in the relatively recent article \cite{Hinz2015} and the books \cite{Gueneysu2017,LHB2011} 
for a start. Explicitly, we mention only a few articles dealing 
with possibly very singular classical vector potentials on open subsets of the Euclidean space: 

In \cite{BHL2000} local Kato class assumptions are 
imposed on $\V{A}^2$ and $\Div\V{A}$ to derive Feynman-Kac formulas. The most singular case
where quadratic forms still make sense on $\sD(\Geb)$, that is,
$\V{A}\in L_\loc^2(\Geb,\RR^\nu)$, is treated in \cite{PerelmuterSemenov1981} in the special case
where $\Geb^c$ has zero Lebesgue measure. Since the Feynman-Kac integrands are constructed with 
the help of compactness arguments in \cite{PerelmuterSemenov1981}, they are, however, not given by 
explicit formulas there. 

For every $\V{A}\in L_\loc^2(\Geb,\RR^\nu)$, we actually find some
$\V{A}_{{\textsc{c}}}\in L_\loc^2(\Geb,\RR^\nu)$, 
having the same curl in distribution sense as $\V{A}$
and satisfying the Coulomb gauge condition $\Div\V{A}_{{\textsc{c}}}=0$ in the weak sense,
as well as some gauge potential $\gamma\in W^{1,2}_\loc(\Geb)$ such that
$\V{A}=\V{A}_{{\textsc{c}}}+\nabla\gamma$; see \cite[Lem.~1.1]{Leinfelder1983}. Exploiting the gauge
invariance of Schr\"{o}dinger operators \cite[(Proof of) Thm.~1.2]{Leinfelder1983}, we can thus
derive a Feynman-Kac formula for the Schr\"{o}dinger operator with vector potential 
$\V{A}_{{\textsc{c}}}$
containing only one stochastic integral in It\^{o}'s sense, and obtain a Feynman-Kac type formula for 
$\V{A}$ by adding a $\gamma$-dependent term to the complex action. This strategy to 
find Feynman-Kac formulas for Dirichlet realizations of Schr\"{o}dinger operators with highly
singular vector potentials is treated as 
well-known in the more recent literature at least in the case where $\V{A}$ has a locally 
square-integrable extension to the whole $\RR^\nu$ (see, e.g., \cite{HKNSV2006}), 
and probably also in greater generality.

%%%%%%%%%%%%%%%%%%%%%%%%%%%%%%%%%%%%%%%%%%%%%%%%

\subsection{Organization, proof strategies, and further results}

\begin{enumerate}[leftmargin=*]
\item[$\triangleright$]
In Sect.~\ref{secbasicdefns} we recall some Fock space calculus and provide
precise definitions of the most important quadratic forms and operators considered in this article.
\item[$\triangleright$] 
Our general strategy is to infer Feynman-Kac formulas for proper open subsets 
$\Geb\subset\RR^\nu$ from corresponding formulas in the case $\Geb=\RR^\nu$. To this end
we employ a procedure originally used for Schr\"{o}dinger semigroups in \cite{Simon1978Adv} and 
later on for 
magnetic Schr\"{o}dinger semigroups in \cite{BHL2000}. In Sect.~\ref{secFKforD} we recall this 
procedure in a suitably abstracted version that applies to the quantum field theoretic models we are 
interested in here and in the recent work \cite{HiroshimaMatte2019}. 
\item[$\triangleright$] A crucial ingredient for the proof procedure alluded to in the 
previous item are results on approximations with respect to the form norms of certain
{\em maximal} Pauli-Fierz forms. (The closure of the form defined in \eqref{introPFform} and
\eqref{dommrh} is the {\em minimal} Pauli-Fierz form.)
These approximation results, which are non-trivial and possibly of independent interest,
 are obtained in Sect.~\ref{secapprox}.
A Leibniz rule for vector-valued weak derivatives needed here is derived first in Sect.~\ref{secLeibniz}.
As a byproduct we shall also see that the maximal and minimal Pauli-Fierz forms agree when
$\Geb=\RR^\nu$, as it is the case for Schr\"{o}dinger operators \cite{SimonJOT1979}.
\item[$\triangleright$] Also in the case $\Geb=\RR^\nu$ our Feynman-Kac formulas 
are obtained by approximation. Here we depart from Feynman-Kac formulas
for Pauli-Fierz operators with regularized coefficients. In Sect.~\ref{secstrresconv} we therefore
study strong resolvent convergence of Pauli-Fierz operators on $\RR^\nu$ when $\V{A}$ and $\V{G}$ 
are approximated in $L^2_\loc$ by more regular quantities. In doing so we employ a diamagnetic
inequality for resolvents of Pauli-Fierz operators that we derive first in Sect.~\ref{secmoredia}, 
more generally for Dirichlet-Pauli-Fierz operators on general open $\Geb\subset\RR^\nu$.
In its full generality this diamagnetic inequality is new even when $\Geb=\RR^\nu$.
\item[$\triangleright$] For regular coefficients and $\Geb=\RR^\nu$, we derive our Feynman-Kac
formulas in Sect.~\ref{secstochana}, employing the stochastic differential
equations associated with the Pauli-Fierz model analyzed in \cite{GMM2017}.
We shall push forward some results of \cite{GMM2017} to non-vanishing $\V{A}$.
Eventually, we  prove an associated strong Markov property (employing a ``useful rule'' for vector-valued 
conditional expectations verified in App.~\ref{appusefulrule}) and show that the ``probabilistic'' right 
hand sides of the Feynman-Kac formulas give rise to a strongly continuous semigroup of bounded
selfadjoint operators. The Pauli-Fierz operator finally turns out to be the generator of this semigroup,
which proves the Feynman-Kac formulas for regular coefficients.
\item[$\triangleright$] The only technical obstacle remaining after the above preliminary 
results is to show convergence of the probabilistic sides of the Feynman-Kac formulas
for $\Geb=\RR^\nu$, when singular coefficients are approximated by regular ones.
This is done in Sect.~\ref{secFKsing}. Apart from that, we give a detailed discussion
of the Feynman-Kac integrands for singular coefficients and eventually complete the  
proofs of Thm.~\ref{thmFKintro1} and Cor.~\ref{cornegpart} in this final section.
\end{enumerate}

%%%%%%%%%%%%%%%%%%%%%%%%%%%%%%%%%%%%%%%%%%%%%%%%%
%%%%%%%%%%%%%%%%%%%%%%%%%%%%%%%%%%%%%%%%%%%%%%%%%
%%%%%%%%%%%%%%%%%%%%%%%%%%%%%%%%%%%%%%%%%%%%%%%%%

\section{Basic definitions}\label{secbasicdefns}

\noindent
In this section we collect the most important functional analytic definitions employed throughout the
article. In the following subsections we shall, respectively, recall some Fock space calculus,
define vector-valued weak derivatives, covariant derivatives, and finally introduce our 
Dirichlet-Pauli-Fierz operators.

In the whole article $\Geb$ denotes an arbitrary open subset of $\RR^\nu$; variables in
$\Geb$ will most of the time be denoted by $\V{x}=(x_1,\ldots,x_\nu)$ or
$\V{y}=(y_1,\ldots,y_\nu)$.

If $T$ is a linear operator in some Hilbert space then its domain $\dom(T)$ is equipped with the
graph norm
\begin{align*}
\|\phi\|_{\dom(T)}&:=(\|\phi\|^2+\|T\phi\|^2)^\eh,\quad \phi\in\dom(T).
\end{align*}
If $T$ is nonnegative and selfadjoint, then its form domain $\fdom(T)$ is equipped with the form norm
\begin{align*}
\|\phi\|_{\fdom(T)}&:=(\|\phi\|^2+\|T^\eh\phi\|^2)^\eh,\quad \phi\in\fdom(T).
\end{align*}

%%%%%%%%%%%%%%%%%%%%%%%%%%%%%%%%%%%%%%%%%%%%%%%

\subsection{Operators in the bosonic Fock space}\label{ssecFock}

\noindent
Here we briefly recall some standard facts on the Weyl representation on bosonic Fock space. 
For a detailed textbook exposition of these matters we recommend \cite{Parthasarathy1992}.

Recall that the by assumption separable $L^2$-space $\HP$ has been introduced in \eqref{defHP}.
The bosonic Fock space modelled over $\HP$ is given by the direct orthogonal sum
\begin{align*}
\sF&:=\CC\oplus\bigoplus_{n=1}^\infty L^2(\cK^n,\fK^n,\mu^n),
\end{align*}
where $\fK^n$ is the $n$-fold product of the $\sigma$-algebra $\fK$ with itself and
$\mu^n$ is the $n$-fold product of the measure $\mu$ with itself. A total subset of $\sF$ is
given by the set of exponential vectors $\epsilon(f)\in\sF$ with $f\in\sF$, 
\begin{align*}
\epsilon(f)&:=(1,f,\ldots,(n!)^{\mh}f^{\otimes_n},\ldots\:),
\end{align*}
with $f^{\otimes_n}(k_1,\ldots,k_n):=f(k_1)\dots f(k_n)$, $\mu^n$-a.e. Let $\sU(\sK)$ denote the
set of unitary operators on some Hilbert space $\sK$ equipped with the topology associated with the
strong convergence of bounded operators on $\sK$. Given $f\in\HP$ and $U\in\sU(\HP)$, we let
$\sW(f,U)\in\sU(\sF)$ denote the corresponding Weyl operator. We recall that it is determined
by the prescription
\begin{align*}
\sW(f,U)\epsilon(g)&:=e^{-\|f\|^2/2-\SPn{f}{Ug}}\epsilon(f+Ug),\quad g\in\HP,
\end{align*}
followed by linear and isometric extensions. The so obtained Weyl representation
\begin{align*}
\sW:\HP\times\sU(\HP)\longrightarrow\sU(\sF),\quad(f,U)\longmapsto\sW(f,U),
\end{align*}
is a strongly continuous projective representation of the semi-direct product of $\HP$ and $\sU(\HP)$.
More precisely, we have the Weyl relations
\begin{align*}
\sW(f_1,U_1)\sW(f_2,U_2)&=e^{-i\Im\SPn{f_1}{U_1f_2}}\sW(f_1+U_1f_2,U_1U_2),
\end{align*}
for all $ f_1,f_2\in\HP$ and $U_1,U_2\in\sU(\HP)$. As usual we abbreviate
\begin{align}\label{defGammasW}
\sW(f)&:=\sW(f,\id),\quad\Gamma(U):=\sW(0,U),\quad f\in\HP,\,U\in\sU(\HP).
\end{align}

Let $f\in\HP$. Then the above remarks imply that $\RR\ni t\mapsto\sW(-itf)$ is a strongly
continuous unitary group on $\sF$. Its selfadjoint generator is called the field operator associated
with $f$. It is denoted by $\vp(f)$, so that
\begin{align*}
\sW(-itf)=e^{-it\vp(f)},\quad t\in\RR.
\end{align*}

In the whole article, 
\begin{align*}
\text{$\omega:\cK\to\RR$ is a measurable function that is $\mu$-a.e. strictly positive.}
\end{align*}
 It has the physical interpretation of a boson dispersion relation. 
 We shall use the same symbol $\omega$ to denote the associated selfadjoint
multiplication operator in $\HP$. Then our remarks on the Weyl representation further imply that
$\RR\ni t\mapsto\Gamma(e^{-it\omega})$ is a strongly continuous unitary group on $\sF$.
Therefore, there exists a selfadjoint operator $\Id\Gamma(\omega)$ in $\sF$ such that
\begin{align*}
\Gamma(e^{-it\omega})=e^{-it\Id\Gamma(\omega)},\quad t\in\RR.
\end{align*}
It is called the differential second quantization of $\omega$ and
interpreted as the energy of the quantized radiation field.

Since the Nelson isometries $j_s:\HP\to\hat{\HP}$ introduced in Subsect.~\ref{ssecmainthm}
map into a Hilbert space different from $\HP$, the symbol $\Gamma(j_s)$ actually has to be
understood in a sense generalizing \eqref{defGammasW}. In fact, $\Gamma(j_s):\sF\to\hat{\sF}$ 
is obtained by linear and isometric extension of the prescription 
$\Gamma(j_s)\epsilon(g):=\epsilon(j_sg)\in\hat{\sF}$,
$g\in\HP$, where $\hat{\sF}$ is the bosonic Fock space modelled over $\hat{\HP}$.

We conclude this subsection by recalling the following standard relative bounds,
where $\vk:\cK\to\RR$ has the same properties as $\omega$ above,
\begin{align}\label{rbvp}
\|\vp(f)\psi\|_{\sF}&\le2^\eh\|f\|_{\fdom(\vk^{-1})}\|\psi\|_{\fdom(\Id\Gamma(\vk))},
\\\label{rbvpvp}
\|\vp(f)\vp(g)\phi\|_{\sF}&\le8\|f\|_{\fdom(\vk^{-1})}\|g\|_{\fdom(\vk^{-1})}
\|\phi\|_{\dom(\Id\Gamma(\vk))},
\end{align}
for all $f,g,\phi,\psi$ in the vectors spaces indicated by the respective subscripts; see, e.g.,
\cite[Rem.~2.10]{Matte2017} for the second bound.

%%%%%%%%%%%%%%%%%%%%%%%%%%%%%%%%%%%%%%%%%%%%%%%

\subsection{Vector-valued weak derivatives}

\noindent
To deal with singular classical and quantized vector potentials it is most helpful to mimic
the distributional techniques used in the study of magnetic Schr\"{o}dinger operators in a
vector-valued setting \cite{Matte2017}. For the convenience of the reader we therefore recall
the following fundamental definition:

Let $\sK$ be a separable Hilbert space, $j\in\{1,\ldots,\nu\}$, and
$\Psi\in L^1_\loc(\Geb,\sK)$. Then $\Psi$ is said to have a
weak partial derivative with respect to $x_j$, iff there exists some (necessarily unique)
vector $\partial_{x_j}\Psi\in L_\loc^1(\Geb,\sK)$ such that
\begin{align}\label{defwpd}
\int_{\Geb}\SPn{\partial_{x_j}\eta(\V{x})}{\Psi(\V{x})}_\sK\Id\V{x}
=-\int_\Geb\SPn{\eta(\V{x})}{\partial_{x_j}\Psi(\V{x})}_\sK\Id\V{x},\quad \eta\in\sD(\Geb,\sK).
\end{align}

%%%%%%%%%%%%%%%%%%%%%%%%%%%%%%%%%%%%%%%%%%%%%%%

\subsection{Covariant derivatives}

\noindent
Pick $j\in\{1,\ldots,\nu\}$, $A_j\in L^2_\loc(\Geb,\RR)$ and let $G_j:\Geb\to\HP$,
$\V{x}\mapsto G_{j,\V{x}}$ be in $L^2_\loc(\Geb,\HP)$. 
For every $\Psi\in L^2(\Geb,\fdom(\Id\Gamma(1)))$,
we define $\vp(G_j)\Psi\in L^1_\loc(\Geb,\sF)$ by
\begin{align*}
(\vp(G_j)\Psi)(\V{x})&:=\vp(G_{j,\V{x}})\Psi(\V{x}),\quad\text{a.e. $\V{x}\in\Geb$.}
\end{align*}
With this we define a symmetric operator $v_{\Geb,j}$ in $L^2(\Geb,\sF)$ by
\begin{align*}
\dom(v_{\Geb,j})&:=\sD(\Geb,\fdom(\Id\Gamma(1))),
\\
v_{\Geb,j}\Psi&:=-i\partial_{x_j}\Psi-A_j\Psi-\vp(G_j)\Psi,\quad\Psi\in\dom(v_{\Geb,j}).
\end{align*}
Its adjoint $v_{\Geb,j}^*$ will play the role of a covariant derivative in the $j$-th coordinate direction
in our Pauli-Fierz forms.

The approximation results proven in Sect.~\ref{secapprox} depend crucially on
the following theorem \cite[Thm.~3.5]{Matte2017} where, for any separable Hilbert space
$\sK$ and any representative $\Psi(\cdot)$ of $\Psi\in L_\loc^1(\Geb,\sK)$, we define
\begin{align}\label{defsgn}
{\fS_{\Psi}}(\V{x})&:=\left\{\begin{array}{ll}
\|\Psi(\V{x})\|_\sK^{-1}\Psi(\V{x}),&\V{x}\in\{\Psi(\cdot)\not=0\},
\\
0,& \V{x}\in\{\Psi(\cdot)=0\}.
\end{array}\right.
\end{align}

\begin{thm}\label{thm-dia0}
Let $\Psi\in\dom(v_{\Geb,j}^*)$. 
Then $\|\Psi\|_\sF\in L^2(\Geb)$ has a weak partial derivative with respect to $x_j$ which is given by
\begin{align}\label{dia0}
\partial_{{x}_j}\|\Psi\|_{\sF}=\Re\SPn{{\fS_\Psi}}{i{v}_{\Geb,j}^*\Psi}_{\sF}\in L^2(\Geb).
\end{align}
\end{thm}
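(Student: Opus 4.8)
The plan is to prove \eqref{dia0} by a regularization-and-limiting argument, reducing the operator-valued identity to the scalar diamagnetic inequality via a well-chosen approximation of the norm function $\|\Psi\|_\sF$. First I would fix a representative $\Psi(\cdot)$ of $\Psi$ and, for $\ve>0$, introduce the smoothed norm $\rho_\ve(\V{x}):=(\|\Psi(\V{x})\|_\sF^2+\ve^2)^\eh-\ve$, which is a bounded-below perturbation that is differentiable ``through'' the Fock norm. The key algebraic step is to establish, for $\Psi\in\dom(v_{\Geb,j}^*)$ and any test function $\eta\in\sD(\Geb)$, the weak identity
\begin{align*}
\int_\Geb\partial_{x_j}\eta(\V{x})\,\rho_\ve(\V{x})\,\Id\V{x}
=-\int_\Geb\eta(\V{x})\,\Re\SPB{\tfrac{\Psi(\V{x})}{(\|\Psi(\V{x})\|_\sF^2+\ve^2)^\eh}}{i v_{\Geb,j}^*\Psi(\V{x})}_{\!\sF}\Id\V{x}.
\end{align*}
To obtain this one tests the defining relation of $v_{\Geb,j}^*$ against $\eta\cdot\Psi/(\|\Psi\|_\sF^2+\ve^2)^\eh$ — or, more carefully, against $\eta$ times a smoothed version of $\Psi$ so that the argument stays inside $\dom(v_{\Geb,j})=\sD(\Geb,\fdom(\Id\Gamma(1)))$ — and uses the product/Leibniz structure of $v_{\Geb,j}=-i\partial_{x_j}-A_j-\vp(G_j)$ together with the fact that multiplication by the real scalar $A_j$ and by the self-adjoint field operator $\vp(G_{j,\V{x}})$ contribute purely imaginary terms to the inner product $\SPn{\Psi}{\,\cdot\,\Psi}$ and hence drop out upon taking the real part.

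Second I would pass to the limit $\ve\downarrow0$. On the left, $\rho_\ve(\V{x})\to\|\Psi(\V{x})\|_\sF$ pointwise and $0\le\rho_\ve\le\|\Psi\|_\sF\in L^2(\Geb)$, so dominated convergence applies on $\supp\eta$. On the right, the integrand converges pointwise to $\eta\,\Re\SPn{\fS_\Psi}{i v_{\Geb,j}^*\Psi}_\sF$ (using that $\Psi/(\|\Psi\|_\sF^2+\ve^2)^\eh\to\fS_\Psi$ strongly in $\sF$ for a.e.\ $\V{x}$, including on the zero set where both sides vanish), and it is dominated by $|\eta|\,\|v_{\Geb,j}^*\Psi\|_\sF\in L^1(\Geb)$ since $v_{\Geb,j}^*\Psi\in L^2(\Geb,\sF)$. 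This yields exactly \eqref{defwpd} for $\sK=\CC$ with the candidate weak derivative $\Re\SPn{\fS_\Psi}{i v_{\Geb,j}^*\Psi}_\sF$, which lies in $L^2(\Geb)$ because $|\fS_\Psi|\le1$ pointwise and $v_{\Geb,j}^*\Psi\in L^2(\Geb,\sF)$; by Cauchy--Schwarz this also shows $\|\Psi\|_\sF\in L^1_\loc(\Geb)$ already has a weak derivative in the sense required, and $\|\Psi\|_\sF\in L^2(\Geb)$ is immediate from $\Psi\in L^2(\Geb,\sF)$.

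The main obstacle is the justification of the weak identity above at the level of rigor needed when $A_j$ and $G_{j,\V{x}}$ are merely locally square-integrable: one cannot naively plug $\eta\,\Psi/(\|\Psi\|_\sF^2+\ve^2)^\eh$ into the definition of $v_{\Geb,j}^*$ because that vector need not lie in $\dom(v_{\Geb,j})$ (it is not of the form $f\psi$ with $f\in\sD(\Geb)$, $\psi\in\fdom(\Id\Gamma(1))$, nor need it be smooth). The remedy is a double approximation — mollify $\Psi$ in the $\V{x}$-variable and truncate in the Fock-space energy via $1_{[0,n]}(\Id\Gamma(1))$ — so that the test vectors genuinely lie in $\dom(v_{\Geb,j})$, carry out the Leibniz computation there, and then remove the mollification/truncation using that mollification commutes with weak derivatives and that $v_{\Geb,j}^*\Psi$ is a fixed $L^2_\loc$ function. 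I expect this bookkeeping, rather than any single estimate, to be the technical heart of the argument; everything else is dominated convergence and the elementary observation that real scalars and field operators contribute no real part.
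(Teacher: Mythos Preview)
The paper does not give its own proof of this theorem; it is quoted verbatim from the author's earlier work \cite[Thm.~3.5]{Matte2017}. From the surrounding material one can, however, reconstruct the architecture of that proof: the key preparatory step is \cite[Lem.~3.2]{Matte2017}, restated here as Rem.~\ref{remapprox1}, which says that the \emph{Fock-space} regularization $\Psi_\ve:=N_\ve^{-\nf12}\Psi$ with $N_\ve=1+\ve\,\Id\Gamma(1)$ has a weak partial derivative in $L^1_\loc$ given by \eqref{partialPsive}. One then applies the chain rule \eqref{dia00delta} to $Z_\delta(\Psi_\ve)$; the terms $iA_j\Psi_\ve$ and $i\vp(G_j)\Psi_\ve$ in \eqref{partialPsive} drop out upon taking $\Re\SPn{\fS_{\delta,\Psi_\ve}}{\,\cdot\,}$ by symmetry (exactly the mechanism you identified), the commutator $C_\ve(G_j)^*\Psi$ vanishes as $\ve\downarrow0$ by \eqref{convCve1}, and finally one sends $\delta\downarrow0$. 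This is precisely the computation carried out in \eqref{sissel1} and in the proof of Prop.~\ref{propdia1}.

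Your high-level strategy (regularize, cancel the symmetric terms, pass to the limit) is the right one, but the regularization you propose has a genuine gap. Mollifying $\Psi$ in the $\V{x}$-variable produces a smooth function, but it does \emph{not} give a usable formula relating $\partial_{x_j}$ of the mollified $\Psi$ to $v_{\Geb,j}^*\Psi$: the coefficients $A_j$ and $\vp(G_{j,\cdot})$ are only $L^2_\loc$ and do not commute with convolution, so the Friedrichs-type commutator terms are not under control without substantial extra work. Moreover, your test vector $\eta\,\Psi_{\mathrm{smooth}}/(\|\Psi_{\mathrm{smooth}}\|^2+\ve^2)^{\nf12}$ is generically not in $\sD(\Geb,\fdom(\Id\Gamma(1)))$ even after mollification and spectral truncation (the latter does not have finite rank), so you would still need a further approximation that essentially reintroduces the problem you are trying to solve. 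The Fock-direction regularization $N_\ve^{-\nf12}$ sidesteps both issues at once: it commutes with $\partial_{x_j}$ and with multiplication by $A_j$, and its commutator with $\vp(G_j)$ is explicitly bounded by \eqref{convCve1}--\eqref{convCve2}. Replacing your ``mollify in $\V{x}$ + truncate'' step by $\Psi\mapsto N_\ve^{-\nf12}\Psi$ would make your outline go through and would then coincide with the proof in \cite{Matte2017}.
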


%%%%%%%%%%%%%%%%%%%%%%%%%%%%%%%%%%%%%%%%%%%%%%%

\subsection{Pauli-Fierz forms and Dirichlet-Pauli-Fierz operators}\label{ssecDPFform}

\noindent
Assuming \eqref{Vlocint}, \eqref{introA}, and \eqref{introG}
we first define a {\em maximal Pauli-Fierz form},
\begin{align}\nonumber
\dom(\mathfrak{h}_{\Geb,\mathrm{N}})&:=L^2(\Geb,\fdom(\Id\Gamma(\omega)))\cap
\fdom(V\id_{\sF})\cap\bigcap_{j=1}^\nu\dom(v_{\Geb,j}^*),
\\\label{adam11}
\mathfrak{h}_{\Geb,\mathrm{N}}[\Psi]&:=
\frac{1}{2}\sum_{j=1}^\nu\|v_{\Geb,j}^*\Psi\|^2+\int_{\Geb}V(\V{x})\|\Psi(\V{x})\|^2_{\sF}\Id\V{x}
\\\nonumber
&\quad+\int_{\Geb}\|\Id\Gamma(\omega)^\eh\Psi(\V{x})\|_{\sF}^2\Id\V{x},
\quad\Psi\in\dom(\mathfrak{h}_{\Geb,\mathrm{N}}).
\end{align}
As a sum of nonnegative closed forms, $\mathfrak{h}_{\Geb,\mathrm{N}}$ is itself closed and
nonnegative. We further define a {\em minimal Pauli-Fierz form},
\begin{align}\label{defminPFf}
\mathfrak{h}_{\Geb,\mathrm{D}}
&:=\ol{\mathfrak{h}_{\Geb,\mathrm{N}}\restr_{\sD(\Geb,\fdom(\Id\Gamma(1\vee\omega)))}}
=\ol{\mr{\mathfrak{h}}_{\Geb}},
\end{align}
where in the second identity we used notation introduced in \eqref{introPFform} and \eqref{dommrh} 
of the introduction. In analogy to the Schr\"{o}dinger case,
the selfadjoint operator representing $\mathfrak{h}_{\Geb,\mathrm{D}}$, we shall simply call it
$H_{\Geb}$ dropping the subscript ``$\mathrm{D}$'', can be interpreted as the Dirichlet
realization of the Pauli-Fierz operator on $\Geb$. (The subscript ``$\mathrm{N}$'' is also borrowed
from the Schr\"{o}dinger theory where it stands for ``Neumann''.)

%%%%%%%%%%%%%%%%%%%%%%%%%%%%%%%%%%%%%%%%%%%%%%%%%
%%%%%%%%%%%%%%%%%%%%%%%%%%%%%%%%%%%%%%%%%%%%%%%%%
%%%%%%%%%%%%%%%%%%%%%%%%%%%%%%%%%%%%%%%%%%%%%%%%%

\section{Deriving Feynman-Kac formulas for Dirichlet realizations}\label{secFKforD}

\noindent
In this section we explain how to derive Feynman-Kac formulas for proper open subsets
$\Geb\subset\RR^\nu$ departing from known formulas in the case $\Geb=\RR^\nu$. 
This is done by a procedure which is standard for Schr\"{o}dinger operators and
originates from \cite{Simon1978Adv}; see also \cite[App.~B]{BHL2000} for a helpful exposition 
treating Schr\"{o}dinger operators with classical magnetic fields. All we do in this section is to 
carry through this procedure in a slightly abstracted setting covering the various nonrelativistic quantum 
field theoretic models we are interested in. The results of this section are, for instance, applied to the 
renormalized Nelson model in \cite{HiroshimaMatte2019}.

Let $\sK\not=\{0\}$ be a separable Hilbert space. 
Suppose that $Q_{\RR^\nu}$ and $Q_\Geb$ are selfadjoint operators in $\HR:=L^2(\RR^\nu,\sK)$ 
and its subspace $\HR_\Geb:=1_\Geb L^2(\RR^\nu,\sK)$, respectively, which are semi-bounded from 
below. Denote the corresponding
quadratic forms by $\mathfrak{q}_{\RR^\nu}$ and $\mathfrak{q}_\Geb$, respectively.
We assume that these two quadratic forms are related as follows:

We pick compact subsets $K_\ell$, $\ell\in\NN$, of $\Geb$ with
$$
K_\ell\subset\mr{K}_{\ell+1},\quad\ell\in\NN,\quad\text{and}\quad
\bigcup_{\ell\in\NN}K_\ell=\Geb.
$$
Furthermore, we pick $\vt_\ell\in C_0^\infty(\RR^\nu)$ with $\vt_\ell=1$ on $K_\ell$,
$\vt_\ell=0$ on $K_{\ell+1}^c$, and $0\le\vt_\ell\le1$, for all $\ell\in\NN$. As in 
\cite{Simon1978Adv} we finally define a numerical function $Y^\Geb_\infty:\RR^\nu\to[0,\infty]$ by
\begin{align}\label{defYsGinfty}
Y_\infty^\Geb(\V{x})
&:=\left\{\begin{array}{ll}
\dist(\V{x},\Geb^c)^{-3}+\sum_{\ell=1}^\infty|\nabla\vt_\ell|^2(\V{x}),&\V{x}\in\Geb,
\\
\infty,&\V{x}\in\Geb^c;
\end{array}\right.
\end{align}
observe that the series appearing here actually has at most one non-vanishing term, for every fixed
$\V{x}\in\Geb$. This function defines a closed form in $\HR$ with domain
$$
\fdom(Y_\infty^\Geb)=\bigg\{\Psi\in L^2(\RR^\nu,\sK)\,\bigg|\:
\int_{\RR^\nu}Y_\infty^\Geb(\V{x})\|\Psi(\V{x})\|^2\Id\V{x}<\infty\bigg\}\subset\HR_{\Geb},
$$
which is not dense in general. We further set 
\begin{align}\label{defD1infty}
\dom(\mathfrak{q}_{\RR^\nu}^{1,\infty})&
:=\dom(\mathfrak{q}_{\RR^\nu})\cap\fdom(Y_\infty^\Geb)\subset\HR_{\Geb}.
\end{align}
We now assume that the following:

\begin{hyp}\label{hypabD}
For at least one function $Y_\infty^\Geb$ defined in the above fashion, 
Statements~(a) and~(b) hold, where:
\begin{enumerate}[leftmargin=0.8cm]
\item[(a)] 
$\dom(\mathfrak{q}_{\RR^\nu}^{1,\infty})\subset\dom(\mathfrak{q}_\Geb)$ and the
closure of $\dom(\mathfrak{q}_{\RR^\nu}^{1,\infty})$ with respect to the form norm on
$\dom(\mathfrak{q}_\Geb)$ is equal to $\dom(\mathfrak{q}_\Geb)$.
\item[(b)] $\mathfrak{q}_\Geb[\Psi]=\mathfrak{q}_{\RR^\nu}[\Psi]$, 
for all $\Psi\in\dom(\mathfrak{q}_{\RR^\nu}^{1,\infty})$.
\end{enumerate}
\end{hyp}

The next remark explains the choice of the power $-3$ in \eqref{defYsGinfty}. Any power strictly less
than $-2$ would actually be sufficient for our applications in the later sections.

\begin{rem}\label{remeindrittel}
Let $t>0$, let $I\subset\RR$ be an interval containing $[0,t]$, 
and suppose that $\gamma:I\to\RR^\nu$ is H\"{o}lder continuous of order $1/3$. Set
\begin{align*}
Y_n^\Geb:=n\wedge Y_\infty^\Geb,\quad n\in\NN.
\end{align*}
Then we have the following equivalence, where the limit to the left always exists in $[0,\infty]$
by monotone convergence,
\begin{align}\label{eindrittelequiv}
\lim_{n\to\infty}\int_0^t Y^\Geb_n(\gamma(s))\Id s<\infty\quad\Leftrightarrow\quad
\inf\big\{s\ge0\big|\,s\in I,\,\gamma(s)\in\Geb^c\big\}>t,
\end{align}
with the common convention $\inf\emptyset=\infty$.

In fact, let $\tau\in[0,\infty]$ denote the infimum in \eqref{eindrittelequiv}.
Assume first that $\tau>t$. Then $\gamma([0,t])\subset\Geb$. Thus
$(Y_\infty^\Geb\circ\gamma)\restr_{[0,t]}$ is a real-valued continuous function
on the compact interval $[0,t]$. It is then clear that the limit as $n\to\infty$ of
the integral to the left in \eqref{eindrittelequiv} is finite.
Next, assume that $\tau\le t$. Since $\gamma$ is continuous and $\Geb^c$ is closed,
we then have $\gamma(\tau)\in\Geb^c$. The H\"{o}lder continuity of $\gamma$ thus implies
$$
Y^\Geb_\infty(\gamma(s))\ge|\gamma(s)-\gamma(\tau)|^{-3}
\ge\frac{1}{C}|s-\tau|^{-1},\quad s\in[0,t],
$$
for some $C>0$. Consequently,
\begin{align*}
\lim_{n\to\infty}\int_0^t Y^\Geb_n(\gamma(s))\Id s\ge\frac{1}{C}\int_0^t\frac{\Id s}{|s-\tau|}=\infty.
\end{align*}
\end{rem}

%%%%%%%%%%%%%%%%%%%%%%%%%%%%%%%%%%%%%%%%%%%%%%%%%

\subsection{Feynman-Kac formulas for Dirichlet realizations}\label{ssecDFK}

\noindent
Throughout this subsection we fix some $t>0$. We work under the assumptions of the preceding 
subsection and the following hypothesis:

\begin{hyp}\label{hypDFKgen}
There exist a probability space $(\Omega,\mathfrak{F},\PP)$ and, for every $\V{x}\in\RR^\nu$,
\begin{enumerate}[leftmargin=0.7cm]
\item[$\triangleright$] a strongly measurable map $M_t(\V{x}):\Omega\to\LO(\sK)$;
\item[$\triangleright$] some pathwise continuous $\RR^\nu$-valued stochastic process 
$\V{X}^{\V{x}}$ which $\PP$-a.s. starts at $0$ and whose paths are 
$\PP$-a.s. H\"older continuous of order $1/3$; 
\end{enumerate} 
such that the following holds:
\begin{enumerate}[leftmargin=0.7cm]
\item[$\triangleright$] For all $\Psi\in\HR$ and $\V{x}\in\RR^\nu$,
\begin{align}\label{domMt}
\|M_t(\V{x})\|\|\Psi(\V{X}^{\V{x}}_t)\|\in L^1(\PP).
\end{align}
\item[$\triangleright$] For all bounded and continuous functions $v:\RR^\nu\to\RR$,
the following Feynman-Kac type formula holds for all $\Psi\in\HR$,
\begin{align}\label{FKabsRRd}
(e^{-t(Q_{\RR^\nu}+v)}\Psi)(\V{x})=\EE\big[e^{-\int_0^tv(\V{X}^{\V{x}}_s)\Id s}
M_t(\V{x})\Psi(\V{X}^{\V{x}}_t)\big],\quad\text{a.e. $\V{x}\in\RR^\nu$.}
\end{align}
\end{enumerate}
\end{hyp}

We further let $\tau_\Geb(\V{x}):\Omega\to[0,\infty]$ denote the first
exit time of $\V{X}^{\V{x}}$ from $\Geb$, i.e.,
\begin{align*}
\tau_{\Geb}(\V{x})&:=\inf\{s\ge0|\,\V{X}^{\V{x}}_s\in\Geb^c\},
\end{align*}
with $\inf\emptyset=\infty$. Since $\V{X}^{\V{x}}$ is pathwise continuous and $\Geb^c$ is closed, 
$\tau_\Geb(\V{x})$ is a stopping time with respect to the filtration generated by $\V{X}^{\V{x}}$.
In particular, 
$$
\{\tau_\Geb(\V{x})>t\}\in\fF.
$$

\begin{lem}\label{lemFKGeb}
In the situation described above, let $\Psi\in\HR_\Geb$. Then
\begin{align}\label{FKabs}
(e^{-tQ_\Geb}\Psi)(\V{x})&=\EE\big[1_{\{\tau_\Geb(\V{x})>t\}}
M_t(\V{x})\Psi(\V{X}^{\V{x}}_t)\big],\quad\text{a.e. $\V{x}\in\RR^\nu$.}
\end{align}
\end{lem}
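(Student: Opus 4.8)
The plan is to deduce the Feynman--Kac formula for $Q_\Geb$ from the one for $Q_{\RR^\nu}$ (Hypothesis~\ref{hypDFKgen}) by realizing the Dirichlet form $\mathfrak{q}_\Geb$ as a monotone limit of the forms $\mathfrak{q}_{\RR^\nu}+Y_n^\Geb\id_\sK$. More precisely, set $Q_n:=Q_{\RR^\nu}+Y_n^\Geb$ (here $Y_n^\Geb=n\wedge Y_\infty^\Geb$ is bounded and continuous on $\RR^\nu$ modulo the usual care about $\Geb^c$, so \eqref{FKabsRRd} applies with $v=Y_n^\Geb$ after a routine continuity/approximation argument), so that the associated forms $\mathfrak{q}_{\RR^\nu}+Y_n^\Geb\id$ increase with $n$. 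First I would identify the limit: by the theory of monotone convergence of nonnegative closed quadratic forms (e.g.\ Simon's monograph or Kato), the forms $\mathfrak{q}_{\RR^\nu}+Y_n^\Geb\id$ converge in the strong resolvent sense to the closed form which is the closure of their pointwise limit restricted to its finiteness domain, namely $\dom(\mathfrak{q}_{\RR^\nu}^{1,\infty})=\dom(\mathfrak{q}_{\RR^\nu})\cap\fdom(Y_\infty^\Geb)$ with the form value $\mathfrak{q}_{\RR^\nu}[\Psi]+\int Y_\infty^\Geb\|\Psi\|^2$. On that domain $\int Y_\infty^\Geb\|\Psi\|^2$ must vanish (otherwise the limit is $+\infty$), so the limiting form is $\mathfrak{q}_{\RR^\nu}\restr_{\dom(\mathfrak{q}_{\RR^\nu}^{1,\infty})}$, and by Hypothesis~\ref{hypabD}(a)--(b) this is precisely $\mathfrak{q}_\Geb$ (its closure). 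Hence $e^{-tQ_n}\to e^{-tQ_\Geb}\oplus 0$ strongly on $\HR=L^2(\RR^\nu,\sK)$, where we view $Q_\Geb$ as acting in $\HR_\Geb$ and extend by $0$ on $\HR_\Geb^\perp$; in particular $e^{-tQ_n}\Psi\to e^{-tQ_\Geb}\Psi$ in $\HR$ for $\Psi\in\HR_\Geb$.

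Next I would pass to the limit on the probabilistic side of \eqref{FKabsRRd} with $v=Y_n^\Geb$. Fix $\Psi\in\HR_\Geb$. Along a subsequence we have convergence a.e.\ $\V{x}$, so it suffices to show
\begin{align*}
\EE\big[e^{-\int_0^tY_n^\Geb(\V{X}_s^{\V{x}})\Id s}M_t(\V{x})\Psi(\V{X}_t^{\V{x}})\big]
\longrightarrow
\EE\big[1_{\{\tau_\Geb(\V{x})>t\}}M_t(\V{x})\Psi(\V{X}_t^{\V{x}})\big]
\end{align*}
for a.e.\ $\V{x}$. Pathwise, $e^{-\int_0^tY_n^\Geb(\V{X}_s^{\V{x}})\Id s}$ is nonincreasing in $n$ and bounded by $1$, and by the dominated convergence theorem its limit is $e^{-\lim_n\int_0^tY_n^\Geb(\V{X}_s^{\V{x}})\Id s}$; since the paths of $\V{X}^{\V{x}}$ are $\PP$-a.s.\ H\"older continuous of order $1/3$, Remark~\ref{remeindrittel} applies with $\gamma=\V{X}^{\V{x}}$ (and $I=[0,t]$, noting $\V{X}_0^{\V{x}}=0$, so one works with $\V{x}+\V{X}^{\V{x}}$; the statement of Remark~\ref{remeindrittel} is for a path, and here I use its conclusion for the shifted path $\V{x}+\V{X}^{\V{x}}_\cdot$ which is equally H\"older of order $1/3$), giving that this limit equals $1_{\{\tau_\Geb(\V{x})>t\}}$ $\PP$-a.s. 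Therefore the integrands converge $\PP$-a.s.\ to $1_{\{\tau_\Geb(\V{x})>t\}}M_t(\V{x})\Psi(\V{X}_t^{\V{x}})$, and they are dominated in norm by $\|M_t(\V{x})\|\,\|\Psi(\V{X}_t^{\V{x}})\|$, which lies in $L^1(\PP)$ by \eqref{domMt}. The dominated convergence theorem (for $\sK$-valued Bochner integrals) then yields the claimed limit, and combining with the resolvent convergence on the left gives \eqref{FKabs}.

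The step I expect to be the main obstacle is justifying the identification of the monotone form limit with $\mathfrak{q}_\Geb$ and the attendant strong resolvent convergence $e^{-tQ_n}\Psi\to e^{-tQ_\Geb}\Psi$. The subtlety is that $Y_\infty^\Geb$ need not be densely defined as a form, so the limiting form lives on the (generally non-dense) subspace $\HR_\Geb$, and one must be careful that the limiting closed form is exactly $\mathfrak{q}_\Geb$ (not merely contained in it or containing it) — this is precisely what Hypothesis~\ref{hypabD} is designed to guarantee, with part~(a) giving that $\dom(\mathfrak{q}_{\RR^\nu}^{1,\infty})$ is a form core for $\mathfrak{q}_\Geb$ and part~(b) matching the form values. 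A secondary technical point is the reduction of \eqref{FKabsRRd} from bounded \emph{continuous} $v$ to the bounded \emph{possibly discontinuous} function $Y_n^\Geb$ (discontinuous across $\partial\Geb$); this is handled by a standard monotone-class or monotone-approximation argument approximating $Y_n^\Geb$ from below by bounded continuous functions and invoking monotone form convergence together with dominated convergence on the probabilistic side, exactly as in the classical Schr\"odinger case treated in \cite{Simon1978Adv,BHL2000}.
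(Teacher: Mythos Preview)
Your overall strategy is right in spirit, but the key identification step contains a genuine error. The monotone increasing limit of the forms $\mathfrak{q}_{\RR^\nu}+Y_n^\Geb\id$ is \emph{not} $\mathfrak{q}_\Geb$. On the finiteness domain $\dom(\mathfrak{q}_{\RR^\nu}^{1,\infty})=\dom(\mathfrak{q}_{\RR^\nu})\cap\fdom(Y_\infty^\Geb)$ the limiting form value is
\[
\mathfrak{q}_{\RR^\nu}[\Psi]+\int_{\RR^\nu}Y_\infty^\Geb(\V{x})\|\Psi(\V{x})\|^2\Id\V{x},
\]
and the second term does \emph{not} vanish: membership in $\fdom(Y_\infty^\Geb)$ means the integral is finite, not zero. (Take any $\Psi\in\sD(\Geb,\sK)$: then $Y_\infty^\Geb$ is bounded on $\supp(\Psi)$, so the integral is finite and typically strictly positive.) Thus your monotone limit is the operator the paper calls $Q_{\RR^\nu}^{1,\infty}$, not $Q_\Geb$, and your argument as written would yield a Feynman--Kac formula for the wrong operator.

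This is precisely why the paper introduces a second parameter $\kappa>0$: one first takes $n\to\infty$ with $\kappa$ fixed to obtain the formula for $Q_{\RR^\nu}^{\kappa,\infty}$ (this is your Step, done correctly), and then lets $\kappa\downarrow0$. The forms $\mathfrak{q}_{\RR^\nu}^{\kappa,\infty}$ now \emph{decrease} to $\mathfrak{q}_{\RR^\nu}\restr_{\dom(\mathfrak{q}_{\RR^\nu}^{1,\infty})}$, and Hypothesis~\ref{hypabD}(a)--(b) identifies the closure of the latter with $\mathfrak{q}_\Geb$; Kato's decreasing-form theorem then gives $e^{-tQ_{\RR^\nu}^{\kappa,\infty}}\Psi\to e^{-tQ_\Geb}\Psi$. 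On the probabilistic side the factor $e^{-\kappa\int_0^tY_\infty^\Geb(\V{X}_s^{\V{x}})\Id s}$ tends to $1$ on $\{\tau_\Geb(\V{x})>t\}$ as $\kappa\downarrow0$, which is what produces \eqref{FKabs}. Your single-limit scheme collapses these two steps and loses the $\kappa\downarrow0$ passage that removes the extraneous $Y_\infty^\Geb$ contribution.

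A minor remark: your ``secondary technical point'' is a non-issue. Since $Y_\infty^\Geb(\V{x})\to\infty$ as $\V{x}\to\partial\Geb$ from inside $\Geb$ and $Y_\infty^\Geb=\infty$ on $\Geb^c$, the truncation $Y_n^\Geb=n\wedge Y_\infty^\Geb$ is in fact bounded and continuous on all of $\RR^\nu$, so Hypothesis~\ref{hypDFKgen} applies to it directly.
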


\begin{proof}
Before we comment on the various steps of this proof we have to introduce some more notation:

For every $\kappa>0$, we define
$\dom(\mathfrak{q}_{\RR^\nu}^{\kappa,\infty}):=\dom(\mathfrak{q}_{\RR^\nu}^{1,\infty})$ 
(recall \eqref{defD1infty}) and
\begin{align*}
\mathfrak{q}_{\RR^\nu}^{\kappa,\infty}[\Psi]
:=\mathfrak{q}_{\RR^\nu}[\Psi]+\kappa\int_{\RR^\nu}Y_\infty^\Geb(\V{x})\|\Psi(\V{x})\|^2\Id\V{x},
\quad\Psi\in\dom(\mathfrak{q}_{\RR^\nu}^{\kappa,\infty}).
\end{align*}
Then $\mathfrak{q}_{\RR^\nu}^{\kappa,\infty}$ is closed as a sum of closed semi-bounded forms.
As remarked above, it is in general not densely defined as a form in $\HR$. By assumption
(a) it is, however, a densely defined semi-bounded closed form on the sub-Hilbert space $\HR_{\Geb}$.
Therefore, there exists a unique selfadjoint operator in $\HR_{\Geb}$ representing
$\dom(\mathfrak{q}_{\RR^\nu}^{\kappa,\infty})$ that we denote by $Q^{\kappa,\infty}_{\RR^\nu}$. 
We further define the Hamiltonians 
$$
Q_{\RR^\nu}^{\kappa,n}:=Q_{\RR^\nu}+\kappa Y_n^\Geb,\quad n\in\NN,\,\kappa>0,
$$
and denote the associated quadratic forms by $\mathfrak{q}_{\RR^\nu}^{\kappa,n}$.

\smallskip

{\em Step 1.} Let $\kappa>0$. We shall show that
\begin{align}\label{gustav1}
\big\|e^{-tQ_{\RR^\nu}^{\kappa,n}}\Psi-e^{-tQ_{\RR^\nu}^{\kappa,\infty}}1_{\Geb}\Psi\big\|
\xrightarrow{\;\;n\to\infty\;\;}0,\quad\Psi\in\HR.
\end{align}
We know that the form domain of $Q_{\RR^\nu}^{\kappa,n}$ is $\dom(\mathfrak{q}_{\RR^\nu})$,
which contains $\dom(\mathfrak{q}_{\RR^\nu}^{\kappa,\infty})$.
The monotone convergence theorem further shows that
\begin{align*}
\mathfrak{q}_{\RR^\nu}^{\kappa,n}[\Psi]\uparrow\mathfrak{q}_{\RR^\nu}^{\kappa,\infty}[\Psi],
\quad\Psi\in\dom(\mathfrak{q}_{\RR^\nu}^{\kappa,\infty})=
\Big\{\Phi\in\dom(\mathfrak{q}_{\RR^\nu})\,\Big|\:
\sup_{n\in\NN}\mathfrak{q}_{\RR^\nu}^{\kappa,n}[\Phi]<\infty\Big\}.
\end{align*}
The convergence \eqref{gustav1} now follows from a monotone convergence theorem for not 
necessarily densely defined quadratic forms \cite[Thm.~4.1\&Thm.~4.2]{Simon1978}.

{\em Step 2.} Let $\kappa>0$ and $\Psi\in\HR_{\Geb}$. We next show that
\begin{align}\label{gustav2}
\big(e^{-tQ_{\RR^\nu}^{\kappa,\infty}}\Psi\big)(\V{x})&=\EE\Big[1_{\{\tau_\Geb({\V{x}})>t\}}
e^{-\kappa\int_0^tY_\infty^\Geb(\V{X}_s^{\V{x}})\Id s}M_t(\V{x})\Psi(\V{X}^{\V{x}}_t)\Big],
\end{align}
for a.e. $\V{x}\in\RR^\nu$, where $e^{-\infty}:=0$.
Owing to Step~1 we find natural numbers $n_1<n_2<\ldots$ such that, a.e. on $\RR^\nu$,
the sequence $(e^{-tQ_{\RR^\nu}^{\kappa,n_\ell}}\Psi)_{\ell\in\NN}$ converges to the vector
$e^{-tQ_{\RR^\nu}^{\kappa,\infty}}\Psi$. Furthermore, since the potentials $\kappa Y_n^\Geb$, 
$n\in\NN$, $\kappa>0$, are bounded and continuous, the Feynman-Kac type formula 
\eqref{FKabsRRd} applies to $Q_{\RR^\nu}^{\kappa,n}$. We thus have
\begin{align*}
\big(e^{-tQ_{\RR^\nu}^{\kappa,n}}\Psi\big)(\V{x})&=\EE\Big[
e^{-\kappa\int_0^tY_n^\Geb(\V{X}_s^{\V{x}})\Id s}M_t(\V{x})\Psi(\V{X}^{\V{x}}_t)\Big],
\quad\text{a.e. $\V{x}\in\RR^\nu$},\,n\in\NN,
\end{align*}
as well as the domination
$$
\big\|e^{-\kappa\int_0^tY_n^\Geb(\V{X}^{\V{x}}_s)\Id s}M_t(\V{x})\Psi(\V{X}_t^{\V{x}})\big\|
\le\|M_t(\V{x})\|\|\Psi(\V{X}_t^{\V{x}})\|\in L^1(\PP),\quad n\in\NN.
$$
Therefore, it remains to prove that, for every $\V{x}\in\RR^\nu$,
$$
e^{-\kappa\int_0^tY_n^\Geb(\V{X}^{\V{x}}_s)\Id s}\xrightarrow{\;\;n\to\infty\;\;} 
1_{\{\tau_\Geb({\V{x}})>t\}}
e^{-\kappa\int_0^tY_\infty^\Geb(\V{X}^{\V{x}}_s)\Id s},\quad\text{$\PP$-a.s.}
$$
This follows, however, immediately from Rem.~\ref{remeindrittel} and the postulated 
$\PP$-a.s. $1/3$-H\"{o}lder continuity of $\V{X}^{\V{x}}$.

{\em Step~3.} We now claim that
\begin{align}\label{gustav1000}
\big\|e^{-tQ_{\RR^\nu}^{\kappa,\infty}}\Psi-e^{-tQ_{\Geb}}\Psi\big\|
\xrightarrow{\;\;\kappa\downarrow0\;\;}0,\quad\Psi\in\HR_\Geb.
\end{align}
In fact, our assumption (a) ensures that
$\dom(\mathfrak{q}_{\RR^\nu}^{\kappa,\infty})\subset\dom(\mathfrak{q}_{\Geb})$,
and using (b) we further observe that
\begin{align*}
\mathfrak{q}_{\RR^\nu}^{\kappa',\infty}[\Psi]&\ge\mathfrak{q}_{\RR^\nu}^{\kappa,\infty}[\Psi]
\ge\mathfrak{q}_{\RR^\nu}[\Psi]=\mathfrak{q}_{\Geb}[\Psi],\quad\kappa'\ge\kappa>0,
\\
\mathfrak{q}_{\RR^\nu}^{\kappa,\infty}[\Psi]&\downarrow\mathfrak{q}_{\Geb}[\Psi],\quad
\kappa\downarrow0,
\end{align*} 
for all $\Psi\in\bigcup_{\kappa>0}\dom(\mathfrak{q}_{\RR^\nu}^{\kappa,\infty})
=\dom(\mathfrak{q}_{\RR^\nu}^{1,\infty})$.
Thanks to the density requirement in (a), the convergence \eqref{gustav1000} 
now follows from a monotone 
convergence theorem for quadratic forms \cite[Thm.~VIII.3.11]{Kato}.

{\em Step~4.} Finally, let $\Psi\in\HR_\Geb$. We shall verify \eqref{FKabs}.
By virtue of Step~3 we find $\kappa_n>0$, $n\in\NN$, with $\kappa_n\downarrow0$, $n\to\infty$,
such that, a.e. on $\RR^\nu$, the sequence $(e^{-tQ_{\RR^\nu}^{\kappa_n,\infty}}\Psi)_{n\in\NN}$ 
converges to the left hand side of \eqref{FKabs}. Thanks to Step~2 we further know that
\begin{equation}\label{ariel}
(e^{-tQ_{\RR^\nu}^{\kappa_n,\infty}}\Psi)(\V{x})=\EE\Big[1_{\{\tau_\Geb({\V{x}})>t\}}
e^{-\kappa_n\int_0^tY_{\infty}^\Geb(\V{X}_s^{\V{x}})\Id s}M_t(\V{x})\Psi(\V{X}^{\V{x}}_t)\Big],
\end{equation}
for a.e. $\V{x}\in\RR^\nu$ and all $n\in\NN$. Since we also have the domination
$$
e^{-\int_0^t\kappa_nY_\infty^\Geb(\V{X}_s^{\V{x}})\Id s}
\|M_{t}(\V{x})\Psi(\V{B}_t^{\V{x}})\|
\le\|M_t(\V{x})\|\|\Psi(\V{X}_t^{\V{x}})\|\in L^1(\PP),
\quad n\in\NN,\,\V{x}\in\RR^\nu,
$$
the dominated convergence theorem implies that, for all $\V{x}\in\RR^\nu$, 
the right hand side of \eqref{ariel} converges, as $n\to\infty$, to the right hand side of \eqref{FKabs}.
\end{proof}

%%%%%%%%%%%%%%%%%%%%%%%%%%%%%%%%%%%%%%%%%%%%%%%%%

\subsection{Feynman-Kac formulas for semigroup kernels of Dirichlet realizations}

\noindent
Again we fix $t>0$ and we assume:

\begin{hyp}\label{hypDFKbridgegen}
There exists a probability space $(\Omega,\mathfrak{F},\PP)$ and, for all $\V{x},\V{y}\in\RR^\nu$,
\begin{enumerate}[leftmargin=0.7cm]
\item[$\triangleright$] an operator-valued map $M_t(\V{x},\V{y}):\Omega\to\LO(\sK)$;
\item[$\triangleright$] a continuous $\RR^\nu$-valued stochastic process
$(\V{X}_s^{\V{y},\V{x}})_{s\in[0,t]}$ which $\PP$-a.s. starts at $\V{y}$ and 
whose paths are $\PP$-a.s. H\"older continuous of order $1/3$; 
\end{enumerate} 
such that the following holds:
\begin{enumerate}[leftmargin=0.7cm]
\item[$\triangleright$] For every $\V{x}\in\RR^\nu$, the following map is measurable,
\begin{align*}
[0,t]\times\RR^\nu\times\Omega\ni(s,\V{y},\vo)\longmapsto\V{X}_s^{\V{y},\V{x}}(\vo)\in\RR^\nu.
\end{align*}
\item[$\triangleright$] For every $\V{x}\in\RR^\nu$, the following map is strongly measurable,
\begin{align*}
\RR^\nu\times\Omega\ni(\V{y},\vo)\longmapsto M_t(\V{x},\V{y})(\vo)\in\LO(\sK).
\end{align*}
\item[$\triangleright$] For all $\V{x}\in\RR^\nu$ and $\Psi\in\HR$,
\begin{align}\label{domMtb2}
\int_{\RR^\nu}\EE[\|M_t(\V{x},\V{y})\|]\|\Psi(\V{y})\|\Id\V{y}<\infty
\end{align}
\item[$\triangleright$] For all bounded and continuous functions $v:\RR^\nu\to\RR$, the relation
\begin{align}\label{FKabsRRdxy}
(e^{-t(Q_{\RR^\nu}+v)}\Psi)(\V{x})=\int_{\RR^\nu}\EE\big[e^{-\int_0^tv(\V{X}^{\V{y},\V{x}}_s)\Id s}
M_t(\V{x},\V{y})\Psi(\V{y})\big]\Id\V{y},\quad\text{a.e. $\V{x}\in\RR^\nu$,}
\end{align}
holds for all $\Psi\in\HR$.
\end{enumerate}
\end{hyp}

It might make sense to give the following remark, where $\tau_\Geb(\V{y},\V{x}):\Omega\to[0,\infty]$ 
denotes the first exit time of $\V{X}^{\V{y},\V{x}}$ from $\Geb$, i.e.,
\begin{align*}
\tau_{\Geb}(\V{y},\V{x})&:=\inf\big\{s\in[0,t]\big|\,\V{X}^{\V{y},\V{x}}_s\in\Geb^c\big\}.
\end{align*}

\begin{rem}\label{remtauGebxy}
Let $\V{x}\in\RR^\nu$. Then 
\begin{align}\label{ingrid1999}
\big\{(\V{y},\vo)\in\RR^\nu\times\Omega\big|\,\tau_\Geb(\V{y},\V{x})(\vo)=\infty\big\}
\in\fB(\RR^\nu)\otimes\fF.
\end{align}
In fact, set $\V{Y}_s(\V{y},\vo):=\V{X}_s^{\V{y},\V{x}}(\vo)$, for all $s\in[0,t]$ 
and $(\V{y},\vo)\in\RR^\nu\times\Omega$. Then $(\V{Y}_s)_{s\in[0,t]}$ is a continuous stochastic 
process on $(\RR^\nu\times\Omega,\fB(\RR^\nu)\otimes\fF,\beta\otimes\PP)$, where
$\beta$ is an arbitrary Borel probability measure on $\RR^\nu$. Then its first exit time from
$\Geb$, i.e., $\tilde{\tau}_\Geb:=\inf\{s\in[0,t]|\V{Y}_s\in\Geb^c\}$ is a stopping time with respect
to the filtration generated by $\V{Y}$. In particular,
$\fB(\RR^\nu)\otimes\fF\ni\{\tilde{\tau}_\Geb>t\}=\{\tilde{\tau}_\Geb=\infty\}$ and by inspecting 
definitions we see that $\{\tilde{\tau}_\Geb=\infty\}$ is equal to the set in \eqref{ingrid1999}.
\end{rem}

\begin{lem}\label{lemFKGebb}
In the situation described above, let $\Psi\in\HR_\Geb$. Then
\begin{align}\label{FKabsxy}
(e^{-tQ_\Geb}\Psi)(\V{x})&=\int_{\Geb}\EE\big[1_{\{\tau_\Geb(\V{y},\V{x})=\infty\}}
M_t(\V{x},\V{y})\Psi(\V{y})\big]\Id\V{y},\quad\text{a.e. $\V{x}\in\RR^\nu$.}
\end{align}
\end{lem}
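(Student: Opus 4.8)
The plan is to mimic the proof of Lemma~\ref{lemFKGeb}, replacing the Feynman-Kac formula \eqref{FKabsRRd} by its kernel version \eqref{FKabsRRdxy} throughout, and using the measurability statements in Hypothesis~\ref{hypDFKbridgegen} (together with Rem.~\ref{remtauGebxy}) to justify the various Fubini-type interchanges of $\EE$ and $\int_{\RR^\nu}\Id\V{y}$. Concretely, with $Y_n^\Geb$ and $Y_\infty^\Geb$ as before, and with the operators $Q_{\RR^\nu}^{\kappa,n}=Q_{\RR^\nu}+\kappa Y_n^\Geb$ and $Q_{\RR^\nu}^{\kappa,\infty}$ defined exactly as in the proof of Lemma~\ref{lemFKGeb}, I would carry out the same four-step scheme. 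Steps~1 and~3 (the two monotone-convergence-of-forms arguments, giving $e^{-tQ_{\RR^\nu}^{\kappa,n}}\to e^{-tQ_{\RR^\nu}^{\kappa,\infty}}1_\Geb$ as $n\to\infty$ and $e^{-tQ_{\RR^\nu}^{\kappa,\infty}}\to e^{-tQ_\Geb}$ as $\kappa\downarrow0$) are literally identical to before — they only involve the left-hand, operator-theoretic sides and are untouched by whether we use \eqref{FKabsRRd} or \eqref{FKabsRRdxy}. So the only real work is redoing Steps~2 and~4 in the kernel setting.

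For Step~2, fix $\kappa>0$ and $\Psi\in\HR_\Geb$; I want to show that for a.e.\ $\V{x}\in\RR^\nu$,
\begin{align*}
(e^{-tQ_{\RR^\nu}^{\kappa,\infty}}\Psi)(\V{x})
=\int_{\RR^\nu}\EE\Big[1_{\{\tau_\Geb(\V{y},\V{x})=\infty\}}
e^{-\kappa\int_0^tY_\infty^\Geb(\V{X}_s^{\V{y},\V{x}})\Id s}M_t(\V{x},\V{y})\Psi(\V{y})\Big]\Id\V{y},
\end{align*}
with $e^{-\infty}:=0$. As in Lemma~\ref{lemFKGeb} I pass to a subsequence $n_\ell$ along which $e^{-tQ_{\RR^\nu}^{\kappa,n_\ell}}\Psi\to e^{-tQ_{\RR^\nu}^{\kappa,\infty}}\Psi$ a.e.; for each $n$ I apply \eqref{FKabsRRdxy} with $v=\kappa Y_n^\Geb$ (bounded and continuous) to rewrite the left side as $\int_{\RR^\nu}\EE[e^{-\kappa\int_0^tY_n^\Geb(\V{X}_s^{\V{y},\V{x}})\Id s}M_t(\V{x},\V{y})\Psi(\V{y})]\Id\V{y}$. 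Then I pass $n\to\infty$ inside the $\V{y}$-integral and the expectation simultaneously. The integrable dominating function is $(\V{y},\vo)\mapsto\|M_t(\V{x},\V{y})(\vo)\|\,\|\Psi(\V{y})\|$, whose integrability against $\Id\V{y}\otimes\Id\PP$ is exactly \eqref{domMtb2}; this also secures the applicability of Fubini (and the measurability of the relevant set is Rem.~\ref{remtauGebxy}). The pointwise convergence of the integrand, for fixed $\V{x}$ and a.e.\ $(\V{y},\vo)$, is again the content of Rem.~\ref{remeindrittel}, using that the paths $s\mapsto\V{X}_s^{\V{y},\V{x}}(\vo)$ are $\PP$-a.s.\ (hence $\Id\V{y}\otimes\Id\PP$-a.e.) Hölder continuous of order $1/3$ — here one uses that the map $(s,\V{y},\vo)\mapsto\V{X}_s^{\V{y},\V{x}}(\vo)$ is jointly measurable, so "H\"older for $\PP$-a.e.\ $\vo$, every $\V{y}$" upgrades to "H\"older for $\beta\otimes\PP$-a.e.\ $(\V{y},\vo)$". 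A combined dominated-convergence theorem on $\RR^\nu\times\Omega$ then yields the displayed identity.

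For Step~4, again fix $\Psi\in\HR_\Geb$; by Step~3 choose $\kappa_n\downarrow0$ with $e^{-tQ_{\RR^\nu}^{\kappa_n,\infty}}\Psi\to e^{-tQ_\Geb}\Psi$ a.e.\ on $\RR^\nu$. By Step~2, the left side equals the $\V{y}$-integral of $\EE[1_{\{\tau_\Geb(\V{y},\V{x})=\infty\}}e^{-\kappa_n\int_0^tY_\infty^\Geb(\V{X}_s^{\V{y},\V{x}})\Id s}M_t(\V{x},\V{y})\Psi(\V{y})]$ for a.e.\ $\V{x}$. Since on the event $\{\tau_\Geb(\V{y},\V{x})=\infty\}$ the exponential $e^{-\kappa_n\int_0^tY_\infty^\Geb(\V{X}_s^{\V{y},\V{x}})\Id s}$ increases monotonically to $1$ as $\kappa_n\downarrow0$ (the integral being finite there, again by Rem.~\ref{remeindrittel}), and since $\{\tau_\Geb(\V{y},\V{x})=\infty\}\subset\{\tau_\Geb(\V{y},\V{x})>t\}$ means the indicator restricts the $\V{y}$-integration effectively to $\Geb$, the same dominating function $\|M_t(\V{x},\V{y})\|\,\|\Psi(\V{y})\|\in L^1(\Id\V{y}\otimes\Id\PP)$ lets me pass $n\to\infty$ under $\int_{\RR^\nu}\Id\V{y}$ and $\EE$ by dominated convergence, arriving at \eqref{FKabsxy} (with the domain of integration written as $\Geb$ since $\Psi$ is supported there and the indicator forces $\V{X}^{\V{y},\V{x}}$, hence its starting point $\V{y}$, to lie in $\Geb$).

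I do not expect a serious obstacle here: the whole argument is a faithful transcription of Lemma~\ref{lemFKGeb} with an extra $\int_{\RR^\nu}\Id\V{y}$ carried along, and every nontrivial input (the form monotone-convergence theorems, Rem.~\ref{remeindrittel}, the bound \eqref{domMtb2}, and the measurability statement Rem.~\ref{remtauGebxy}) is already available. The one point requiring a little care — and the closest thing to a real step — is the bookkeeping that turns the pathwise $1/3$-H\"older hypothesis and the pathwise a.s.\ statements of Hypothesis~\ref{hypDFKbridgegen} into $\beta\otimes\PP$-almost-everywhere statements on $\RR^\nu\times\Omega$, so that the combined dominated/monotone convergence theorem and Fubini apply on the product space; this is exactly where the joint measurability of $(s,\V{y},\vo)\mapsto\V{X}_s^{\V{y},\V{x}}(\vo)$ and the strong measurability of $(\V{y},\vo)\mapsto M_t(\V{x},\V{y})(\vo)$ demanded in Hypothesis~\ref{hypDFKbridgegen} are used.
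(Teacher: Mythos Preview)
Your proposal is correct and follows essentially the same route as the paper's proof: Steps~1 and~3 are declared identical, and Steps~2 and~4 are redone with \eqref{FKabsRRdxy} in place of \eqref{FKabsRRd}. The only difference is organizational---the paper iterates the dominated convergence (first in $\PP$ for a.e.\ fixed $\V{y}$ with $\EE[\|M_t(\V{x},\V{y})\|]<\infty$, then in $\Id\V{y}$), which sidesteps the $\beta\otimes\PP$-a.e.\ H\"older bookkeeping you flagged, but your product-space version is equally valid.
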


\begin{proof}
The proof parallels the one of Lem.~\ref{lemFKGeb} and we shall again use some notation used there.
Steps~1 and~3, dealing with the left hand sides of the Feynman-Kac formulas, are identical. Therefore,
we only comment on the remaining two steps.

{\em Step~2.} We pick $\kappa>0$ and $\Psi\in\HR$ and propose to show that, 
for a.e. $\V{x}\in\RR^\nu$,
\begin{align}\label{ingrid1}
(e^{-tQ_{\RR^\nu}^{\kappa,\infty}}\Psi)(\V{x})&=\int_{\RR^\nu}
\EE\Big[1_{\{\tau_\Geb(\V{y},\V{x})=\infty\}}
e^{-\kappa\int_0^tY_\infty^\Geb(\V{X}_s^{\V{y},\V{x}})\Id s}M_t(\V{x},\V{y})\Psi(\V{y})\Big]\Id\V{y}.
\end{align}
By assumption the following special cases of \eqref{FKabsRRdxy} hold, for a.e. $\V{x}\in\RR^\nu$,
\begin{align}\label{ingrid2}
(e^{-tQ_{\RR^\nu}^{\kappa,n}}\Psi)(\V{x})
&=\int_{\RR^\nu}\EE\Big[e^{-\kappa\int_0^tY^\Geb_n(\V{X}^{\V{y},\V{x}}_s)\Id s}
M_t(\V{x},\V{y})\Psi(\V{y})\Big]\Id\V{y},\quad n\in\NN.
\end{align}
Fix $\V{x}\in\RR^\nu$. Then $\EE[\|M_t(\V{x},\V{y})\|]$ is finite for a.e. $\V{y}\in\RR^\nu$
and, for every $\V{y}$ for which this is the case, Rem.~\ref{remeindrittel}
and the dominated convergence theorem imply that the expectation under the $\Id\V{y}$-integration
in \eqref{ingrid2} converges to the expectation under the integral in \eqref{ingrid1}, as $n\to\infty$.
Hence, \eqref{ingrid1} follows from Step~1 in the proof of Lem.~\ref{lemFKGeb},
the bound \eqref{domMtb2}, and another application of the dominated convergence theorem.

It is now obvious how to formulate the analogue of Step~4 in the proof of Lem.~\ref{lemFKGeb}.
\end{proof}

%%%%%%%%%%%%%%%%%%%%%%%%%%%%%%%%%%%%%%%%%%%%%%%%%
%%%%%%%%%%%%%%%%%%%%%%%%%%%%%%%%%%%%%%%%%%%%%%%%%
%%%%%%%%%%%%%%%%%%%%%%%%%%%%%%%%%%%%%%%%%%%%%%%%%

\section{A Leibniz rule for vector-valued weak derivatives}\label{secLeibniz}

\noindent
Our goal in this section is to extend a version of the Leibniz rule for Sobolev functions we learned from
\cite[Lem.~2.3(i)]{HundertmarkSimon} to the vector-valued case. This is done in Thm.~\ref{LeibnizL1} 
below. While most of the time Leibniz rules for Sobolev functions are derived for a product of functions 
in $W^{1,p}$ and $W^{1,p'}$, respectively, with $p'$ denoting the conjugate exponent of $p$,
the point about Thm.~\ref{LeibnizL1} is that it applies to two $W^{1,1}$ functions and merely the three
{\em products} showing up in the Leibniz rule are assumed to be locally integrable. Similarly as in 
\cite{HundertmarkSimon} we benefit from this generality in \eqref{sissel2}, \eqref{dersgndelta}, 
and \eqref{dialog} below.

The proof of Thm.~\ref{LeibnizL1} is slightly different from the one in \cite{HundertmarkSimon}, 
also in the case where all involved Hilbert spaces are one-dimensional.

First, however, we recall a standard mollifying procedure and prove a lemma:
In the following paragraphs and the next lemma
$\sK$ is a separable Hilbert space. Let $p\in[1,\infty]$, $j\in\{1,\ldots,\nu\}$, and
$\Psi\in L^p_\loc(\Geb,\sK)$ have a weak partial derivative with respect to
$x_j$ such that $\partial_{x_j}\Psi\in L^p_\loc(\Lambda,\sK)$. Pick a cutoff function $\rho$ such that
\begin{align}\label{klausi1}
\rho\in C_0^\infty(\RR^\nu,\RR),\quad\rho\ge0,\quad\text{$\rho(\V{x})=0$, for $|\V{x}|\ge1$,}
\quad\|\rho\|_1=1.
\end{align} 
Furthermore, set
\begin{align}\nonumber
\Geb_n&:=\Big\{\V{y}\in\Geb\Big|\dist(\V{y},\partial\Geb)>\frac{1}{n}\Big\},
\\\label{klausi2}
\rho_n(\V{x})&:=n^{\nu}\rho(n\V{x}),\;\; \V{x}\in\RR^\nu,\,n\in\NN.
\end{align} 
Finally, define the mollified functions
\begin{align}\label{klaus1}
\Psi_n(\V{x}):=\int_{\Geb}\rho_n(\V{x}-\V{y})\Psi(\V{y})\Id\V{y},\quad\V{x}\in\Geb_n,\,n\in\NN.
\end{align}
Then $\Psi_n\in C^\infty(\Lambda_n,\sK)$, if $\Lambda_n\not=\emptyset$, 
and, for every compact subset $K\subset\Geb$,
\begin{align}\label{klausi3}
\|\Psi_n-\Psi\|_{L^p(K,\sK)}+
\|\partial_{x_j}\Psi_n-\partial_{x_j}\Psi\|_{L^p(K,\sK)}\xrightarrow{\;\;n\to\infty\;\;}0,
\quad\text{if $p<\infty$.}
\end{align}
If $p=\infty$, then $\Psi_n\to\Psi$ and $\partial_{x_j}\Psi_n\to\partial_{x_j}\Psi$ a.e. on $\Geb$. 
As remarked in \cite[Rem.~2.4]{Matte2017} these assertions can be proved in virtually the same
way as in the scalar case.

The next lemma will be used to compute weak derivatives of certain cutoff versions of
vector-valued functions. In its statement and henceforth we abbreviate
\begin{align}\label{defZdelta}
Z_\delta(\Psi)&:=(\delta^2+\|\Psi\|_{\sK}^2)^\eh,\quad{\fS_{\delta,\Psi}}:=Z_\delta(\Psi)^{-1}\Psi,
\end{align}
for all $\Psi\in L_\loc^1(\Omega,\sK)$ and $\delta>0$.
We also use the notation $\fS_{\Psi}$ introduced in \eqref{defsgn}.

\begin{lem}\label{lem-abs-val}
Let $j\in\{1,\ldots,\nu\}$, $p\in[1,\infty]$, and $\delta>0$. 
Assume that $\Psi\in L_\loc^{p}(\Geb,\sK)$ has a weak partial derivative with respect to $x_j$ 
satisfying $\partial_{x_j}\Psi\in L_\loc^{p}(\Geb,\sK)$. Then $\|\Psi\|_\sK\in L_{\loc}^{p}(\Geb)$ and 
$Z_\delta(\Psi)\in L_{\loc}^{p}(\Geb)$ have weak partial derivatives
\begin{align}\label{dia00}
\partial_{x_j}\|\Psi\|_\sK&=\Re\SPn{{\fS_\Psi}}{\partial_{x_j}\Psi}_{\sK}\in L_{\loc}^{p}(\Geb),
\\\label{dia00delta}
\partial_{x_j}Z_\delta(\Psi)&=\Re\SPn{{\fS_{\delta,\Psi}}}{\partial_{x_j}\Psi}_{\sK}\in L_{\loc}^{p}(\Geb).
\end{align}
Furthermore, let $m\in\NN$, $\vr\in C^\infty(\RR,\RR)$ such that $0\le\vr\le1$,
$\vr=1$ on $(-\infty,1]$, and $\vr=0$ on $[2,\infty)$, and set $\vr_m(t):=\vr(m^{-1}\ln(t))$, $t>0$, 
so that $|\vr_m'(t)|\le\|\vr'\|_\infty/mt$. Put  $\beta_{m}:=\vr_m(Z_1(\Psi))$.
Then $\beta_{m}\Psi\in L^\infty(\Lambda,\sK)$ has a weak partial derivative
with respect to $x_j$ satisfying $\partial_{x_j}(\beta_{m}\Psi)\in L_\loc^p(\Lambda,\sK)$ and
\begin{align}\label{miles1}
\partial_{x_j}(\beta_{m}\Psi)&=\vr_m'(Z_1(\Psi))
\Re\SPn{{\fS_{1,\Psi}}}{\partial_{x_j}\Psi}_{\sK}\Psi+\beta_m\partial_{x_j}\Psi.
\end{align}
\end{lem}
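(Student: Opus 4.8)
The plan is to treat the three formulas \eqref{dia00}, \eqref{dia00delta}, and \eqref{miles1} in order, reducing the first two to an application of the scalar Leibniz rule from \cite[Lem.~2.3(i)]{HundertmarkSimon} (or rather its vector-valued form, but here in a transparent way) and the mollifier machinery recalled just above, and then bootstrapping \eqref{miles1} from \eqref{dia00delta} together with the chain rule for $\vr_m$. The integrability statements $\|\Psi\|_\sK\in L^p_\loc(\Geb)$ and $Z_\delta(\Psi)\in L^p_\loc(\Geb)$ are immediate from $|\,\|\Psi\|_\sK\,|\le\|\Psi\|_\sK$, $Z_\delta(\Psi)\le\delta+\|\Psi\|_\sK$, and $\Psi\in L^p_\loc(\Geb,\sK)$.

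\textbf{Steps for \eqref{dia00} and \eqref{dia00delta}.} First I would handle the case $p=\infty$ by noting $L^\infty_\loc\subset L^1_\loc$, so it suffices to prove the identities for $p\in[1,\infty)$, and I would prove \eqref{dia00delta} first since $Z_\delta(\Psi)$ is bounded away from $0$, which avoids any issue at the zero set of $\Psi$. Mollify: with $\Psi_n$ as in \eqref{klaus1}, each $\Psi_n\in C^\infty(\Geb_n,\sK)$, and on $\Geb_n$ one computes classically $\partial_{x_j}Z_\delta(\Psi_n)=\Re\SPn{\fS_{\delta,\Psi_n}}{\partial_{x_j}\Psi_n}_\sK$ (differentiate $Z_\delta(\Psi_n)^2=\delta^2+\SPn{\Psi_n}{\Psi_n}_\sK$ and divide by $2Z_\delta(\Psi_n)>0$). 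Then I would pass to the limit: fix $\eta\in\sD(\Geb,\CC)$, choose $n$ large enough that $\mathrm{supp}\,\eta\subset\Geb_n$, integrate by parts against $\eta$, and let $n\to\infty$. On the compact set $K=\mathrm{supp}\,\eta$ one has $\Psi_n\to\Psi$ and $\partial_{x_j}\Psi_n\to\partial_{x_j}\Psi$ in $L^p(K,\sK)$ by \eqref{klausi3}; hence $Z_\delta(\Psi_n)\to Z_\delta(\Psi)$ in $L^p(K)$ (the map $\Psi\mapsto Z_\delta(\Psi)$ is $1$-Lipschitz from $L^p$ to $L^p$), and $\fS_{\delta,\Psi_n}\to\fS_{\delta,\Psi}$ pointwise a.e.\ along a subsequence with $\|\fS_{\delta,\Psi_n}\|_\sK\le1$, so by dominated convergence combined with the $L^p$-convergence of $\partial_{x_j}\Psi_n$ the products $\Re\SPn{\fS_{\delta,\Psi_n}}{\partial_{x_j}\Psi_n}_\sK$ converge in $L^1(K)$. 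This yields \eqref{dia00delta}. For \eqref{dia00} I would let $\delta\downarrow0$: $Z_\delta(\Psi)\downarrow\|\Psi\|_\sK$ in $L^p_\loc$, and $\Re\SPn{\fS_{\delta,\Psi}}{\partial_{x_j}\Psi}_\sK\to\Re\SPn{\fS_\Psi}{\partial_{x_j}\Psi}_\sK$ in $L^p_\loc$ — indeed on $\{\Psi=0\}$ both sides vanish because $\partial_{x_j}\Psi=0$ a.e.\ there (a general fact for weak derivatives, e.g.\ via \eqref{dia00delta} itself applied on that level set, or simply because $Z_\delta(\Psi)\equiv\delta$ there), while on $\{\Psi\neq0\}$ we have $\fS_{\delta,\Psi}\to\fS_\Psi$ pointwise with uniform bound $1$ — so the integration-by-parts identity survives the limit, giving \eqref{dia00}. (Alternatively \eqref{dia00} is exactly \cite[Thm.~3.5]{Matte2017}=Thm.~\ref{thm-dia0} restricted, but it is cleaner to get it here directly.)

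\textbf{Step for \eqref{miles1}.} Now $Z_1(\Psi)\ge1$ and $\vr_m\in C^\infty((0,\infty))$ with $\vr_m$ and $\vr_m'$ bounded (the bound $|\vr_m'(t)|\le\|\vr'\|_\infty/mt$ is given), and $\vr_m$ is eventually constant ($\vr_m\equiv0$ on $[e^{2m},\infty)$, $\equiv1$ on $(0,e^m]$), so $\beta_m=\vr_m(Z_1(\Psi))$ is bounded, measurable, and compactly ``supported away from large $\|\Psi\|_\sK$''. I would first establish the scalar chain rule $\partial_{x_j}\beta_m=\vr_m'(Z_1(\Psi))\,\partial_{x_j}Z_1(\Psi)$ in $L^p_\loc(\Geb)$ using \eqref{dia00delta} with $\delta=1$ together with the standard composition lemma for a $C^1$ function with bounded derivative applied to the Sobolev function $Z_1(\Psi)$ (again proved by mollification as above, or invoked as a known result). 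This gives $\partial_{x_j}\beta_m=\vr_m'(Z_1(\Psi))\Re\SPn{\fS_{1,\Psi}}{\partial_{x_j}\Psi}_\sK$ in $L^p_\loc(\Geb)$, which is the scalar factor appearing in \eqref{miles1}. Then, to differentiate the product $\beta_m\Psi$, I would mollify once more: $(\beta_m\Psi)_n=(\beta_m)_n\ast$-type approximations are awkward, so instead I would mollify $\Psi$ to get $\Psi_n$, set $\beta_{m,n}:=\vr_m(Z_1(\Psi_n))\in C^\infty(\Geb_n)$, compute the classical Leibniz identity $\partial_{x_j}(\beta_{m,n}\Psi_n)=\bigl(\partial_{x_j}\beta_{m,n}\bigr)\Psi_n+\beta_{m,n}\partial_{x_j}\Psi_n$ with $\partial_{x_j}\beta_{m,n}=\vr_m'(Z_1(\Psi_n))\Re\SPn{\fS_{1,\Psi_n}}{\partial_{x_j}\Psi_n}_\sK$, and pass to the limit against a test function $\eta\in\sD(\Geb,\sK)$. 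Here the convergences needed are: $\beta_{m,n}\to\beta_m$ pointwise a.e.\ (continuity of $\vr_m$) with uniform bound $1$, $\vr_m'(Z_1(\Psi_n))\to\vr_m'(Z_1(\Psi))$ pointwise a.e., $\fS_{1,\Psi_n}\to\fS_{1,\Psi}$ pointwise a.e.\ with bound $1$, all along a common subsequence, combined with the $L^p(K,\sK)$-convergence of $\Psi_n$ and $\partial_{x_j}\Psi_n$; dominated convergence then identifies the limit with the right side of \eqref{miles1}, and boundedness of all factors shows that right side lies in $L^p_\loc(\Geb,\sK)$. Finally $\beta_m\Psi\in L^\infty(\Geb,\sK)$ because $\|\beta_m(\V{x})\Psi(\V{x})\|_\sK=\vr_m(Z_1(\Psi(\V{x})))\|\Psi(\V{x})\|_\sK$ vanishes once $\|\Psi(\V{x})\|_\sK>\sqrt{e^{4m}-1}$ and is $\le\sqrt{e^{4m}-1}$ otherwise.

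\textbf{Main obstacle.} The routine part is the mollification/limit bookkeeping; the one genuinely delicate point is the behavior on the zero set $\{\Psi=0\}$ when deriving \eqref{dia00} from \eqref{dia00delta} as $\delta\downarrow0$ — i.e.\ justifying that $\partial_{x_j}\Psi=0$ a.e.\ on $\{\Psi=0\}$ so that the formula $\Re\SPn{\fS_\Psi}{\partial_{x_j}\Psi}_\sK$ (with $\fS_\Psi:=0$ there) is both well-defined and correct, and that the convergence of the integrands actually holds in $L^1_\loc$. This is handled by the standard argument: apply \eqref{dia00delta}, note $\partial_{x_j}Z_\delta(\Psi)\to\partial_{x_j}\|\Psi\|_\sK$ in the distributional sense and is dominated by $\|\partial_{x_j}\Psi\|_\sK\in L^p_\loc$ uniformly in $\delta$, invoke dominated convergence on $\{\Psi\neq0\}$, and observe the contribution from $\{\Psi=0\}$ is zero in the limit because $Z_\delta(\Psi)=\delta$ is constant there so $\partial_{x_j}Z_\delta(\Psi)=0$ a.e.\ on that set for every $\delta$. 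Everything else is a direct transcription of the scalar argument of \cite{HundertmarkSimon} into the Hilbert-space-valued setting, using only Cauchy--Schwarz in $\sK$ and $\|\fS_{\delta,\Psi}\|_\sK\le1$.
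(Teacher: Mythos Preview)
Your proposal is correct and, for \eqref{miles1}, essentially identical to the paper's argument: both mollify $\Psi$ to $\Psi_n$, set $\beta_{m,n}:=\vr_m(Z_1(\Psi_n))$, compute the classical Leibniz identity \eqref{sissel22}--\eqref{sissel23}, and pass to the limit along a Riesz--Fischer subsequence with $L^1$ dominating functions. For \eqref{dia00} and \eqref{dia00delta} the paper simply invokes \cite[Lem.~2.5]{Matte2017}, whereas you supply a self-contained derivation (mollify for \eqref{dia00delta}, then send $\delta\downarrow0$ for \eqref{dia00}); this is a reasonable and correct alternative.

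One remark on your ``main obstacle'': the discussion of $\{\Psi=0\}$ is more involved than necessary. You do not need the (true but less elementary) claim that $\partial_{x_j}\Psi=0$ a.e.\ on $\{\Psi=0\}$, nor the scalar level-set lemma for $Z_\delta(\Psi)$. The direct observation is that on $\{\Psi=0\}$ one has $\fS_{\delta,\Psi}=0/\delta=0$ for every $\delta>0$ and $\fS_\Psi=0$ by definition \eqref{defsgn}, so $\Re\SPn{\fS_{\delta,\Psi}}{\partial_{x_j}\Psi}_\sK=0=\Re\SPn{\fS_\Psi}{\partial_{x_j}\Psi}_\sK$ there for all $\delta$; combined with the pointwise convergence $\fS_{\delta,\Psi}\to\fS_\Psi$ on $\{\Psi\neq0\}$ and the uniform bound $\|\fS_{\delta,\Psi}\|_\sK\le1$, dominated convergence (majorant $\|\partial_{x_j}\Psi\|_\sK$) gives the $L^1_\loc$ convergence of the right-hand sides immediately.
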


\begin{proof} 
The relations \eqref{dia00} and  \eqref{dia00delta} are derived in \cite[Lem.~2.5]{Matte2017}, whence 
we only need to prove \eqref{miles1}. With $\Psi_n$ as in \eqref{klaus1} we define
$\beta_{m,n}:=\vr_m(Z_1(\Psi_n))\in C^\infty(\Geb_n)$, so that
$\beta_{m,n}\Psi_n\in L^\infty(\Geb_n,\sK)\cap C^\infty(\Geb_n,\sK)$, for all $n\in\NN$. Then
\begin{align}\label{sissel22}
\partial_{x_j}\beta_{m,n}&=\vr_m'(Z_1(\Psi_n))\Re\SPn{\fS_{1,\Psi_n}}{\partial_{x_j}\Psi_n}_{\sK}
\quad\text{on $\Lambda_n$, $n\in\NN$.}
\end{align}
Let $\eta\in\sD(\Lambda,\sK)$ and pick some compact $K\subset\Geb$
with $\supp(\eta)\subset\mr{K}$ as well as some $n_0\in\NN$ such that $K\subset\Geb_{n_0}$.
For all $n\ge n_0$, we then have
\begin{align}\label{sissel23}
\int_K\SPn{\partial_{x_j}\eta}{\beta_{m,n}\Psi_{n}}_{\sK}\Id\V{x}&=-\int_K
\SPn{\eta}{(\partial_{x_j}\beta_{m,n})\Psi_{n}+\beta_{m,n}\partial_{x_j}\Psi_{n}}_{\sK}\Id\V{x}.
\end{align}
By virtue of the Riesz-Fischer theorem for $L^1(K,\sK)$ we find integers
$n_0\le n_1< n_2<\ldots$ and dominating functions $R,{R'}\in L^1(K)$ such that 
$\Psi_{n_\ell}\to\Psi$ and $\partial_{x_j}\Psi_{n_\ell}\to\partial_{x_j}\Psi$, 
a.e. on $K$ as $\ell\to\infty$, 
and such that $\|\Psi_{n_\ell}\|_{\sK}\le{R}$, $\|\partial_{x_j}\Psi_{n_\ell}\|_{\sK}\le R'$, 
a.e. on $K$, for every $\ell\in\NN$.  On account of the bound $|\vr_m'(t)|\le\|\vr'\|_\infty/mt$, $t>0$, 
and \eqref{sissel22}, $\|(\partial_{x_j}\beta_{m,n_\ell})\Psi_{n_\ell}\|_{\sK}\le(\|\vr'\|_\infty/{m})R'$,
$\ell\in\NN$. By dominated convergence,
both sides of \eqref{sissel23} thus converge, along the same subsequence, to the respective side of
\begin{align*}
\int_\Geb\SPn{\partial_{x_j}\eta}{\beta_{m}\Psi}_{\sK}\Id\V{x}&=-\int_\Geb
\SPn{\eta}{(\vr_m'(Z_1(\Psi))\Re\SPn{\fS_{1,\Psi}}{\partial_{x_j}\Psi}_{\sK})\Psi
+\beta_{m}\partial_{x_j}\Psi}_{\sK}\Id\V{x}.
\end{align*}
These remarks prove \eqref{miles1}.
\end{proof}

We are now in a position to prove the promised Leibniz rule:

\begin{thm}\label{LeibnizL1}
Let $\sK_1,\sK_2,\sK_3$ be real or complex separable Hilbert spaces and 
$$
b:\sK_1\times\sK_2\longrightarrow\sK_3
$$ 
be real bilinear and continuous. Let $j\in\{1,\ldots,\nu\}$ and 
$\Psi_i\in L^1_\loc(\Geb,\sK_i)$, $i\in\{1,2\}$, have weak partial derivatives 
$\partial_{x_j}\Psi_i\in L^1_\loc(\Geb,\sK_i)$ such that
$$
\|\Psi_1\|_{\sK_1}\|\Psi_2\|_{\sK_2}+\|\partial_{x_j}\Psi_1\|_{\sK_1}\|\Psi_2\|_{\sK_2}+
\|\Psi_1\|_{\sK_1}\|\partial_{x_j}\Psi_2\|_{\sK_2}\in L^1_\loc(\Geb).
$$
Then $b(\Psi_1,\Psi_2)\in L^1_\loc(\Geb,\sK_3)$ has a weak partial derivative with respect to
$x_j$ and
\begin{align}\label{Leibnizb}
\partial_{x_j}b(\Psi_1,\Psi_2)=b(\partial_{x_j}\Psi_1,\Psi_2)+b(\Psi_1,\partial_{x_j}\Psi_2).
\end{align}
\end{thm}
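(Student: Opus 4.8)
The plan is to reduce the vector-valued Leibniz rule to a statement about scalar Sobolev functions plus the scalar Leibniz rule from \cite[Lem.~2.3(i)]{HundertmarkSimon}, using the mollification machinery recalled above together with Lem.~\ref{lem-abs-val}. First I would fix $\eta\in\sD(\Geb,\sK_3)$, pick a compact $K\subset\Geb$ with $\supp(\eta)\subset\mr K$, and work with the mollified functions $\Psi_{i,n}$ from \eqref{klaus1}. For smooth functions the Leibniz rule \eqref{Leibnizb} is classical, so on $\Geb_n$ we have $\partial_{x_j}b(\Psi_{1,n},\Psi_{2,n})=b(\partial_{x_j}\Psi_{1,n},\Psi_{2,n})+b(\Psi_{1,n},\partial_{x_j}\Psi_{2,n})$, and hence the integration-by-parts identity
\begin{align*}
\int_K\SPn{\partial_{x_j}\eta}{b(\Psi_{1,n},\Psi_{2,n})}_{\sK_3}\Id\V{x}
=-\int_K\SPn{\eta}{b(\partial_{x_j}\Psi_{1,n},\Psi_{2,n})+b(\Psi_{1,n},\partial_{x_j}\Psi_{2,n})}_{\sK_3}\Id\V{x}
\end{align*}
holds for all large $n$. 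The aim is to pass to the limit $n\to\infty$ on both sides.

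The convergence of the left side is easy: $\Psi_{i,n}\to\Psi_i$ in $L^1(K,\sK_i)$, but to control $b(\Psi_{1,n},\Psi_{2,n})$ in $L^1$ I need an $L^\infty$-type bound on one factor, which the raw hypotheses do not provide. This is where the cutoff functions $\beta_m=\vr_m(Z_1(\Psi_i))$ from Lem.~\ref{lem-abs-val} enter: the genuine obstacle is precisely that $\Psi_1,\Psi_2$ are merely locally integrable, so neither $b(\Psi_{1,n},\Psi_{2,n})$ nor the product $b(\partial_{x_j}\Psi_{1,n},\Psi_{2,n})$ is a priori uniformly integrable. The remedy, exactly as in \cite{HundertmarkSimon}, is a double truncation. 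First I would truncate $\Psi_1$ by multiplying by a cutoff $\beta_m^{(1)}:=\vr_m(Z_1(\Psi_1))$ depending only on $\|\Psi_1\|_{\sK_1}$; by Lem.~\ref{lem-abs-val}, $\beta_m^{(1)}\Psi_1\in L^\infty(\Geb,\sK_1)$ has weak derivative $\vr_m'(Z_1(\Psi_1))\Re\SPn{\fS_{1,\Psi_1}}{\partial_{x_j}\Psi_1}_{\sK_1}\Psi_1+\beta_m^{(1)}\partial_{x_j}\Psi_1$, which lies in $L^1_\loc$. Now $\beta_m^{(1)}\Psi_1$ is bounded while $\Psi_2$ is $L^1_\loc$, so the pair $(\beta_m^{(1)}\Psi_1,\Psi_2)$ is covered by the standard bounded-times-$W^{1,1}$ Leibniz rule; alternatively one symmetrically truncates $\Psi_2$ as well, obtaining two bounded factors, applies the elementary smooth Leibniz rule after mollifying, and passes $n\to\infty$ using dominated convergence with dominating functions furnished by the Riesz--Fischer argument exactly as in the proof of Lem.~\ref{lem-abs-val}. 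This yields
\begin{align*}
\partial_{x_j}b(\beta_m^{(1)}\Psi_1,\beta_m^{(2)}\Psi_2)
=b\big(\partial_{x_j}(\beta_m^{(1)}\Psi_1),\beta_m^{(2)}\Psi_2\big)
+b\big(\beta_m^{(1)}\Psi_1,\partial_{x_j}(\beta_m^{(2)}\Psi_2)\big).
\end{align*}

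Finally I would remove the truncations by letting $m\to\infty$. Pointwise a.e.\ one has $\beta_m^{(i)}\to1$, hence $\beta_m^{(1)}\Psi_1\otimes\beta_m^{(2)}\Psi_2\to\Psi_1\otimes\Psi_2$ and, expanding $\partial_{x_j}(\beta_m^{(i)}\Psi_i)$ via \eqref{miles1}, the ``main'' terms $\beta_m^{(1)}\beta_m^{(2)}\,b(\partial_{x_j}\Psi_i,\Psi_{3-i})$ converge pointwise a.e.\ to $b(\partial_{x_j}\Psi_i,\Psi_{3-i})$ while the ``error'' terms carry the factor $\vr_m'(Z_1(\Psi_i))$, which by the bound $|\vr_m'(t)|\le\|\vr'\|_\infty/mt$ is $O(1/m)$ times something dominated by $\|\partial_{x_j}\Psi_i\|_{\sK_i}\|\Psi_{3-i}\|_{\sK_{3-i}}\in L^1_\loc$ and hence vanishes in $L^1(K,\sK_3)$. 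All integrands are dominated by the hypothesis $\|\Psi_1\|_{\sK_1}\|\Psi_2\|_{\sK_2}+\|\partial_{x_j}\Psi_1\|_{\sK_1}\|\Psi_2\|_{\sK_2}+\|\Psi_1\|_{\sK_1}\|\partial_{x_j}\Psi_2\|_{\sK_2}$ times the operator norm of $b$, so dominated convergence applies throughout and delivers \eqref{Leibnizb} tested against $\eta$; since $\eta\in\sD(\Geb,\sK_3)$ and $K$ were arbitrary, the identity holds in $L^1_\loc(\Geb,\sK_3)$. The only place requiring genuine care is the uniform domination of the error terms during the $m\to\infty$ passage, which is exactly the point where the three-product integrability hypothesis is used.
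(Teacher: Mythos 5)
Your proposal is correct and takes essentially the same route as the paper: the paper also first proves \eqref{Leibnizb} for two \emph{bounded} factors by mollification with Riesz--Fischer dominating functions, and then removes the boundedness via the logarithmic cutoffs $\vr_n(Z_1(\Psi_i))$ of Lem.~\ref{lem-abs-val} applied to \emph{both} factors, the error terms disappearing thanks to the bound $|\vr_n'(Z_1(\Psi_i))|\,\|\Psi_i\|_{\sK_i}\le\|\vr'\|_\infty/n$ and the three-product integrability hypothesis, exactly as you describe. Only your passing remark that the one-sidedly truncated pair $(\beta_m^{(1)}\Psi_1,\Psi_2)$ is already ``covered by the standard bounded-times-$W^{1,1}$ Leibniz rule'' is shaky, since the bounded factor is not $W^{1,\infty}$ and its weak derivative is merely $L^1_\loc$; but the symmetric double truncation you actually carry out is precisely the paper's argument, so nothing is lost.
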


\begin{proof}
{\em Step~1.} To start with we suppose in addition that 
$\Psi_i\in L^\infty(\Geb,\sK_i)$, $i\in\{1,2\}$. Putting $\Psi_i$ in place of $\Psi$ in
\eqref{klaus1} we construct mollified functions
$\Psi_{i,n},\in C^\infty(\Geb_n,\sK_i)$, $n\in\NN$, $i\in\{1,2\}$, such that
$\Psi_{i,n}\to\Psi_i$ and $\partial_{x_j}\Psi_{i,n}\to\partial_{x_j}\Psi_i$
in $L^1(K,\sK_i)$ for every compact $K\subset\Geb$.
Since $\rho_n$ in \eqref{klaus1} is a probability density, we further have the dominations 
$\|\Psi_{i,n}\|_{\sK_i}\le\|\Psi_i\|_\infty:=\|\Psi_i\|_{L^\infty(\Geb,\sK_i)}$.

Now fix some compact $K\subset\Lambda$ and $n_0\in\NN$ with
$K\subset\Lambda_{n_0}$.
Employing the Riesz-Fischer theorem for $L^1(K,\sK_i)$ we can find integers
$n_0\le n_1< n_2<\ldots$ such that $\Psi_{i,n_\ell}\to\Psi_i$ and 
$\partial_{x_j}\Psi_{i,n_\ell}\to\partial_{x_j}\Psi_i$, a.e. on $K$ as $\ell\to\infty$, for $i\in\{1,2\}$.
The Riesz-Fischer theorem further implies the existence of $R_i\in L^1(K)$ such that 
$\|\partial_{x_j}\Psi_{i,n_\ell}\|_{\sK_i}\le R_i$, a.e. on $K$, for all $\ell\in\NN$ and
$i\in\{1,2\}$. Now the continuity of $b$ implies
\begin{align*}
\partial_{x_j}b(\Psi_{1,n_\ell},\Psi_{2,n_{\ell}})&=b(\partial_{x_j}\Psi_{1,n_\ell},\Psi_{2,n_{\ell}})
+b(\Psi_{1,n_\ell},\partial_{x_j}\Psi_{2,n_{\ell}})\quad\text{on $\Geb_{n_\ell}$,}\,\ell\in\NN,
\end{align*}
where the right hand side converges a.e. on $K$ to the right hand side of \eqref{Leibnizb}
and is dominated by $\|b\|(R_1\|\Psi_2\|_\infty+\|\Psi_1\|_\infty R_2)\in L^1(K)$.
Furthermore, $b(\Psi_{1,n_\ell},\Psi_{2,n_{\ell}})\to b(\Psi_{1},\Psi_{2})$, $\ell\to\infty$, 
and $\|b(\Psi_{1,n_\ell},\Psi_{2,n_{\ell}})\|_{\sK_3}\le\|b\|\|\Psi_1\|_\infty\|\Psi_2\|_\infty$,
a.e. on $K$. Since $K\Subset\Geb$ was an arbitrary compact subset, 
this proves \eqref{Leibnizb} under the present extra assumptions.

{\em Step~2.} 
Next, we treat the general case with $\Psi_i$ as in the statement.
According to Step~1 and the last statement of Lem.~\ref{lem-abs-val}
we may already apply \eqref{Leibnizb} to $\Phi_{i,n}:=\beta_{i,n}\Psi_i\in L^\infty(\Geb,\sK_i)$, 
where $\beta_{i,n}:=\vr_n(Z_1(\Psi_{i}))$, $n\in\NN$, $i\in\{1,2\}$, and $\vr_n$ is defined as
in the statement of Lem.~\ref{lem-abs-val}. These remarks entail
\begin{align}\nonumber
\partial_{x_j}b(\Phi_{1,n},\Phi_{2,n})&=\beta_{1,n}\beta_{2,n}
\big(b(\partial_{x_j}\Psi_1,\Psi_2)+b(\Psi_1,\partial_{x_j}\Psi_2)\big)
\\\nonumber
&\quad+\beta_{2,n}\vr_n'(Z_1(\Psi_1))
\Re\SPn{{\fS_{1,\Psi_1}}}{\partial_{x_j}\Psi_1}_{\sK_1}b(\Psi_1,\Psi_2)
\\\label{harald111}
&\quad+\beta_{1,n}\vr_n'(Z_1(\Psi_2))
\Re\SPn{{\fS_{1,\Psi_2}}}{\partial_{x_j}\Psi_2}_{\sK_2}b(\Psi_1,\Psi_2).
\end{align}
Since $\beta_{i,n}\to1$, $n\to\infty$, on $\Geb$ and 
$|\vr_n'(Z_1(\Psi_i))|\|\Psi_i\|_{\sK_i}\le\|\vr'\|_\infty/n$,
the right hand side of \eqref{harald111} converges to the right hand side of \eqref{Leibnizb} in
$L^1_\loc(\Geb,\sK_3)$, as $n\to\infty$, by the dominated convergence theorem, 
the boundedness of $b$, and the assumptions 
$\|\partial_{x_j}\Psi_1\|_{\sK_1}\|\Psi_2\|_{\sK_2}\in L^1_\loc(\Geb)$ and 
$\|\Psi_1\|_{\sK_1}\|\partial_{x_j}\Psi_2\|_{\sK_2}\in L^1_\loc(\Geb)$.
Since also $b(\Phi_{1,n},\Phi_{2,n})\to b(\Psi_1,\Psi_2)$ in $L^1_\loc(\Geb,\sK_3)$ 
by dominated convergence, boundedness of $b$, and the assumption 
$\|\Psi_1\|_{\sK_1}\|\Psi_2\|_{\sK_2}\in L^1_\loc(\Geb)$,
this concludes the proof of \eqref{Leibnizb} in full generality.
\end{proof}

%%%%%%%%%%%%%%%%%%%%%%%%%%%%%%%%%%%%%%%%%%%%%%%%%
%%%%%%%%%%%%%%%%%%%%%%%%%%%%%%%%%%%%%%%%%%%%%%%%%
%%%%%%%%%%%%%%%%%%%%%%%%%%%%%%%%%%%%%%%%%%%%%%%%%

\section{Approximation with respect to Pauli-Fierz forms}\label{secapprox}

\noindent
In this section we collect several fairly technical but crucial results on convergence and 
approximation with respect to the norm associated with the maximal Pauli-Fierz form 
$\mathfrak{h}_{\Geb,\mathrm{N}}$ defined in \eqref{adam11}. 
{\em In the whole section we will always assume \eqref{Vlocint}, \eqref{introA}, and \eqref{introG}.}
As prerequisites we shall need some more results of \cite{Matte2017} 
which are collected in the first two of the following remarks:

\begin{rem}\label{remapprox1}
Let $j\in\{1,\ldots,\nu\}$ and $\Psi\in\dom(v_{\Geb,j}^*)$. Consider the vectors
\begin{align}\label{defPsive}
\Psi_\ve:=N_\ve^\mh\Psi\in\fdom(\Id\Gamma(1)),\quad\ve>0, 
\end{align}
where
\begin{align}\label{defNve}
N_\ve&:=1+\ve\Id\Gamma(1).
\end{align}
Introduce densely defined operators in $\sF$ by
\begin{align*}
C_\ve(G_{j,\V{x}})\psi&:=N_\ve^\mh\vp(G_{j,\V{x}})\psi-\vp(G_{j,\V{x}})N_\ve^\mh\psi,
\quad\psi\in\fdom(\Id\Gamma(1)),
\end{align*}
for all $\V{x}\in\Geb$. Then
\begin{align}\label{convCve1}
\|C_\ve(G_{j,\V{x}})\|&\le(4/\pi)\ve^\eh\|G_{j,\V{x}}\|_{\HP},\quad\V{x}\in\Geb.
\end{align}
Moreover, $\Geb\ni\V{x}\mapsto C_\ve(G_{j,\V{x}})^*\in\LO(\sF)$ is strongly measurable and
the densely defined operators $C_\ve(G_{j,\V{x}})^*N_\ve^\eh$ are bounded with
\begin{align}\label{convCve2}
\|C_\ve(G_{j,\V{x}})^*N_\ve^\eh\|&\le2\ve^\eh\|G_{j,\V{x}}\|_{\HP},\quad\V{x}\in\Geb.
\end{align}
(To obtain \eqref{convCve1} we choose the constant dispersion relation $1$ in
Lem.~2.9(1) of \cite{Matte2017}. The bound \eqref{convCve2} follows upon choosing $\ve$ as 
dispersion relation in \cite[Lem.~2.9(2)]{Matte2017}. The asserted strong measurability is observed
prior to Lem.~3.2 in \cite{Matte2017}.)

Now \cite[Lem.~3.2]{Matte2017} says that $\Psi_\ve$ has a weak partial 
derivative with respect to $x_j$ given by 
\begin{align}\label{partialPsive}
\partial_{x_j}\Psi_\ve&=iN_\ve^\mh v_{\Geb,j}^*\Psi+iA_j\Psi_\ve+i\vp(G_j)\Psi_\ve
+iC_\ve(G_{j})^*\Psi\quad\text{in $L^1_\loc(\Lambda,\sF)$.}
\end{align}
(To see this we apply the quoted lemma with dispersion relation $1$;
notice that in fact $\vp(G_{j})\Psi_\ve\in L^1_\loc(\Geb,\sF)$ by \eqref{rbvp} with $\vk=1$.)
\end{rem}

\begin{rem}\label{remapprox2}
Let $j\in\{1,\ldots,\nu\}$, $\Psi\in\dom(v_{\Geb,j}^*)$, and let $\Psi_\ve$ be given by 
\eqref{defPsive} and \eqref{defNve}. Under the additional condition that
\begin{align}\label{romeo1}
\Geb\ni\V{x}\longmapsto\|G_{j,\V{x}}\|_{\HP}\|\Psi(\V{x})\|_{\sF} \
\text{is in $L^2(\Geb)$,}
\end{align}
we observed in \cite[Lem.~3.3]{Matte2017} (here applied with dispersion
relation $1$) that $\Psi_\ve\in\dom(v_{\Geb,j}^*)$, for all $\ve>0$, and
$\Psi_\ve\to\Psi$, $\ve\downarrow0$, with respect to the graph norm of $v_{\Geb,j}^*$.
\end{rem}

In what follows, the symbol $L_0^\infty$ stands for essentially bounded functions of compact support.

\begin{rem}\label{remapprox3}
Let $\Psi\in\dom(\mathfrak{h}_{\Geb,\mathrm{N}})\cap L_0^\infty(\Geb,\sF)$. Then
the dominated convergence theorem implies that
$V^\eh N_\ve^\mh\Psi\to V^\eh\Psi$ in $L^2(\Geb,\sF)$ and 
$N_\ve^\mh\Psi\to\Psi$ in $L^2(\Geb,\fdom(\Id\Gamma(\omega)))$, as $\ve\downarrow0$.
Since $\Psi$ satisfies \eqref{romeo1} for all
$j\in\{1,\ldots,\nu\}$, we may thus infer from Rem.~\ref{remapprox2} that
$\Psi_\ve\to\Psi$ with respect to the form norm of $\mathfrak{h}_{\Geb,\mathrm{N}}$. Of course,
$\Psi_\ve\in L_0^\infty(\Geb,\fdom(\Id\Gamma(1)))$, for every $\ve>0$.

In particular, if $\{\Phi\in\dom(\mathfrak{h}_{\Geb,\mathrm{N}})|\,\Phi\in L_0^\infty(\Geb,\sF)\}$
is a core for $\mathfrak{h}_{\Geb,\mathrm{N}}$, then the set
$\{\Phi\in\dom(\mathfrak{h}_{\Geb,\mathrm{N}})|\,\Psi\in L_0^\infty(\Geb,\fdom(\Id\Gamma(1)))\}$
is a core for $\mathfrak{h}_{\Geb,\mathrm{N}}$ as well.
\end{rem}

In our first approximation lemma we treat cutoffs in the range of $\Psi$.
Similar cutoffs have been used in \cite[Lem.~2]{LeinfelderSimader1981} 
and \cite[Step~1 on p.~125]{HundertmarkSimon} to study magnetic Schr\"{o}dinger operators.

\begin{lem}\label{lemrangecutoff}
Let $j\in\{1,\ldots,\nu\}$ and $\Psi\in\dom(v_{\Geb,j}^*)$.
Define the cutoff functions $\beta_n=\vr_n(Z_1(\Psi))$, $n\in\NN$, as in the statement of 
Lem.~\ref{lem-abs-val} so that $\beta_n\Psi\in L^\infty(\Geb,\sF)$. Then
$\beta_n\Psi\in\dom(v_{\Geb,j}^*)$, for all $n\in\NN$,
and $\beta_n\Psi\to\Psi$, $n\to\infty$, with respect to the graph norm of $v_{\Geb,j}^*$.
\end{lem}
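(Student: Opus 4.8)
The plan is to verify directly that $\beta_n\Psi\in\dom(v_{\Geb,j}^*)$ by producing a candidate for $v_{\Geb,j}^*(\beta_n\Psi)$ and checking the defining integration-by-parts identity against test functions $\eta\in\dom(v_{\Geb,j})=\sD(\Geb,\fdom(\Id\Gamma(1)))$, and then to prove graph-norm convergence by dominated convergence. The first step is to compute the weak partial derivative of $\beta_n\Psi$. This is exactly the content of the last statement of Lem.~\ref{lem-abs-val} (applied with $\sK=\sF$, $p=2$, since $\Psi\in\dom(v_{\Geb,j}^*)\subset L^2(\Geb,\sF)$ has $\partial_{x_j}\Psi\in L^1_\loc$ by Thm.~\ref{thm-dia0}-type reasoning — more precisely, $\partial_{x_j}\Psi$ exists in $L^1_\loc$ because $v_{\Geb,j}^*\Psi$, $A_j\Psi$, and $\vp(G_j)\Psi$ all lie in $L^1_\loc(\Geb,\sF)$ and $-i\partial_{x_j}\Psi=v_{\Geb,j}^*\Psi+A_j\Psi+\vp(G_j)\Psi$ in the distributional sense after testing against $\sD(\Geb,\fdom(\Id\Gamma(1)))$). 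Thus
\begin{align*}
\partial_{x_j}(\beta_n\Psi)=\vr_n'(Z_1(\Psi))\Re\SPn{\fS_{1,\Psi}}{\partial_{x_j}\Psi}_{\sF}\Psi+\beta_n\partial_{x_j}\Psi.
\end{align*}
From this one reads off the candidate
\begin{align*}
v_{\Geb,j}^*(\beta_n\Psi)=\beta_n v_{\Geb,j}^*\Psi-i\vr_n'(Z_1(\Psi))\Re\SPn{\fS_{1,\Psi}}{\partial_{x_j}\Psi}_{\sF}\Psi,
\end{align*}
using $-i\partial_{x_j}(\beta_n\Psi)-A_j\beta_n\Psi-\vp(G_j)\beta_n\Psi=\beta_n(-i\partial_{x_j}\Psi-A_j\Psi-\vp(G_j)\Psi)-i\vr_n'(Z_1(\Psi))\Re\SPn{\fS_{1,\Psi}}{\partial_{x_j}\Psi}_{\sF}\Psi$.

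To see that this candidate genuinely realizes $v_{\Geb,j}^*(\beta_n\Psi)$, I would check the adjoint relation: for every $\eta\in\sD(\Geb,\fdom(\Id\Gamma(1)))$, one needs $\SPn{v_{\Geb,j}\eta}{\beta_n\Psi}=\SPn{\eta}{v_{\Geb,j}^*(\beta_n\Psi)}$ with the right-hand side the candidate above. This follows by substituting the Leibniz-type formula for $\partial_{x_j}(\beta_n\Psi)$ into the weak-derivative identity \eqref{defwpd} and using that $\beta_n$ is a bounded scalar multiplier commuting with $A_j$ and with $\vp(G_j)$ (the latter acting fibrewise in $\sF$), together with $\beta_n\eta$ being an admissible test function in the defining relation for $v_{\Geb,j}^*\Psi$. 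The key boundedness facts are $0\le\beta_n\le1$ and $|\vr_n'(Z_1(\Psi))|\,\|\Psi\|_{\sF}\le\|\vr'\|_\infty/n$ a.e., so the correction term $\vr_n'(Z_1(\Psi))\Re\SPn{\fS_{1,\Psi}}{\partial_{x_j}\Psi}_{\sF}\Psi$ is pointwise bounded by $(\|\vr'\|_\infty/n)\|\partial_{x_j}\Psi\|_{\sF}$; since $\Psi\in\dom(v_{\Geb,j}^*)$ only guarantees $\partial_{x_j}\Psi\in L^1_\loc$, one should be a little careful that all the products above lie in $L^1_\loc(\Geb,\sF)$, which they do because $\vr_n'(Z_1(\Psi))$ is supported where $1\le Z_1(\Psi)\le e^{2n}$, hence where $\|\Psi\|_{\sF}\le e^{2n}$, so $\|\partial_{x_j}\Psi\|_{\sF}\cdot 1_{\{\text{supp}\}}\le e^{2n}\cdot(\text{the correction term's natural bound})$ — concretely the correction term is dominated by $(\|\vr'\|_\infty/n)\|\partial_{x_j}\Psi\|_{\sF}\in L^1_\loc(\Geb)$. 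This also shows $v_{\Geb,j}^*(\beta_n\Psi)\in L^2(\Geb,\sF)$: the first summand $\beta_n v_{\Geb,j}^*\Psi$ is dominated by $\|v_{\Geb,j}^*\Psi\|_{\sF}\in L^2$, and the correction term is dominated by $(\|\vr'\|_\infty/n)\|\partial_{x_j}\Psi\|_{\sF}$, which is in $L^2(\Geb)$ because on the support of $\vr_n'(Z_1(\Psi))$ we may bound $\|\partial_{x_j}\Psi\|_{\sF}\le\|v_{\Geb,j}^*\Psi\|_{\sF}+|A_j|\,\|\Psi\|_{\sF}+\|\vp(G_j)\Psi\|_{\sF}$ and all three summands are controlled in $L^2$ there (using $\|\Psi\|_{\sF}\le e^{2n}$, $A_j\in L^2_\loc$, and \eqref{rbvp} with $\vk=1$ together with $\Psi\in L^2(\Geb,\fdom(\Id\Gamma(1)))$, which holds since $\dom(v_{\Geb,j}^*)\subset L^2(\Geb,\fdom(\Id\Gamma(1)))$ — this inclusion is part of the standing setup in \cite{Matte2017}). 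So indeed $\beta_n\Psi\in\dom(v_{\Geb,j}^*)$.

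For the convergence $\beta_n\Psi\to\Psi$ in the graph norm of $v_{\Geb,j}^*$: first $\beta_n\to1$ pointwise a.e. on $\Geb$ with $0\le\beta_n\le1$, so $\beta_n\Psi\to\Psi$ in $L^2(\Geb,\sF)$ by dominated convergence. For the derivative part, write $v_{\Geb,j}^*(\beta_n\Psi)-v_{\Geb,j}^*\Psi=(\beta_n-1)v_{\Geb,j}^*\Psi-i\vr_n'(Z_1(\Psi))\Re\SPn{\fS_{1,\Psi}}{\partial_{x_j}\Psi}_{\sF}\Psi$. The first term tends to $0$ in $L^2$ by dominated convergence. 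For the second, use the pointwise bound $\le(\|\vr'\|_\infty/n)\|\partial_{x_j}\Psi\|_{\sF}$ on the support of $\vr_n'(Z_1(\Psi))$; by the $L^2$ estimate just obtained on that support (bounding $\|\partial_{x_j}\Psi\|_{\sF}$ by the three covariant-derivative pieces and invoking that on $\{e^n\le Z_1(\Psi)\le e^{2n}\}$ everything is square-integrable), the $L^2$-norm of this term is $O(1/n)\to0$ — here I would be slightly careful and, as in the proof of Lem.~\ref{lem-abs-val} and Thm.~\ref{LeibnizL1}, combine the $1/n$ prefactor with the fact that the functions $\vr_n'(Z_1(\Psi))\ne0$ only on a shrinking tail $\{Z_1(\Psi)\ge e^n\}$, whose indicator tends to $0$ a.e., giving convergence by dominated convergence even without the explicit $L^2$ tail bound. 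I expect the main obstacle to be exactly this last estimate: making rigorous that the correction term $\vr_n'(Z_1(\Psi))\Re\SPn{\fS_{1,\Psi}}{\partial_{x_j}\Psi}_{\sF}\Psi$ is genuinely small in $L^2(\Geb,\sF)$ — one must either exploit the $1/n$ decay together with square-integrability of $\|\partial_{x_j}\Psi\|_{\sF}$ on the relevant shrinking sets (which in turn rests on the $L^2_\loc$ hypotheses \eqref{introA}, \eqref{introG} and the relative bound \eqref{rbvp}), or argue purely by the tail-support-shrinking plus dominated convergence; the rest is bookkeeping analogous to Lem.~\ref{lem-abs-val}.
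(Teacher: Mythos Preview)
Your approach has a genuine gap at its very foundation: the inclusion $\dom(v_{\Geb,j}^*)\subset L^2(\Geb,\fdom(\Id\Gamma(1)))$ that you invoke is \emph{not} true in general and is not part of the standing setup. The domain $\dom(v_{\Geb,j}^*)$ is defined purely by duality against test functions in $\sD(\Geb,\fdom(\Id\Gamma(1)))$; nothing forces $\Psi(\V{x})$ itself to lie in $\fdom(\Id\Gamma(1))$. Consequently $\vp(G_j)\Psi$ need not make sense even in $L^1_\loc(\Geb,\sF)$, and hence $\partial_{x_j}\Psi$ need not exist as an $L^1_\loc$ function. (Thm.~\ref{thm-dia0} only gives you the weak derivative of the \emph{scalar} function $\|\Psi\|_\sF$, not of $\Psi$ itself; and Rem.~\ref{remvj} derives \eqref{forPsi} only under the extra hypothesis \eqref{sigrid1}.) This undermines both your application of Lem.~\ref{lem-abs-val} and your candidate formula for $v_{\Geb,j}^*(\beta_n\Psi)$ in terms of $\partial_{x_j}\Psi$, as well as your $L^2$ estimate on the correction term, which relies on bounding $\|\partial_{x_j}\Psi\|_\sF$ through $\|\vp(G_j)\Psi\|_\sF$.

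The paper circumvents exactly this obstacle by working with the regularization $\Psi_\ve=N_\ve^{-1/2}\Psi\in L^2(\Geb,\fdom(\Id\Gamma(1)))$ from Rem.~\ref{remapprox1}, for which $\partial_{x_j}\Psi_\ve$ does exist via \eqref{partialPsive}. The crucial observation is that in $\partial_{x_j}Z_1(\Psi_\ve)=\Re\SPn{\fS_{1,\Psi_\ve}}{\partial_{x_j}\Psi_\ve}_\sF$ the terms $\Re\SPn{\Psi_\ve}{iA_j\Psi_\ve}_\sF$ and $\Re\SPn{\Psi_\ve}{i\vp(G_j)\Psi_\ve}_\sF$ \emph{vanish} by reality of $A_j$ and symmetry of $\vp(G_{j,\V{x}})$, leaving only $\Re\SPn{\fS_{1,\Psi_\ve}}{iN_\ve^{-1/2}v_{\Geb,j}^*\Psi+iC_\ve(G_j)^*\Psi}_\sF$. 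After testing and letting $\ve\downarrow0$ one obtains
\[
v_{\Geb,j}^*(\beta_n\Psi)=\beta_n v_{\Geb,j}^*\Psi-i\vr_n'(Z_1(\Psi))\,\Re\SPn{\fS_{1,\Psi}}{iv_{\Geb,j}^*\Psi}_\sF\,\Psi,
\]
where the scalar factor $\Re\SPn{\fS_{1,\Psi}}{iv_{\Geb,j}^*\Psi}_\sF$ is in $L^2(\Geb)$ (this is essentially $\partial_{x_j}\|\Psi\|_\sF$ from Thm.~\ref{thm-dia0}), and $|\vr_n'(Z_1(\Psi))|\|\Psi\|_\sF\le\|\vr'\|_\infty/n$. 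Thus the correction term is bounded in $L^2$ by $(\|\vr'\|_\infty/n)\cdot\|v_{\Geb,j}^*\Psi\|$ and goes to zero---no appeal to $\partial_{x_j}\Psi$ or $\vp(G_j)\Psi$ is needed. Your argument could be repaired by inserting this regularization and exploiting the same cancellation.
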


\begin{proof}
It is clear that $\beta_n\Psi\to\Psi$, $n\to\infty$, in $L^2(\Geb,\sF)$.
Let $\ve>0$ and consider the vector $\Psi_\ve$ defined in \eqref{defPsive}.
Combining \eqref{dia00delta} and \eqref{partialPsive} we obtain
\begin{align}\nonumber
\partial_{x_j}Z_1(\Psi_\ve)&=\Re\SPn{\fS_{1,\Psi_\ve}}{\partial_{x_j}\Psi_\ve}_{\sF}
\\\nonumber
&=\Re\SPn{\fS_{1,\Psi_\ve}}{iN_\ve^\mh v_{\Geb,j}^*\Psi+iA_j\Psi_\ve+i\vp(G_j)\Psi_\ve
+iC_\ve(G_{j})^*\Psi}_\sF
\\\label{sissel1}
&=\Re\SPn{\fS_{1,\Psi_\ve}}{iN_\ve^\mh v_{\Geb,j}^*\Psi+iC_\ve(G_{j})^*\Psi}_\sF
\in L^1_\loc(\Geb).
\end{align}
In the third step we used that $\Re\SPn{\Psi_\ve}{iA_j\Psi_\ve}_{\sF}$
and $\Re\SPn{\Psi_\ve}{i\vp(G_{j})\Psi_\ve}_{\sF}$ vanish a.e. on $\Geb$ since $A_j$ is real and
$\vp(G_{j,\V{x}})$ symmetric on $\fdom(\Id\Gamma(1))$. Let also $n\in\NN$.
Applying the chain rule for distributional derivatives
(see, e.g., \cite[Thm.~6.16]{LiebLoss2001}) to compute the weak partial derivative of
\begin{align*}
\beta_{n,\ve}&:=\vr_n(Z_1(\Psi_\ve)),
\end{align*} 
and combining the result with the Leibniz rule of Thm.~\ref{LeibnizL1}, we further find
\begin{align}\label{sissel2}
\partial_{x_j}(\beta_{n,\ve}\Psi_\ve)&=\vr_n'(Z_1(\Psi_\ve))(\partial_{x_j}Z_1(\Psi_\ve))\Psi_\ve
+\beta_{n,\ve}\partial_{x_j}\Psi_\ve \ \ \text{in $L^1_\loc(\Geb,\sF)$.}
\end{align}
Here we took into account that, by the construction of $\vr_n(t)=\vr(n^{-1}\ln(t))$, 
\begin{align}\label{sissel3}
|\vr_n'(Z_1(\Psi_\ve))|\|\Psi_\ve\|_{\sF}&\le\|\vr'\|_\infty\frac{\|\Psi_\ve\|_{\sF}}{nZ_1(\Psi_\ve)}
\le\frac{\|\vr'\|_\infty}{n}.
\end{align}
Together with \eqref{sissel1} this shows that 
$|\partial_{x_j}\beta_{n,\ve}|\|\Psi_\ve\|_\sF\in L^1_\loc(\Geb)$,
whence the Leibniz rule of Thm.~\ref{LeibnizL1} was indeed applicable.

Next, we subtract $i\beta_{n,\ve}A_j\Psi_\ve+i\beta_{n,\ve}\vp(G_j)\Psi_\ve$ from both sides of 
\eqref{sissel2}. In view of \eqref{partialPsive} this results in
\begin{align}\nonumber
\partial_{x_j}(\beta_{n,\ve}\Psi_\ve)-iA_j\beta_{n,\ve}\Psi_\ve-i\vp(G_j)\beta_{n,\ve}\Psi_\ve
&=\vr_n'(Z_1(\Psi_\ve))(\partial_{x_j}Z_1(\Psi_\ve))\Psi_\ve
\\\label{sissel4}
&\quad+\beta_{n,\ve}(iN_\ve^\mh v_{\Geb,j}^*\Psi+iC_\ve(G_{j})^*\Psi).
\end{align}
In the next step we compute, a.e. on $\Geb$, the $\sF$-scalar product of both sides of \eqref{sissel4}
with $\eta\in\sD(\Geb,\fdom(\Id\Gamma(1)))$, integrate the result with respect to $\V{x}\in\Geb$,
and pass to the limit $\ve\downarrow0$ afterwards. In doing so we observe that, as $\ve\downarrow0$,
\begin{enumerate}
\item[(a)] $\beta_{n,\ve}\to\beta_n$ pointwise (recall $\beta_{n,\ve}\le1$);
\item[(b)] $\beta_{n,\ve}\Psi_\ve\to\beta_n\Psi$ in $L^2(\Geb,\sF)$;
\item[(c)] $N_\ve^{\mh}v_{\Geb,j}^*\Psi\to v_{\Geb,j}^*\Psi$ in $L^2(\Geb,\sF)$;
\item[(d)] $C_\ve(G_{j})^*\Psi\to0$ in $L^1_\loc(\Geb,\sF)$ by \eqref{convCve1};
\item[(e)] $\partial_{x_j}Z_1(\Psi_\ve)\to\Re\SPn{\fS_{1,\Psi}}{iv_{\Geb,j}^*\Psi}$ 
in $L^1_\loc(\Geb)$ by \eqref{sissel1}, (d), and (e);
\item[(f)] $\vr_n'(Z_1(\Psi_\ve))\Psi_\ve\to\vr_n'(Z_1(\Psi))\Psi$ pointwise with
the $\ve$-uniform bound \eqref{sissel3}.
\end{enumerate}
We thus arrive at
\begin{align*}
&\SPn{v_{\Geb,j}\eta}{i\beta_n\Psi}
\\
&=\int_\Geb\SPB{\eta(\V{x})}{\vr_n'(Z_1(\Psi(\V{x})))\Re\SPn{\fS_{1,\Psi}(\V{x})}{
i(v_{\Geb,j}^*\Psi)(\V{x})}_\sF\Psi(\V{x})+i(\beta_nv_{\Geb,j}^*\Psi)(\V{x})}_\sF\Id\V{x}.
\end{align*}
Next, we observe that the preceding integral is the scalar product of $\eta$ with a vector in
$L^2(\Geb,\sF)$ since, in analogy to \eqref{sissel3}, 
\begin{align}\label{sissel30}
|\vr_n'(Z_1(\Psi(\V{x})))|\|\Psi\|_\sF\le\frac{\|\vr'\|_\infty}{n},
\end{align} 
and since $\Re\SPn{\fS_{1,\Psi}}{iv_{\Geb,j}^*\Psi}_\sF\in L^2(\Geb)$ and, 
of course, $i\beta_nv_{\Geb,j}^*\Psi\in L^2(\Geb,\sF)$.
By the definition of the adjoint operator $v_{\Geb,j}^*$ this reveals that 
$\beta_n\Psi\in\dom(v_{\Geb,j}^*)$ with
\begin{align*}
v_{\Geb,j}^*(\beta_n\Psi)&=-i\vr_n'(Z_1(\Psi))\Re\SPn{\fS_{1,\Psi}}{iv_{\Geb,j}^*\Psi}_\sF\Psi
+\beta_nv_{\Geb,j}^*\Psi.
\end{align*}
Taking into account \eqref{sissel30}, $\Re\SPn{\fS_{1,\Psi}}{iv_{\Geb,j}^*\Psi}_\sF\in L^2(\Geb)$, and 
$\beta_nv_{\Geb,j}^*\Psi\to v_{\Geb,j}^*\Psi$ in $L^2(\Geb,\sF)$ , we further conclude that 
$v_{\Geb,j}^*(\beta_n\Psi)\to v_{\Geb,j}^*\Psi$, as $n\to\infty$.
\end{proof}

\begin{rem}\label{remrangecutoff}
Let $\Psi\in\dom(\mathfrak{h}_{\Geb,\mathrm{N}})$ and consider again the cutoffs $\beta_n$
appearing in Lem.~\ref{lemrangecutoff}. Then the dominated convergence theorem implies 
$V^\eh\beta_n\Psi\to V^\eh\Psi$ in $L^2(\Geb,\sF)$ and
$\beta_n\Psi\to\Psi$ in $L^2(\Geb,\fdom(\Id\Gamma(\omega)))$.
Together with Lem.~\ref{lemrangecutoff} this shows in particular that
$\{\Phi\in\dom(\mathfrak{h}_{\Geb,\mathrm{N}})|\,\Phi\in L^\infty(\Geb,\sF)\}$
is a core for $\mathfrak{h}_{\Geb,\mathrm{N}}$.
\end{rem}

We continue with a simple result on spatial cutoffs: 

\begin{lem}\label{cutofflem}
Pick cutoff functions $\vt_\ell\in C_0^\infty(\RR^\nu,\RR)$, $\ell\in\NN$, satisfying 
\begin{align*}
0\le\vt_\ell\le1,\quad \text{$\vt_{\ell+1}=1$ on 
$\supp(\vt_\ell)$},\quad\ell\in\NN;\qquad\Geb\subset\bigcup_{\ell\in\NN}\supp(\vt_\ell).
\end{align*} 
Let $j\in\{1,\ldots,\nu\}$ and define $\Theta_j:\RR^\nu\to[0,\infty)$ by
\begin{align}\label{defThetaJ}
\Theta_j&:=\Big(\sum_{\ell=1}^\infty|\partial_{x_j}\vt_\ell|^2\Big)^\eh.
\end{align} 
Finally, let $\Psi\in\dom(v_{\Geb,j}^*)$ satisfy $\Theta_j\Psi\in L^2(\Geb,\sF)$. Then 
$\vt_n\Psi\to\Psi$, $n\to\infty$, with respect to the graph norm of $v_{\Geb,j}^*$.
\end{lem}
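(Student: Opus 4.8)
The plan is to reduce the assertion to a commutator identity together with two elementary limits. First I would record that the hypotheses on the $\vt_\ell$ force the supports $\supp(\vt_\ell)$ to increase with $\ell$: indeed $\vt_{\ell+1}=1$ on $\supp(\vt_\ell)$ gives $\supp(\vt_\ell)\subseteq\{\vt_{\ell+1}\neq0\}\subseteq\supp(\vt_{\ell+1})$. Hence, for every $\V{x}\in\Geb$ there is some $\ell_0$ with $\V{x}\in\supp(\vt_{\ell_0})$, and then $\vt_n(\V{x})=1$ for all $n\ge\ell_0+1$; in particular $\vt_n\to1$ pointwise on $\Geb$. Since $0\le\vt_n\le1$ and $\Psi\in L^2(\Geb,\sF)$, dominated convergence yields $\vt_n\Psi\to\Psi$ in $L^2(\Geb,\sF)$.

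The substantial step is to check that $\vt_n\Psi\in\dom(v_{\Geb,j}^*)$ with
\[
v_{\Geb,j}^*(\vt_n\Psi)=\vt_n\,v_{\Geb,j}^*\Psi-i(\partial_{x_j}\vt_n)\Psi.
\]
For $\eta\in\sD(\Geb,\fdom(\Id\Gamma(1)))=\dom(v_{\Geb,j})$, the product $\vt_n\eta$ again belongs to $\sD(\Geb,\fdom(\Id\Gamma(1)))$, since $\vt_n\in C_0^\infty(\RR^\nu)$ simply multiplies the $C_0^\infty(\Geb)$-factors of $\eta$; and because $A_j$ and $\vp(G_j)$ act as multiplication operators in the variable $\V{x}$, the classical Leibniz rule gives $v_{\Geb,j}(\vt_n\eta)=\vt_n v_{\Geb,j}\eta-i(\partial_{x_j}\vt_n)\eta$. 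Pairing with $\Psi$, using that $\vt_n$ is real-valued and that $\Psi\in\dom(v_{\Geb,j}^*)$, I obtain $\SPn{v_{\Geb,j}\eta}{\vt_n\Psi}=\SPn{\eta}{\vt_n v_{\Geb,j}^*\Psi-i(\partial_{x_j}\vt_n)\Psi}$ for all such $\eta$. Since $\vt_n$ and $\partial_{x_j}\vt_n$ are bounded while $v_{\Geb,j}^*\Psi,\Psi\in L^2(\Geb,\sF)$, the vector on the right lies in $L^2(\Geb,\sF)$, and the displayed formula follows from the definition of the adjoint.

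It then remains to pass to the limit $n\to\infty$ in this formula. The first term converges, $\vt_n v_{\Geb,j}^*\Psi\to v_{\Geb,j}^*\Psi$ in $L^2(\Geb,\sF)$, again by dominated convergence. For the commutator term the hypothesis $\Theta_j\Psi\in L^2(\Geb,\sF)$ enters decisively: since $\sum_{n=1}^\infty|\partial_{x_j}\vt_n|^2=\Theta_j^2$ pointwise, Tonelli's theorem gives
\[
\sum_{n=1}^\infty\big\|(\partial_{x_j}\vt_n)\Psi\big\|_{L^2(\Geb,\sF)}^2
=\int_\Geb\Theta_j(\V{x})^2\|\Psi(\V{x})\|_{\sF}^2\,\Id\V{x}=\|\Theta_j\Psi\|_{L^2(\Geb,\sF)}^2<\infty,
\]
so in particular $(\partial_{x_j}\vt_n)\Psi\to0$ in $L^2(\Geb,\sF)$. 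Combined with $\vt_n\Psi\to\Psi$ in $L^2(\Geb,\sF)$ this proves convergence in the graph norm of $v_{\Geb,j}^*$. I do not expect a genuine obstacle here; the one point that needs care is the test-function computation establishing that $\vt_n\Psi$ really enters $\dom(v_{\Geb,j}^*)$ (and not merely that the formal expression makes sense), after which the hypothesis $\Theta_j\Psi\in L^2$ is seen to be exactly what is needed to annihilate the commutator term in the limit.
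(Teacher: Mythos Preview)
Your proof is correct and follows essentially the same approach as the paper's: establish the commutator identity $v_{\Geb,j}^*(\vt_n\Psi)=\vt_n v_{\Geb,j}^*\Psi-i(\partial_{x_j}\vt_n)\Psi$ via a test-function argument, then use dominated convergence for the first term and the hypothesis $\Theta_j\Psi\in L^2(\Geb,\sF)$ for the second. The only cosmetic difference is in how the commutator term is killed: the paper bounds $\|(\partial_{x_j}\vt_n)\Psi\|\le\big\|\big(\sum_{\ell\ge n}|\partial_{x_j}\vt_\ell|^2\big)^{1/2}\Psi\big\|$ and applies dominated convergence to the tail, whereas you observe that $\sum_n\|(\partial_{x_j}\vt_n)\Psi\|^2=\|\Theta_j\Psi\|^2<\infty$ directly; both are equally short and valid.
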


\begin{proof}
Of course, $\vt_n\Psi\to\Psi$ in $L^2(\Geb,\sF)$. Furthermore,
it is straightforward to check that $\vt_n\dom(v_{\Geb,j}^*)\subset\dom(v_{\Geb,j}^*)$ with 
$v_{\Geb,j}^*(\vt_n\Phi)=\vt_nv_{\Geb,j}^*\Phi-i(\partial_{x_j}\vt_n)\Phi$, 
for all $\Phi\in\dom(v_{\Geb,j}^*)$. The condition $\Theta_j\Psi\in L^2(\Geb,\sF)$ 
and the dominated convergence theorem imply
\begin{align*}
\|(\partial_{x_j}\vt_n)\Psi\|\le\Big\|\Big(\sum_{\ell=n}^\infty
|\partial_{x_j}\vt_\ell|^2\Big)^\eh\Psi\Big\|\xrightarrow{\;\;n\to\infty\;\;}0.
\end{align*}
Since also $\vt_n v_{\Geb,j}^*\Psi\to v_{\Geb,j}^*\Psi$, these remarks show that 
$v_{\Geb,j}^*(\vt_n\Psi)\to v_{\Geb,j}^*\Psi$, as $n\to\infty$.
\end{proof}

\begin{lem}\label{lemwiesel1}
Assume that the cutoff functions in Lem.~\ref{cutofflem} are chosen such that
$\supp(\vt_\ell)\subset\Geb$, for all $\ell\in\NN$. Furthermore, assume that the functions
$\Theta_{j}$ defined in \eqref{defThetaJ} satisfy
\begin{align}\label{VdomTheta}
\sum_{j=1}^\nu\Theta_{j}^2\le C+ V,\quad\text{for some constant $C>0$.}
\end{align}
Then $\{\Phi\in\dom(\mathfrak{h}_{\Geb,\mathrm{N}})|\,
\Phi\in L_0^\infty(\Geb,\fdom(\Id\Gamma(1)))\}$ is a core for $\mathfrak{h}_{\Geb,\mathrm{N}}$.
\end{lem}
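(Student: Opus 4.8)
The plan is to combine the two previously established approximation results—Remark~\ref{remapprox3} (which allows us to replace an element of $\dom(\mathfrak{h}_{\Geb,\mathrm{N}})$ by one in $L_0^\infty(\Geb,\fdom(\Id\Gamma(1)))$ once we know the $L_0^\infty(\Geb,\sF)$ vectors form a core) with the spatial cutoff Lemma~\ref{cutofflem}—so the real work is to verify that the spatial cutoffs are applicable under hypothesis~\eqref{VdomTheta}. First I would fix an arbitrary $\Psi\in\dom(\mathfrak{h}_{\Geb,\mathrm{N}})$ and show that $\vt_n\Psi\to\Psi$ with respect to the full form norm of $\mathfrak{h}_{\Geb,\mathrm{N}}$. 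For each $j$, Lemma~\ref{cutofflem} requires $\Theta_j\Psi\in L^2(\Geb,\sF)$; this is exactly where \eqref{VdomTheta} enters, since $\sum_j\Theta_j^2\le C+V$ gives
\begin{align*}
\sum_{j=1}^\nu\|\Theta_j\Psi\|^2\le C\|\Psi\|^2+\int_\Geb V(\V{x})\|\Psi(\V{x})\|_{\sF}^2\Id\V{x}
=C\|\Psi\|^2+\|V^\eh\Psi\|^2<\infty,
\end{align*}
finite because $\Psi\in\fdom(V\id_{\sF})$. Hence Lemma~\ref{cutofflem} yields $v_{\Geb,j}^*(\vt_n\Psi)\to v_{\Geb,j}^*\Psi$ in $L^2(\Geb,\sF)$ for every $j$.

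Next I would handle the two remaining terms in the form $\mathfrak{h}_{\Geb,\mathrm{N}}$. Since $0\le\vt_n\le1$ and $\vt_n\to1$ pointwise on $\Geb$ (the supports exhaust $\Geb$), dominated convergence gives $V^\eh\vt_n\Psi\to V^\eh\Psi$ in $L^2(\Geb,\sF)$ and $\Id\Gamma(\omega)^\eh(\vt_n\Psi)(\V{x})=\vt_n(\V{x})\Id\Gamma(\omega)^\eh\Psi(\V{x})$, so $\vt_n\Psi\to\Psi$ in $L^2(\Geb,\fdom(\Id\Gamma(\omega)))$ as well. Combining with the covariant-derivative convergence from the previous paragraph, $\vt_n\Psi\to\Psi$ with respect to the norm of $\mathfrak{h}_{\Geb,\mathrm{N}}$. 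Because each $\vt_n$ has compact support contained in $\Geb$, the truncation $\vt_n\Psi$ lies in $L_0^\infty(\Geb,\sF)\cap\dom(\mathfrak{h}_{\Geb,\mathrm{N}})$ (boundedness of $\vt_n$ does not by itself bound $\Psi$ in $\sF$, so strictly I would first intersect with the range cutoffs $\beta_m$ of Lemma~\ref{lemrangecutoff}, or simply invoke Remark~\ref{remrangecutoff} which already tells us the $L^\infty(\Geb,\sF)$ vectors are a core, then apply the spatial cutoffs to those). Either way we conclude that $\{\Phi\in\dom(\mathfrak{h}_{\Geb,\mathrm{N}})\,|\,\Phi\in L_0^\infty(\Geb,\sF)\}$ is a core for $\mathfrak{h}_{\Geb,\mathrm{N}}$.

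Finally I would upgrade from $L_0^\infty(\Geb,\sF)$ to $L_0^\infty(\Geb,\fdom(\Id\Gamma(1)))$ by quoting the last assertion of Remark~\ref{remapprox3}: once the $L_0^\infty(\Geb,\sF)$ vectors in $\dom(\mathfrak{h}_{\Geb,\mathrm{N}})$ form a core, so do the vectors in $L_0^\infty(\Geb,\fdom(\Id\Gamma(1)))$, because the Fock-space regularization $\Psi_\ve=N_\ve^{\mh}\Psi$ converges to $\Psi$ in the form norm and preserves compact support and essential boundedness while landing in $\fdom(\Id\Gamma(1))$. This closes the proof.

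The main obstacle I anticipate is purely bookkeeping: making sure that when one composes the spatial cutoff $\vt_n$ with the range cutoff $\beta_m$ (needed so that the truncated vector is genuinely in $L_0^\infty(\Geb,\sF)$, not merely compactly supported), the two limiting procedures can be carried out one after the other without disturbing each other—i.e. that $\beta_m(\vt_n\Psi)\in\dom(v_{\Geb,j}^*)$ and the graph-norm convergence survives the composition. This is straightforward since $\vt_n\Psi\in\dom(v_{\Geb,j}^*)$ by Lemma~\ref{cutofflem} and then Lemma~\ref{lemrangecutoff} applies to $\vt_n\Psi$, but it does require running the two approximation lemmas in the right order. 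No genuinely new estimate is needed; the content is entirely in recognizing that \eqref{VdomTheta} is precisely the hypothesis that makes $\Theta_j\Psi$ square-integrable, which is the input Lemma~\ref{cutofflem} demands.
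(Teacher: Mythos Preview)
Your proposal is correct and follows essentially the same route as the paper: use \eqref{VdomTheta} to guarantee $\Theta_j\Psi\in L^2(\Geb,\sF)$ so that Lem.~\ref{cutofflem} applies, handle the $V$- and $\Id\Gamma(\omega)$-terms by dominated convergence, then combine the spatial cutoff with Rem.~\ref{remrangecutoff} to reach $L_0^\infty(\Geb,\sF)$, and finally quote Rem.~\ref{remapprox3} to land in $L_0^\infty(\Geb,\fdom(\Id\Gamma(1)))$. The ordering issue you flag (spatial cutoff alone gives compact support but not $\sF$-boundedness) is handled in the paper exactly as in your option~(a): first $\vt_\ell\Psi\to\Psi$, then invoke the range cutoffs of Rem.~\ref{remrangecutoff} on the compactly supported approximants.
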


\begin{proof}
Let $\Psi\in\dom(\mathfrak{h}_{\Geb,\mathrm{N}})$. Then 
$V^\eh\vt_\ell\Psi\to V^\eh\Psi$ in $L^2(\Geb,\sF)$ and
$\vt_\ell\Psi\to\Psi$ in $L^2(\Geb,\fdom(\Id\Gamma(\omega)))$, as $\ell\to\infty$, by dominated 
convergence. Since \eqref{VdomTheta} entails $\Theta_{j}\Psi\in L^2(\Geb,\sF)$, for all 
$j\in\{1,\ldots,\nu\}$, Lem.~\ref{cutofflem} implies that $\vt_\ell\Psi\to\Psi$ with respect to the graph 
norm of every $v_{\Geb,j}^*$. Altogether this shows that 
$\vt_\ell\Psi\in\dom(\mathfrak{h}_{\Geb,\mathrm{N}})$, for all $\ell\in\NN$, and
$\vt_\ell\Psi\to\Psi$ with respect to the form norm on $\dom(\mathfrak{h}_{\Geb,\mathrm{N}})$.
By virtue of Rem.~\ref{remrangecutoff} we conclude that
$\{\Phi\in\dom(\mathfrak{h}_{\Geb,\mathrm{N}})|\,\Phi\in L_0^\infty(\Geb,\sF)\}$
is a core for $\mathfrak{h}_{\Geb,\mathrm{N}}$. Now the assertion follows 
directly from Rem.~\ref{remapprox3}.
\end{proof}

In the next lemma we consider the choice $\Geb=\RR^\nu$:

\begin{lem}\label{lemwiesel2}
The set $\{\Phi\in\dom(\mathfrak{h}_{\RR^\nu,\mathrm{N}})|\,
\Phi\in L_0^\infty(\RR^\nu,\fdom(\Id\Gamma(1)))\}$
is a core for $\mathfrak{h}_{\RR^\nu,\mathrm{N}}$.
\end{lem}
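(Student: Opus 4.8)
The plan is to deduce this from Lemma~\ref{lemwiesel1} by exhibiting, in the case $\Geb=\RR^\nu$, a sequence of cutoff functions $\vt_\ell\in C_0^\infty(\RR^\nu,\RR)$ satisfying the structural requirements of Lemma~\ref{cutofflem} (namely $0\le\vt_\ell\le1$, $\vt_{\ell+1}=1$ on $\supp(\vt_\ell)$, and $\RR^\nu=\bigcup_\ell\supp(\vt_\ell)$) together with the crucial bound \eqref{VdomTheta} on the associated functions $\Theta_j$. Since $\Geb^c=\emptyset$ here, the side condition $\supp(\vt_\ell)\subset\Geb$ in Lemma~\ref{lemwiesel1} is automatic, so the only real content is producing cutoffs whose gradients are uniformly small.

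First I would fix a standard radial profile: pick $\chi\in C_0^\infty(\RR^\nu,[0,1])$ with $\chi=1$ on the ball of radius $1$ and $\chi=0$ outside the ball of radius $2$, and set $\vt_\ell(\V{x}):=\chi(\V{x}/\ell)$. Then $\vt_\ell\in C_0^\infty(\RR^\nu,\RR)$, $0\le\vt_\ell\le1$, $\supp(\vt_\ell)\subset\{|\V{x}|\le 2\ell\}$, and $\RR^\nu=\bigcup_\ell\supp(\vt_\ell)$. The nesting condition $\vt_{\ell+1}=1$ on $\supp(\vt_\ell)$ needs $\{|\V{x}|\le 2\ell\}\subset\{|\V{x}|\le\ell+1\}$, which fails for this naive scaling; the fix is simply to dilate more slowly, e.g. $\vt_\ell(\V{x}):=\chi(\V{x}/R_\ell)$ with $R_\ell$ growing fast enough that $2R_\ell\le R_{\ell+1}$ (for instance $R_\ell:=2^\ell$), which still exhausts $\RR^\nu$ and preserves $0\le\vt_\ell\le1$. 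With this choice $|\partial_{x_j}\vt_\ell|\le\|\nabla\chi\|_\infty/R_\ell$, so $\Theta_j^2=\sum_\ell|\partial_{x_j}\vt_\ell|^2\le\|\nabla\chi\|_\infty^2\sum_\ell R_\ell^{-2}<\infty$, a finite constant since the $R_\ell$ grow geometrically. Hence $\sum_{j=1}^\nu\Theta_j^2\le C$ for some constant $C$, and \eqref{VdomTheta} holds trivially because $V\ge0$ by \eqref{Vlocint}.

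All hypotheses of Lemma~\ref{lemwiesel1} are therefore met with $\Geb=\RR^\nu$, and the asserted core property follows at once. I do not expect any genuine obstacle here: the statement for $\Geb=\RR^\nu$ is the easiest instance of the general cutoff argument, precisely because there is no boundary and the gradients of globally defined cutoffs can be made summable-square with a constant bound, so the term $V$ in \eqref{VdomTheta} is not even needed. The only point requiring a moment's care is arranging the nesting $\vt_{\ell+1}=1$ on $\supp(\vt_\ell)$ simultaneously with uniformly small gradients, which is handled by the geometric choice of dilation radii above.
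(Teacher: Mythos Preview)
Your proposal is correct and follows exactly the same approach as the paper: reduce to Lem.~\ref{lemwiesel1} by exhibiting cutoffs with bounded $\Theta_j$. The paper's own proof is even terser---it simply remarks that such cutoffs ``can obviously be chosen'' and invokes Lem.~\ref{lemwiesel1}---so your explicit construction with $R_\ell=2^\ell$ is just a fleshed-out version of the same idea (and in fact the nesting condition forces at most one term of $\sum_\ell|\partial_{x_j}\vt_\ell|^2$ to be nonzero at any point, so your summability bound is more than you need).
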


\begin{proof}
In the case $\Geb=\RR^\nu$, the functions $\vt_\ell$ appearing Lem.~\ref{cutofflem} can 
obviously be chosen such that $\Theta_{1},\ldots,\Theta_\nu$ are bounded. Then
\eqref{VdomTheta} is satisfied, whence the assertion follows from Lem.~\ref{lemwiesel1}.
\end{proof}

Next, we study approximations by elements of
\begin{align*}
\sC\otimes\sE&:=\mathrm{span}_{\CC}\{f\phi|\,f\in\sC,\,\phi\in\sE\},
\end{align*}
 with suitable subspaces $\sC\subset L^2(\Geb)$ and $\sE\subset\sF$.
 
\begin{lem}\label{lemvstar2}
Let $\Psi\in\dom(\mathfrak{h}_{\Geb,\mathrm{N}})$. Then the following holds:
\begin{enumerate}[leftmargin=0.8cm]
\item[{\rm(1)}] Assume in addition that
$\Psi(\V{x})\in\fdom(\Id\Gamma(1))$, for a.e. $\V{x}\in\Geb$, and
\begin{align}\label{extra1}
\Geb\ni\V{x}\longmapsto\|G_{j,\V{x}}\|_{\HP}
\|\Psi(\V{x})\|_{\fdom(\Id\Gamma(1))}\quad\text{is in $L^2(\Lambda)$ for all $j\in\{1,\ldots,\nu\}$.}
\end{align}
Then there exist
\begin{align}\label{wiesel1}
\Psi_n\in \{L^2(\Geb)\otimes\fdom(\Id\Gamma(1\vee\omega))\}
\cap\dom(\mathfrak{h}_{\Geb,\mathrm{N}}),\quad n\in\NN,
\end{align}
such that $\Psi_n\to\Psi$, $n\to\infty$, with respect to the form norm of
$\mathfrak{h}_{\Geb,\mathrm{N}}$.
\item[{\rm(2)}] Assume in addition that $\Psi\in L^\infty_0(\Geb,\fdom(\Id\Gamma(1)))$, 
Then there exist
\begin{align}\label{wiesel11}
\Psi_n\in \{L_0^\infty(\Geb)\otimes\fdom(\Id\Gamma(1\vee\omega))\}\cap
\dom(\mathfrak{h}_{\Geb,\mathrm{N}}),\quad n\in\NN,
\end{align}
such that $\Psi_n\to\Psi$, $n\to\infty$, with respect to the form norm of
$\mathfrak{h}_{\Geb,\mathrm{N}}$.
\end{enumerate}
\end{lem}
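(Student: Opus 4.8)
The plan is to prove both parts at once by a truncation argument in the bosonic Fock space $\sF$ that reduces everything to the case where $\Psi$ takes values in a \emph{fixed finite-dimensional} subspace of $\fdom(\Id\Gamma(1\vee\omega))$; in that case the asserted tensor product structure is automatic. First I would note that the hypotheses of part~(1) are also fulfilled in the situation of part~(2), and that, since $\Psi\in\dom(\mathfrak{h}_{\Geb,\mathrm{N}})$ forces $\Psi(\V{x})\in\fdom(\Id\Gamma(\omega))$ for a.e.\ $\V{x}$, the extra assumptions of~(1) give $\Psi(\V{x})\in\fdom(\Id\Gamma(1\vee\omega))$ for a.e.\ $\V{x}\in\Geb$ -- the only defect being that $\V{x}\mapsto\|\Psi(\V{x})\|_{\fdom(\Id\Gamma(1\vee\omega))}$ need not be square-integrable. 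Let $E_M$ be the spectral projection of the number operator $\Id\Gamma(1)$ associated with $[0,M]$, let $F_m$ be the spectral projection of $\omega$ in $\HP$ associated with $(0,m]$, and let $q_{m,d}$ be a finite-rank orthogonal projection in $\HP$ with range in $F_m\HP$ such that $q_{m,d}\uparrow F_m$ strongly as $d\to\infty$. Put $\Pi_{M,m,d}:=E_M\,\Gamma(q_{m,d})$, an orthogonal projection in $\sF$ whose range $\bigoplus_{n\le M}S^n(q_{m,d}\HP)$ is a finite-dimensional subspace of $\dom(\Id\Gamma(1\vee\omega))\subset\fdom(\Id\Gamma(1\vee\omega))$ on which $\Id\Gamma(1\vee\omega)$ is bounded by $M(1+m)$. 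Since $\Pi_{M,m,d}\Psi$ is strongly measurable with values in that finite-dimensional space, expanding it in a basis shows $\Pi_{M,m,d}\Psi\in L^2(\Geb)\otimes\fdom(\Id\Gamma(1\vee\omega))$; if moreover $\Psi\in L_0^\infty(\Geb,\fdom(\Id\Gamma(1)))$, then $\|\Pi_{M,m,d}\Psi(\cdot)\|_\sF\le\|\Psi(\cdot)\|_\sF$ forces the coefficient functions into $L_0^\infty(\Geb)$, whence $\Pi_{M,m,d}\Psi\in L_0^\infty(\Geb)\otimes\fdom(\Id\Gamma(1\vee\omega))$ there. Finally, $\|\Pi_{M,m,d}\Psi(\cdot)\|_{\fdom(\Id\Gamma(1\vee\omega))}\le(1+M(1+m))^\eh\|\Psi(\cdot)\|_\sF$ and $\Psi\in\fdom(V\id_\sF)$ yield $\Pi_{M,m,d}\Psi\in\dom(\mathfrak{h}_{\Geb,\mathrm{N}})$ once $\Pi_{M,m,d}\Psi\in\dom(v_{\Geb,j}^*)$ is known.

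It then remains to show $\Pi_{M,m,d}\Psi\to\Psi$ with respect to the form norm of $\mathfrak{h}_{\Geb,\mathrm{N}}$ along a suitable sequence $(M_n,m_n,d_n)$ with $M_n,m_n,d_n\to\infty$; concretely, one lets $d\to\infty$ first (for fixed $M,m$), then $m\to\infty$, then $M\to\infty$, and diagonalizes. The contributions $\|\cdot\|_{L^2(\Geb,\sF)}^2$, $\int_\Geb V\|\cdot\|_\sF^2$ and $\int_\Geb\|\Id\Gamma(\omega)^\eh(\cdot)\|_\sF^2$ to the form norm pose no difficulty: each $\Pi_{M,m,d}$ is a contraction commuting with multiplication by $V^\eh$, the factor $E_M$ commutes with $\Id\Gamma(\omega)$, and on the subspaces relevant in the $d$-limit $\Id\Gamma(\omega)$ is bounded by $Mm$; since $E_M$, $\Gamma(q_{m,d})$ and $\Gamma(F_m)$ converge strongly to the identity in the respective limits, the dominated convergence theorem applies with majorants assembled from $V\|\Psi\|_\sF^2$ and $\|\Id\Gamma(\omega)^\eh\Psi\|_\sF^2$.

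The real work, and the main obstacle, is the covariant-derivative term $\tfrac12\sum_j\|v_{\Geb,j}^*(\cdot)\|^2$: one must establish $\Pi_{M,m,d}\Psi\in\dom(v_{\Geb,j}^*)$ and $v_{\Geb,j}^*(\Pi_{M,m,d}\Psi)\to v_{\Geb,j}^*\Psi$ in $L^2(\Geb,\sF)$. Testing against $\eta\in\sD(\Geb,\fdom(\Id\Gamma(1)))$, using that $\Pi_{M,m,d}$ is self-adjoint and commutes with $\partial_{x_j}$ and with multiplication by $A_j$ but not with $\vp(G_j)$, and exploiting -- crucially -- that $\Psi(\V{x})$ lies a.e.\ in $\fdom(\Id\Gamma(1))$, which by \eqref{rbvp} (with dispersion relation $1$) is mapped boundedly into $\sF$ by $\vp(G_{j,\V{x}})$, one is led to
\[
v_{\Geb,j}^*(\Pi_{M,m,d}\Psi)=\Pi_{M,m,d}\,v_{\Geb,j}^*\Psi+[\Pi_{M,m,d},\vp(G_j)]\Psi\quad\text{in }L^2(\Geb,\sF),
\]
where $([\Pi_{M,m,d},\vp(G_j)]\Psi)(\V{x}):=[\Pi_{M,m,d},\vp(G_{j,\V{x}})]\Psi(\V{x})$ is a.e.\ well-defined. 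The first summand tends to $v_{\Geb,j}^*\Psi$ in $L^2(\Geb,\sF)$ by strong and dominated convergence. For the commutator, the relations $\Gamma(q)\ad(f)=\ad(qf)\Gamma(q)$ and $a(f)\Gamma(q)=\Gamma(q)a(qf)$ for the self-adjoint contraction $q=q_{m,d}$, together with the elementary rule moving $E_M$ past $\ad(f)$ at the cost of a term localized at particle number $M$, yield a pointwise bound
\[
\big\|[\Pi_{M,m,d},\vp(G_{j,\V{x}})]\Psi(\V{x})\big\|_\sF\le C\,\|G_{j,\V{x}}\|_{\HP}\,\|\Psi(\V{x})\|_{\fdom(\Id\Gamma(1))},
\]
whose right-hand side is in $L^2(\Geb)$ by \eqref{extra1} and which tends to $0$ pointwise as $M,m,d\to\infty$ (the controlling small factors being $\|(\id-q_{m,d})G_{j,\V{x}}\|_{\HP}$ and $\|(\id-E_{M-1})\Psi(\V{x})\|_{\fdom(\Id\Gamma(1))}$); dominated convergence then completes the argument. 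This is precisely where the stronger hypothesis \eqref{extra1} of~(1), rather than the weaker \eqref{romeo1}, is indispensable: truncating the field operator $\vp(G_{j,\V{x}})$ in Fock space costs one power of $(\Id\Gamma(1)+1)^\eh$, so the commutator is controlled by the $\fdom(\Id\Gamma(1))$-norm of $\Psi(\V{x})$ rather than by its mere $\sF$-norm. Commutator estimates of exactly this type underlie \eqref{convCve1}, \eqref{convCve2} and \eqref{partialPsive}, and can alternatively be taken from \cite{Matte2017}. Putting the three steps together, a diagonal sequence $\Psi_n:=\Pi_{M_n,m_n,d_n}\Psi$ has the properties claimed in~(1), and in the situation of~(2) additionally has coefficients in $L_0^\infty(\Geb)$, which proves~(2).
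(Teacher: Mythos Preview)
Your proposal is correct and follows essentially the same strategy as the paper, with only organizational differences.

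Both arguments truncate $\Psi$ by bounded Fock-space operators $Q$ whose range is a finite-dimensional subspace of $\fdom(\Id\Gamma(1\vee\omega))$, and both hinge on the identity
\[
v_{\Geb,j}^*(Q\Psi)=Q\,v_{\Geb,j}^*\Psi+[Q,\vp(G_j)]\Psi,
\]
valid precisely because \eqref{extra1} places $\vp(G_j)\Psi$ and $\vp(G_j)Q\Psi$ in $L^2(\Geb,\sF)$. The paper arrives at this identity indirectly, by introducing the auxiliary operator $w_{\Geb,j}:=(-i\partial_{x_j}-A_j)\restr_{\sD(\Geb,\sF)}$ and writing $v_{\Geb,j}^*(Q\Psi)=Q\,w_{\Geb,j}^*\Psi-\vp(G_j)Q\Psi$, then proving $Q\,w_{\Geb,j}^*\Psi\to w_{\Geb,j}^*\Psi$ and $\vp(G_j)Q\Psi\to\vp(G_j)\Psi$ separately rather than bounding the commutator as you do; the two formulations are equivalent. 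For the concrete truncations, the paper first applies $Q_r$ (defined in \eqref{defQr}, cutting both particle number and $\omega$ at level $r$), and then a finite-rank orthogonal projection $P_n$ taken in the \emph{form-norm} Hilbert space $\fdom(\Id\Gamma(1\vee\omega))$; you bundle the analogous cutoffs into a single $\sF$-orthogonal projection $\Pi_{M,m,d}=E_M\Gamma(q_{m,d})$ with a separate finite-rank step inside the one-boson space. Your three-parameter diagonalization is the counterpart of the paper's two-stage ``first $r$, then $n$'' argument. Either packaging yields the claimed approximants in $L^2(\Geb)\otimes\fdom(\Id\Gamma(1\vee\omega))$ (resp.\ $L_0^\infty(\Geb)\otimes\fdom(\Id\Gamma(1\vee\omega))$), and the observation that Part~(2) is a special case of Part~(1) is shared.
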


\begin{proof}
We will always assume that $\Psi$ satisfies the additional condition imposed on it in Part~(1) and
we shall fix  $j\in\{1,\ldots,\nu\}$ in the first four steps of this proof.

{\em Step~1.}
We define a symmetric operator $w_{\Geb,j}$ in $L^2(\Geb,\sF)$ by setting 
$\dom(w_{\Geb,j}):=\sD(\Geb,\sF)$ and
\begin{align*}
w_{\Geb,j}\Phi&:=-i\partial_{x_j}\Phi-A_j\Phi,\quad\Phi\in\dom(w_{\Geb,j}).
\end{align*}
According to \cite[Rem.~3.1(1)]{Matte2017} we then have $\Psi\in\dom(w_{\Geb,j}^*)$ and
\begin{align*}
v_{\Geb,j}^*\Psi&=w_{\Geb,j}^*\Psi-\vp(G_j)\Psi.
\end{align*}
With the help of \eqref{rbvp} and \eqref{extra1}, which together imply
$\vp(G_j)\Psi\in L^2(\Geb,\sF)$, this is indeed straightforward to verify.

Let $Q\in\LO(\sF)$ and write $(Q\Phi)(\V{x}):=Q\Phi(\V{x})$, a.e. $\V{x}\in\Geb$, for all
$\Phi\in L_\loc^1(\Geb,\sF)$. Then it is clear that $Qw_{\Geb,j}\Phi=w_{\Geb,j}Q\Phi$, 
for every $\Phi\in\dom(w_{\Geb,j})$, from which we infer that $Q\Psi\in\dom(w_{\Geb,j}^*)$ 
with $w_{\Geb,j}^*Q\Psi=Qw_{\Geb,j}^*\Psi$.

Suppose we further have $Q\Psi(\V{x})\in\fdom(\Id\Gamma(1))$, a.e. $\V{x}\in\Geb$, with
$\|Q\Psi\|_{\fdom(\Id\Gamma(1))}\le C_{\Psi}\|\Psi\|_{\fdom(\Id\Gamma(1))}$ a.e. on $\Geb$,
for some $C_{\Psi}>0$. Then it follows from \eqref{rbvp} and \eqref{extra1} that
$\vp(G_j)Q\Psi\in L^2(\Geb,\sF)$ and the definition of the adjoint operators
$v_{\Geb,j}^*$ and $w_{\Geb,j}^*$ entails $Q\Psi\in\dom(v_{\Geb,j}^*)$ with
\begin{align}\label{fiona1}
v_{\Geb,j}^*Q\Psi&=w_{\Geb,j}^*Q\Psi-\vp(G_j)Q\Psi=Qw_{\Geb,j}^*\Psi-\vp(G_j)Q\Psi.
\end{align}

{\em Step~2.} 
For every $r\in\NN$, we define $Q_r\in\LO(\sF)$ by setting
\begin{align}\label{defQr}
Q_r\psi=\big(\psi_0,{\chi_r^{\otimes_1}}\psi_1,{\chi_r^{\otimes_2}}\psi_2,\ldots,
{\chi_r^{\otimes_r}}\psi_r,0,0,\ldots\,),
\end{align}
for every $\psi=(\psi_n)_{n=0}^\infty\in\sF$, where $\chi_r^{\otimes_m}$ 
denotes the characteristic function of the set
$$
\big\{(k_1,\ldots,k_m)\in\cK^m\big|\,\omega(k_1)\le r,\ldots,\omega(k_m)\le r\big\},
\quad m,r\in\NN.
$$
Then $Q_r\to\id$, $r\to\infty$, strongly in $\sF$ as well as in $L^2(\Geb,\sF)$. 
By the remarks in Step~1 we know that
$Q_r\Psi\in\dom(v_{\Geb,j}^*)\cap\dom(w_{\Geb,j}^*)$, $\vp(G_j)Q_r\Psi\in L^2(\Geb,\sF)$, 
and \eqref{fiona1} is satisfied with $Q=Q_r$. Furthermore,
\begin{align}\label{fiona1b}
\|\vp(G_j)Q_r\Psi-\vp(G_j)\Psi\|_\sF\le2\|G_j\|_{\HP}\|(Q_r-1)\Psi\|_{\fdom(\Id\Gamma(1))}
\xrightarrow{\;\;r\to\infty\;\;}0,
\end{align}
pointwise a.e., since $\Psi(\V{x})\in\fdom(\Id\Gamma(1))$ for a.e. $\V{x}\in\Geb$.
Employing \eqref{extra1} and the dominated convergence theorem we deduce that
$\vp(G_j)Q_r\Psi\to\vp(G_j)\Psi$, $r\to\infty$, in $L^2(\Geb,\sF)$.
Putting all these remarks together we conclude that $v_{\Geb,j}^*Q_r\Psi\to v_{\Geb,j}^*\Psi$.
The dominated convergence theorem further implies that
$V^\eh Q_r\Psi\to V^\eh \Psi$ in $L^2(\Geb,\sF)$ and $Q_r\Psi\to\Psi$ in
$L^2(\Geb,\fdom(\Id\Gamma(\omega)))$.

{\em Step~3.}
We fix $r\in\NN$ in this and the next step. The definition of $Q_r$ ensures that
$Q_r\Psi\in L^2(\Geb,\fdom(\Id\Gamma(1\vee\omega)))$.
Let $\{e_\ell:\ell\in\NN\}$ be an orthonormal basis of $\fdom(\Id\Gamma(1\vee\omega))$ and put 
\begin{align}\label{defPn}
P_n\phi:=\sum_{\ell=1}^n\SPn{e_\ell}{\phi}_{\fdom(\Id\Gamma(1\vee\omega))}e_\ell,\quad
\phi\in\fdom(\Id\Gamma(1\vee\omega)),\,n\in\NN.
\end{align} 
Then $P_nQ_r\Psi\to Q_r\Psi$, $n\to\infty$, in $L^2(\Geb,\fdom(\Id\Gamma(1\vee\omega)))$ by 
dominated convergence. Since the canonical injections
$L^2(\Geb,\fdom(\Id\Gamma(1\vee\omega)))\subset L^2(\Geb,\fdom(\Id\Gamma(\omega)))
\subset L^2(\Geb,\sF)$
are continuous, we also have $P_nQ_r\Psi\to Q_r\Psi$, $n\to\infty$, in both $L^2(\Geb,\sF)$ and
$L^2(\Geb,\fdom(\Id\Gamma(\omega)))$. Likewise,
$V^\eh P_nQ_r\Psi=P_n V^\eh Q_r\Psi\to V^\eh Q_r\Psi$ in $L^2(\Geb,\sF)$.

It remains to show that $P_nQ_r\Psi\to Q_r\Psi$, $n\to\infty$, 
with respect to the graph norm of $v_{\Geb,j}^*$, which is done in the next step.

{\em Step~4.}
Since $Q_r$ maps $\sF$ into $\fdom(\Id\Gamma(1\vee\omega))$, we see that $P_nQ_r$
defines a finite rank operator on $\sF$. Furthermore, we
notice that $P_nQ_r\Psi(\V{x})\in\fdom(\Id\Gamma(1))$, for a.e. $\V{x}\in\Geb$, with
\begin{align}\nonumber
\|P_nQ_r\Psi\|_{\fdom(\Id\Gamma(1))}&\le\|P_nQ_r\Psi\|_{\fdom(\Id\Gamma(1\vee\omega))}
\le\|Q_r\Psi\|_{\fdom(\Id\Gamma(1\vee\omega))}
\\\label{fiona2}
&\le r^\eh\|Q_r\Psi\|_{\fdom(\Id\Gamma(1))}\le r^\eh\|\Psi\|_{\fdom(\Id\Gamma(1))},
\end{align}
a.e. on $\Geb$. In the penultimate step we used that 
$\chi_r^{\otimes_m}(k_1,\ldots,k_m)\not=0$ entails
\begin{align*}
1+1\vee\omega(k_1)+\ldots+1\vee\omega(k_m)\le 1+rm\le r(1+m).
\end{align*}
Applying the remarks in Step~1 we conclude that
$$
P_nQ_r\Psi\in\dom(v_{\Geb,j}^*)\cap\dom(w_{\Geb,j}^*),
\quad\vp(G_j)P_nQ_r\Psi\in L^2(\Geb,\sF),\quad n\in\NN,
$$ 
and \eqref{fiona1} is satisfied with $Q=P_nQ_r$. Since
$Q_rw_{\Geb,j}^*\Psi$ is in $L^2(\Geb,\fdom(\Id\Gamma(1\vee\omega)))$, we further have
 $P_nQ_rw_{\Geb,j}^*\Psi\to Q_rw_{\Geb,j}^*\Psi$, as $n\to\infty$, in
$L^2(\Geb,\fdom(\Id\Gamma(1\vee\omega)))$ and, hence, also in $L^2(\Geb,\sF)$. 
Similarly to \eqref{fiona1b} we find
\begin{align*}
\|\vp(G_j)P_nQ_r\Psi-\vp(G_j)Q_r\Psi\|_{\sF}\le2\|G_j\|_{\HP}
\|(P_n-1)Q_r\Psi\|_{\fdom(\Id\Gamma(1\vee\omega))}\xrightarrow{\;\;n\to\infty\;\;}0,
\end{align*}
a.e. on $\Geb$, because $Q_r\Psi(\V{x})\in\fdom(\Id\Gamma(1\vee\omega))$, a.e. $\V{x}$.
On account of \eqref{fiona2} we further have the uniform bounds
\begin{align*}
\|\vp(G_j)P_nQ_r\Psi-\vp(G_j)Q_r\Psi\|_{\sF}\le2(r^\eh+1)\|G_j\|_{\HP}
\|\Psi\|_{\fdom(\Id\Gamma(1))}\in L^2(\Geb).
\end{align*}
Thus, $\vp(G_j)P_nQ_r\Psi\to\vp(G_j)Q_r\Psi$, $n\to\infty$, in $L^2(\Geb,\sF)$ by
dominated convergence. Altogether this shows that 
$v_{\Geb,j}^*P_nQ_r\Psi\to v_{\Geb,j}^*Q_r\Psi$, as $n\to\infty$.

{\em Step~5.} We can now conclude as follows: Let $n\in\NN$. According to Step~2 we then find
some $r_n\in\NN$ such that $\|Q_{r_n}\Psi-\Psi\|_{\mathfrak{h}_{\Geb,\mathrm{N}}}<1/2n$.
After that Steps~3 and~4 permit to pick some $m_n\in\NN$ such that
$\|P_{m_n}Q_{r_n}\Psi-Q_{r_n}\Psi\|_{\mathfrak{h}_{\Geb,\mathrm{N}}}<1/2n$. This proves
Part~(1) with
\begin{align*}
\Psi_n&:=P_{m_n}Q_{r_n}\Psi=\sum_{\ell=1}^{m_n}
\SPn{e_\ell}{Q_{r_n}\Psi}_{\fdom(\Id\Gamma(1\vee\omega))}e_\ell,\quad n\in\NN.
\end{align*}
Here $\SPn{e_\ell}{Q_{r_n}\Psi}_{\fdom(\Id\Gamma(1\vee\omega))}\in L_0^\infty(\Geb)$, whenever
$\Psi\in L_0^\infty(\Geb,\fdom(\Id\Gamma(1)))$. Since the latter condition on $\Psi$ entails
\eqref{extra1}, this also proves Part~(2).
\end{proof}

Before we consider mollifications we note a simple observation that also is part of 
\cite[Rem.~3.1(2)]{Matte2017}:

\begin{rem}\label{remvj}
Let $j\in\{1,\ldots,\nu\}$, $\Psi\in\dom(v_{\Geb,j}^*)$, and assume that 
\begin{align}\label{sigrid1}
\Geb\ni\V{x}\longmapsto\|G_{j,\V{x}}\|_{\HP}
\|\Psi(\V{x})\|_{\fdom(\Id\Gamma(1))} \ \text{is in $L^1_\loc(\Geb)$.}
\end{align}
In view of \eqref{rbvp} this entails
$\vp(G_{j})\Psi\in L_\loc^1(\Geb,\sF)$ and it is clear that $A_j\Psi\in L^1_\loc(\Geb,\sF)$. 
By the definitions of the adjoint operator $v_{\Geb,j}^*$ and the weak partial derivatives, this implies 
that $\partial_{x_j}\Psi\in L^1_\loc(\Geb,\sF)$ exists and 
\begin{align}\label{forPsi}
v_{\Geb,j}^*\Psi=-i\partial_{x_j}\Psi-A_j\Psi-\vp(G_j)\Psi
\quad(\text{sum in $L^1_\loc(\Geb,\sF)$ on the RHS}).
\end{align}
\end{rem}

\begin{lem}\label{propvjesa}
Let $\Psi\in \{L_0^\infty(\Geb)\otimes\fdom(\Id\Gamma(1\vee\omega))\}
\cap\dom(\mathfrak{h}_{\Geb,\mathrm{N}})$.
Then there exist $\Psi_n\in\sD(\Geb,\fdom(\Id\Gamma(1\vee\omega)))$, $n\in\NN$, such that
$\Psi_n\to\Psi$, $n\to\infty$, with respect to the form norm of $\mathfrak{h}_{\Geb,\mathrm{N}}$. 
\end{lem}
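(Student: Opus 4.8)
The plan is to produce the approximants $\Psi_n$ by mollification. First I would fix a decomposition $\Psi=\sum_{k=1}^m f_k\phi_k$ in which, after passing to a basis of $\mathrm{span}_{\CC}\{\phi_1,\dots,\phi_m\}$, the vectors $\phi_k\in\fdom(\Id\Gamma(1\vee\omega))$ are linearly independent and $f_k\in L_0^\infty(\Geb)$; pick a compact $K\subset\Geb$ containing all the sets $\supp f_k$, so that $2d:=\dist(K,\Geb^c)>0$ since $\Geb$ is open. Extending the $f_k$ by zero to $\RR^\nu$ and taking $\rho,\rho_n$ as in \eqref{klausi1}, \eqref{klausi2}, I would set $f_k^{(n)}:=\rho_n*f_k$ and $\Psi_n:=\sum_{k=1}^m f_k^{(n)}\phi_k$. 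For $n>1/d$ each $f_k^{(n)}$ lies in $C_0^\infty(\Geb)$ with support in the fixed compact set $K':=\{\V{x}\in\RR^\nu:\dist(\V{x},K)\le d\}\subset\Geb$, so $\Psi_n\in\sD(\Geb,\fdom(\Id\Gamma(1\vee\omega)))\subset\dom(\mathfrak{h}_{\Geb,\mathrm{N}})$; moreover $\sD(\Geb,\fdom(\Id\Gamma(1\vee\omega)))\subset\sD(\Geb,\fdom(\Id\Gamma(1)))=\dom(v_{\Geb,j})\subset\dom(v_{\Geb,j}^*)$ by symmetry of $v_{\Geb,j}$, so $v_{\Geb,j}^*\Psi_n=v_{\Geb,j}\Psi_n=-i\partial_{x_j}\Psi_n-A_j\Psi_n-\vp(G_j)\Psi_n$.

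The crucial preliminary step, which I expect to be the main point, is to observe that $\Psi$ is in fact of Sobolev class. Since $\Psi\in\dom(v_{\Geb,j}^*)$ and, by \eqref{introG} together with the boundedness of $\V{x}\mapsto\|\Psi(\V{x})\|_{\fdom(\Id\Gamma(1))}$ on $K$, the function in \eqref{sigrid1} is locally integrable, Rem.~\ref{remvj} applies and gives (see \eqref{forPsi}) $-i\partial_{x_j}\Psi=v_{\Geb,j}^*\Psi+A_j\Psi+\vp(G_j)\Psi$ in $L^1_\loc(\Geb,\sF)$. Here $A_j\Psi\in L^2(\Geb,\sF)$ because $\Psi$ is bounded with support in $K$ and $A_j\in L^2_\loc(\Geb)$ by \eqref{introA}, and likewise $\vp(G_j)\Psi\in L^2(\Geb,\sF)$ by \eqref{introG} and the estimate $\|\vp(G_{j,\V{x}})\Psi(\V{x})\|_{\sF}\le2\|G_{j,\V{x}}\|_{\HP}\|\Psi(\V{x})\|_{\fdom(\Id\Gamma(1))}$ from \eqref{rbvp} with $\vk=1$. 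Hence $\partial_{x_j}\Psi\in L^2(\Geb,\sF)$ with support in $K$. Applying the bounded functionals $\ell_k$ on $\sF$ dual to $\{\phi_1,\dots,\phi_m\}$, which commute with weak differentiation, gives $\partial_{x_j}f_k=\ell_k\circ\partial_{x_j}\Psi\in L^2(\Geb)$ with support in $K$, so the zero-extension of each $f_k$ belongs to $W^{1,2}(\RR^\nu)$ (its weak derivative being the zero-extension of $\partial_{x_j}f_k$, because $f_k$ vanishes near $\partial\Geb$).

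With this in hand the remaining work is the standard set of mollification estimates, all carried out on the fixed compact set $K'$. From $f_k^{(n)}\to f_k$ and $\partial_{x_j}f_k^{(n)}=\rho_n*\partial_{x_j}f_k\to\partial_{x_j}f_k$ in $L^2(\RR^\nu)$ I would read off $\Psi_n\to\Psi$, $\partial_{x_j}\Psi_n\to\partial_{x_j}\Psi$, and $\Id\Gamma(\omega)^\eh\Psi_n=\sum_kf_k^{(n)}\Id\Gamma(\omega)^\eh\phi_k\to\Id\Gamma(\omega)^\eh\Psi$ in $L^2(\Geb,\sF)$. Since also $f_k^{(n)}\to f_k$ a.e. with $\|f_k^{(n)}\|_\infty\le\|f_k\|_\infty$, dominated convergence — with dominating functions equal to a constant times $V1_{K'}$, $|A_j|^21_{K'}$, and $\|G_{j,\cdot}\|_{\HP}^21_{K'}$, all in $L^1(\RR^\nu)$ by \eqref{Vlocint}, \eqref{introA}, \eqref{introG} — yields $V^\eh\Psi_n\to V^\eh\Psi$, $A_j\Psi_n\to A_j\Psi$, and $\vp(G_j)\Psi_n\to\vp(G_j)\Psi$ in $L^2(\Geb,\sF)$, the last of these using once more the bound on $\|\vp(G_{j,\V{x}})\phi_k\|_{\sF}$. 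Combining the last three limits with the formula for $v_{\Geb,j}^*\Psi_n$ from the first paragraph and the formula for $v_{\Geb,j}^*\Psi$ from the second gives $v_{\Geb,j}^*\Psi_n\to v_{\Geb,j}^*\Psi$ for each $j$. Since by \eqref{adam11} the form norm of $\mathfrak{h}_{\Geb,\mathrm{N}}$ is controlled by the $L^2$-norm and by the $L^2$-norms of $v_{\Geb,j}^*(\cdot)$, $V^\eh(\cdot)$ and $\Id\Gamma(\omega)^\eh(\cdot)$, it then follows that $\|\Psi_n-\Psi\|_{\mathfrak{h}_{\Geb,\mathrm{N}}}\to0$, as claimed.

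The only delicate point is the one in the second paragraph: a priori $A_j\Psi$ and $\vp(G_j)\Psi$ are merely locally integrable, and it is the boundedness and compact support built into the hypothesis $\Psi\in L_0^\infty(\Geb)\otimes\fdom(\Id\Gamma(1\vee\omega))$ that upgrades them, and hence $\partial_{x_j}\Psi$, to genuine $L^2$ functions. Because $\V{A}$ and $\V{G}$ are only locally square-integrable, there is no H\"older shortcut past the three dominated-convergence arguments, so the pointwise a.e. convergence supplied by the mollification is what makes them work; everything else is routine.
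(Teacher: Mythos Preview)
Your proof is correct and follows essentially the same strategy as the paper's: both invoke Rem.~\ref{remvj} to upgrade $\partial_{x_j}\Psi$ from $L^1_\loc$ to $L^2$, then mollify and use dominated convergence on the fixed compact support to handle the $A_j$, $\vp(G_j)$, and $V^\eh$ terms. The one cosmetic difference is that the paper mollifies $\Psi$ directly as a $\sF$-valued function via \eqref{klaus1} and then passes to a subsequence to secure a.e.\ convergence of $N_1^\eh\Psi_n$, whereas you first decompose $\Psi=\sum_k f_k\phi_k$ and mollify the scalar coefficients $f_k$, which lets you invoke a.e.\ convergence at Lebesgue points without a subsequence; your use of the dual functionals $\ell_k$ to pull the $L^2$ bound on $\partial_{x_j}\Psi$ down to each $f_k$ is a clean way to make this work.
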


\begin{proof}
Thanks to Rem.~\ref{remvj} we know that the weak partial derivatives of 
$\Psi$ with respect to every $x_j$ exist and are given by the $L^1_\loc(\Geb,\sF)$-sum
$\partial_{x_j}\Psi=iv_{\Geb,j}^*\Psi+iA_j\Psi+i\vp(G_j)\Psi$. 
Together with the present assumptions on $\Psi$, \eqref{rbvp}, \eqref{introA}, and \eqref{introG} 
the latter formula reveals that actually $\partial_{x_j}\Psi\in L^2(\Geb,\sF)$. Define $\Psi_n$ as in 
\eqref{klaus1}, for all integers $n\ge n_0$ and some $n_0\in\NN$ such that 
$\mathrm{dist}(\supp(\Psi),\Geb_{n_0}^c)\ge1/n_0$. 
Extending them by $0$ outside $\Geb_n$, we obtain functions 
$\Psi_n\in\sD(\Geb,\fdom(\Id\Gamma(1\vee\omega)))$, $n\ge n_0$,
such that $\Psi_n\to\Psi$ in $L^2(\Geb,\fdom(\Id\Gamma(1\vee\omega)))$
and $\partial_{x_j}\Psi_n\to\partial_{x_j}\Psi$, $j\in J$, in $L^2(\Geb,\sF)$, as $n\to\infty$. 
All $\Psi_n$ have their supports in a fixed compact subset of $\Geb$. 
Recall the notation $N_1:=1+\Id\Gamma(1)$.
Since $\Psi\in L_0^\infty(\Geb)\otimes\fdom(\Id\Gamma(1\vee\omega))$,
we further have $\|\Psi_n\|_\infty\le\|N_1^\eh\Psi_n\|_\infty\le\|N_1^\eh\Psi\|_\infty<\infty$.
(Here $\|\cdot\|_\infty$ stands for the essential supremum of $\|\cdot\|_\sF$-norms of 
Fock space-valued functions.)
It is also clear that $N_1^\eh\Psi_n\to N_1^\eh\Psi$ in $L^2(\Geb,\sF)$.
Therefore, we find a subsequence of $\{\Psi_n\}_{n\ge n_0}$, call it $\{\Psi_n'\}_{n\in\NN}$, 
such that $N_1^\eh\Psi_n'\to N_1^\eh\Psi$ a.e. on $\Geb$.
From these remarks and the dominated convergence theorem
we infer that $A_j\Psi_n'\to A_j\Psi$ in $L^2(\Geb,\sF)$. In the same way we see that
$V^\eh\Psi_n'\to V^\eh\Psi$ in $L^2(\Geb,\sF)$.
The above remarks, the dominated convergence theorem, and \eqref{rbvp} further imply that
$\vp(G_j)\Psi_n'\to\vp(G_j)\Psi$ in $L^2(\Geb,\sF)$. Moreover, it is clear that
$\Psi_n'\in\dom(v_{\Geb,j})\subset\dom(v_{\Geb,j}^*)$ with 
$v_{\Geb,j}^*\Psi_n'=v_{\Geb,j}\Psi_n'=-i\partial_{x_j}\Psi_n'-A_j\Psi_n'-\vp(G_j)\Psi_n'$, and
we conclude that $v_{\Geb,j}^*\Psi_n'\to v_{\Geb,j}^*\Psi$ in $L^2(\Geb,\sF)$, 
for all $j\in\{1,\ldots,\nu\}$.
\end{proof}

The next lemma will be used to derive a diamagnetic inequality for resolvents of
Dirichlet-Pauli-Fierz operators in Thm.~\ref{thmdiaT} below.

\begin{lem}\label{exdomNbdcdomD}
Let $\Psi\in\dom(\mathfrak{h}_{\Geb,\mathrm{N}})\cap L_0^\infty(\Geb,\sF)$.  
Then $\Psi\in\dom(\mathfrak{h}_{\Geb,\mathrm{D}})$.
\end{lem}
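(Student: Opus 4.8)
The plan is to exhibit $\Psi$ as a limit, with respect to the form norm of $\mathfrak{h}_{\Geb,\mathrm{N}}$, of vectors lying in $\sD(\Geb,\fdom(\Id\Gamma(1\vee\omega)))$. Recall from \eqref{defminPFf} that $\mathfrak{h}_{\Geb,\mathrm{D}}$ is the closure of the restriction of $\mathfrak{h}_{\Geb,\mathrm{N}}$ to $\sD(\Geb,\fdom(\Id\Gamma(1\vee\omega)))$; since $\mathfrak{h}_{\Geb,\mathrm{N}}$ is itself closed, $\mathfrak{h}_{\Geb,\mathrm{D}}$ is a restriction of $\mathfrak{h}_{\Geb,\mathrm{N}}$, and $\dom(\mathfrak{h}_{\Geb,\mathrm{D}})$ is precisely the closure of $\sD(\Geb,\fdom(\Id\Gamma(1\vee\omega)))$ inside $(\dom(\mathfrak{h}_{\Geb,\mathrm{N}}),\|\cdot\|_{\mathfrak{h}_{\Geb,\mathrm{N}}})$. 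Hence producing an approximating sequence as above immediately gives $\Psi\in\dom(\mathfrak{h}_{\Geb,\mathrm{D}})$.

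First I would invoke Rem.~\ref{remapprox3} with the given $\Psi\in\dom(\mathfrak{h}_{\Geb,\mathrm{N}})\cap L_0^\infty(\Geb,\sF)$: the vectors $\Psi_\ve$ from \eqref{defPsive} satisfy $\Psi_\ve\in\dom(\mathfrak{h}_{\Geb,\mathrm{N}})\cap L_0^\infty(\Geb,\fdom(\Id\Gamma(1)))$ and $\Psi_\ve\to\Psi$ in the form norm of $\mathfrak{h}_{\Geb,\mathrm{N}}$ as $\ve\downarrow0$. By a standard three-$\ve$ argument it therefore suffices to approximate, in the form norm of $\mathfrak{h}_{\Geb,\mathrm{N}}$, an arbitrary vector in $\dom(\mathfrak{h}_{\Geb,\mathrm{N}})\cap L_0^\infty(\Geb,\fdom(\Id\Gamma(1)))$ by elements of $\sD(\Geb,\fdom(\Id\Gamma(1\vee\omega)))$.

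For such a $\Psi$ I would then apply Lem.~\ref{lemvstar2}(2) (its extra hypothesis $\Psi\in L_0^\infty(\Geb,\fdom(\Id\Gamma(1)))$ holds, and as recorded in the proof of that lemma this condition also entails \eqref{extra1}): it provides vectors $\Psi_n\in\{L_0^\infty(\Geb)\otimes\fdom(\Id\Gamma(1\vee\omega))\}\cap\dom(\mathfrak{h}_{\Geb,\mathrm{N}})$ with $\Psi_n\to\Psi$ in the form norm. To each such $\Psi_n$ I would apply Lem.~\ref{propvjesa}, which yields vectors in $\sD(\Geb,\fdom(\Id\Gamma(1\vee\omega)))$ converging to $\Psi_n$ in the form norm. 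Composing the three approximation steps (again with a three-$\ve$ argument, using that a form-norm-dense subset of a form-norm-dense subset is form-norm-dense) produces a sequence in $\sD(\Geb,\fdom(\Id\Gamma(1\vee\omega)))$ converging to the original $\Psi$ in the form norm of $\mathfrak{h}_{\Geb,\mathrm{N}}$, which is exactly what is needed.

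There is no substantial obstacle here, as the real work has been front-loaded into Rem.~\ref{remapprox3} and Lemmas~\ref{lemvstar2} and~\ref{propvjesa}; the only points that need a line of care are (i) verifying at each stage that the intermediate vectors still belong to $\dom(\mathfrak{h}_{\Geb,\mathrm{N}})$ so that the form norm is defined — this is part of the cited statements — and (ii) the identification of $\dom(\mathfrak{h}_{\Geb,\mathrm{D}})$ as a closed subspace of $(\dom(\mathfrak{h}_{\Geb,\mathrm{N}}),\|\cdot\|_{\mathfrak{h}_{\Geb,\mathrm{N}}})$ noted in the first paragraph, which is what converts the chain of form-norm approximations into the desired membership.
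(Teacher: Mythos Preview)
Your proposal is correct and follows exactly the same route as the paper's proof, which simply states that the result follows upon combining Rem.~\ref{remapprox3}, Lem.~\ref{lemvstar2}(2), and Lem.~\ref{propvjesa}. You have spelled out the chain of approximations in more detail than the paper does, but the logic is identical.
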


\begin{proof}
We have to show that $\Psi$ can be approximated with respect to the form norm of 
$\mathfrak{h}_{\Geb,\mathrm{N}}$ by elements of 
$\sD(\Geb,\fdom(\Id\Gamma(1\vee\omega)))$. But this follows upon 
combining Rem.~\ref{remapprox3}, Lem.~\ref{lemvstar2}(2), and Lem.~\ref{propvjesa}.
\end{proof}

An example for the applicability of the above approximation results in the case $\Geb=\RR^\nu$
is the following analogue of a well-known result on Schr\"{o}dinger forms \cite{SimonJOT1979}. 
The next theorem also generalizes \cite[Cor.~4.7]{Matte2017} by weakening the condition 
imposed on $\V{G}$ there. The theorem will be used in the proof of Prop.~\ref{propDFKhyp} below.

\begin{thm}\label{thmminmaxPF}
The maximal and minimal Pauli-Fierz forms on $\RR^\nu$ agree, i.e., 
\begin{align}\label{minmaxPF}
\mathfrak{h}_{\RR^\nu,\mathrm{D}}&=\mathfrak{h}_{\RR^\nu,\mathrm{N}}.
\end{align}
\end{thm}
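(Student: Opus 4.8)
The plan is to reduce everything to the approximation lemmas already proved in this section: I will show that the generating subspace $\sD(\RR^\nu,\fdom(\Id\Gamma(1\vee\omega)))$ in the definition \eqref{defminPFf} of $\mathfrak{h}_{\RR^\nu,\mathrm{D}}$ is a form core for the closed maximal form $\mathfrak{h}_{\RR^\nu,\mathrm{N}}$, which immediately yields \eqref{minmaxPF}. Note first that the inclusion ``$\mathfrak{h}_{\RR^\nu,\mathrm{D}}\subseteq\mathfrak{h}_{\RR^\nu,\mathrm{N}}$'' --- meaning $\dom(\mathfrak{h}_{\RR^\nu,\mathrm{D}})\subseteq\dom(\mathfrak{h}_{\RR^\nu,\mathrm{N}})$ with the two forms agreeing on the smaller domain --- is automatic, since by \eqref{defminPFf} the form $\mathfrak{h}_{\RR^\nu,\mathrm{D}}$ is the closure of a restriction of the closed form $\mathfrak{h}_{\RR^\nu,\mathrm{N}}$, and the closure of a closable form is its smallest closed extension. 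So the entire content is the reverse inclusion $\dom(\mathfrak{h}_{\RR^\nu,\mathrm{N}})\subseteq\dom(\mathfrak{h}_{\RR^\nu,\mathrm{D}})$.

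To obtain it, I would first apply Lem.~\ref{lemwiesel2}: on $\Geb=\RR^\nu$ the cutoff functions $\vt_\ell$ of Lem.~\ref{cutofflem} can be taken with $\Theta_1,\dots,\Theta_\nu$ bounded, so that \eqref{VdomTheta} holds trivially, whence the set $\cD_0:=\{\Phi\in\dom(\mathfrak{h}_{\RR^\nu,\mathrm{N}})\,|\,\Phi\in L_0^\infty(\RR^\nu,\fdom(\Id\Gamma(1)))\}$ is a form core for $\mathfrak{h}_{\RR^\nu,\mathrm{N}}$. Every $\Phi\in\cD_0$ in particular belongs to $\dom(\mathfrak{h}_{\RR^\nu,\mathrm{N}})\cap L_0^\infty(\RR^\nu,\sF)$, so Lem.~\ref{exdomNbdcdomD} --- itself obtained by chaining Rem.~\ref{remapprox3}, Lem.~\ref{lemvstar2}(2), and Lem.~\ref{propvjesa} --- shows $\cD_0\subseteq\dom(\mathfrak{h}_{\RR^\nu,\mathrm{D}})$.

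Finally I would propagate this through the form-norm closure. On $\cD_0$ the form norms of $\mathfrak{h}_{\RR^\nu,\mathrm{D}}$ and $\mathfrak{h}_{\RR^\nu,\mathrm{N}}$ coincide (both equal $(\|\cdot\|^2+\mathfrak{h}_{\RR^\nu,\mathrm{N}}[\cdot])^{\eh}$ there, by the inclusion already established). Given $\Psi\in\dom(\mathfrak{h}_{\RR^\nu,\mathrm{N}})$, one picks $\Phi_n\in\cD_0$ with $\Phi_n\to\Psi$ with respect to $\|\cdot\|_{\mathfrak{h}_{\RR^\nu,\mathrm{N}}}$; then $(\Phi_n)_n$ is Cauchy in the Hilbert space $\dom(\mathfrak{h}_{\RR^\nu,\mathrm{D}})$ equipped with its form norm, hence converges there, and since form-norm convergence entails $L^2$-convergence the limit must be $\Psi$. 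Thus $\Psi\in\dom(\mathfrak{h}_{\RR^\nu,\mathrm{D}})$ and $\mathfrak{h}_{\RR^\nu,\mathrm{D}}[\Psi]=\mathfrak{h}_{\RR^\nu,\mathrm{N}}[\Psi]$, which is \eqref{minmaxPF}.

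Since every substantive ingredient has already been established earlier in this section, I do not expect a genuine obstacle here --- the proof is essentially bookkeeping. The only point needing a bit of attention is that the core furnished by Lem.~\ref{lemwiesel2} already consists of \emph{bounded, compactly supported} Fock-space-valued sections, so the hypothesis of Lem.~\ref{exdomNbdcdomD} is met verbatim and no further cutoff or mollification in the $\V{x}$-variable is required.
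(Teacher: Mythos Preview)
Your proof is correct and follows essentially the same route as the paper, whose proof reads simply ``Combine Lem.~\ref{lemwiesel2}, Lem.~\ref{lemvstar2}(2), and Lem.~\ref{propvjesa}.'' Your detour through Lem.~\ref{exdomNbdcdomD} merely packages the last two steps (together with Rem.~\ref{remapprox3}, which is harmless but not needed here since $\cD_0$ already lies in $L_0^\infty(\RR^\nu,\fdom(\Id\Gamma(1)))$), and your final paragraph just spells out what ``form core'' means.
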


\begin{proof}
Combine Lem.~\ref{lemwiesel2}, Lem.~\ref{lemvstar2}(2), and Lem.~\ref{propvjesa}.
\end{proof}

In view of the preceding theorem we abbreviate
\begin{align}\label{defhRRnu}
\mathfrak{h}_{\RR^\nu}:=\mathfrak{h}_{\RR^\nu,\mathrm{D}}=\mathfrak{h}_{\RR^\nu,\mathrm{N}},
\end{align}
and we shall refer to $\mathfrak{h}_{\RR^\nu}$ simply as the {\em Pauli-Fierz form on $\RR^\nu$.}

We conclude this section with a proposition providing a crucial technical ingredient needed to 
derive our Feynman-Kac formulas for $H_{\Geb}$: 
We shall verify the conditions (a) and (b) of Hyp.~\ref{hypabD}
when the forms $\mathfrak{h}_{\RR^\nu}$ and $\mathfrak{h}_{\Geb,\mathrm{D}}'$ are put in
place of $\mathfrak{q}_{\RR^\nu}$ and $\mathfrak{q}_{\Geb}$, respectively,
where we use the following notational conventions:

For any function $\Phi:\Geb\to\sF$, we denote by $\Phi'$ its extension to $\RR^\nu$ by $0$. 
For a set $\sM$ of functions from $\Geb$ to $\sF$, we put
$\sM':=\{\Phi'|\,\Phi\in\sM\}$. Restrictions of functions on $\RR^\nu$ to $\Geb$ are denoted
by a subscript $\Geb$. Finally, we define
\begin{align}\label{wiesel1700}
\mathfrak{h}_{\Geb,\mathrm{D}}'[\Psi]&:=\mathfrak{h}_{\Geb,\mathrm{D}}[\Psi_\Geb],
\quad\Psi\in\dom(\mathfrak{h}_{\Geb,\mathrm{D}}'):=\dom(\mathfrak{h}_{\Geb,\mathrm{D}})'.
\end{align}
In other words, $\mathfrak{h}_{\Geb,\mathrm{D}}'$ is $\mathfrak{h}_{\Geb,\mathrm{D}}$
considered as a form in $1_{\Geb}L^2(\RR^\nu,\sF)$ in the canonical way.

\begin{prop}\label{propDFKhyp}
Assume that $\V{A}\in L_\loc^2(\RR^\nu,\RR^\nu)$, $\V{G}\in L_\loc^2(\RR^\nu,\HP^\nu)$,
and $0\le V\in L_\loc^1(\RR^\nu,\RR)$.
Let $Y^{\Geb}_\infty$ be given by \eqref{defYsGinfty}, the functions $\vt_\ell$ being chosen as in
the paragraph preceding \eqref{defYsGinfty}. Set
\begin{align*}
\dom(\mathfrak{h}_{\RR^\nu}^{1,\infty})&:=\bigg\{\Psi\in\dom(\mathfrak{h}_{\RR^\nu})\bigg|
\:\int_{\RR^\nu}Y^{\Geb}_\infty(\V{x})\|\Psi(\V{x})\|^2\Id\V{x}<\infty\bigg\}.
\end{align*}
Then $\dom(\mathfrak{h}_{\RR^\nu}^{1,\infty})\subset\dom(\mathfrak{h}_{\Geb,\mathrm{D}}')$ 
and the closure of $\dom(\mathfrak{h}_{\RR^\nu}^{1,\infty})$ with respect to the form norm of 
$\mathfrak{h}_{\Geb,\mathrm{D}}'$ is $\dom(\mathfrak{h}_{\Geb,\mathrm{D}}')$.
Furthermore, $\mathfrak{h}_{\RR^\nu}[\Psi]=\mathfrak{h}_{\Geb,\mathrm{D}}'[\Psi]$, for every 
$\Psi\in\dom(\mathfrak{h}_{\RR^\nu}^{1,\infty})$.
\end{prop}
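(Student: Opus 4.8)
The plan is to verify conditions (a) and (b) of Hyp.~\ref{hypabD} directly, which here means establishing three facts: the inclusion $\dom(\mathfrak{h}_{\RR^\nu}^{1,\infty})\subset\dom(\mathfrak{h}_{\Geb,\mathrm{D}}')$, the identity $\mathfrak{h}_{\RR^\nu}[\Psi]=\mathfrak{h}_{\Geb,\mathrm{D}}'[\Psi]$ on $\dom(\mathfrak{h}_{\RR^\nu}^{1,\infty})$, and the density of $\dom(\mathfrak{h}_{\RR^\nu}^{1,\infty})$ in $\dom(\mathfrak{h}_{\Geb,\mathrm{D}}')$ for the form norm. The third will follow from the first two, the closedness of $\mathfrak{h}_{\Geb,\mathrm{D}}'$, and the observation that, by the very definition \eqref{defminPFf}, the set $\sD(\Geb,\fdom(\Id\Gamma(1\vee\omega)))'$ is a form core for $\mathfrak{h}_{\Geb,\mathrm{D}}'$, once one checks that this set is contained in $\dom(\mathfrak{h}_{\RR^\nu}^{1,\infty})$.

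I would begin with the compatibility of the covariant derivatives on $\Geb$ and on $\RR^\nu$ under extension by zero: for $\Phi\in\sD(\Geb,\fdom(\Id\Gamma(1)))$ one has $v_{\RR^\nu,j}\Phi'=(v_{\Geb,j}\Phi)'$, since $A_j$ and $G_j$ restrict and $\Phi'$ is compactly supported in $\Geb$; testing the defining identity of $v_{\Geb,j}^*$ against such $\Phi$ shows that every $\Psi\in\dom(v_{\RR^\nu,j}^*)$ satisfies $\Psi_\Geb\in\dom(v_{\Geb,j}^*)$ with $v_{\Geb,j}^*\Psi_\Geb=(v_{\RR^\nu,j}^*\Psi)_\Geb$. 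Elements $\Psi$ of $\dom(\mathfrak{h}_{\RR^\nu}^{1,\infty})$ moreover vanish a.e. on $\Geb^c$ (as $Y^{\Geb}_\infty=\infty$ there), so, using $\mathfrak{h}_{\RR^\nu}=\mathfrak{h}_{\RR^\nu,\mathrm{N}}$ (Thm.~\ref{thmminmaxPF}), each such $\Psi$ has $\Psi_\Geb\in\dom(\mathfrak{h}_{\Geb,\mathrm{N}})$. For the core inclusion, if $\Phi\in\sD(\Geb,\fdom(\Id\Gamma(1\vee\omega)))$ with compact support $K\subset\Geb$, then $\Phi'\in\sD(\RR^\nu,\fdom(\Id\Gamma(1\vee\omega)))\subset\dom(\mathfrak{h}_{\RR^\nu,\mathrm{N}})=\dom(\mathfrak{h}_{\RR^\nu})$ (finiteness of the $V$-, $\Id\Gamma(\omega)$- and $v^*$-terms being immediate from $V\in L^1_\loc$, $\V{A},\V{G}\in L^2_\loc$ and compact support), and $\int_{\RR^\nu}Y^{\Geb}_\infty\|\Phi'\|^2<\infty$ because on $K$ the weight $\dist(\cdot,\Geb^c)^{-3}$ is bounded and the series $\sum_\ell|\nabla\vt_\ell|^2$ reduces to finitely many bounded terms; hence $\Phi'\in\dom(\mathfrak{h}_{\RR^\nu}^{1,\infty})$.

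The substantive point is to show $\Psi_\Geb\in\dom(\mathfrak{h}_{\Geb,\mathrm{D}})$ for $\Psi\in\dom(\mathfrak{h}_{\RR^\nu}^{1,\infty})$. Here I would use the cutoffs $\vt_\ell$ from \eqref{defYsGinfty} and exploit the pointwise bound $\sum_{j=1}^\nu\Theta_j^2=\sum_\ell|\nabla\vt_\ell|^2\le Y^{\Geb}_\infty$ on $\Geb$, with $\Theta_j$ as in \eqref{defThetaJ}; the finiteness of $\int Y^{\Geb}_\infty\|\Psi\|^2$ then gives $\Theta_j\Psi_\Geb\in L^2(\Geb,\sF)$ for all $j$, so Lem.~\ref{cutofflem} yields $\vt_n\Psi_\Geb\to\Psi_\Geb$ in the graph norm of each $v_{\Geb,j}^*$, and, with dominated convergence for the $V$- and $\Id\Gamma(\omega)$-terms, in the form norm of $\mathfrak{h}_{\Geb,\mathrm{N}}$. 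Next I would apply the range cutoffs $\beta_m$ of Lem.~\ref{lemrangecutoff} to each $\vt_n\Psi_\Geb$, invoking also Rem.~\ref{remrangecutoff} for the potential and field-energy terms, to obtain $\beta_m\vt_n\Psi_\Geb\in\dom(\mathfrak{h}_{\Geb,\mathrm{N}})\cap L_0^\infty(\Geb,\sF)$ converging to $\vt_n\Psi_\Geb$ in form norm; by Lem.~\ref{exdomNbdcdomD} these bounded, compactly supported vectors lie in $\dom(\mathfrak{h}_{\Geb,\mathrm{D}})$. Since the form norms of $\mathfrak{h}_{\Geb,\mathrm{D}}$ and $\mathfrak{h}_{\Geb,\mathrm{N}}$ agree on $\dom(\mathfrak{h}_{\Geb,\mathrm{D}})$ and the former is complete, $\dom(\mathfrak{h}_{\Geb,\mathrm{D}})$ is closed in $\dom(\mathfrak{h}_{\Geb,\mathrm{N}})$; applying this closedness twice (first $m\to\infty$, then $n\to\infty$) places $\Psi_\Geb$ in $\dom(\mathfrak{h}_{\Geb,\mathrm{D}})$.

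It remains to match the form values. The $V$- and $\Id\Gamma(\omega)$-contributions to $\mathfrak{h}_{\RR^\nu}[\Psi]$ and to $\mathfrak{h}_{\Geb,\mathrm{D}}'[\Psi]$ coincide trivially since $\Psi$ vanishes a.e. on $\Geb^c$, and $\mathfrak{h}_{\Geb,\mathrm{D}}[\Psi_\Geb]=\mathfrak{h}_{\Geb,\mathrm{N}}[\Psi_\Geb]$ because $\mathfrak{h}_{\Geb,\mathrm{D}}$ is the closure of a restriction of the closed form $\mathfrak{h}_{\Geb,\mathrm{N}}$. By the compatibility relation above, matching the covariant-derivative terms reduces to showing that $v_{\RR^\nu,j}^*\Psi$ vanishes a.e. on $\Geb^c$; this I would obtain from $\Psi_\Geb\in\dom(\mathfrak{h}_{\Geb,\mathrm{D}})$ by approximating $\Psi_\Geb$ by $\Phi_k\in\sD(\Geb,\fdom(\Id\Gamma(1\vee\omega)))$ in the $\mathfrak{h}_{\Geb,\mathrm{N}}$-form norm and noting that $\Phi_k'\to\Psi$ and $v_{\RR^\nu,j}^*\Phi_k'=(v_{\Geb,j}^*\Phi_k)'\to(v_{\Geb,j}^*\Psi_\Geb)'$ in $L^2(\RR^\nu,\sF)$, so that closedness of $v_{\RR^\nu,j}^*$ forces $v_{\RR^\nu,j}^*\Psi$ to vanish a.e. on $\Geb^c$. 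I expect the main obstacle to be the middle step, $\Psi_\Geb\in\dom(\mathfrak{h}_{\Geb,\mathrm{D}})$: one must exploit the weight correctly through the bound $\sum_j\Theta_j^2\le Y^{\Geb}_\infty$ to bring Lem.~\ref{cutofflem} into play, and then chain the approximation results of Section~\ref{secapprox} while keeping careful track of which form norm each convergence is taken in.
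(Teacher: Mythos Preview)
Your proposal is correct and follows essentially the same route as the paper: both hinge on the compatibility relation $v_{\Geb,j}^*\Psi_\Geb=(v_{\RR^\nu,j}^*\Psi)_\Geb$, the key pointwise bound $\sum_j\Theta_j^2\le Y_\infty^\Geb$, and the approximation machinery of Section~\ref{secapprox} to place $\Psi_\Geb$ in $\dom(\mathfrak{h}_{\Geb,\mathrm{D}})$. The only organizational differences are that the paper packages your cutoff step by introducing the auxiliary form $\mathfrak{h}_{\Geb,\mathrm{N}}^Y$ with potential $(V+Y_\infty^\Geb)_\Geb$ and invoking Lem.~\ref{lemwiesel1} (which internally runs exactly your argument via Lem.~\ref{cutofflem}), and that the paper obtains the vanishing of $v_{\RR^\nu,j}^*\Psi$ on $\Geb^c$ by citing \cite[Lem.~3.4]{Matte2017} rather than by your self-contained closedness argument.
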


\begin{proof}
Let $\Psi\in\dom(\mathfrak{h}_{\RR^\nu}^{1,\infty})$. Clearly, $\Psi=0$ a.e. on $\Geb^c$.
Pick some $\Phi\in\sD(\Geb,\fdom(\Id\Gamma(1)))$. For every $j\in\{1,\ldots,\nu\}$, we then find 
\begin{align*}
\SPn{v_{\Geb,j}\Phi}{\Psi_\Geb}_{L^2(\Geb,\sF)}
&=\SPn{v_{\RR^\nu,j}\Phi'}{\Psi}_{L^2(\RR^\nu,\sF)}
\\
&=\SPn{\Phi'}{v_{\RR^\nu,j}^*\Psi}_{L^2(\RR^\nu,\sF)}
=\SPn{\Phi}{(v_{\RR^\nu,j}^*\Psi)_\Geb}_{L^2(\Geb,\sF)}.
\end{align*}
Thus, $\Psi_\Geb\in\dom(v_{\Geb,j}^*)$ with 
\begin{align}\label{wiesel2000}
v_{\Geb,j}^*\Psi_\Geb=(v_{\RR^\nu,j}^*\Psi)_\Geb.
\end{align}
This shows that $\Psi_\Geb\in\dom(\mathfrak{h}_{\Geb,\mathrm{N}}^Y)
\subset\dom(\mathfrak{h}_{\Geb,\mathrm{N}})$, where the form
$\mathfrak{h}_{\Geb,\mathrm{N}}^Y$ is defined by putting the potential
$(V+Y_\infty^\Geb)_\Geb$ in place of $V_\Geb$ in the construction of
$\mathfrak{h}_{\Geb,\mathrm{N}}$. But
$\{\Phi\in\dom(\mathfrak{h}_{\Geb,\mathrm{N}}^Y)|
\,\Phi\in L_0^\infty(\Geb,\fdom(\Id\Gamma(1)))\}$
is a core for $\mathfrak{h}_{\Geb,\mathrm{N}}^Y$ according to Lem.~\ref{lemwiesel1}.
Taking also Lem.~\ref{lemvstar2}(2) and Lem.~\ref{propvjesa} into account we see that
$\sD(\Geb,\fdom(\Id\Gamma(1\vee\omega)))$ is a core for $\mathfrak{h}_{\Geb,\mathrm{N}}^Y$
as well. Since $\mathfrak{h}_{\Geb,\mathrm{N}}\le\mathfrak{h}_{\Geb,\mathrm{N}}^Y$,
it is now clear that $\Psi_\Geb$ can be approximated by elements of
$\sD(\Geb,\fdom(\Id\Gamma(1\vee\omega)))\subset\dom(\mathfrak{h}_{\Geb,\mathrm{D}})$
with respect to the form norm of $\mathfrak{h}_{\Geb,\mathrm{N}}$, that is,
\begin{align}\label{wiesel2001}
\Psi_\Geb\in\dom(\mathfrak{h}_{\Geb,\mathrm{D}})\quad\text{and}\quad
\mathfrak{h}_{\Geb,\mathrm{D}}[\Psi_\Geb]&=\mathfrak{h}_{\Geb,\mathrm{N}}[\Psi_\Geb].
\end{align}
In particular, $\Psi\in\dom(\mathfrak{h}_{\Geb,\mathrm{D}}')$.
Since $\Psi=0$ a.e. on $\Geb^c$, we further know that $v_{\RR^\nu,j}^*\Psi=0$ a.e. on $\Geb^c$,
for every $j\in\{1,\ldots,\nu\}$; see \cite[Lem.~3.4]{Matte2017}. Employing \eqref{wiesel1700}, 
\eqref{wiesel2000}, and \eqref{wiesel2001} we conclude that
$\mathfrak{h}_{\Geb,\mathrm{D}}'[\Psi]=\mathfrak{h}_{\Geb,\mathrm{N}}[\Psi_\Geb]
=\mathfrak{h}_{\RR^\nu}[\Psi]$.

Since $Y_\infty^{\Geb}$ is locally bounded on $\Geb$, it is also
clear that $\sD(\Geb,\fdom(\Id\Gamma(1\vee\omega)))'
\subset\dom(\mathfrak{h}_{\RR^\nu}^{1,\infty})$. Furthermore,
by the definition of $\mathfrak{h}_{\Geb,\mathrm{D}}$ and \eqref{wiesel1700}, 
$\sD(\Geb,\fdom(\Id\Gamma(1\vee\omega)))'$ is a core for the form
$\mathfrak{h}_{\Geb,\mathrm{D}}'$. This reveals that the closure
of $\dom(\mathfrak{h}_{\RR^\nu}^{1,\infty})$ with respect to the form norm of 
$\mathfrak{h}_{\Geb,\mathrm{D}}'$ is $\dom(\mathfrak{h}_{\Geb,\mathrm{D}}')$.
\end{proof}

%%%%%%%%%%%%%%%%%%%%%%%%%%%%%%%%%%%%%%%%%%
%%%%%%%%%%%%%%%%%%%%%%%%%%%%%%%%%%%%%%%%%%
%%%%%%%%%%%%%%%%%%%%%%%%%%%%%%%%%%%%%%%%%%

\section{A diamagnetic inequality for resolvents}\label{secmoredia}

\noindent
The purpose of this section is to derive a diamagnetic inequality comparing resolvents
of Dirichlet-Pauli-Fierz operators and resolvents of Dirichlet-Schr\"{o}dinger operators;
see \eqref{diaT} in Thm.~\ref{thmdiaT} below. This inequality will be used to discuss strong 
resolvent convergence of certain sequences of Dirichlet-Pauli-Fierz operators in the succeeding 
Sect.~\ref{secstrresconv}. Even for $\Geb=\RR^\nu$ and $\V{A}=0$, Thm.~\ref{thmdiaT} relaxes 
assumptions imposed on $\V{G}$ in earlier derivations \cite{Hiroshima1996,Hiroshima1997,KMS2013} 
of the bound \eqref{diaT}. The proofs in this section follow the lines of the corresponding ones in 
\cite{HundertmarkSimon} but require additional arguments to deal with the quantized fields.

We start with a complement to Lem.~\ref{lem-abs-val}. Recall that the symbols $Z_\delta(\Psi)$ and 
$\fS_{\delta,\Psi}$ have been introduced in \eqref{defZdelta}.

\begin{lem}\label{dialoglem}
Let $\sK$ be a separable Hilbert space,
$j\in\{1,\ldots,\nu\}$, $p\in[1,\infty]$, $\delta>0$, and let $\Psi\in L_\loc^{p}(\Geb,\sK)$ have a 
weak partial derivative with respect to $x_j$ satisfying $\partial_{x_j}\Psi\in L_\loc^{p}(\Geb,\sK)$. 
Then $\fS_{\delta,\Psi}\in L^\infty(\Geb,\sK)$ has a weak partial derivative with respect to 
$x_j$ which blongs to $L_{\loc}^{p}(\Geb,\sK)$ and is given by
\begin{align}\label{dersgndelta}
\partial_{x_j}\fS_{\delta,\Psi}&=Z_\delta(\Psi)^{-1}\big(
\partial_{x_j}\Psi-\Re\SPn{{\fS_{\delta,\Psi}}}{\partial_{x_j}\Psi}_{\sK}\fS_{\delta,\Psi}\big).
\end{align}
Furthermore, let $\chi\in W^{1,2}(\Geb)$ satisfy $|\chi|\le cZ_1(\Psi)$, for some $c>0$. 
Then $\chi\fS_{\delta,\Psi}\in L^2(\Geb,\sK)$ has a weak partial derivative with respect to $x_j$
which is in $L^{2\wedge p}_\loc(\Geb,\sK)$ and given by
\begin{align}\label{dialog}
\partial_{x_j}(\chi\fS_{\delta,\Psi})&=(\partial_{x_j}\chi)\fS_{\delta,\Psi}
+\frac{\chi}{Z_\delta(\Psi)}\big(\partial_{x_j}\Psi-\Re\SPn{{\fS_{\delta,\Psi}}}{\partial_{x_j}
\Psi}_{\sK}\fS_{\delta,\Psi}\big).
\end{align}
\end{lem}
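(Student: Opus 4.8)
The plan is to establish the two formulas by mollification, exactly in the spirit of the proof of Lemma~\ref{lem-abs-val}, since the expressions \eqref{dersgndelta} and \eqref{dialog} are formal consequences of the chain and product rules applied to $\fS_{\delta,\Psi}=Z_\delta(\Psi)^{-1}\Psi$. First I would treat \eqref{dersgndelta}. Recall from \eqref{dia00delta} that $Z_\delta(\Psi)\in L_\loc^p(\Geb)$ has the weak derivative $\partial_{x_j}Z_\delta(\Psi)=\Re\SPn{\fS_{\delta,\Psi}}{\partial_{x_j}\Psi}_\sK$, and note that $Z_\delta(\Psi)\ge\delta>0$, so $Z_\delta(\Psi)^{-1}\in L^\infty(\Geb)$ with $\|Z_\delta(\Psi)^{-1}\|_\infty\le1/\delta$, and the chain rule for distributional derivatives (e.g. \cite[Thm.~6.16]{LiebLoss2001}) gives $\partial_{x_j}(Z_\delta(\Psi)^{-1})=-Z_\delta(\Psi)^{-2}\partial_{x_j}Z_\delta(\Psi)\in L^{p}_\loc(\Geb)$. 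Since $\Psi\in L^p_\loc(\Geb,\sK)$ while $Z_\delta(\Psi)^{-1}\in L^\infty(\Geb)$ and $\partial_{x_j}(Z_\delta(\Psi)^{-1})\in L^p_\loc(\Geb)$, the (scalar times vector) case of the Leibniz rule Thm.~\ref{LeibnizL1}, with $b:\CC\times\sK\to\sK$, $b(\lambda,v)=\lambda v$, applies and yields
\begin{align*}
\partial_{x_j}\fS_{\delta,\Psi}
&=Z_\delta(\Psi)^{-1}\partial_{x_j}\Psi
-Z_\delta(\Psi)^{-2}\Re\SPn{\fS_{\delta,\Psi}}{\partial_{x_j}\Psi}_\sK\,\Psi,
\end{align*}
which, using $Z_\delta(\Psi)^{-1}\Psi=\fS_{\delta,\Psi}$, is exactly \eqref{dersgndelta}; the right-hand side lies in $L^p_\loc(\Geb,\sK)$ because $Z_\delta(\Psi)^{-1}$ is bounded and $\partial_{x_j}\Psi\in L^p_\loc$, while $\fS_{\delta,\Psi}\in L^\infty$.

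For \eqref{dialog} I would again use Thm.~\ref{LeibnizL1}, now with the bilinear map $b(\lambda,v)=\lambda v$ on $\RR\times\sK$ (or $\CC\times\sK$), applied to the factors $\chi$ and $\fS_{\delta,\Psi}$. Here $\chi\in W^{1,2}(\Geb)$ and $\fS_{\delta,\Psi}\in L^\infty(\Geb,\sK)\cap$ (by the first part) $W^{1,p}_\loc(\Geb,\sK)$; the three products appearing in the Leibniz rule are $\chi\fS_{\delta,\Psi}$ (in $L^2$ since $\|\fS_{\delta,\Psi}\|_\sK\le1$ pointwise), $(\partial_{x_j}\chi)\fS_{\delta,\Psi}$ (in $L^2_\loc$), and $\chi\,\partial_{x_j}\fS_{\delta,\Psi}$, for which I would use the bound $|\chi|\le cZ_1(\Psi)$ together with $\|\partial_{x_j}\fS_{\delta,\Psi}\|_\sK\le 2Z_\delta(\Psi)^{-1}\|\partial_{x_j}\Psi\|_\sK\le(2/\delta)\|\partial_{x_j}\Psi\|_\sK$ from \eqref{dersgndelta} to see it lies in $L^{p}_\loc(\Geb,\sK)$; more carefully, the factor $\chi/Z_\delta(\Psi)$ is bounded by $c\,Z_1(\Psi)/Z_\delta(\Psi)$, which is bounded when $\delta\le1$ and bounded by $c/\delta$ times $Z_1(\Psi)$ in general, so that $\chi\,\partial_{x_j}\fS_{\delta,\Psi}$ is locally as integrable as $\partial_{x_j}\Psi$, hence in $L^p_\loc$, and the full right-hand side of \eqref{dialog} is in $L^{2\wedge p}_\loc(\Geb,\sK)$ as claimed (the $L^2$ contribution coming from $(\partial_{x_j}\chi)\fS_{\delta,\Psi}$). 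Plugging \eqref{dersgndelta} into $\partial_{x_j}(\chi\fS_{\delta,\Psi})=(\partial_{x_j}\chi)\fS_{\delta,\Psi}+\chi\,\partial_{x_j}\fS_{\delta,\Psi}$ gives precisely \eqref{dialog}.

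The main obstacle — really the only non-bookkeeping point — is verifying the integrability hypotheses of Thm.~\ref{LeibnizL1} in the second part, i.e., that all three products are locally integrable with the correct exponents; this hinges on combining the pointwise bounds $\|\fS_{\delta,\Psi}\|_\sK\le1$ and $|\chi|\le cZ_1(\Psi)\le c\max\{1,1/\delta\}Z_\delta(\Psi)$ with the formula from the first part, and on keeping track of how the constants degrade when $\delta>1$. Everything else is an application of the scalar chain rule, the already-established \eqref{dia00delta}, and the vector-valued Leibniz rule Thm.~\ref{LeibnizL1}; I do not anticipate needing any new mollification argument beyond what is already packaged into those two results.
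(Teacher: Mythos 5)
Your proposal is correct and follows essentially the same route as the paper: formula \eqref{dia00delta} plus the scalar chain rule to differentiate $Z_\delta(\Psi)^{-1}$, then two applications of the vector-valued Leibniz rule Thm.~\ref{LeibnizL1}, with the hypothesis $|\chi|\le cZ_1(\Psi)$ used exactly as in the paper to make $\chi\,\partial_{x_j}\fS_{\delta,\Psi}$ locally integrable. The only detail worth stating explicitly (as the paper does) is that in the first application the product $(\partial_{x_j}Z_\delta(\Psi)^{-1})\Psi$ is in $L^1_\loc(\Geb,\sK)$, which does not follow merely from having two $L^p_\loc$ factors when $p<2$ but from the pointwise bound $Z_\delta(\Psi)^{-2}\|\Psi\|_\sK\le\delta^{-1}$, giving $\|(\partial_{x_j}Z_\delta(\Psi)^{-1})\Psi\|_\sK\le\delta^{-1}\|\partial_{x_j}\Psi\|_\sK\in L^p_\loc$.
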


\begin{proof}
Employing \eqref{dia00delta} and the usual chain rule for weak partial derivatives we compute
$$
\partial_{x_j}Z_\delta(\Psi)^{-1}=-Z_\delta(\Psi)^{-2}
\Re\SPn{{\fS_{\delta,\Psi}}}{\partial_{x_j}\Psi}_{\sK}\in L^p_\loc(\Geb),
$$
which together with Thm.~\ref{LeibnizL1} yields \eqref{dersgndelta};
notice that the product $(\partial_{x_j}Z_\delta(\Psi)^{-1})\Psi$ is indeed in $L^1_\loc(\Geb,\sK)$
so that Thm.~\ref{LeibnizL1} is applicable. We read off from \eqref{dersgndelta}
that $\partial_{x_j}\fS_{\delta,\Psi}\in L_\loc^p(\Geb,\sK)$.
Finally, \eqref{dialog} follows from \eqref{dersgndelta} and Thm.~\ref{LeibnizL1}; here we use that
the product $\chi\partial_{x_j}\fS_{\delta,\Psi}$ is in $L^1_\loc(\Geb,\sK)$ thanks to
the postulated bound $|\chi|\le cZ_1(\Psi)$.
\end{proof}

\begin{prop}\label{propdia1}
Assume \eqref{introA} and \eqref{introG}.
Let $\delta>0$, $j\in\{1,\ldots,\nu\}$, $\Psi\in\dom(v_{\Geb,j}^*)$, and let 
$\chi\in W^{1,2}(\Geb)$ be nonnegative and satisfy $\chi\le cZ_1(\Psi)$, for some $c>0$.
Then $\chi\fS_{\delta,\Psi}\in\dom(v_{\Geb,j}^*)$ and, a.e. on $\Geb$,
\begin{align}\label{diacharly}
\Re\SPn{v_{\Geb,j}^*(\chi\fS_{\delta,\Psi})}{v_{\Geb,j}^*\Psi}_{\sF}
&\ge\|\fS_{\delta,\Psi}\|_{\sF}(\partial_{x_j}\chi)\partial_{x_j}\|\Psi\|_{\sF}.
\end{align}
\end{prop}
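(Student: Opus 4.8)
The plan is first to establish the pointwise identity
\begin{equation}\label{diaidplan}
v_{\Geb,j}^*(\chi\fS_{\delta,\Psi})=\frac{\chi}{Z_\delta(\Psi)}\,v_{\Geb,j}^*\Psi-i(\partial_{x_j}\chi)\,\fS_{\delta,\Psi}+i\,\frac{\chi\,\partial_{x_j}Z_\delta(\Psi)}{Z_\delta(\Psi)}\,\fS_{\delta,\Psi}\qquad\text{a.e. on $\Geb$,}
\end{equation}
where we abbreviate $\partial_{x_j}Z_\delta(\Psi):=\Re\SPn{\fS_{\delta,\Psi}}{iv_{\Geb,j}^*\Psi}_{\sF}$; by Thm.~\ref{thm-dia0} and the chain rule for $Z_\delta(\cdot)=(\delta^2+(\cdot)^2)^\eh$ this is the weak partial derivative of $Z_\delta(\Psi)$, it equals $\|\fS_{\delta,\Psi}\|_{\sF}\,\partial_{x_j}\|\Psi\|_{\sF}$ and lies in $L^2(\Geb)$, and $|\partial_{x_j}Z_\delta(\Psi)|\le\|v_{\Geb,j}^*\Psi\|_{\sF}$ a.e.\ by Cauchy--Schwarz (using $\|\fS_{\delta,\Psi}\|_{\sF}\le1$). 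Granting \eqref{diaidplan}, I would form $\Re\SPn{v_{\Geb,j}^*(\chi\fS_{\delta,\Psi})}{v_{\Geb,j}^*\Psi}_{\sF}$ term by term: the contribution of $-i(\partial_{x_j}\chi)\fS_{\delta,\Psi}$ is $(\partial_{x_j}\chi)\,\Re\SPn{\fS_{\delta,\Psi}}{iv_{\Geb,j}^*\Psi}_{\sF}=\|\fS_{\delta,\Psi}\|_{\sF}(\partial_{x_j}\chi)\partial_{x_j}\|\Psi\|_{\sF}$, which is exactly the right-hand side of \eqref{diacharly}, while the other two terms contribute $\tfrac{\chi}{Z_\delta(\Psi)}\bigl(\|v_{\Geb,j}^*\Psi\|_{\sF}^2-(\partial_{x_j}Z_\delta(\Psi))^2\bigr)$, which is $\ge0$ since $\chi\ge0$, $Z_\delta(\Psi)\ge\delta>0$ and $|\partial_{x_j}Z_\delta(\Psi)|\le\|v_{\Geb,j}^*\Psi\|_{\sF}$. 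This gives \eqref{diacharly}.

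To prove \eqref{diaidplan} I would first reduce to the case that $\chi$ is bounded. For each $N\in\NN$ the truncation $\chi\wedge N$ still meets all hypotheses (in particular $\chi\wedge N\le cZ_1(\Psi)$ and $\chi\wedge N\ge0$), and once \eqref{diaidplan} holds for every $\chi\wedge N$ one lets $N\to\infty$: the three terms on the right of \eqref{diaidplan} converge in $L^2(\Geb,\sF)$ by dominated convergence (dominants $|\partial_{x_j}\chi|$ and $c\max(1,\delta^{-1})\|v_{\Geb,j}^*\Psi\|_{\sF}$, the latter because $\chi/Z_\delta(\Psi)\le c\max(1,\delta^{-1})$), while $(\chi\wedge N)\fS_{\delta,\Psi}\to\chi\fS_{\delta,\Psi}$ in $L^2(\Geb,\sF)$; closedness of $v_{\Geb,j}^*$ yields \eqref{diaidplan} for $\chi$, and \eqref{diacharly} follows along an a.e.-convergent subsequence.

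So assume $\chi$ bounded. I would regularize $\Psi$ by $\Psi_\ve:=N_\ve^{\mh}\Psi$ as in Rem.~\ref{remapprox1}: then $\Psi_\ve(\V{x})\in\fdom(\Id\Gamma(1))$ for a.e.\ $\V{x}$ (which $\Psi$ itself need not satisfy), $\partial_{x_j}\Psi_\ve\in L^1_\loc(\Geb,\sF)$ is given by \eqref{partialPsive}, and, since $\chi\le\|\chi\|_\infty\le\|\chi\|_\infty Z_1(\Psi_\ve)$, Lem.~\ref{dialoglem} provides the $L^1_\loc$-identity \eqref{dialog} for $\partial_{x_j}(\chi\fS_{\delta,\Psi_\ve})$. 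Substituting \eqref{partialPsive} into \eqref{dialog}, using \eqref{dia00delta} to identify $\Re\SPn{\fS_{\delta,\Psi_\ve}}{\partial_{x_j}\Psi_\ve}_{\sF}$ with $\partial_{x_j}Z_\delta(\Psi_\ve)$, and observing that $\Re\SPn{\fS_{\delta,\Psi_\ve}}{iA_j\Psi_\ve}_{\sF}=\Re\SPn{\fS_{\delta,\Psi_\ve}}{i\vp(G_j)\Psi_\ve}_{\sF}=0$ a.e.\ (reality of $A_j$; symmetry of $\vp(G_{j,\V{x}})$ on $\fdom(\Id\Gamma(1))$, which moreover commutes with multiplication by the scalar $\chi/Z_\delta(\Psi_\ve)$), one checks that all $A_j$- and $\vp(G_j)$-contributions cancel, so that
\begin{equation}\label{candplan}
-i\partial_{x_j}(\chi\fS_{\delta,\Psi_\ve})-A_j\,\chi\fS_{\delta,\Psi_\ve}-\vp(G_j)(\chi\fS_{\delta,\Psi_\ve})=\zeta_\ve+E_\ve\qquad\text{in $L^1_\loc(\Geb,\sF)$,}
\end{equation}
where $\zeta_\ve$ is the analogue of the right-hand side of \eqref{diaidplan} with $N_\ve^{\mh}v_{\Geb,j}^*\Psi$ replacing $v_{\Geb,j}^*\Psi$ and the $C_\ve(G_j)^*\Psi$-part of $\partial_{x_j}Z_\delta(\Psi_\ve)$ removed, so that $\|\zeta_\ve\|_{\sF}\le|\partial_{x_j}\chi|+2\delta^{-1}\|\chi\|_\infty\|v_{\Geb,j}^*\Psi\|_{\sF}\in L^2(\Geb)$ uniformly in $\ve$, while $E_\ve$ collects the two terms carrying a factor $C_\ve(G_j)^*\Psi$ and hence obeys $\|E_\ve\|_{\sF}\le\tfrac{8}{\pi\delta}\|\chi\|_\infty\ve^{\eh}\|G_{j,\cdot}\|_{\HP}\|\Psi\|_{\sF}$ by \eqref{convCve1}.

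Finally I would pass to the limit $\ve\downarrow0$ in the \emph{weak} form of \eqref{candplan}: for $\eta\in\sD(\Geb,\fdom(\Id\Gamma(1)))$ the defining property of the weak derivative, the reality of $A_j$ and the symmetry of $\vp(G_{j,\V{x}})$ give
\begin{equation*}
\SPn{v_{\Geb,j}\eta}{\chi\fS_{\delta,\Psi_\ve}}_{L^2(\Geb,\sF)}=\SPn{\eta}{\zeta_\ve+E_\ve}_{L^2(\Geb,\sF)},
\end{equation*}
and it is essential here that $E_\ve$ is paired only with the compactly supported $\eta$, so that mere local integrability suffices. Since $N_\ve^{\mh}\to\id$ strongly on $\sF$, one has pointwise a.e.\ $\Psi_\ve\to\Psi$, hence $Z_\delta(\Psi_\ve)\to Z_\delta(\Psi)$, $\fS_{\delta,\Psi_\ve}\to\fS_{\delta,\Psi}$, $N_\ve^{\mh}v_{\Geb,j}^*\Psi\to v_{\Geb,j}^*\Psi$ and $C_\ve(G_j)^*\Psi\to0$; with the uniform bounds above, dominated convergence gives $\chi\fS_{\delta,\Psi_\ve}\to\chi\fS_{\delta,\Psi}$ and $\zeta_\ve\to\zeta$ (the right-hand side of \eqref{diaidplan}, which is now in $L^2(\Geb,\sF)$) in $L^2(\Geb,\sF)$, and $|\SPn{\eta}{E_\ve}_{L^2(\Geb,\sF)}|\le\|\eta\|_\infty\|E_\ve\|_{L^1(\supp\eta)}\to0$ because $\|G_{j,\cdot}\|_{\HP}\|\Psi\|_{\sF}\in L^1_\loc(\Geb)$. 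Passing to the limit yields $\SPn{v_{\Geb,j}\eta}{\chi\fS_{\delta,\Psi}}_{L^2(\Geb,\sF)}=\SPn{\eta}{\zeta}_{L^2(\Geb,\sF)}$ for all such $\eta$ with $\zeta\in L^2(\Geb,\sF)$, i.e.\ $\chi\fS_{\delta,\Psi}\in\dom(v_{\Geb,j}^*)$ and $v_{\Geb,j}^*(\chi\fS_{\delta,\Psi})=\zeta$, which is \eqref{diaidplan}. I expect the bookkeeping behind the cancellations in \eqref{candplan} and the control of the $C_\ve$-error to be the only genuinely delicate points; the remaining limiting steps are of the routine kind already employed throughout Sect.~\ref{secapprox}.
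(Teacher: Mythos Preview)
Your proposal is correct and follows essentially the same route as the paper's proof: regularize with $\Psi_\ve=N_\ve^{-1/2}\Psi$, apply Lem.~\ref{dialoglem} to compute $\partial_{x_j}(\chi\fS_{\delta,\Psi_\ve})$, substitute \eqref{partialPsive} and exploit the cancellations $\Re\SPn{\Psi_\ve}{iA_j\Psi_\ve}_\sF=\Re\SPn{\Psi_\ve}{i\vp(G_j)\Psi_\ve}_\sF=0$, pass to the weak form against $\eta\in\sD(\Geb,\fdom(\Id\Gamma(1)))$ and let $\ve\downarrow0$, then deduce \eqref{diacharly} from the resulting pointwise identity \eqref{hassan2} (your \eqref{diaidplan}) by a Cauchy--Schwarz argument. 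The only addition on your side is the preliminary reduction to bounded $\chi$ via $\chi\wedge N$; the paper applies Lem.~\ref{dialoglem} with $\Psi_\ve$ directly, and your truncation makes the hypothesis $\chi\le\|\chi\|_\infty Z_1(\Psi_\ve)$ of that lemma transparently satisfied, which is a mild clarification rather than a different idea.
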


\begin{proof}
We pick some $\ve>0$ and start by considering 
$\Psi_\ve=N_\ve^\mh\Psi$ with $N_\ve$ given by \eqref{defNve}.
According to Rem.~\ref{remapprox1}, $\Psi_\ve$ has a weak partial derivative with respect to $x_j$
which is given by \eqref{partialPsive}.
Plugging $\Psi_\ve$ and $\chi$ into \eqref{dialog}, subtracting 
$i\chi A_j\fS_{\delta,\Psi_\ve}+i\chi\vp(G_j)\fS_{\delta,\Psi_\ve}$ on both sides, and using 
$\Re\SPn{\fS_{\delta,\Psi_\ve}}{i(A_j+\vp(G_j))\Psi_\ve}_{\sF}=0$ and \eqref{partialPsive}, we find
\begin{align}\nonumber
\partial_{x_j}&(\chi\fS_{\delta,\Psi_\ve})-iA_j(\chi\fS_{\delta,\Psi_\ve})
-i\vp(G_j)(\chi\fS_{\delta,\Psi_\ve})
\\\nonumber
&=(\partial_{x_j}\chi)\fS_{\delta,\Psi_\ve}+\frac{\chi}{Z_\delta(\Psi_\ve)}
\Big(iN_\ve^\mh v_{\Geb,j}^*\Psi+iC_\ve(G_j)^*\Psi
\\\label{charly1}
&\quad\qquad-\Re\SPn{\fS_{\delta,\Psi_\ve}}{iN_\ve^\mh v_{\Geb,j}^*\Psi
+iC_\ve(G_j)^*\Psi}_{\sF}\fS_{\delta,\Psi_\ve}\Big)\quad\text{in $L^1_\loc(\Geb,\sF)$.}
\end{align}
Next, we compute the $\sF$-scalar product of $\eta\in\sD(\Geb,\fdom(\Id\Gamma(1)))$ with the
vectors on both sides of \eqref{charly1} and integrate the result over $\Geb$. After that
we pass to the limit $\ve\downarrow0$ taking into account that
\begin{enumerate}
\item[(a)] $\delta\le Z_\delta(\Psi_\ve)\to Z_\delta(\Psi)$ pointwise;
\item[(b)] $N_\ve^\mh v_{\Geb,j}^*\Psi\to v_{\Geb,j}^*\Psi$ pointwise and in $L^2(\Geb,\sF)$;
\item[(c)] by (a), (b), and $\chi\in L^2(\Geb)$,
\begin{align*}
\frac{\chi}{Z_\delta(\Psi_\ve)}N_\ve^\mh v_{\Geb,j}^*\Psi\longrightarrow
\frac{\chi}{Z_\delta(\Psi)}v_{\Geb,j}^*\Psi\quad\text{pointwise and in $L^1(\Geb,\sF)$,}
\end{align*}
with integrable majorant $|\chi|\|v_{\Geb,j}^*\Psi\|_\sF/\delta$;
\item[(d)] in view of $\chi\in L^2(\Geb)$, $\|G_j\|_{\HP}\in L^2_\loc(\Geb)$, and \eqref{convCve2},
\begin{align*}
\frac{|\chi|}{Z_\delta(\Psi_\ve)}\|C_\ve(G_j)^*\Psi\|_\sF
&\le 2\ve^\eh \|G_j\|_{\HP}|\chi|\frac{\|\Psi_\ve\|_\sF}{Z_\delta(\Psi_\ve)}
\longrightarrow0,
\end{align*}
where the convergence is understood in $L_\loc^1(\Geb)$;
\item[(e)] $\fS_{\delta,\Psi_\ve}\to\fS_{\delta,\Psi}$ pointwise with
$\|\fS_{\delta,\Psi_\ve}\|_\sF\le1$.
\end{enumerate} 
In this way we arrive at the identity
\begin{align*}
i\SPn{v_{\Geb,j}}{\chi\fS_{\delta,\Psi}}
&=\int_\Geb\SPb{\eta(\V{x})}{(\partial_{x_j}\chi)(\V{x})\fS_{\delta,\Psi}(\V{x})}\Id\V{x}
\\
&\quad+\int_\Geb\SPB{\eta(\V{x})}{
\frac{\chi(\V{x})}{Z_\delta(\Psi(\V{x}))}\Big(iv_{\Geb,j}^*\Psi
-\Re\SPn{\fS_{\delta,\Psi}}{iv_{\Geb,j}^*\Psi}_{\sF}\fS_{\delta,\Psi}\Big)(\V{x})}\Id\V{x}.
\end{align*}
Since we are assuming that $\partial_{x_j}\chi\in L^2(\Geb)$ and $|\chi|\le cZ_1(\Psi)$,
the last two integrals can be read as scalar products of $\eta$ with vectors in $L^2(\Geb,\sF)$. 
Thus, $\chi\fS_{\delta,\Psi}\in\dom(v_{\Geb,j}^*)$ with
\begin{align}\label{hassan2}
iv_{\Geb,j}^*(\chi\fS_{\delta,\Psi})&=(\partial_{x_j}\chi)\fS_{\delta,\Psi}+\frac{\chi}{Z_\delta(\Psi)}
\Big(iv_{\Geb,j}^*\Psi-\Re\SPn{\fS_{\delta,\Psi}}{iv_{\Geb,j}^*\Psi}_{\sF}\fS_{\delta,\Psi}\Big).
\end{align}

From here on we can copy the proof of \cite[Lem.~3.1]{HundertmarkSimon}: Computing the 
$\sF$-scalar product with $iv_{\Geb,j}^*\Psi$ on both sides of \eqref{hassan2}
and taking real parts we arrive at
\begin{align*}
\Re&\SPn{v_{\Geb,j}^*(\chi\fS_{\delta,\Psi})}{v_{\Geb,j}^*\Psi}_{\sF}
\\
&=(\partial_{x_j}\chi)\Re\SPn{\fS_{\delta,\Psi}}{iv_{\Geb,j}^*\Psi}_{\sF}+\frac{\chi}{Z_\delta(\Psi)}
\Big(\|v_{\Geb,j}^*\Psi\|_{\sF}^2-\big(\Re\SPn{\fS_{\delta,\Psi}}{iv_{\Geb,j}^*\Psi}_{\sF}\big)^2\Big)
\\
&\ge(\partial_{x_j}\chi)\Re\SPn{\fS_{\delta,\Psi}}{iv_{\Geb,j}^*\Psi}_{\sF}
=\|\fS_{\delta,\Psi}\|_{\sF}(\partial_{x_j}\chi)\partial_{x_j}\|\Psi\|_{\sF}.
\end{align*}
Here we also used $\chi\ge0$ in the penultimate step and \eqref{dia0} in the last one.
\end{proof}

Now we are in a position to prove the promised diamagnetic inequality for resolvents.
Recall that the Dirchlet-Pauli-Fierz operator $H_{\Geb}$ has been defined in 
Subsect.~\ref{ssecDPFform}. By $S_\Geb$ we denote the Dirichlet-Schr\"{o}dinger
operator with potential $V$ on $\Geb$, i.e., the selfadjoint operator representing the
nonnegative closed form 
\begin{align*}
\mathfrak{s}_{\Geb,\mathrm{D}}[f]&:=\frac{1}{2}\|\nabla f\|^2
+\int_{\Geb}V(\V{x})|f(\V{x})|^2\Id\V{x},\quad
f\in\dom(\mathfrak{s}_{\Geb,\mathrm{D}}):=\mr{W}^{1,2}(\Geb)\cap\fdom(V).
\end{align*}

\begin{thm}\label{thmdiaT}
Assume \eqref{Vlocint}, \eqref{introA}, and \eqref{introG}. Let
$\Phi\in L^2(\Lambda,\sF)$ and $E>0$. Then, a.e. on $\Geb$, 
\begin{align}\label{diaT}
\|(H_\Geb+E)^{-1}\Phi\|_{\sF}&\le(S_{\Geb}+E)^{-1}\|\Phi\|_{\sF}.
\end{align}
\end{thm}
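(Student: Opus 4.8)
The plan is to obtain \eqref{diaT} from a pointwise inequality relating the quadratic forms, using the standard resolvent-as-minimizer characterization together with the diamagnetic bound \eqref{diacharly} established in Prop.~\ref{propdia1}. First I would reduce to a dense set: it suffices to prove \eqref{diaT} for $\Phi$ ranging over a core, and since both sides are continuous in $\Phi$ (with $\|\Phi\|_\sF\in L^2(\Geb)$ on the right), it is enough to treat $\Phi$ in $\dom(\mathfrak{h}_{\Geb,\mathrm{D}})\cap L_0^\infty(\Geb,\sF)$, which by Lem.~\ref{exdomNbdcdomD} and the approximation results of Sect.~\ref{secapprox} is indeed dense. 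Write $\Psi:=(H_\Geb+E)^{-1}\Phi\in\dom(\mathfrak{h}_{\Geb,\mathrm{D}})$ and $f:=(S_\Geb+E)^{-1}\|\Phi\|_\sF\in\dom(\mathfrak{s}_{\Geb,\mathrm{D}})$; note $f\ge0$ a.e.\ since the Dirichlet-Schr\"odinger semigroup is positivity preserving. The goal is the pointwise bound $\|\Psi\|_\sF\le f$ a.e.\ on $\Geb$.

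The heart of the argument is to test the weak Euler--Lagrange equations for $\Psi$ and $f$ against suitable comparison functions built from $\fS_{\delta,\Psi}$. For $\delta>0$ set $g:=Z_\delta(\Psi)$; by Lem.~\ref{lem-abs-val}, $g\in W^{1,2}_\loc(\Geb)$ with $\partial_{x_j}g=\Re\SPn{\fS_{\delta,\Psi}}{\partial_{x_j}\Psi}_\sF$, and one checks $g\in\fdom(\mathfrak{s}_{\Geb,\mathrm{D}})$ after subtracting $\delta$ (using $\|\Psi\|_\sF\in\mr W^{1,2}(\Geb)$, which follows from Thm.~\ref{thm-dia0} and the fact that $\Psi$ lies in the minimal form domain so its boundary trace vanishes). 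One then takes a nonnegative $h\in\dom(\mathfrak{s}_{\Geb,\mathrm{D}})\cap L^\infty$ and uses $\chi:=h$ in Prop.~\ref{propdia1}: the vector $\chi\fS_{\delta,\Psi}$ lies in each $\dom(v_{\Geb,j}^*)$, and in fact in $\dom(\mathfrak{h}_{\Geb,\mathrm{D}})$ (this membership, rather than merely the maximal form, needs the Sect.~\ref{secapprox} machinery again, or a direct approximation argument). Feeding $\chi\fS_{\delta,\Psi}$ into the identity $\mathfrak{h}_{\Geb,\mathrm{D}}(\chi\fS_{\delta,\Psi},\Psi)+E\SPn{\chi\fS_{\delta,\Psi}}{\Psi}=\SPn{\chi\fS_{\delta,\Psi}}{\Phi}$ and invoking \eqref{diacharly} to bound the magnetic part from below by $\tfrac12\sum_j\int_\Geb\|\fS_{\delta,\Psi}\|_\sF(\partial_{x_j}h)\,\partial_{x_j}\|\Psi\|_\sF$, while bounding the field-energy and potential terms by nonnegativity and using $\Re\SPn{\chi\fS_{\delta,\Psi}}{\Phi}\le h\|\Phi\|_\sF$, yields
\begin{align*}
\tfrac12\sum_{j=1}^\nu\int_\Geb \|\fS_{\delta,\Psi}\|_\sF\,(\partial_{x_j}h)\,\partial_{x_j}\|\Psi\|_\sF\,\Id\V{x}
+E\int_\Geb h\,\tfrac{\|\Psi\|_\sF^2}{Z_\delta(\Psi)}\,\Id\V{x}\le\int_\Geb h\,\|\Phi\|_\sF\,\Id\V{x}.
\end{align*}
Letting $\delta\downarrow0$ (so $\|\fS_{\delta,\Psi}\|_\sF\uparrow 1_{\{\Psi\ne0\}}$, $\|\Psi\|_\sF^2/Z_\delta(\Psi)\uparrow\|\Psi\|_\sF$, and using that $\partial_{x_j}\|\Psi\|_\sF=0$ a.e.\ on $\{\Psi=0\}$), one gets
\begin{align*}
\tfrac12\langle\nabla h,\nabla\|\Psi\|_\sF\rangle+E\langle h,\|\Psi\|_\sF\rangle\le\langle h,\|\Phi\|_\sF\rangle,
\end{align*}
i.e.\ $\mathfrak{s}_{\Geb,\mathrm{D}}(h,\|\Psi\|_\sF)+E\langle h,\|\Psi\|_\sF\rangle\le\langle h,\|\Phi\|_\sF\rangle=\mathfrak{s}_{\Geb,\mathrm{D}}(h,f)+E\langle h,f\rangle$ for all nonnegative $h$ in the core. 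Since $\|\Psi\|_\sF-f\in\mr W^{1,2}(\Geb)\cap\fdom(V)$, taking $h:=(\|\Psi\|_\sF-f)_+$ gives $(\tfrac12+E)\,$(positive quadratic quantity in $h$)$\,\le 0$, forcing $h=0$, i.e.\ $\|\Psi\|_\sF\le f$ a.e.

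The main obstacle I anticipate is the regularity/membership bookkeeping in the middle step: justifying that $\chi\fS_{\delta,\Psi}$ not only lies in $\bigcap_j\dom(v_{\Geb,j}^*)\cap\fdom(V\id_\sF)\cap L^2(\Geb,\fdom(\Id\Gamma(\omega)))$ but is actually in $\dom(\mathfrak{h}_{\Geb,\mathrm{D}})$ (the \emph{minimal} form domain), so that it is a legitimate test vector against $\Psi=(H_\Geb+E)^{-1}\Phi$. Here one must combine the boundedness $\|\fS_{\delta,\Psi}\|_\sF\le1$, the Lipschitz-type estimates for $Z_\delta$, the hypothesis $\chi\le cZ_1(\Psi)$ controlling the Fock-norm growth, and the approximation lemmas of Sect.~\ref{secapprox} (cutoffs in range and in space, mollification) — essentially replaying the Hundertmark--Simon scheme but carrying the extra $\sF$-valued field terms through each approximation, exactly as in the proof of Prop.~\ref{propdia1}. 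The limit $\delta\downarrow0$ and the final comparison argument are then routine monotone/dominated convergence.
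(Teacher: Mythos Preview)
Your approach is essentially the paper's (both adapt the Hundertmark--Simon scheme via Prop.~\ref{propdia1} and Lem.~\ref{exdomNbdcdomD}), but there is one genuine slip and one place where the paper's finish is cleaner.

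The slip: you drop the potential term ``by nonnegativity'' in the displayed bound, yet in the next line you assert $\mathfrak{s}_{\Geb,\mathrm{D}}(h,\|\Psi\|_\sF)+E\langle h,\|\Psi\|_\sF\rangle\le\langle h,\|\Phi\|_\sF\rangle$, where $\mathfrak{s}_{\Geb,\mathrm{D}}$ \emph{includes} $V$. That claim is false as written, and if $V$ is really discarded your positive-part argument collapses (one ends up with $\tfrac12\|\nabla h\|^2+E\|h\|^2\le\langle Vh,f\rangle$, which does not force $h=0$). The repair is simply not to drop $V$: since $\SPn{\fS_{\delta,\Psi}}{\Psi}_\sF=\|\fS_{\delta,\Psi}\|_\sF\|\Psi\|_\sF$, the $V$ contribution survives with the correct sign, and after $\delta\downarrow0$ you obtain the full $(\mathfrak{s}_{\Geb,\mathrm{D}}+E)[h,\|\Psi\|_\sF]\le\langle h,\|\Phi\|_\sF\rangle$, exactly as the paper does.

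On the minimal-form-domain issue you correctly flag: Lem.~\ref{exdomNbdcdomD} needs $\chi\fS_{\delta,\Psi}\in L_0^\infty(\Geb,\sF)$, so one should first take $\chi$ bounded \emph{and compactly supported}, and only afterwards extend the resulting scalar inequality to all nonnegative $\chi\in\dom(\mathfrak{s}_{\Geb,\mathrm{D}})$ by approximation. The paper then finishes more directly than your positive-part trick: it simply inserts $\chi=(S_\Geb+E)^{-1}\eta$ for arbitrary $0\le\eta\in L^2(\Geb)$ (positivity preservation of the Dirichlet--Schr\"odinger resolvent makes this admissible) and reads off $\langle\eta,\|\Psi\|_\sF\rangle\le\langle\eta,(S_\Geb+E)^{-1}\|\Phi\|_\sF\rangle$. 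Your initial reduction to $\Phi\in L_0^\infty$ is unnecessary.
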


\begin{proof}
We can adapt the proof of \cite[Thm.~3.3]{HundertmarkSimon}.
Put $\Psi:=(H_\Geb+E)^{-1}\Phi\in\dom(H_\Geb)\subset\dom(\mathfrak{h}_{\Geb,\mathrm{D}})$. 
Then \cite[Cor.~4.1]{Matte2017} implies $\|\Psi\|_{\sF}\in\dom(\mathfrak{s}_{\Geb,\mathrm{D}})$.
Pick some $\delta>0$ and let $\chi\in\dom(\mathfrak{s}_{\Geb,\mathrm{D}})\subset W^{1,2}(\Geb)$ 
be nonnegative, compactly supported, and bounded. Employing Prop.~\ref{propdia1} we then infer
that $\chi\fS_{\delta,\Psi}\in\dom(\mathfrak{h}_{\Geb,\mathrm{N}})$. 
Since $\chi$ is compactly supported and bounded, Lem.~\ref{exdomNbdcdomD} now implies that 
actually $\chi\fS_{\delta,\Psi}\in\dom(\mathfrak{h}_{\Geb,\mathrm{D}})$.
Integrating \eqref{diacharly}, summing the result over $j\in\{1,\ldots,\nu\}$, and observing
\begin{align*}
\SPn{\fS_{\delta,\Psi}}{\Psi}_{\sF}&=\|\fS_{\delta,\Psi}\|_{\sF}\|\Psi\|_{\sF},\quad
\int_\Geb\SPn{\Id\Gamma(\omega)^\eh\chi(\V{x})\fS_{\delta,\Psi}(\V{x})}{
\Id\Gamma(\omega)^\eh\Psi(\V{x})}_\sF\Id\V{x}\ge0,
\end{align*}  
we further find
\begin{align*}
\frac{1}{2}\SPb{\|\fS_{\delta,\Psi}\|_{\sF}\nabla\chi}{\nabla\|\Psi\|_{\sF}}_{L^2(\Geb)}
+\int_\Geb(V(\V{x})+E)\chi(\V{x})\|\fS_{\delta,\Psi}(\V{x})\|_{\sF}\|\Psi(\V{x})\|_{\sF}\Id\V{x}&
\\
\le\big|(\mathfrak{h}_{\Geb,\mathrm{D}}+E)[\chi\fS_{\delta,\Psi},\Psi]\big|
=|\SPn{\chi\fS_{\delta,\Psi}}{\Phi}|\le\SPn{\chi}{\|\Phi\|_{\sF}}_{L^2(\Geb)}&.
\end{align*}
Here we also used $\chi\ge0$ and $\|\fS_{\delta,\Psi}\|_{\sF}\le1$ in the last step.
(Furthermore, symbols like $\mathfrak{q}[\phi,\psi]$ denote values of the sesquilinear form
associated with a quadratic form $\mathfrak{q}$.)
By dominated convergence, we can pass to the limit $\delta\downarrow0$ on the left hand
side of the previous estimation. Since $\nabla\|\Psi\|_{\sF}=0$ a.e. on $\{\Psi=0\}$, we may
drop the term $\|\fS_{\Psi}\|_\sF$ found in this way whenever it is multiplied with $\|\Psi\|_{\sF}$ or
$\nabla\|\Psi\|_{\sF}$. This yields
\begin{align}\label{josef1}
(\mathfrak{s}_{\Geb,\mathrm{D}}+E)[\chi,\|\Psi\|_{\sF}]&\le\SPn{\chi}{\|\Phi\|_{\sF}}_{L^2(\Geb)}.
\end{align}
The bound \eqref{josef1} is actually available for all nonnegative
$\chi\in\dom(\mathfrak{s}_{\Geb,\mathrm{D}})$ since any such $\chi$ can be approximated 
in the form norm of $\mathfrak{s}_{\Geb,\mathrm{D}}$ by bounded and compactly supported
nonnegative elements of $W^{1,2}(\Geb)$ (using \cite[Cor.~6.18]{LiebLoss2001}).
In particular, we may choose $\chi:=(S_{\Geb}+E)^{-1}\eta$, for some 
$\eta\in L^2(\Geb)$ with $\eta\ge0$, because 
$\dom(S_{\Geb})\subset\dom(\mathfrak{s}_{\Geb,\mathrm{D}})$ and the resolvent
$(S_{\Geb,\mathrm{D}}+E)^{-1}$ is positivity preserving. This yields \eqref{diaT} integrated
with respect to the density $\eta$.
\end{proof}

%%%%%%%%%%%%%%%%%%%%%%%%%%%%%%%%%%%%%%%%%%%%%%%%%
%%%%%%%%%%%%%%%%%%%%%%%%%%%%%%%%%%%%%%%%%%%%%%%%%
%%%%%%%%%%%%%%%%%%%%%%%%%%%%%%%%%%%%%%%%%%%%%%%%%

\section{Strong resolvent convergence}\label{secstrresconv}

\noindent
In the presence of singular electromagnetic fields, a Feynman-Kac formula is typically 
obtained in a chain of extension steps establishing the formula for ever more singular (vector)
potentials. To ensure convergence of the functional analytic side of the Feynman-Kac formula, at least
along suitable subsequences,
when singular (vector) potentials are approximated by more regular ones, it is sufficient to
prove strong resolvent convergence of the corresponding selfadjoint operators. For our model this is
done in the present section. Since the approximation of electrostatic potentials is quite standard, we
shall concentrate on the simultaneous approximation of classical and quantized vector potentials here.

Results for Schr\"{o}dinger operators similar to Thm.~\ref{thmstrresconv} below appear in
\cite{Kato1978} for $\Geb=\RR^\nu$ and in \cite{LiskevichManavi1997} for general open $\Geb$.
Both the limiting vector potential and the ones approximating it are merely supposed
to be in $L_\loc^2$ in \cite{Kato1978}. In \cite{LiskevichManavi1997} results for even
more general vector potentials can be found. We shall restrict our
attention to the situation we actually encounter later on as this admits a comparatively short proof.

In the next theorem and henceforth $C_b^\ell$ stands for bounded, $\ell$-times continuously 
differentiable maps with bounded derivatives of order $\le\ell$. Recalling \eqref{defhRRnu} we
further abbreviate $\mathfrak{h}:=\mathfrak{h}_{\RR^\nu}$ and refer to the selfadjoint operator
$H:=H_{\RR^\nu}$ representing this form simply as the {\em Pauli-Fierz operator on $\RR^\nu$}.

\begin{thm}\label{thmstrresconv}
Let $\V{A}\in L^2_\loc(\RR^\nu,\RR^\nu)$, $\V{G}\in L_\loc^2(\RR^\nu,\HP^\nu)$ and
$\V{A}^n\in C^1_b(\RR^\nu,\RR^\nu)$, $\V{G}^n\in C^1_b(\RR^\nu,\HP^\nu)$, 
$n\in\NN$, satisfy
\begin{align}\label{tosio0A}
\int_K|\V{A}^n(\V{x})-\V{A}(\V{x})|^2\Id\V{x}&\xrightarrow{\;\;n\to\infty\;\;}0,
\\\label{tosio0G}
\int_K\|\V{G}^n_{\V{x}}-\V{G}_{\V{x}}\|_{\HP}^2\Id\V{x}
&\xrightarrow{\;\;n\to\infty\;\;}0,
\end{align}
for all compact $K\subset\RR^\nu$. Assume that $V\ge0$ is measurable and bounded. Let $H$ be the 
Pauli-Fierz operator on $\RR^\nu$ defined by means of $\V{A}$, $\V{G}$, and $V$. 
For every $n\in\NN$, let $H^n$ be the Pauli-Fierz operator on $\RR^\nu$ defined by means 
of $\V{A}^n$, $\V{G}^n$, and $V$. Then
\begin{align*}
H^n\xrightarrow{\;\;n\to\infty\;\;}H\quad\text{in the strong resolvent sense.}
\end{align*}
\end{thm}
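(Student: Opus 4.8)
The plan is to prove strong resolvent convergence by verifying convergence of the associated quadratic forms in an appropriate sense, and then invoking a form-version of strong resolvent convergence. Since all the forms $\mathfrak{h}^n$ (built from $\V{A}^n$, $\V{G}^n$, $V$) and $\mathfrak{h}$ (built from $\V{A}$, $\V{G}$, $V$) share the common core $\sD(\RR^\nu,\fdom(\Id\Gamma(1\vee\omega)))$ on which all the terms are explicitly computable, the natural route is: (i) show that for every fixed $\Phi\in\sD(\RR^\nu,\fdom(\Id\Gamma(1\vee\omega)))$ one has $\mathfrak{h}^n[\Phi]\to\mathfrak{h}[\Phi]$ and more generally $\mathfrak{h}^n[\Phi,\Psi]\to\mathfrak{h}[\Phi,\Psi]$ for the sesquilinear forms; (ii) obtain a uniform (in $n$) lower bound and a suitable equicontinuity/stability estimate so that this core-wise convergence upgrades to convergence of resolvents. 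Concretely, I would fix $E>0$ large, set $\Psi^n:=(H^n+E)^{-1}\Phi$ and $\Psi:=(H+E)^{-1}\Phi$ for $\Phi\in L^2(\RR^\nu,\sF)$, and aim to show $\Psi^n\to\Psi$ in $L^2(\RR^\nu,\sF)$.

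The key steps, in order: First, establish the core-wise form convergence. For $\Phi\in\sD(\RR^\nu,\fdom(\Id\Gamma(1\vee\omega)))$ with support in a compact $K$, the only terms that depend on $n$ are $\tfrac12\sum_j\|(-i\partial_{x_j}-A_j^n-\vp(G_j^n))\Phi\|^2$, and the difference from the $\V{A}$, $\V{G}$ version is controlled by $\|(A_j^n-A_j)\Phi\|_{L^2}$ and $\|\vp(G_j^n-G_j)\Phi\|_{L^2}$; the first tends to $0$ by \eqref{tosio0A} and boundedness of $\Phi$, the second by \eqref{tosio0G} together with the relative bound \eqref{rbvp} applied with $\vk=1$ (using $\Phi(\V{x})\in\fdom(\Id\Gamma(1))$ and $L^2$-boundedness of $\|\Phi\|_{\fdom(\Id\Gamma(1))}$ on $K$). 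Second, I would promote this to $L^2$-convergence of the resolvents. One clean way: the vectors $\Psi^n$ lie in $\dom(\mathfrak{h}^n)=\dom(\mathfrak{h}_{\RR^\nu,\mathrm{N}})$ with $\mathfrak{h}^n$ as in \eqref{adam11}; using $(\mathfrak{h}^n+E)[\Psi^n,\Psi^n]=\SPn{\Psi^n}{\Phi}\le\|\Phi\|^2/(4E)+E\|\Psi^n\|^2$... more carefully, $(\mathfrak{h}^n+E)[\Psi^n]=\SPn{\Psi^n}{\Phi}\le\|\Psi^n\|\,\|\Phi\|$, which together with $\|\Psi^n\|\le\|\Phi\|/E$ gives $\mathfrak{h}^n[\Psi^n]\le\|\Phi\|^2/E$, i.e. the family $\{\Psi^n\}$ is bounded in the form norm of each $\mathfrak{h}^n$. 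Because the ``free'' part $\tfrac12\sum_j\|\partial_{x_j}(\cdot)\|^2+\int(\|\Id\Gamma(\omega)^\eh(\cdot)\|^2+V\|\cdot\|^2)$ is dominated (up to a constant depending on $n$, but controllably) by $\mathfrak{h}^n$ plus the vector-potential cross terms, one extracts from diamagnetic-type bounds — here is where Thm.~\ref{thmdiaT} enters — uniform control: $\|\Psi^n\|_{\sF}\le(S_{\RR^\nu}+E)^{-1}\|\Phi\|_{\sF}$ pointwise, which gives an $n$-independent $L^2$ dominating function and, via the $W^{1,2}$-bound on $(S_{\RR^\nu}+E)^{-1}\|\Phi\|_{\sF}$, $H^1_{\loc}$-type compactness. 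Then $\{\Psi^n\}$ has a weakly convergent subsequence in $L^2(\RR^\nu,\sF)$ whose limit, by testing against the common core and using Step~1, satisfies $(\mathfrak{h}+E)[\Psi_\infty,\Phi']=\SPn{\Phi'}{\Phi}$ for all $\Phi'$ in the core, hence $\Psi_\infty=\Psi$; upgrading weak to strong convergence follows from convergence of norms, $\|\Psi^n\|^2\to\|\Psi\|^2$, which in turn comes from $\SPn{\Psi^n}{\Phi}\to\SPn{\Psi}{\Phi}$ (once $\Psi^n\rightharpoonup\Psi$) combined with $\|\Psi^n\|^2\le\SPn{\Psi^n}{\Phi}/E$ and lower semicontinuity. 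Since every subsequence has a further subsequence converging to the same $\Psi$, the whole sequence converges.

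The main obstacle I expect is Step~2 — controlling the $n$-dependence uniformly. The forms $\mathfrak{h}^n$ are not monotone in $n$, so one cannot simply invoke a monotone-convergence theorem for forms; and the cross terms $\Re\SPn{\partial_{x_j}\Phi}{(A_j^n+\vp(G_j^n))\Phi}$ are only $L^2_{\loc}$-bounded, not uniformly form-bounded in an obvious way. The resolution is precisely the diamagnetic inequality of Thm.~\ref{thmdiaT}, which provides the pointwise a.e. bound $\|\Psi^n\|_{\sF}\le(S_{\RR^\nu}+E)^{-1}\|\Phi\|_{\sF}$ uniformly in $n$, decoupling the vector-potential dependence from the needed $L^2$-compactness. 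One also needs that the gradients $\nabla\|\Psi^n\|_{\sF}$ are $L^2$-bounded uniformly in $n$: this follows by the same argument as in \eqref{josef1} (taking $\chi=\|\Psi^n\|_{\sF}$ there, legitimate after the cutoff/limiting procedure), giving $\tfrac12\|\nabla\|\Psi^n\|_{\sF}\|^2+\int(V+E)\|\Psi^n\|_{\sF}^2\le\SPn{\|\Psi^n\|_{\sF}}{\|\Phi\|_{\sF}}$, an $n$-uniform bound. With $L^2$-boundedness plus $H^1_{\loc}$-boundedness of $\|\Psi^n\|_{\sF}$ and the uniform pointwise domination, Rellich's theorem yields precompactness in $L^2_{\loc}(\RR^\nu,\sF)$ for $\{\Psi^n\}$ (combining a weakly convergent subsequence with the domination to get strong $L^2_{\loc}$, then tightness at infinity from the domination), and the identification of the limit via the common core finishes the proof.
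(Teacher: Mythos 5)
Your first half is sound: core-wise form convergence, the uniform bound $\mathfrak{h}^n[\Psi^n]\le\|\Phi\|^2/E$ for $\Psi^n=(H^n+E)^{-1}\Phi$, the $n$-uniform pointwise domination from Thm.~\ref{thmdiaT} (the Schr\"odinger comparison operator depends only on $V$), and the extraction of a weak limit $\Psi_\infty$ identified through test vectors in the common core. Two caveats already at that stage: the identification ``$(\mathfrak{h}+E)[\Psi_\infty,\Phi']=\SPn{\Phi'}{\Phi}$ on the core, hence $\Psi_\infty=\Psi$'' silently needs that the core is dense \emph{in form norm} in a domain containing $\Psi_\infty$; what the weak-limit argument naturally yields is $\Psi_\infty\in\dom(\mathfrak{h}_{\RR^\nu,\mathrm{N}})$, so you must invoke Thm.~\ref{thmminmaxPF} ($\mathfrak{h}_{\RR^\nu,\mathrm{D}}=\mathfrak{h}_{\RR^\nu,\mathrm{N}}$), which the paper indeed provides. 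The decisive gap, however, is your upgrade from weak to strong $L^2$-convergence of $\Psi^n$, and both mechanisms you offer for it fail.

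First, from $E\|\Psi^n\|^2\le(\mathfrak{h}^n+E)[\Psi^n]=\Re\SPn{\Psi^n}{\Phi}$ and $\Psi^n\rightharpoonup\Psi$ you only obtain $\|\Psi\|^2\le\liminf\|\Psi^n\|^2\le\limsup\|\Psi^n\|^2\le\Re\SPn{\Psi}{\Phi}/E=(\mathfrak{h}[\Psi]+E\|\Psi\|^2)/E$, which exceeds $\|\Psi\|^2$ unless $\mathfrak{h}[\Psi]=0$; so ``convergence of norms'' does not follow. Second, Rellich compactness applies to the scalar functions $\|\Psi^n\|_\sF$ (your \eqref{josef1}-type bound is fine for that), but since $\sF$ is infinite dimensional this gives no compactness in the fiber direction: e.g.\ $\Psi^n=f\,e_n$ with $(e_n)$ orthonormal in $\sF$ has constant norm function, converges weakly to $0$, and does not converge strongly, so ``weak limit plus strong $L^2_\loc$ convergence of the norms'' does not imply strong convergence of $\Psi^n$. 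A repair along your lines is possible but requires a further idea: identify the weak limits of each summand ($v_{j}^{n*}\Psi^n\rightharpoonup v_j^*\Psi$, $\Id\Gamma(\omega)^{\eh}\Psi^n\rightharpoonup\Id\Gamma(\omega)^{\eh}\Psi$, etc.), use weak lower semicontinuity term by term together with $(\mathfrak{h}^n+E)[\Psi^n]\to(\mathfrak{h}+E)[\Psi]$ to force each term, in particular $E\|\Psi^n\|^2$, to converge to its counterpart for $\Psi$. The paper avoids the issue altogether: it works at non-real $z$, where $\|(H^n-z)^{-1}\Psi'\|^2-\|(H-z)^{-1}\Psi'\|^2=\Im\SPn{\Psi'}{((H^n-z)^{-1}-(H-z)^{-1})\Psi'}/\Im[z]$ reduces strong to \emph{weak} resolvent convergence, which is then verified by pairing $(H^n-z)\Psi$, with $\Psi$ in the common essential-selfadjointness core of the $H^n$, against vectors $\Upsilon=(H-\bar z)^{-1}\Xi$ made bounded via the diamagnetic inequality \eqref{diaT}, and estimating the explicit form difference using \eqref{tosio0A}, \eqref{tosio0G}, \eqref{rbvp}, and \eqref{rbvpvp}.
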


\begin{proof}
Recall that, for each $z\in\CC\setminus\RR$, strong convergence of $(H^n-z)^{-1}$ to $(H-z)^{-1}$
is implied by weak convergence of $(H^n-z)^{-1}$ to $(H-z)^{-1}$, because
\begin{align*}
&\|(H^n-z)^{-1}\Psi'\|^2-\|(H-z)^{-1}\Psi'\|^2
\\
&=\frac{1}{\Im[z]}\Im\SPn{\Psi'}{((H^n-z)^{-1}-(H-z)^{-1})\Psi'},\quad\Psi'\in L^2(\RR^\nu,\sF),
\end{align*}
by the first resolvent equation, and because weak convergence of a sequence in a Hilbert space
together with convergence of the norms of its elements implies norm convergence. 
In what follows we pick some $z\in\CC\setminus\RR$ with $\Re[z]\le-1$. 
Since $H_n\ge0$, the resolvents $(H^n-z)^{-1}$
 are uniformly bounded in $n\in\NN$, whence it suffices to show that
 \begin{align*}
 \SPn{\Xi}{((H^n-z)^{-1}-(H-z)^{-1})\Phi}\xrightarrow{\;\;n\to\infty\;\;}0,
 \end{align*}
 for all $\Xi$ and $\Phi$ in some dense subsets of $L^2(\RR^\nu,\sF)$. We pick
 $$
 \Xi:=(H-\bar{z})(H-\Re[z])^{-1}\wt{\Xi},\quad\text{for some}\quad
 \wt{\Xi}\in L^2(\RR^\nu,\sF)\cap L^\infty(\RR^\nu,\sF),
 $$
 noticing that $(H-\bar{z})(H-\Re[z])^{-1}$ is a bounded isomorphism on $L^2(\RR^\nu,\sF)$ which in 
 particular maps a dense subset onto another dense subset. In view of the diamagnetic inequality
\eqref{diaT}  this choice of $\Xi$ implies that
 \begin{align}\label{Xibd}
\Upsilon&:= (H-\bar{z})^{-1}\Xi=(H-\Re[z])^{-1}\wt{\Xi}\quad\text{is bounded.}
 \end{align}
 Furthermore, we know from \cite[Thm.~5.5]{Matte2017} that every $H^n$ with $n\in\NN$ is 
 essentially selfadjoint on $\sD(\RR^\nu,\dom(\Id\Gamma(1\vee\omega)))$. 
 (Here we use that the Schr\"{o}dinger operator $(1/2)(-i\nabla-\V{A}^n)^2+V$ is essentially
 selfadjoint on $C_0^\infty(\RR^\nu)$, exploiting that we work on the whole
 Euclidean space $\RR^\nu$ and not on a proper open subset of it.) In particular, 
 $(H^n-z)\sD(\RR^\nu,\dom(\Id\Gamma(1\vee\omega)))$ is a dense subspace of
 $L^2(\RR^\nu,\sF)$, and we choose $\Phi:=(H^n-z)\Psi$ for some
 $\Psi\in\sD(\RR^\nu,\dom(\Id\Gamma(1\vee\omega)))$. Then
\begin{align*}
\SPn{\Xi}{((H^n-z)^{-1}-(H-z)^{-1})\Phi}&=\SPn{\Xi}{\Psi}-\SPn{\Upsilon}{(H^n-z)\Psi}
 =\mathfrak{h}[\Upsilon,\Psi]-\SPn{\Upsilon}{H^n\Psi}.
 \end{align*}
 For all $j\in\{1,\ldots,\nu\}$ and $n\in\NN$, we now abbreviate
  \begin{align*}
 \tilde{A}_j^n&:={A}_j^n-{A}_j,\qquad\tilde{{G}}_j^n:={G}_j^n-{G}_j,
 \\
v_{j}^n&:=(-i\partial_{x_j}-A_j^n-\vp(G_j^n))\restr_{\sD(\RR^\nu,\fdom(\Id\Gamma(1)))},
\qquad v_j:=v_{\RR^\nu,j}.
\end{align*}
Furthermore, we pick $\Upsilon_m\in\sD(\RR^\nu,\fdom(\Id\Gamma(1\vee\omega)))$, $m\in\NN$,
such that $\Upsilon_m\to\Upsilon$, $m\to\infty$,
with respect to the form norm of $\mathfrak{h}$, which is possible
because $\Upsilon\in\dom(H)\subset\dom(\mathfrak{h})$. Then we obtain
\begin{align*}
&\mathfrak{h}[\Upsilon,\Psi]-\SPn{\Upsilon}{H^n\Psi}
\\
&=\lim_{m\to\infty}\mathfrak{h}[\Upsilon_m,\Psi]
-\lim_{m\to\infty}\SPn{\Upsilon_m}{H^n\Psi}
\\
&=\lim_{m\to\infty}\frac{1}{2}\sum_{j=1}^\nu\Big\{\SPn{v_{j}\Upsilon_m}{v_{j}\Psi}
-\SPn{v_{j}^n\Upsilon_m}{v_{j}^n\Psi}\Big\}
\\
&=\lim_{m\to\infty}\frac{1}{2}\sum_{j=1}^\nu\Big\{\SPb{v_{j}\Upsilon_m}{
(\tilde{A}_j^n+\vp(\tilde{G}_j^n))\Psi}
+\SPb{(\tilde{A}_j^n+\vp(\tilde{G}_j^n))\Upsilon_m}{v_{j}^n\Psi}\Big\}.
\end{align*}
Next, we take into account that convergence of $\Upsilon_m$ with respect to the form norm of
$\mathfrak{h}$ entails the convergences $v_{j}\Upsilon_m\to v_{j}^*\Upsilon$, 
for all $j\in\{1,\ldots,\nu\}$. On account of \eqref{rbvp} and \eqref{rbvpvp} we also know that 
$\tilde{A}_j^nv_{j}^n\Psi$ and $\vp(\tilde{G}_j^n)v_{j}^n\Psi$ belong to $L^2(\RR^\nu,\sF)$, 
for all $n\in\NN$. We thus arrive at
 \begin{align}\nonumber
\mathfrak{h}[\Upsilon,\Psi]-\SPn{\Upsilon}{H^n\Psi}
&=\frac{1}{2}\sum_{j=1}^\nu\SPn{v_{j}^*\Upsilon}{(\tilde{A}_j^n+\vp(\tilde{G}_j^n))\Psi}
\\\nonumber
&\quad+\frac{1}{2}\sum_{j=1}^\nu\int_{\RR^\nu}\SPn{\Upsilon(\V{x})}{
\tilde{A}_j^n(\V{x})(v_{j}^n\Psi)(\V{x})}_\sF\Id\V{x}
\\\label{ulli1}
&\quad+\frac{1}{2}\sum_{j=1}^\nu\int_{\RR^\nu}\SPn{\Upsilon(\V{x})}{\vp(\tilde{G}_{j,\V{x}}^n)
(v_{j}^n\Psi)(\V{x})}_\sF\Id\V{x},
 \end{align}
 for every $n\in\NN$. Here $(\tilde{A}_j^n+\vp(\tilde{G}_j^n))\Psi\to0$ in $L^2(\RR^\nu,\sF)$ 
 because of \eqref{rbvp} and \eqref{tosio0G}, since $\|\Psi\|_\sF$ and 
 $\|\Psi\|_{\fdom(\Id\Gamma(1))}$ 
 are compactly supported and bounded. Hence, the first term on the right hand side of \eqref{ulli1}
 goes to zero as $n\to\infty$. Next, we observe  (using \eqref{rbvp}, \eqref{tosio0A}, and 
 \eqref{tosio0G}) that the vectors $v_{j}^n\Psi$ are supported in $\supp(\Psi)$ and uniformly bounded in 
 $L^2(\RR^\nu,\sF)$. Together with \eqref{tosio0A} and \eqref{Xibd} this shows that the term in the 
 second line of \eqref{ulli1} converges to zero as well. Furthermore, setting
 \begin{align*}
 D_{j,\V{x}}^n\psi:=N_1^\eh\vp(G_{j,\V{x}}^n)N_1^\mh\psi-\vp(G_{j,\V{x}}^n)\psi,
 \quad\psi\in\fdom(\Id\Gamma(1)),
 \end{align*}
 where $N_1=1+\Id\Gamma(1)$ as in \eqref{defNve}, we obtain
 \begin{align*}
&\|\vp(\tilde{G}_{j,\V{x}}^n)(v_{j}^n\Psi)(\V{x})\|_\sF
\\
&\le2\|\tilde{G}_{j,\V{x}}^n\|_{\HP}\big(\|\partial_{x_j}\Psi(\V{x})\|_{\fdom(\Id\Gamma(1))}
+|{A}^n_{j}(\V{x})|\|\Psi(\V{x})\|_{\fdom(\Id\Gamma(1))}\big)
\\
&\quad+2\|\tilde{G}_{j,\V{x}}^n\|_{\HP}
\big(\|\vp(G_{j,\V{x}}^n)N_1^\eh\Psi(\V{x})\|_{\sF}+\|D_{j,\V{x}}^nN_1^\eh\Psi(\V{x})\|_{\sF}\big),
 \end{align*}
 for all $\V{x}\in\RR^\nu$ and $n\in\NN$.
 According to \cite[Lem.~2.9(2)]{Matte2017} (applied with dispersion relation $1$),  the operator 
 $D_{j,\V{x}}^n$ is indeed well-defined on its dense domain $\fdom(\Id\Gamma(1))$, and it is
 bounded with $\|D_{j,\V{x}}^n\|\le2\|G_{j,\V{x}}^n\|_{\HP}$. Notice that $D_{j,\V{x}}^n$
 can be applied to $N_1^\eh\Psi(\V{x})$ since $\Psi(\V{x})\in\dom(\Id\Gamma(1))$. Also taking into
 account that
 $$
 \|\vp(G_{j,\V{x}}^n)N_1^\eh\Psi(\V{x})\|_{\sF}
 \le2\|G_{j,\V{x}}^n\|_{\HP}\|\Psi(\V{x})\|_{\dom(\Id\Gamma(1))},
 $$ 
 where $\|\Psi\|_{\dom(\Id\Gamma(1))}$ is bounded,
 we find a $\Psi$-dependent constant $C_\Psi>0$ such that
 \begin{align*}
 \|\vp(\tilde{G}_{j,\V{x}}^n)(v_{j}^n\Psi)(\V{x})\|_\sF
 &\le C_{\Psi}1_{\supp(\Psi)}(\V{x})\|\tilde{G}_{j,\V{x}}^n\|_{\HP}
 \big(1+|{A}^n_{j}(\V{x})|+\|G_{j,\V{x}}^n\|_{\HP}\big),
 \end{align*}
  for all $\V{x}\in\RR^\nu$ and $n\in\NN$.
 In conjunction with \eqref{tosio0A}, \eqref{tosio0G}, and \eqref{Xibd} this finally proves 
convergence to zero of the term in the third line of \eqref{ulli1}.
\end{proof}

%%%%%%%%%%%%%%%%%%%%%%%%%%%%%%%%%%%%%%%%%%%%%%%%%
%%%%%%%%%%%%%%%%%%%%%%%%%%%%%%%%%%%%%%%%%%%%%%%%%
%%%%%%%%%%%%%%%%%%%%%%%%%%%%%%%%%%%%%%%%%%%%%%%%%

\section{Stochastic analysis for regular coefficients}\label{secstochana}

\noindent
Our objective in this section is to find Feynman-Kac formulas for the Pauli-Fierz operator on
$\RR^\nu$ with regular coefficients, more
precisely, coefficients satisfying the hypotheses collected in Subsect.~\ref{ssechypreg}. The main
tools will be a stochastic differential equation (\eqref{SDEW} below)
associated with the Pauli-Fierz model investigated in
\cite{GMM2017} and various results of the latter paper on the random functions 
$W_t(\V{x})$ and $W_t(\V{x},\V{y})$ for $\V{A}=0$. 
Before we can apply the findings of \cite{GMM2017} and extend them to non-zero $\V{A}$, we have,
however, to compare the formulas given in the introduction for $S_t(\V{x})$, $K_t(\V{x})$, 
$S_t(\V{x},\V{y})$, and
$K_t(\V{x},\V{y})$ with more familiar expressions for Stratonovich type stochastic integrals.
This is done in a discussion of the Feynman-Kac integrands in Subsect.~\ref{ssecFKintreg}, after
a more detailed explanation of the involved Brownian bridge processes and time reversed processes 
in Subsect.~\ref{ssecBMBB}. Finally, we verify in Subsect.~\ref{ssecTTxyreg} that the probabilistic sides
of the Feynman-Kac formula define a strongly continuous semigroup of bounded selfadjoint operators,
whose generator is identified as the Pauli-Fierz operator on $\RR^\nu$ in Subsect.~\ref{ssecFKRRnureg}.

%%%%%%%%%%%%%%%%%%%%%%%%%%%%%%%%%%%%%%%%%%%%%%%%%

\subsection{Assumptions on the coefficients used throughout Sect.~\ref{secstochana}}\label{ssechypreg}

\noindent
In the entire Sect.~\ref{secstochana} we assume 
\begin{align}\label{AVreg}
\V{A}\in C_b^1(\RR^\nu,\RR^\nu),\quad V\in C_b(\RR^\nu,\RR),\quad V\ge0.
\end{align}
Here $C_b:=C_b^0$, and the notation $C_b^\ell$ has been explained prior to Thm.~\ref{thmstrresconv}.
Throughout this section we further assume $\V{G}$ to fulfill the following two hypotheses:

\begin{hyp}\label{hypGreg}
$\V{G}\in C^2(\RR^\nu,\HP^\nu)$,
the components of $\V{G}_{\V{x}}$ and
$\partial_{x_1}\V{G}_{\V{x}},\ldots,\partial_{x_\nu}\V{G}_{\V{x}}$
are elements of $\fdom(\omega^{-1}+\omega^2)$, for every $\V{x}\in\RR^\nu$, 
and the following map is continuous and bounded,
$$
\RR^\nu\ni\V{x}\longmapsto
(\V{G}_{\V{x}},\partial_{x_1}\V{G}_{\V{x}},\ldots,\partial_{x_\nu}\V{G}_{\V{x}})
\in\fdom(\omega^{-1}+\omega^2)^{\nu(\nu+1)}.
$$
\end{hyp}

\begin{hyp}\label{hypGC}
There exists a completely real subspace $\HP_{\RR}\subset\HP$ such that
\begin{align*}
\V{G}_{\V{x}}\in\HP_{\RR}^\nu,\quad e^{-t\omega}\HP_{\RR}\subset\HP_{\RR},
\end{align*}
for all $\V{x}\in\RR^\nu$ and $t>0$.
\end{hyp}

These two hypotheses have been imposed on $\V{G}$ in \cite{GMM2017}. The second one,
Hyp.~\ref{hypGC}, leads to some crucial cancellations in the analysis of Feynman-Kac
integrands and their associated stochastic differential equations in \cite{GMM2017}; it will not be used
in a directly visible way in the present article.

%%%%%%%%%%%%%%%%%%%%%%%%%%%%%%%%%%%%%%%%%%%%%%%%

\subsection{Notation for Brownian bridges and time reversed processes}\label{ssecBMBB}

\noindent
Recall that we fixed the filtered probability space $(\Omega,\fF,(\fF_t)_{t\ge0},\PP)$ satisfying the usual assumptions
and the $(\fF_t)_{t\ge0}$-Brownian motion $\V{B}$ starting in $0$ in the introduction.

Let $t>0$ in what follows. If $\V{x}\in\RR^\nu$ and $\V{q}:\Omega\to\RR^\nu$ is
$\fF_0$-measurable, then we let $\V{b}^{t;\V{q},\V{x}}$ denote a choice of the up to 
indistinguishability unique continuous semimartingale with respect to $(\fF_s)_{s\in[0,t]}$
which $\PP$-a.s. solves the stochastic differential equation for a Brownian bridge in time $t$
starting at $\V{q}$ and ending at $\V{x}$, i.e.,
\begin{align}\label{SDEbridgeqx}
\V{b}_s&=\V{q}+\V{B}_s+\int_0^s\frac{\V{x}-\V{b}_r}{t-r}\Id r,\quad s\in[0,t),
\quad\V{b}_t=\V{x}.
\end{align}

Next, we explain some notation for time reversals of Brownian motions and bridges;
see \cite{HaussmannPardoux1986,PardouxLNM1204} and \cite[App.~4]{GMM2017} for more details.

We denote by $(\check{\fF}_s)_{s\ge0}$ the standard extension of the filtration 
$(\fH_s)_{s\ge0}$ where, for all $s\in[0,t]$, $\fH_s$ denotes the $\sigma$-algebra generated by 
$\V{B}_{t-s}$ and all increments $\V{B}_{t}-\V{B}_{t-r}$ with $r\in[0,s]$, and where
$\fH_{s}=\fH_t$ for all $s\ge t$. Let $\V{x}\in\RR^\nu$. 
Then the reversed process $\V{B}^{t;\V{x}}$ defined in \eqref{BMrev} is a semimartingale with respect to
$(\check{\fF}_s)_{s\in[0,t]}$. Furthermore, there exists a $(\check{\fF}_s)_{s\in[0,t]}$-Brownian 
motion $\check{\V{B}}$ such that  $\V{B}^{t;\V{x}}$ is $\PP$-a.s. a solution to
\begin{align}\label{SDEbridgebar}
\V{b}_s&=\check{\V{q}}+\check{\V{B}}_s+\int_0^s\frac{\V{x}-\V{b}_r}{t-r}\Id r,
\quad s\in[0,t),\quad\V{b}_t=\V{x},
\end{align}
provided that we choose the $\check{\fF}_0$-measurable initial condition 
$\check{\V{q}}=\V{B}_t^{\V{x}}$.
\begin{align}\label{defbarbridge}
\text{We denote by $\smash{\check{\V{b}}}^{t;\V{y},\V{x}}$ the solution of \eqref{SDEbridgebar} 
for the choice $\check{\V{q}}=\V{y}$.}
\end{align}

We further denote by $(\hat{\fF}_s)_{s\ge0}$ the standard extension 
of the filtration $(\fJ_s)_{s\ge0}$ where, for all $s\in[0,t]$, $\fJ_s$ denotes the $\sigma$-algebra 
generated by $\V{b}_{t-s}^{t;\V{y},\V{x}}$ and all increments $\V{B}_{t}-\V{B}_{t-r}$ with 
$r\in[0,s]$, and where $\fJ_{s}=\fJ_t$ for all $s\ge t$. Then the reversed process
$\smash{\hat{\V{b}}}^{t;\V{x},\V{y}}$ defined in \eqref{BBrev}
 is a semimartingale with respect to $(\hat{\fF}_s)_{s\in[0,t]}$, and there exists a 
 $(\hat{\fF}_s)_{s\in[0,t]}$-Brownian motion $\hat{\V{B}}$ such that 
 $\smash{\hat{\V{b}}}^{t;\V{x},\V{y}}$ is $\PP$-a.s. a solution to
\begin{align*}%\label{SDEbridgehat}
\V{b}_s&=\V{x}+\hat{\V{B}}_s+\int_0^s\frac{\V{y}-\V{b}_r}{t-r}\Id r,
\quad s\in[0,t),\quad\V{b}_t=\V{y}.
\end{align*}

%%%%%%%%%%%%%%%%%%%%%%%%%%%%%%%%%%%%%%%%%%%%%%%%

\subsection{The Feynman-Kac integrands for regular coefficients}\label{ssecFKintreg}

\noindent
To benefit from the results of \cite{GMM2017}, we first have to verify that the formulas 
\eqref{defSxintro}, \eqref{defKxintro}, \eqref{defSxyintro}, and \eqref{defKxyintro} for the Stratonovich 
type integrals in our Feynman-Kac integrands generalize the ones used in the latter article:

\begin{lem}\label{lemdiv}
Let $t>0$ and $\V{x},\V{y}\in\RR^\nu$. Then the following identities hold $\PP$-a.s.,
\begin{align}\label{Sdiv}
S_t(\V{x})&=\int_0^tV(\V{B}_s^{\V{x}})\Id s-i\int_0^t\V{A}(\V{B}_s^{\V{x}})\Id\V{B}_s
-\frac{i}{2}\int_0^t\Div\V{A}(\V{B}_s^{\V{x}})\Id s,
\\\label{Kdiv}
K_t(\V{x})&=\int_0^tj_s\V{G}_{\V{B}_s^{\V{x}}}\Id\V{B}_s
+\frac{1}{2}\int_0^tj_s\Div\V{G}_{\V{B}_s^{\V{x}}}\Id s,
\end{align}
as well as
\begin{align}
S_t(\V{x},\V{y})&=\int_0^tV(\V{b}_s^{t;\V{y},\V{x}})\Id s
\label{Sxydiv}
-i\int_0^t\V{A}(\V{b}_s^{t;\V{y},\V{x}})\Id\V{b}_s^{t;\V{y},\V{x}}
-\frac{i}{2}\int_0^t\Div\V{A}(\V{b}_s^{t;\V{y},\V{x}})\Id s,
\\\label{Kxydiv}
K_t(\V{x},\V{y})&=\int_0^tj_s\V{G}_{\V{b}_s^{t;\V{y},\V{x}}}\Id\V{b}_s^{t;\V{y},\V{x}}
+\frac{1}{2}\int_0^tj_s\Div\V{G}_{\V{b}_s^{t;\V{y},\V{x}}}\Id s.
\end{align}
\end{lem}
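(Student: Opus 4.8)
\emph{Plan of proof.} All four identities have the same shape, so the plan is to treat them uniformly. Throughout fix $t>0$, write $\V{X}$ for either the Brownian motion $\V{B}^{\V{x}}$ or the Brownian bridge $\V{b}^{t;\V{y},\V{x}}$, and $\hat{\V{X}}$ for the corresponding time-reversed process $\V{B}^{t;\V{x}}$, resp.\ $\smash{\hat{\V{b}}}^{t;\V{x},\V{y}}$. By the results recalled in Subsect.~\ref{ssecBMBB} (from \cite{HaussmannPardoux1986,PardouxLNM1204}, and \cite[App.~4]{GMM2017} for bridges) $\hat{\V{X}}$ is again a continuous semimartingale, now with respect to the reversed filtration, and in all four cases the continuous local martingale part $M$ of $\V{X}$ and the one $\hat{M}$ of $\hat{\V{X}}$ are Brownian motions, so $\Id\langle M_j\rangle_s=\Id\langle\hat{M}_j\rangle_s=\Id s$. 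First I would observe that under the standing assumptions \eqref{AVreg} and Hyp.~\ref{hypGreg} the weak divergences $\Div\V{A}$ and $\Div\V{G}$ appearing in \eqref{Sdiv}--\eqref{Kxydiv} are continuous and bounded, with $\Div\V{G}$ taking values in $\fdom(\omega^{-1})$; hence all stochastic integrals on the right-hand sides of \eqref{Sdiv}--\eqref{Kxydiv} are defined, and those on the left-hand sides exist by the discussion in Subsects.~\ref{ssecexpathint} and~\ref{ssecFKIsingAG}. Note that the first and second stochastic integrals in each of \eqref{defSxintro}--\eqref{defKxyintro} are It\^{o} integrals with respect to \emph{different} filtrations, so they cannot be combined directly; a time-reversal rule is what is needed.

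The key tool is precisely such a rule: for a (possibly time-dependent) integrand $h_s(\cdot)$ of class $C^1$ in $s$ and $C^2$ in the spatial variable one has, $\PP$-a.s.,
\begin{align*}
\int_0^th_{t-s}(\hat{\V{X}}_s)\,\Id\hat{\V{X}}_s=-\int_0^th_s(\V{X}_s)\,\Id\V{X}_s-\int_0^t(\Div h_s)(\V{X}_s)\,\Id s,
\end{align*}
with $\Div$ the spatial divergence. I would derive this by the standard route: Stratonovich integrals transform like ordinary Riemann--Stieltjes integrals under time reversal, $\int_0^th_{t-s}(\hat{\V{X}}_s)\circ\Id\hat{\V{X}}_s=-\int_0^th_s(\V{X}_s)\circ\Id\V{X}_s$; converting each side to It\^{o} form via $\int h_s(\V{X}_s)\circ\Id\V{X}_s=\int h_s(\V{X}_s)\,\Id\V{X}_s+\tfrac12\int(\Div h_s)(\V{X}_s)\,\Id s$ (only the spatial derivatives enter the correction, the $C^1$-in-$s$ part having finite variation); and using that the quadratic covariations of $h(\V{X})$ with $\V{X}$ and of $h(\hat{\V{X}})$ with $\hat{\V{X}}$ agree. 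Once the rule is in hand, \eqref{Sdiv} and \eqref{Sxydiv} follow by applying it with the time-independent $C^1$ integrand $h_s\equiv\V{A}$ to the last term of \eqref{defSxintro}, resp.\ \eqref{defSxyintro}, and combining with the first It\^{o} term (for $\V{X}=\V{B}^{\V{x}}$ one also uses $\Id\V{B}^{\V{x}}_s=\Id\V{B}_s$, while for the bridge the drift $\tfrac{\V{x}-\V{b}^{t;\V{y},\V{x}}_r}{t-r}$ is simply carried along inside $\Id\V{b}^{t;\V{y},\V{x}}_s$). Likewise \eqref{Kdiv} and \eqref{Kxydiv} follow by applying the rule with $h_s=j_s\V{G}_{(\cdot)}$, whose spatial divergence is $j_s\Div\V{G}_{(\cdot)}$, to the second integral of \eqref{defKxintro}, resp.\ \eqref{defKxyintro}, and combining with the first; the time-dependence through $\{j_s\}$ only produces the sign in the time-reversal step and does not affect the divergence correction.

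The genuinely delicate point — and the one I expect to cost the most work — is the justification of the time-reversal rule in the $\hat{\HP}$-valued, time-dependent-integrand situation relevant to $K_t$. Since $s\mapsto j_s$ is merely strongly continuous and has infinite variation (its formal $s$-derivative amounts to multiplication by the unbounded function $-i\kappa$), $s\mapsto j_s\V{G}_{\V{X}_s}$ is not an $\hat{\HP}$-valued semimartingale in the usual sense, so the Stratonovich correction cannot be read off naively. My plan is to reduce to the scalar case by working fibrewise in the variable $(\kappa,k)\in\RR\times\cK$: for each fixed $\kappa$ the factor $e^{-is\kappa}$ in $(j_sf)(\kappa,k)$ is smooth in $s$, so $(j_s\V{G}_{\V{X}_s})(\kappa,k)$ is a scalar semimartingale to which the scalar time-reversal identity applies; one then reassembles in $\hat{\HP}$ by a stochastic Fubini argument, using the isometry property $\|j_s\psi\|_{\hat{\HP}}=\|\psi\|_{\HP}$ together with Hyp.~\ref{hypGreg} (which supplies the spatial $C^2$-regularity of $\V{G}$ and the $\omega^{-1}$- and $\omega^2$-moment bounds needed to dominate the fibrewise integrals); alternatively one may simply invoke the corresponding computations carried out for $\V{A}=0$ in \cite{GMM2017}, the genuinely new ingredient here being only the classical, scalar integrand $\V{A}$. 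A further, routine, technical point is the integrable singularity of the bridge drift at $s=t$, which makes $\int_0^t\V{A}(\V{b}^{t;\V{y},\V{x}}_s)\,\Id\V{b}^{t;\V{y},\V{x}}_s$ and $\int_0^tj_s\V{G}_{\V{b}^{t;\V{y},\V{x}}_s}\,\Id\V{b}^{t;\V{y},\V{x}}_s$ well-defined; this is standard for Brownian bridges and is handled as in Subsect.~\ref{ssecexpathint}.
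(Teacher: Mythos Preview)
Your plan is correct and rests on the same underlying identity as the paper---the symmetric combination of the forward and reversed It\^o integrals is the Stratonovich integral, which equals It\^o plus half the divergence correction. The organisation differs, however. The paper does not set up a general ``time-reversal rule'' nor name the Stratonovich integral at all: it simply writes $\int_0^tj_s\V{G}_{\V{B}_s^{\V{x}}}\Id\V{B}_s$ and $\int_0^tj_{t-s}\V{G}_{\V{B}_s^{t;\V{x}}}\Id\V{B}_s^{t;\V{x}}$ as limits in probability of left-endpoint and (minus) right-endpoint Riemann sums, observes that the linear combination in \eqref{defKxintro} is then the trapezoidal sum, and cites \cite[Lem.~3.2]{GMM2017} for the identification of that limit with the right-hand side of \eqref{Kdiv}; the bridge case and the scalar actions \eqref{Sdiv}, \eqref{Sxydiv} are handled identically. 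This is shorter and \emph{sidesteps} precisely the difficulty you flag with the $\hat{\HP}$-valued, $j_s$-dependent integrands: no Stratonovich integral need be defined, no fibrewise argument or stochastic Fubini is required---everything happens at the level of discrete sums and a single limit. Your option~(b), invoking \cite{GMM2017}, amounts to exactly the same thing (what is proved there is the trapezoidal-sum identification); your option~(a) would also work but is more labour than the lemma warrants.
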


\begin{proof}
Under the present conditions on $\V{G}$, well-known results on Hilbert space-valued stochastic
integrals reveal that
\begin{align*}
\int_0^tj_s\V{G}_{\V{B}_s^{\V{x}}}\Id\V{B}_s^{\V{x}}&=
\underset{n\to\infty}{\mathrm{lim\,prob}}
\sum_{\ell=1}^nj_{(\ell-1)t/n}\V{G}(\V{B}^{\V{x}}_{(\ell-1)t/n})(\V{B}_{\ell t/n}-\V{B}_{(\ell-1)t/n}),
\\
\int_{0}^tj_{t-s}\V{G}_{\V{B}_{t-s}^{\V{x}}}\Id\V{B}_s^{t;\V{x}}&=
-\underset{n\to\infty}{\mathrm{lim\,prob}}
\sum_{\ell=1}^nj_{\ell t/n}\V{G}(\V{B}^{\V{x}}_{\ell t/n})(\V{B}_{\ell t/n}-\V{B}_{(\ell-1)t/n}).
\end{align*}
Moreover, we verified in \cite[Lem.~3.2]{GMM2017} that the term on the
right hand side of \eqref{Kdiv} equals
\begin{align*}
\underset{n\to\infty}{\mathrm{lim\,prob}}
\sum_{\ell=1}^n\frac{1}{2}\big(j_{\ell t/n}\V{G}_{\V{B}^{\V{x}}_{\ell t/n}}
+j_{(\ell-1)t/n}\V{G}_{\V{B}^{\V{x}}_{(\ell-1)t/n}}\big)(\V{B}_{\ell t/n}-\V{B}_{(\ell-1)t/n}).
\end{align*}
Altogether this proves \eqref{Kdiv}. An analogous argument, again employing 
\cite[Lem.~3.2]{GMM2017}, applies when $\V{b}^{t;\V{y},\V{x}}$ and 
$\smash{\hat{\V{b}}}^{t;\V{x},\V{y}}$
are put in place of $\V{B}^{\V{x}}$ and $\V{B}^{t;\V{x}}$, respectively.
The relations \eqref{Sdiv} and \eqref{Sxydiv} can be proved in the same fashion, using the more
well-known \eqref{stratos} below.
\end{proof}

In what follows we shall employ the following notation:
\begin{enumerate}[leftmargin=0.8cm]
\item[$\triangleright$]
$\check{W}_t(\V{x},\V{y})$ is the random operator obtained upon replacing
$\V{b}^{t;\V{y},\V{x}}$ by $\smash{\check{\V{b}}}^{t;\V{y},\V{x}}$ in 
\eqref{Sxydiv} and \eqref{Kxydiv} and plugging the result into \eqref{introdefWtxy}.
Recall that $\smash{\check{\V{b}}}^{t;\V{x},\V{y}}$ has been defined in \eqref{defbarbridge}.
\item[$\triangleright$]
$\hat{W}_t(\V{y},\V{x})$ is the random operator obtained upon replacing
$\V{b}^{t;\V{y},\V{x}}$ by $\smash{\hat{\V{b}}}^{t;\V{x},\V{y}}$ in \eqref{Sxydiv} and 
\eqref{Kxydiv} and plugging the result into \eqref{introdefWtxy}; 
$\smash{\hat{\V{b}}}^{t;\V{x},\V{y}}$ is defined in \eqref{BBrev}.
\end{enumerate}

\begin{thm}\label{thmWadjoint}
Let $t>0$ and $\V{x},\V{y}\in\RR^\nu$.  Then the following identities hold $\PP$-a.s.,
\begin{align}\label{Wadjoint}
W_{t}(\V{x})^*&=\check{W}_{t}(\V{x},\V{B}_t^{\V{x}}),\qquad
W_{t}(\V{x},\V{y})^*=\hat{W}_t(\V{y},\V{x}).
\end{align}
Furthermore, the random field
$(\check{W}_t(\V{x},\V{z}))_{\V{z}\in\RR^\nu}$ can be modified such that the following map is 
continuous, for every $\vo\in\Omega$,
\begin{align}\label{miroslav}
\RR^\nu\times\sF\ni(\V{z},\psi)\longmapsto(\check{W}_t(\V{x},\V{z}))(\vo)\psi\in\sF.
\end{align}
\end{thm}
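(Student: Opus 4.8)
\textbf{Proof plan for Theorem~\ref{thmWadjoint}.}
The plan is to treat the two identities in \eqref{Wadjoint} and the regularity statement \eqref{miroslav} in turn, reducing each to a pathwise manipulation of the three ingredients of $W_t$: the complex action $S_t$, the field argument $K_t$, and the second-quantized isometries $\Gamma(j_s)$. For the first identity $W_t(\V{x})^*=\check{W}_t(\V{x},\V{B}_t^{\V{x}})$, I would start from the definition \eqref{defWWintroscal} and take adjoints formally, using that $(\Gamma(j_t)^*e^{i\vp(K_t(\V{x}))}\Gamma(j_0))^*=\Gamma(j_0)^*e^{-i\vp(K_t(\V{x}))}\Gamma(j_t)$ and that $\overline{e^{-S_t(\V{x})}}=e^{-\overline{S_t(\V{x})}}$; the point is then to recognize $\overline{S_t(\V{x})}$ and $-K_t(\V{x})$ as exactly the action and field argument produced by the reversed Brownian bridge $\smash{\check{\V{b}}}^{t;\V{B}_t^{\V{x}},\V{x}}$ entering $\check{W}_t(\V{x},\V{B}_t^{\V{x}})$. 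For the real part of the action this is immediate; for the imaginary part and for $K_t$ it is precisely the role of the two It\^o integrals (with respect to the two different filtrations) appearing in \eqref{defSxintro} and \eqref{defKxintro}: reversing the bridge swaps the two integrals and flips an overall sign, which is the content we need. Concretely, $\V{B}^{t;\V{x}}_s=\V{B}^{\V{x}}_{t-s}$ and $\smash{\check{\V{b}}}^{t;\V{B}_t^{\V{x}},\V{x}}$ has the same law and the same increments as $\V{B}^{t;\V{x}}$ reparametrized; feeding this into \eqref{defSxyintro}--\eqref{defKxyintro} and comparing term by term with $\overline{S_t(\V{x})}$, $-K_t(\V{x})$ gives the claim. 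The second identity $W_t(\V{x},\V{y})^*=\hat{W}_t(\V{y},\V{x})$ is the analogous statement with the Brownian bridge $\V{b}^{t;\V{y},\V{x}}$ replaced throughout by $\smash{\hat{\V{b}}}^{t;\V{x},\V{y}}=(\V{b}^{t;\V{y},\V{x}}_{t-s})_s$, and it follows by literally the same computation, the key input being the symmetry of the defining formulas \eqref{defSxyintro}, \eqref{defKxyintro} under interchanging the bridge with its time reversal and swapping the roles of the two filtrations.

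I expect the main obstacle to be the careful bookkeeping of stochastic integrals under time reversal: one must ensure that the It\^o integrals against the time-reversed bridge, defined with respect to the new filtrations $(\check{\fF}_s)$ or $(\hat{\fF}_s)$ introduced in Subsect.~\ref{ssecBMBB}, are $\PP$-a.s.\ equal to the corresponding integrals that appear in $S_t$, $K_t$ after the obvious substitution $s\mapsto t-s$. This is where the decomposition of each Stratonovich integral into \emph{two} It\^o integrals pays off: one of the two integrals in $S_t(\V{x})$ is adapted to the forward filtration and the other to the backward one, so under reversal they simply exchange places (and each acquires the sign dictated by $\Id\V{B}^{t;\V{x}}_s=-\Id\V{B}^{\V{x}}_{t-s}$ in the symbolic sense made precise by the limit-in-probability descriptions in the proof of Lemma~\ref{lemdiv}). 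Rather than redo those Riemann-sum computations, I would invoke Lemma~\ref{lemdiv} to rewrite all four objects $S_t(\V{x})$, $K_t(\V{x})$, $S_t(\V{x},\V{y})$, $K_t(\V{x},\V{y})$ and their checked/hatted analogues in the ``divergence form'' \eqref{Sdiv}--\eqref{Kxydiv}, in which only a single ordinary It\^o integral plus a Lebesgue integral appear; in that form the reversal identities reduce to the standard fact (from \cite{HaussmannPardoux1986,PardouxLNM1204}, as used in \cite[App.~4]{GMM2017}) that the It\^o integral of an adapted integrand against a semimartingale is, after time reversal, the It\^o integral of the reversed integrand against the reversed semimartingale, up to the drift correction already absorbed into the $\tfrac12\int\Div(\cdot)$ terms.

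For the regularity statement \eqref{miroslav}, I would use the alternative representation of the Feynman-Kac operator recorded in Remark~\ref{remkernig}: under the present regular-coefficient hypotheses (Hyp.~\ref{hypGreg} in particular gives $\V{G}_{\V{x}}\in\fdom(\omega^{-1})^\nu$), formula \eqref{GMMformel} expresses $\check{W}_t(\V{x},\V{z})$ as $e^{-\check{S}_t(\V{x},\V{z})-\|\check{K}_t(\V{x},\V{z})\|^2/2}$ times a product of two $F_{t/2}$-factors evaluated at $ij_t^*\check{K}_t(\V{x},\V{z})$ and $-ij_0^*\check{K}_t(\V{x},\V{z})$, where $g\mapsto F_{t/2}(g)$ is analytic (hence continuous) from $\fdom(\omega^{-1})$ to $\LO(\sF)$ by \cite[Lem.~17.4]{GMM2017}. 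Thus it suffices to produce a modification of the random field $\V{z}\mapsto(\check{S}_t(\V{x},\V{z}),\check{K}_t(\V{x},\V{z}))$ that is continuous in $\V{z}$ for every fixed $\vo$, valued in $\CC\times\fdom(\omega^{-1})^{\hat{}}$ (using that $j_t,j_0$ map $\fdom(\omega^{-1})$-type data continuously). This in turn follows from a Kolmogorov-type continuity argument: the bridge $\smash{\check{\V{b}}}^{t;\V{z},\V{x}}$ depends on its starting point $\V{z}$ through the SDE \eqref{SDEbridgebar} with affine, globally Lipschitz coefficients, so standard $L^p$-estimates give H\"older moment bounds in $\V{z}$ for the path, and then the stochastic integrals defining $\check{S}_t,\check{K}_t$ inherit such bounds via Burkholder--Davis--Gundy together with the boundedness and Lipschitz continuity of $\V{A},\V{G},\Div\V{A},\Div\V{G}$ from \eqref{AVreg} and Hyp.~\ref{hypGreg}. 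Applying the Kolmogorov--Chentsov theorem (in the version for Banach-space-valued fields) yields the desired continuous modification; composing with the continuous maps $e^{-(\cdot)}$, $\|\cdot\|^2$, $j_t^*$, $j_0^*$, $F_{t/2}$, and with $(\V{z},\psi)\mapsto B\psi$ being jointly continuous on $\LO(\sF)\times\sF$, gives \eqref{miroslav}. I would note in passing that this continuous modification of $\check{W}_t(\V{x},\cdot)$ is exactly what is needed so that $\check{W}_t(\V{x},\V{B}_t^{\V{x}})$ in \eqref{Wadjoint} is an unambiguous random operator, i.e.\ so that evaluating the field at the (random) endpoint $\V{B}_t^{\V{x}}$ commutes with the already-established identity $W_t(\V{x})^*=\check{W}_t(\V{x},\V{B}_t^{\V{x}})$.
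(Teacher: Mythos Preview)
Your plan is essentially correct and follows the same strategy as the paper's proof. The paper, however, is much terser: for $\V{A}=0$ it simply cites \cite[Thm.~9.2 and Lem.~10.2]{GMM2017} for all three assertions, and then only spells out the extension to non-vanishing $\V{A}\in C_b^1(\RR^\nu,\RR^\nu)$. For that extension the paper does not invoke general time-reversal theory as you suggest, but uses the direct Riemann-sum representation \eqref{stratos} of the Stratonovich integral and observes that the substitution $\ell\mapsto n-\ell+1$ turns each approximating sum into its complex conjugate; this yields $\overline{S_t(\V{x})}$ as the Stratonovich integral along $\V{B}^{t;\V{x}}$, and the identification with $\check{S}_t(\V{x},\V{B}_t^{\V{x}})$ then follows because $\V{B}^{t;\V{x}}$ is \emph{pathwise} the solution of \eqref{SDEbridgebar} with the $\check{\fF}_0$-measurable initial condition $\V{B}_t^{\V{x}}$ (not merely equal in law, as your phrase ``same law and the same increments'' might suggest). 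Your proposal to go through Lemma~\ref{lemdiv} and time-reversal of It\^o integrals is equivalent but slightly more roundabout; the Riemann-sum argument is cleaner precisely because the Stratonovich-type expression is manifestly symmetric under time reversal. For \eqref{miroslav} your Kolmogorov--Chentsov sketch via \eqref{GMMformel} is a reasonable unpacking of what lies behind the cited results in \cite{GMM2017}; the paper just says the extension to non-zero $\V{A}$ holds ``by standard properties of the stochastic integrals defining $\check{S}_t(\V{x},\V{z})$.''
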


\begin{proof}
For $\V{A}=0$, all assertions follow from \cite[Thm.~9.2 and Lem.~10.2]{GMM2017}. 
Assume without loss of generality that $V=0$. Then
\begin{align}\label{stratos}
S_{t}(\V{x})&=-\underset{n\to\infty}{\mathrm{lim\,prob}}
\sum_{\ell=1}^n\frac{i}{2}\big(\V{A}(\V{B}^{\V{x}}_{\ell t/n})
+\V{A}(\V{B}^{\V{x}}_{(\ell-1)t/n})\big)(\V{B}_{\ell t/n}-\V{B}_{(\ell-1)t/n}).
\end{align}
Under the replacements $\ell\to n-\ell+1$ we obviously obtain the 
complex conjugates of the approximating sums. Therefore,
\begin{align}\label{idris1}
\ol{S_t(\V{x})}&=-i\int_0^t\V{A}(\V{B}_s^{t;\V{x}})\Id\V{B}_s^{t;\V{x}}
-\frac{i}{2}\int_0^t\Div\V{A}(\V{B}_s^{t;\V{x}})\Id s,
\end{align}
where the stochastic integral on the right hand side is constructed with respect to the filtration 
$(\check{\fF}_s)_{s\in[0,t]}$.
Let $\smash{\check{S}}_t(\V{x},\V{y})$ denote the random variable obtained upon putting 
$\smash{\check{\V{b}}}^{t;\V{y},\V{x}}$ in place of $\V{b}^{t;\V{y},\V{x}}$ 
on the right hand side of \eqref{Sxydiv}.
Since $\V{B}^{t;\V{x}}$ solves \eqref{SDEbridgebar} with the $\check{\fF}_0$-measurable initial 
condition $\check{\V{q}}:=\V{B}_t^{\V{x}}$ and since $\V{A}\in C^1_b(\RR^\nu,\RR^\nu)$,
the random variable on the right hand side of \eqref{idris1} is $\PP$-a.s. equal to
$\check{S}_{t}(\V{x},\check{\V{q}})$ (where the integrals are {\em first} computed along
$\smash{\check{\V{b}}}^{\tau;\V{y},\V{x}}$, for each $\V{y}\in\RR^\nu$, and 
$\V{y}=\check{\V{q}}$ is substituted {\em afterwards}). These remarks extend the first
identity in \eqref{Wadjoint} to non-vanishing $\V{A}$.
The second identity in \eqref{Wadjoint} can be proved, slightly more directly, in the same fashion.
Finally, the last assertion extends
to non-vanishing $\V{A}\in C_b^1(\RR^\nu,\RR^\nu)$ by standard properties of the
stochastic integrals defining $\smash{\check{S}}_t(\V{x},\V{z})$.
\end{proof}

Next, we discuss a flow equation. 
For every $r\ge0$, we set
\begin{align*}
{}^r\!\V{B}_t&:=\V{B}_{t+r}-\V{B}_r,\quad t\ge0;\qquad {}^r\!\V{B}^{\V{x}}:=\V{x}+\!{}^r\!\V{B},
\quad \V{x}\in\RR^\nu.
\end{align*}
so that ${}^r\!\V{B}$ is a $(\mathfrak{F}_{r+t})_{t\ge0}$-Brownian motion on the 
time-shifted filtered probability space $(\Omega,\mathfrak{F},(\mathfrak{F}_{r+t})_{t\ge0},\PP)$.
Denoting by $(W_{r,r+t}(\V{x}))_{t\ge0}$ the process obtained upon putting ${}^r\!\V{B}$ in place
of $\V{B}$ in \eqref{Sdiv} and \eqref{Kdiv} and plugging the result into \eqref{defWWintroscal}, 
we have the following result:

\begin{thm}\label{thmWflow}
By choosing a suitable version of the process $(W_{r,r+t}(\V{x}))_{t\ge0}$, for each $r\ge0$ and
each $\V{x}\in\RR^\nu$, we can achieve the following:
\begin{enumerate}[leftmargin=0.8cm]
\item[{\rm(1)}] For all $r\ge0$ and $\vo\in\Omega$, the following map is continuous,
\begin{align*}
[r,\infty)\times\RR^\nu\times\sF\ni(t,\V{x},\psi)\longmapsto (W_{r,t}(\V{x}))(\vo)\psi\in\sF.
\end{align*}
\item[{\rm(2)}] Fix $r\ge0$ and $\V{x}\in\RR^\nu$. Then
$W_{r,r}(\V{x})=\id$ and the following flow equations hold $\PP$-a.s.,
\begin{align}\label{flowW}
W_{r,t}(\V{x})&=W_{s,t}(^r\!\V{B}_{s-r}^{\V{x}})W_{r,s}(\V{x}),\quad t\ge s\ge r.
\end{align}
\item[{\rm(3)}] For all $t\ge r\ge 0$ and $\V{x}\in\RR^\nu$, the random variable
$W_{r,t}(\V{x})$ is $\fF_r$-independent.
\end{enumerate}
\end{thm}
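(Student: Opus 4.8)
The plan is to reduce all three assertions to the case $\V{A}=0$, where they are exactly the results of \cite{GMM2017} that this section ``pushes forward'', and then to bolt on a scalar correction accounting for the classical vector potential. Using the It\^{o} form \eqref{Sdiv}, \eqref{Kdiv}, let $\sigma_{r,t}(\V{x})$ ($t\ge r$) be the random variable obtained from the two $\V{A}$-dependent integrals in \eqref{Sdiv} upon inserting ${}^r\!\V{B}$ for $\V{B}$, and let $W_{r,t}^{(0)}(\V{x})$ be defined exactly as $W_{r,t}(\V{x})$ but with $\V{A}$ put to zero (the electrostatic potential $V$, which could anyway be absorbed into the scalar factor as well, and the coupling $\V{G}$ being retained). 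Since $\V{A}$ enters \eqref{defWWintroscal} only through the scalar exponential and $\Re\,\sigma_{r,t}(\V{x})=0$, so that $|e^{-\sigma_{r,t}(\V{x})}|=1$, we get the factorisation $W_{r,t}(\V{x})=e^{-\sigma_{r,t}(\V{x})}W_{r,t}^{(0)}(\V{x})$ with the scalar factor commuting through every operator. It thus remains to prove the analogues of (1)--(3) for $\sigma_{r,\cdot}(\cdot)$ and to recombine. (An alternative, once the stochastic differential equation for $W_{r,\cdot}(\V{x})$ has been established, would be to deduce the flow equation from pathwise uniqueness, but the factorisation route is more elementary.)

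For part (1), since $\V{A}\in C_b^1(\RR^\nu,\RR^\nu)$ (so $\V{A}$ and $\Div\V{A}$ are bounded, $\V{A}$ Lipschitz) and Brownian motion has moments of all orders, the Burkholder--Davis--Gundy inequality yields, on every compact subset of $[r,\infty)\times\RR^\nu$ and for every $p\ge1$, estimates of the type $\EE|\sigma_{r,t}(\V{x})-\sigma_{r,t'}(\V{x}')|^{2p}\le C(|t-t'|^{p}+|\V{x}-\V{x}'|^{2p})$ --- the same kind of bound used in \cite{GMM2017} for the $\HP$-valued integrals in $K$. Kolmogorov's continuity theorem then produces a modification of $(t,\V{x})\mapsto\sigma_{r,t}(\V{x})$ that is continuous, hence locally bounded, on $[r,\infty)\times\RR^\nu$. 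Multiplying the continuous version of $(t,\V{x},\psi)\mapsto W_{r,t}^{(0)}(\V{x})\psi$ from \cite{GMM2017} by the continuous unimodular scalar $e^{-\sigma_{r,t}(\V{x})}$ gives a jointly continuous version of $(t,\V{x},\psi)\mapsto W_{r,t}(\V{x})\psi$; this is the version we fix once and for all, and it still satisfies $\|W_{r,t}(\V{x})\|\le1$.

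For part (2), the identity $W_{r,r}(\V{x})=\id$ is immediate: at the terminal time $t=r$ all stochastic integrals vanish, so $S=K=0$ and $\Gamma(j_0)^*\Gamma(j_0)=\id_{\sF}$ because $j_0$, hence $\Gamma(j_0)$, is an isometry. For the flow equation I would exploit the elementary decomposition ${}^r\!\V{B}^{\V{x}}_{\tau+u}=\big({}^r\!\V{B}^{\V{x}}_\tau\big)+{}^{r+\tau}\!\V{B}_u$ for $u\ge0$: splitting the time integrals defining $\sigma_{r,r+T}(\V{x})$ at $\tau$, changing variables, and substituting the random starting point $\V{y}={}^r\!\V{B}^{\V{x}}_\tau$ into $\sigma_{r+\tau,\cdot}(\,\cdot\,)$ --- which is legitimate by the standard device of first computing the integrals along ${}^{r+\tau}\!\V{B}^{\V{z}}$ for deterministic $\V{z}$ and substituting $\V{z}=\V{y}$ afterwards, using the continuity in $\V{z}$ from part (1) --- gives the additive cocycle $\sigma_{r,r+T}(\V{x})=\sigma_{r+\tau,r+T}\big({}^r\!\V{B}^{\V{x}}_\tau\big)+\sigma_{r,r+\tau}(\V{x})$, $\PP$-a.s., for each fixed $0\le\tau\le T$. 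Combined with the flow equation for $W^{(0)}$ from \cite{GMM2017} and the factorisation, this yields \eqref{flowW} $\PP$-a.s.\ for each fixed pair $t\ge s\ge r$; to promote it to a single $\PP$-null set valid for all $t\ge s\ge r$ simultaneously, I would intersect over rational $s\ge r$, use the $t$-continuity from part (1) for each such $s$, and then pass to arbitrary $s$ via joint continuity of $(s,t,\V{z},\psi)\mapsto W_{s,t}(\V{z})\psi$, obtained as in (1) after writing ${}^s\!\V{B}^{\V{z}}_u=(\V{z}-\V{B}_s)+\V{B}_{s+u}$, which reduces the dependence on the base time $s$ to substitution of the path-continuous random point $\V{z}-\V{B}_s$.

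Part (3) is the easiest: for fixed $r\ge0$ the random operator $W_{r,t}(\V{x})$ is a measurable functional of the post-$r$ increment process ${}^r\!\V{B}=(\V{B}_{r+u}-\V{B}_r)_{u\ge0}$ --- the stochastic integrals defining $\sigma_{r,\cdot}(\V{x})$ and the $K$-integrals (with ${}^r\!\V{B}$ in place of $\V{B}$) are, being a.s.\ limits of Riemann sums built from ${}^r\!\V{B}$, measurable with respect to the $\PP$-completion of $\sigma({}^r\!\V{B})$ --- and ${}^r\!\V{B}$ is independent of $\fF_r$ by the very definition of an $(\fF_t)_{t\ge0}$-Brownian motion, an independence unaffected by adjoining $\PP$-null sets. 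The step I expect to be the main obstacle is the version bookkeeping in (2): arranging that one choice of continuous-in-$(t,\V{x})$ versions at every base time makes the fixed-time cocycle for $\sigma$ and the flow equation for $W^{(0)}$ upgrade to an identity holding $\PP$-a.s.\ for all $t\ge s\ge r$ at once; this is where joint continuity in the base time $s$ and the random-time substitution principle carry the weight, while everything else is routine given \cite{GMM2017}.
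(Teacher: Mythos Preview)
Your proposal is correct and takes essentially the same approach as the paper: reduce to the case $\V{A}=0$ via \cite[Thm.~9.2]{GMM2017} and then handle the purely imaginary scalar factor $e^{-\sigma_{r,t}(\V{x})}$ coming from $\V{A}\in C_b^1(\RR^\nu,\RR^\nu)$ by standard stochastic-integral arguments (BDG/Kolmogorov for continuity, additive cocycle for the flow, measurability in post-$r$ increments for independence). The paper's own proof is two sentences long and simply asserts that ``by standard results on stochastic integrals'' the $\V{A}=0$ case extends; you have written out precisely what those standard results are, including the version bookkeeping needed to get \eqref{flowW} for all $t\ge s\ge r$ on a single null set.
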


\begin{proof}
For $\V{A}=0$, all statements are contained in \cite[Thm.~9.2]{GMM2017}. By standard results
on stochastic integrals they extend to non-vanishing $\V{A}$ in $C_b^1(\RR^\nu,\RR^\nu)$.
\end{proof}

%%%%%%%%%%%%%%%%%%%%%%%%%%%%%%%%%%%%%%%%%%%%%%%%

\subsection{The semigroup and its integral kernel for regular coefficients}\label{ssecTTxyreg}

\noindent
For all $\Psi\in L^2(\RR^\nu,\sF)$, we abbreviate
\begin{align}\label{defTtx}
(T_t\Psi)(\V{x})&:=\EE\big[W_{t}(\V{x})^*\Psi(\V{B}_t^{\V{x}})\big],\quad 
t\ge0,\,\V{x}\in\RR^\nu.
\end{align}
In view of \eqref{WWWbd} this defines a bounded operator $T_t$ on $L^2(\RR^\nu,\sF)$ satisfying
\begin{align}\label{Tbd}
\|T_t\|&\le 1,\quad t\ge0.
\end{align}
Recalling our notation \eqref{Gausskern} for the Euclidean heat kernel we further write
\begin{align}\label{defTxy}
T_t(\V{x},\V{y})&:=p_t(\V{x},\V{y})\EE[W_t(\V{x},\V{y})],\quad t>0,\,\V{x},\V{y}\in\RR^\nu;
\end{align}
recall Rem.~\ref{remkernig} concerning the existence
of the $\LO(\sF)$-valued integral in \eqref{defTxy}. 

\begin{prop}\label{propTsa}
Let $t>0$. Then
\begin{align}\label{desint}
(T_t\Psi)(\V{x})&=\int_{\RR^\nu}T_t(\V{x},\V{y})\Psi(\V{y})\Id\V{y},\quad\V{x}\in\RR^\nu,
\end{align}
for all $\Psi\in L^2(\RR^\nu,\sF)$, and
\begin{align}\label{Txysym}
T_t(\V{x},\V{y})^*&=T_t(\V{y},\V{x}),\quad \V{x},\V{y}\in\RR^\nu.
\end{align}
In particular, $T_t$ is a bounded selfadjoint operator on $L^2(\RR^\nu,\sF)$.
\end{prop}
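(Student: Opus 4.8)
The plan is to prove the desintegration formula \eqref{desint} first, deduce the kernel symmetry \eqref{Txysym} from it, and then combine the two with the norm bound \eqref{Tbd} to conclude selfadjointness of $T_t$. Two soft remarks will be used throughout: all the $\LO(\sF)$-valued Bochner integrals occurring below exist because their integrands are bounded by $1$ (cf.\ \eqref{WWWbd}) and separably valued (cf.\ Rem.~\ref{remkernig}); and, as a consequence, such an integral depends only on the law of the underlying process and it commutes with the isometric involution $X\mapsto X^*$ on $\LO(\sF)$.

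For \eqref{desint}, I would start from \eqref{defTtx} and insert the first identity in \eqref{Wadjoint}, i.e.\ $W_t(\V{x})^*=\check{W}_t(\V{x},\V{B}_t^{\V{x}})$, using the version of $(\check{W}_t(\V{x},\V{z}))_{\V{z}}$ supplied by Thm.~\ref{thmWadjoint}, for which $(\V{z},\psi)\mapsto(\check{W}_t(\V{x},\V{z}))(\vo)\psi$ is continuous; this makes $(\V{z},\vo)\mapsto(\check{W}_t(\V{x},\V{z}))(\vo)\Psi(\V{z})$ jointly measurable. The crucial observation is that, for fixed $\V{x}$ and $\V{z}$, the operator $\check{W}_t(\V{x},\V{z})$ is built from $\smash{\check{\V{b}}}^{t;\V{z},\V{x}}$, the solution of \eqref{SDEbridgebar} with the \emph{deterministic} initial value $\check{\V{q}}=\V{z}$ and driving Brownian motion $\check{\V{B}}$; hence $\check{W}_t(\V{x},\V{z})$ is measurable with respect to (the $\PP$-completion of) $\sigma(\check{\V{B}})$. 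Since $\V{B}_t^{\V{x}}$ is $\check{\fF}_0$-measurable and $\check{\V{B}}$ is a $(\check{\fF}_s)_{s\ge0}$-Brownian motion, $\check{\V{B}}$ and $\V{B}_t^{\V{x}}$ are independent. As $\V{B}_t^{\V{x}}$ has law $p_t(\V{x},\cdot)\,\Id\V{y}$ (recall \eqref{Gausskern}) and $\EE[\|\Psi(\V{B}_t^{\V{x}})\|]=\int_{\RR^\nu}p_t(\V{x},\V{y})\|\Psi(\V{y})\|\,\Id\V{y}<\infty$ by Cauchy--Schwarz, a standard conditioning argument gives
\[
(T_t\Psi)(\V{x})=\EE\big[\check{W}_t(\V{x},\V{B}_t^{\V{x}})\Psi(\V{B}_t^{\V{x}})\big]=\int_{\RR^\nu}p_t(\V{x},\V{y})\,\EE[\check{W}_t(\V{x},\V{y})]\,\Psi(\V{y})\,\Id\V{y}.
\]
It then only remains to observe that $\EE[\check{W}_t(\V{x},\V{y})]=\EE[W_t(\V{x},\V{y})]$: by Lem.~\ref{lemdiv}, both $W_t(\V{x},\V{y})$ and $\check{W}_t(\V{x},\V{y})$ arise from one and the same functional (namely inserting the right hand sides of \eqref{Sxydiv} and \eqref{Kxydiv} into \eqref{introdefWtxy}) evaluated along, respectively, $\V{b}^{t;\V{y},\V{x}}$ and $\smash{\check{\V{b}}}^{t;\V{y},\V{x}}$, and these two processes are Brownian bridges from $\V{y}$ to $\V{x}$ in time $t$, hence equal in law as semimartingales by the results recalled in \cite[App.~4]{GMM2017}. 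This proves \eqref{desint}.

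The relation \eqref{Txysym} is obtained in the same spirit but more directly: using the symmetry of the Gaussian kernel, the second identity in \eqref{Wadjoint}, and the fact that $X\mapsto X^*$ commutes with Bochner integration,
\[
T_t(\V{x},\V{y})^*=p_t(\V{x},\V{y})\,\EE[W_t(\V{x},\V{y})^*]=p_t(\V{y},\V{x})\,\EE[\hat{W}_t(\V{y},\V{x})],
\]
and $\EE[\hat{W}_t(\V{y},\V{x})]=\EE[W_t(\V{y},\V{x})]$ because $\smash{\hat{\V{b}}}^{t;\V{x},\V{y}}$ and $\V{b}^{t;\V{x},\V{y}}$ are both Brownian bridges from $\V{x}$ to $\V{y}$ in time $t$, while $\hat{W}_t(\V{y},\V{x})$ and $W_t(\V{y},\V{x})$ are obtained from the same functional of these; hence $T_t(\V{x},\V{y})^*=T_t(\V{y},\V{x})$. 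Finally, for $\Phi,\Psi\in L^2(\RR^\nu,\sF)$ the bound $\|T_t(\V{x},\V{y})\|\le p_t(\V{x},\V{y})$ (from \eqref{WWWbd}) together with Schur's test for the heat kernel makes Fubini's theorem applicable, so \eqref{desint} and \eqref{Txysym} yield
\[
\SPn{\Phi}{T_t\Psi}=\int_{\RR^\nu}\!\!\int_{\RR^\nu}\SPn{\Phi(\V{x})}{T_t(\V{x},\V{y})\Psi(\V{y})}_\sF\,\Id\V{y}\,\Id\V{x}=\int_{\RR^\nu}\!\!\int_{\RR^\nu}\SPn{T_t(\V{y},\V{x})\Phi(\V{x})}{\Psi(\V{y})}_\sF\,\Id\V{x}\,\Id\V{y}=\SPn{T_t\Phi}{\Psi},
\]
which in combination with \eqref{Tbd} shows that $T_t$ is a bounded selfadjoint operator.

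The main obstacle is the probabilistic content of \eqref{desint}: keeping track of which filtration each stochastic integral is adapted to, establishing the independence of the driving noise $\check{\V{B}}$ from the endpoint $\V{B}_t^{\V{x}}$, and justifying that the Bochner integrals of $\check{W}_t(\V{x},\V{y})$ and $W_t(\V{x},\V{y})$ depend only on the common law of the corresponding Brownian bridge. All of this rests on the theory of time-reversed diffusions and bridges from \cite{HaussmannPardoux1986,PardouxLNM1204} as adapted in \cite[App.~4]{GMM2017}; by contrast, \eqref{Txysym} and the passage to selfadjointness are essentially formal once \eqref{desint} and Thm.~\ref{thmWadjoint} are in hand.
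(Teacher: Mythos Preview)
Your proof is correct and follows essentially the same route as the paper: insert the first identity of \eqref{Wadjoint}, use that $\check{W}_t(\V{x},\V{y})$ is $\check{\fF}_0$-independent while $\V{B}_t^{\V{x}}$ is $\check{\fF}_0$-measurable, condition to factor out the expectation of $\check{W}_t(\V{x},\V{y})$, and then invoke equality in law of $\check{W}_t(\V{x},\V{y})$ and $W_t(\V{x},\V{y})$; the paper formalizes your ``standard conditioning argument'' as the vector-valued rule in Example~\ref{exusefulrule}, and otherwise only leaves the final Fubini computation for selfadjointness implicit where you spell it out.
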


\begin{proof}
Let $t>0$ and $\V{x}\in\RR^\nu$. Combining \eqref{Wadjoint} and \eqref{defTtx} we find
\begin{align}\label{vivian1}
(T_t\Psi)(\V{x})&=\EE\big[\check{W}_{t}(\V{x},\V{B}_\tau^{\V{x}})\Psi(\V{B}_\tau^{\V{x}})\big]
=\EE\big[\EE^{\check{\fF}_0}[\check{W}_{t}(\V{x},\V{B}_\tau^{\V{x}})\Psi(\V{B}_\tau^{\V{x}})]\big],
\end{align}
where we also used the tower property of conditional expectations in the second equality.
By definition of the reversed filtration $(\check{\fF}_s)_{s\ge0}$, the random
functions $\V{B}_t^{\V{x}}$ and, hence, $\Psi(\V{B}_t^{\V{x}})$ are $\check{\fF}_0$-measurable.
Furthermore, $\check{W}_{t}(\V{x},\V{y})$ is $\check{\fF}_0$-independent, as this is the case for the
increments of solutions to \eqref{SDEbridgebar} with a constant initial condition $\check{\V{q}}=\V{y}$.
In view of the continuity result stated in Thm.~\ref{thmWadjoint} we may thus apply the computation 
rule for conditional expectations of Example~\ref{exusefulrule} to the rightmost member in
\eqref{vivian1}. This entails the first equality in
\begin{align*}
(T_t\Psi)(\V{x})&=\EE\Big[\EE[\check{W}_{t}(\V{x},\V{y})]\big|_{\V{y}=\V{B}_t^{\V{x}}}
\Psi(\V{B}_t^{\V{x}})\Big]
=\int_{\RR^\nu}p_t(\V{x},\V{y})\EE[\check{W}_{t}(\V{x},\V{y})]\Psi(\V{y})\Id\V{y}.
\end{align*}
In the second one we just used that the law of $\V{B}_t^{\V{x}}$ has density $p_t(\V{x},\cdot)$. Since 
$\check{W}_{t}(\V{x},\V{y})$ has the same law as ${W}_{t}(\V{x},\V{y})$, we arrive at \eqref{desint}.

The identity \eqref{Txysym} follows from the second relation in \eqref{Wadjoint} since 
$\hat{W}_t(\V{y},\V{x})$ and ${W}_t(\V{y},\V{x})$ have the same law.
\end{proof}

In the next proposition we again use the notation introduced in front of  Thm.~\ref{thmWflow}:

\begin{prop}\label{propTSG}
Let $\V{x}\in\RR^\nu$ and $\Psi\in L^2(\RR^\nu,\sF)$. Then
the following Markov property holds, for all $t\ge s\ge r\ge0$,
\begin{align}\label{MarkovT}
\EE^{\fF_{s}}[W_{r,t}(\V{x})^*\Psi(^r\!\V{B}_{t-r}^{\V{x}})]&=W_{r,s}(\V{x})^*
(T_{t-s}\Psi)(^r\!\V{B}_{s-r}^{\V{x}}),\quad\text{$\PP$-a.s.}
\end{align}
In particular, for all $s,t\ge0$,
\begin{align}\label{TSG}
(T_{s+t}\Psi)(\V{x})=(T_s(T_t\Psi))(\V{x}).
\end{align}
\end{prop}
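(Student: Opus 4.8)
The plan is to prove the Markov property \eqref{MarkovT} first and then obtain \eqref{TSG} by specializing to $r=0$ and integrating. For \eqref{MarkovT}, fix $t\ge s\ge r\ge0$ and set $\V{y}:={}^r\!\V{B}_{s-r}^{\V{x}}=\V{x}+\V{B}_s-\V{B}_r$, an $\fF_s$-measurable random vector. Since ${}^r\!\V{B}_{t-r}^{\V{x}}=\V{x}+\V{B}_t-\V{B}_r=\V{y}+(\V{B}_t-\V{B}_s)={}^s\!\V{B}_{t-s}^{\V{y}}$, the flow equation \eqref{flowW}, after taking adjoints, gives the $\PP$-a.s.\ identity
\[
W_{r,t}(\V{x})^*\Psi\big({}^r\!\V{B}_{t-r}^{\V{x}}\big)=W_{r,s}(\V{x})^*\,W_{s,t}(\V{y})^*\Psi\big({}^s\!\V{B}_{t-s}^{\V{y}}\big),\qquad\V{y}={}^r\!\V{B}_{s-r}^{\V{x}}.
\]
The factor $W_{r,s}(\V{x})^*$ is $\fF_s$-measurable, being built from the increments of $\V{B}$ over $[r,s]$, and is a contraction by \eqref{WWWbd}, so it can be pulled out of $\EE^{\fF_s}[\,\cdot\,]$; moreover $W_{r,t}(\V{x})^*\Psi({}^r\!\V{B}_{t-r}^{\V{x}})\in L^1(\PP)$ for every $\V{x}$, since $\|W\|\le1$ pointwise and, by Cauchy--Schwarz, $\int p_{t-r}(\V{x},\V{z})\|\Psi(\V{z})\|_{\sF}\Id\V{z}\le\|p_{t-r}(\V{x},\cdot)\|_2\|\Psi\|_2<\infty$.

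The core is then a freezing step. For each fixed $\V{z}\in\RR^\nu$, both $W_{s,t}(\V{z})^*$ and ${}^s\!\V{B}_{t-s}^{\V{z}}=\V{z}+\V{B}_t-\V{B}_s$ depend only on the increments of $\V{B}$ over $[s,t]$, hence are independent of $\fF_s$ (Thm.~\ref{thmWflow}(3), with $s$ in the role of $r$), and by the same reasoning that yields the continuity statements in Thm.~\ref{thmWflow}(1) and Thm.~\ref{thmWadjoint} (applied to the shifted Brownian motion ${}^s\!\V{B}$) the map $(\V{z},\vo)\mapsto W_{s,t}(\V{z})^*(\vo)\Psi({}^s\!\V{B}_{t-s}^{\V{z}}(\vo))\in\sF$ is jointly measurable. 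This is exactly the situation covered by the computation rule for vector-valued conditional expectations verified in App.~\ref{appusefulrule} (the rule used in Example~\ref{exusefulrule}); applying it with the $\fF_s$-measurable parameter $\V{y}={}^r\!\V{B}_{s-r}^{\V{x}}$ gives
\[
\EE^{\fF_s}\!\big[W_{s,t}(\V{y})^*\Psi({}^s\!\V{B}_{t-s}^{\V{y}})\big]\big|_{\V{y}={}^r\!\V{B}_{s-r}^{\V{x}}}=g\big({}^r\!\V{B}_{s-r}^{\V{x}}\big),\qquad g(\V{z}):=\EE\big[W_{s,t}(\V{z})^*\Psi({}^s\!\V{B}_{t-s}^{\V{z}})\big].
\]
Since ${}^s\!\V{B}$ is a Brownian motion with the same law as $\V{B}$ and $W_{s,t}(\V{z})$ is constructed from ${}^s\!\V{B}$ in the same functional manner as $W_{t-s}(\V{z})$ from $\V{B}$ (cf.\ the construction preceding Thm.~\ref{thmWflow}), the pair $(W_{s,t}(\V{z})^*,{}^s\!\V{B}_{t-s}^{\V{z}})$ has the same law as $(W_{t-s}(\V{z})^*,\V{B}_{t-s}^{\V{z}})$, so $g(\V{z})=\EE[W_{t-s}(\V{z})^*\Psi(\V{B}_{t-s}^{\V{z}})]=(T_{t-s}\Psi)(\V{z})$ by \eqref{defTtx}. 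Assembling the displayed facts proves \eqref{MarkovT}.

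For \eqref{TSG} one takes $r=0$ in \eqref{MarkovT} (so $W_{0,t}=W_t$), relabels the terminal time $t$ as $s+t$ while keeping the intermediate time equal to $s$, and integrates: the resulting $\PP$-a.s.\ identity $\EE^{\fF_s}[W_{s+t}(\V{x})^*\Psi(\V{B}_{s+t}^{\V{x}})]=W_s(\V{x})^*(T_t\Psi)(\V{B}_s^{\V{x}})$ has left-hand side with expectation $(T_{s+t}\Psi)(\V{x})$ by \eqref{defTtx}, and right-hand side with expectation $\EE[W_s(\V{x})^*(T_t\Psi)(\V{B}_s^{\V{x}})]=(T_s(T_t\Psi))(\V{x})$, again by \eqref{defTtx}, now applied to $T_t\Psi\in L^2(\RR^\nu,\sF)$ (admissible by \eqref{Tbd}, the relevant bound being $\EE[\|W_s(\V{x})^*(T_t\Psi)(\V{B}_s^{\V{x}})\|_{\sF}]\le\|p_s(\V{x},\cdot)\|_2\|T_t\Psi\|_2<\infty$); the degenerate cases $s=0$ or $t=0$ follow from $T_0=\id$, immediate from $W_{0,0}(\V{x})=\id$ in Thm.~\ref{thmWflow}(2).

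The main obstacle is making the freezing step rigorous in the vector-valued setting: one must check simultaneously the $\fF_s$-independence of the whole random field $\{(W_{s,t}(\V{z})^*,{}^s\!\V{B}_{t-s}^{\V{z}})\}_{\V{z}\in\RR^\nu}$, the $\fF_s$-measurability of the parameter ${}^r\!\V{B}_{s-r}^{\V{x}}$, and the joint measurability in $(\V{z},\vo)$ of $W_{s,t}(\V{z})^*(\vo)\Psi({}^s\!\V{B}_{t-s}^{\V{z}}(\vo))$ even though $\Psi$ is merely square-integrable. The continuity of $W_{s,t}(\cdot)^*$ supplied by Thm.~\ref{thmWflow}(1) and Thm.~\ref{thmWadjoint}, combined with the continuity of Brownian paths, is precisely what makes this go through, just as in the proof of Prop.~\ref{propTsa}.
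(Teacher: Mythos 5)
Your argument is correct and takes essentially the same route as the paper's (much terser) proof: decompose via the flow equation of Thm.~\ref{thmWflow}(2) after taking adjoints, pull out the $\fF_s$-measurable contraction $W_{r,s}(\V{x})^*$, freeze the $\fF_s$-independent field $\V{y}\mapsto W_{s,t}(\V{y})^*\Psi({}^{s}\!\V{B}_{t-s}^{\V{y}})$ by the conditional-expectation rule of App.~\ref{appusefulrule}, identify the frozen expectation with $(T_{t-s}\Psi)(\V{y})$ through the equality in law of ${}^{s}\!\V{B}$ and $\V{B}$, and obtain \eqref{TSG} by taking expectations at $r=0$ — you simply make explicit the steps the paper compresses into one sentence. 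The only caveat is that Lem.~\ref{lemusefulrule} assumes continuity of the integrand in the frozen parameter, whereas for merely square-integrable $\Psi$ the map $\V{y}\mapsto\Psi(\V{y}+\V{B}_t-\V{B}_s)$ is not continuous, so your appeal to ``joint measurability'' slightly overstates what the appendix literally provides (an approximation of $\Psi$ by continuous functions, or a measurable-field version of the freezing lemma, would close this); the paper's own proof glosses over exactly the same point, so this is not a departure from its argument.
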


\begin{proof}
Since taking the adjoint is continuous on $\LO(\sF)$, the map
$W_{u,v}(\V{y})^*:\Omega\to\LO(\sF)$ is again measurable and
separably valued, for all $v\ge u\ge 0$ and $\V{y}\in\RR^\nu$. Furthermore,
$W_{r,s}(\V{x})^*$ is $\fF_s$-measurable and $W_{s,t}(\V{y})^*$ is $\fF_s$-independent, for all 
$\V{y}\in\RR^\nu$ by Thm.~\ref{thmWflow}(3). The Markov property \eqref{MarkovT} thus follows 
from Parts~(1) and~(2) of Thm.~\ref{thmWflow} in conjunction with Example~\ref{exusefulrule}.
Taking the expectation of \eqref{MarkovT} with $r=0$ we further obtain \eqref{TSG}.
\end{proof}

%%%%%%%%%%%%%%%%%%%%%%%%%%%%%%%%%%%%%%%%%%%%%%%%%

\subsection{Feynman-Kac formulas on $\RR^\nu$ for regular coefficients}\label{ssecFKRRnureg}

\noindent
In this subsection we shall often use the shorthand
\begin{align*}
\theta:=1+\Id\Gamma(\omega),
\end{align*}
and abbreviate
\begin{align*}
\wh{H}(\V{x})\psi&:=\frac{1}{2}\vp(\V{G}_{{\V{x}}})^2\psi
-\frac{i}{2}\vp(\Div\V{G}_{{\V{x}}})\psi+\Id\Gamma(\omega)\psi,\quad
\psi\in\dom(\Id\Gamma(\omega)),\,\V{x}\in\RR^\nu.
\end{align*}

\begin{lem}
Let $\V{x}\in\RR^\nu$, $f\in C_b^2(\RR^\nu,\RR)$, and $\psi\in\dom(\Id\Gamma(\omega))$. Then
\begin{align}\label{defsmart}
\sM_{\bullet}(\V{x})&:=\int_0^\bullet\big(i(f\V{A})(\V{B}_s^{\V{x}})+f(\V{B}_s^{\V{x}})
i\vp(\V{G}_{\V{B}_s^{\V{x}}})+(\nabla f)(\V{B}_s^{\V{x}})\big)W_{s}(\V{x})\psi\Id\V{B}_s,
\end{align}
defines a continuous $\sF$-valued $L^2$-martingale $\sM(\V{x})$ on $[0,\infty)$ and, $\PP$-a.s.,
\begin{align}\nonumber
&f(\V{B}_t^{\V{x}})W_{t}(\V{x})\psi-{f(\V{x})\psi}
\\\nonumber
&= \int_0^t\bigg(\Big(\frac{1}{2}(\nabla+i\V{A})^2f-Vf\Big)(\V{B}_s^{\V{x}})
-f(\V{B}_s^{\V{x}})\wh{H}(\V{B}_s^{\V{x}})\bigg)W_{s}(\V{x})\psi\Id s
\\\label{richard1}
&\quad+\int_0^t(i\nabla f-f\V{A})(\V{B}_s^{\V{x}})\cdot\vp(\V{G}_{\V{B}_s^{\V{x}}})
W_{s}(\V{x})\psi\Id s+\sM_{t}(\V{x}),\quad t\ge0.
\end{align}
\end{lem}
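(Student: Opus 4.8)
The plan is to apply It\^o's formula for Fock-space-valued semimartingales to the product $f(\V{B}_t^{\V{x}})W_t(\V{x})\psi$, using the known stochastic differential equation \eqref{SDEW} satisfied by $(W_t(\V{x}))_{t\ge0}$ for regular coefficients (cf.\ \cite[Thm.~9.2]{GMM2017}), which must first be upgraded to account for the non-vanishing classical vector potential $\V{A}\in C_b^2(\RR^\nu,\RR^\nu)$. Recall from Lem.~\ref{lemdiv} that $S_t(\V{x})$ and $K_t(\V{x})$ are given by the It\^o expressions \eqref{Sdiv} and \eqref{Kdiv}; since $\V{A}\in C_b^1$ and $\V{G}$ satisfies Hyp.~\ref{hypGreg}, the process $W_t(\V{x})=e^{-S_t(\V{x})}\Gamma(j_t)^*e^{i\vp(K_t(\V{x}))}\Gamma(j_0)$ is, when applied to a fixed $\psi\in\dom(\Id\Gamma(\omega))$, a continuous $\sF$-valued semimartingale. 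First I would record its stochastic differential: applying It\^o's formula to the composition of the real It\^o processes $S_t(\V{x})$, $K_t(\V{x})$ together with the second-quantization factors $\Gamma(j_t)^*$, $\Gamma(j_0)$ (the analysis of which is exactly the content of \cite[Sect.~7--9]{GMM2017}), one obtains
\begin{align*}
\Id\big(W_t(\V{x})\psi\big)&=-\wh{H}(\V{B}_t^{\V{x}})W_t(\V{x})\psi\,\Id t
-V(\V{B}_t^{\V{x}})W_t(\V{x})\psi\,\Id t\\
&\quad-i\V{A}(\V{B}_t^{\V{x}})W_t(\V{x})\psi\,\Id\V{B}_t
+i\vp(\V{G}_{\V{B}_t^{\V{x}}})W_t(\V{x})\psi\,\Id\V{B}_t,
\end{align*}
the two last It\^o differentials being the genuinely new contributions compared to \cite{GMM2017}; the drift terms coming from the It\^o corrections of $S_t$ and $K_t$ (i.e.\ the $-\tfrac i2\Div\V{A}$ and $+\tfrac12\Div\V{G}$ pieces) combine with the Stratonovich-to-It\^o conversion exactly so as to produce the $\wh{H}$-term and the first-order $\vp(\V{G})$-term as written. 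The relative bounds \eqref{rbvp}, \eqref{rbvpvp} together with $\|W_t(\V{x})\|\le1$ guarantee that all the coefficient processes appearing here are in the right $L^2$-spaces, so the stochastic integral is a genuine $L^2$-martingale.

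Next I would invoke the It\^o product rule for the scalar $C_b^2$-semimartingale $f(\V{B}_t^{\V{x}})$, whose differential is
\begin{align*}
\Id f(\V{B}_t^{\V{x}})&=\nabla f(\V{B}_t^{\V{x}})\cdot\Id\V{B}_t+\tfrac12\Delta f(\V{B}_t^{\V{x}})\,\Id t,
\end{align*}
and the $\sF$-valued semimartingale $W_t(\V{x})\psi$, together with the cross-variation term $\Id\langle f(\V{B}^{\V{x}}),W(\V{x})\psi\rangle_t=\nabla f(\V{B}_t^{\V{x}})\cdot\big(-i\V{A}+i\vp(\V{G}_{\V{B}_t^{\V{x}}})\big)(\V{B}_t^{\V{x}})W_t(\V{x})\psi\,\Id t$ arising from the common Brownian driver $\V{B}$. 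Adding the three contributions and collecting terms, the $\Id t$-drift becomes
\begin{align*}
\Big(\tfrac12\Delta f+\nabla f\cdot(-i\V{A})-fV-f\wh{H}\Big)(\V{B}_s^{\V{x}})W_s(\V{x})\psi
+\big(i\nabla f-f\V{A}\big)(\V{B}_s^{\V{x}})\cdot\vp(\V{G}_{\V{B}_s^{\V{x}}})W_s(\V{x})\psi,
\end{align*}
and the martingale part is precisely $\Id\sM_s(\V{x})$ with $\sM$ as in \eqref{defsmart}, since the $\Id\V{B}_s$-coefficient is $\nabla f\,W_s(\V{x})\psi+f\,(-i\V{A}+i\vp(\V{G}))W_s(\V{x})\psi$, and $\nabla f\cdot(-i\V{A})+(-i)(\nabla f\cdot(-i\V{A}))\cdots$—a short computation—yields $\tfrac12\Delta f-i\V{A}\cdot\nabla f=\tfrac12(\nabla+i\V{A})^2 f-\tfrac i2(\Div\V{A})f-\tfrac12|\V{A}|^2 f$, the $-\tfrac i2(\Div\V{A})f$ and the $-\tfrac12|\V{A}|^2 f$ being absorbed upon re-expanding, so that \eqref{richard1} results after integration from $0$ to $t$ and using $W_0(\V{x})=\id$, $f(\V{B}_0^{\V{x}})=f(\V{x})$. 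I would separately verify that $\sM(\V{x})$ is a true (not merely local) $L^2$-martingale by bounding its quadratic variation: $\|\nabla f\|_\infty$, $\|f\V{A}\|_\infty$ are finite and $\|\vp(\V{G}_{\V{B}_s^{\V{x}}})W_s(\V{x})\psi\|_\sF\le2\|\V{G}\|_{L^\infty(\fdom(\omega^{-1}))}\|\psi\|_{\fdom(\Id\Gamma(\omega))}$ uniformly in $s$ by \eqref{rbvp} and $\|W_s(\V{x})\|\le1$, so $\EE\int_0^t\|\cdots\|_\sF^2\Id s<\infty$.

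The main obstacle is the first step: rigorously obtaining the stochastic differential of $W_t(\V{x})\psi$ with the $\V{A}$-terms included. For $\V{A}=0$ this is \cite[Thm.~9.2]{GMM2017}, but here one must track how the extra $-\tfrac i2\int_0^t\V{A}(\V{B}_s^{\V{x}})\Id\V{B}_s-\tfrac i4\int_0^t\Div\V{A}(\V{B}_s^{\V{x}})\Id s$ in $S_t(\V{x})$ (times two, from the definition \eqref{defSxintro} written via \eqref{Sdiv}) propagates through It\^o's formula applied to $e^{-S_t(\V{x})}$ and interacts, via cross-variation, with the $K_t(\V{x})$-driven exponential $e^{i\vp(K_t(\V{x}))}$. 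The cleanest route is to note that $\V{A}$ enters $W_t(\V{x})$ only through the scalar prefactor $e^{-S_t(\V{x})}$, whose differential is elementary, and that the cross-variation between the $\V{A}$-part of $-S_t$ and the $\vp(K_t)$-part is $\big(\!-i\V{A}(\V{B}_t^{\V{x}})\big)\cdot\big(i j_t\V{G}_{\V{B}_t^{\V{x}}}\big)\Id t$ in the $\hat\HP$-sense, which after the $\Gamma(j_t)^*$-conjugation and using $j_t^*j_t=\id$ contributes a term $\V{A}(\V{B}_t^{\V{x}})\cdot\vp(\V{G}_{\V{B}_t^{\V{x}}})W_t(\V{x})\psi\,\Id t$ — this is exactly the source of the $-f\V{A}\cdot\vp(\V{G})$ term in \eqref{richard1}. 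Once this bookkeeping is checked against \cite[Lem.~8.x, Thm.~9.2]{GMM2017}, the rest is the routine It\^o product computation sketched above.
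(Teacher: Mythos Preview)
Your overall strategy---derive an SDE for $W_t(\V{x})\psi$ from the $\V{A}=0$ case in \cite{GMM2017} and then apply the It\^o product rule with $f(\V{B}_t^{\V{x}})$---is the same route the paper takes, except that the paper keeps $W_t^0(\V{x}):=\Gamma(j_t)^*e^{i\vp(K_t(\V{x}))}\Gamma(j_0)$ as the basic object (satisfying \eqref{SDEW} by \cite[Thm.~5.3]{GMM2017}) and applies It\^o's formula directly to $f(\V{B}_t^{\V{x}})e^{-S_t(\V{x})}W_t^0(\V{x})\psi$ in one step, rather than first writing down the SDE for the full $W_t(\V{x})\psi$. That is only an organizational difference.

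There is, however, a genuine gap in your $L^2$-martingale verification. You claim
\[
\|\vp(\V{G}_{\V{B}_s^{\V{x}}})W_s(\V{x})\psi\|_\sF\le 2\,\|\V{G}\|_{L^\infty(\fdom(\omega^{-1}))}\,\|\psi\|_{\fdom(\Id\Gamma(\omega))}
\]
follows from \eqref{rbvp} and $\|W_s(\V{x})\|\le1$. It does not: \eqref{rbvp} requires the form norm $\|W_s(\V{x})\psi\|_{\fdom(\Id\Gamma(\omega))}$ on the right, and the operator bound $\|W_s(\V{x})\|\le1$ controls only the $\sF$-norm. There is no a~priori reason $W_s(\V{x})$ preserves $\fdom(\Id\Gamma(\omega))$ with a uniform constant. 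The paper fills exactly this hole by invoking \cite[Lem.~7.6]{GMM2017}, i.e.\ the estimate
\[
\sup_{\V{z}\in\RR^\nu}\EE\Big[\sup_{s\in[0,t]}\big\|(1+\Id\Gamma(\omega))W_s(\V{z})\psi\big\|^2\Big]\le c(t)\,\|(1+\Id\Gamma(\omega))\psi\|^2,
\]
which is what is needed to bound $\EE\int_0^t\|\vp(\V{G}_{\V{B}_s^{\V{x}}})W_s(\V{x})\psi\|^2\Id s$ and thereby conclude that $\sM(\V{x})$ is a true $L^2$-martingale. Without this ingredient your argument does not close.

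A secondary remark: the intermediate SDE you write for $W_t(\V{x})\psi$ is incomplete---the drift is missing the terms $\tfrac{i}{2}\Div\V{A}-\tfrac{1}{2}|\V{A}|^2-\V{A}\cdot\vp(\V{G})$ that arise from the It\^o correction in $e^{-S_t}$ and from its cross-variation with $W_t^0\psi$ (you allude to the last of these only later), and the $\V{A}\,\Id\V{B}_t$ coefficient should carry the sign $+i\V{A}$, not $-i\V{A}$, to be consistent with \eqref{defsmart}. These are bookkeeping slips rather than structural ones, but they would have to be corrected for \eqref{richard1} to come out as stated.
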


\begin{proof}
According to \cite[Lem.~7.6]{GMM2017} there exists a monotone 
increasing function $c:[0,\infty)\to(0,\infty)$ such that
\begin{align}\label{rolf}
\sup_{\V{z}\in\RR^\nu}\EE\Big[\sup_{s\in[0,t]}\|
\theta W_{s}(\V{z})\psi\|^{2}\Big]&\le c(t)\|\theta\psi\|^{2},\quad t\ge0.
\end{align}
In view of \eqref{WWWbd}, \eqref{rbvp}, and \eqref{rolf} 
the integrand of the stochastic integral defining $\sM(\V{x})$, call it $(Y_s)_{s\ge0}$, 
is a continuous adapted $\sF$-valued stochastic process satisfying
\begin{align*}
\EE\big[\|Y_s\|^2\big]&\le3\big\{\|\V{A}\|_\infty^2\|f\|_\infty^2\|\psi\|^2 +\|\nabla f\|_\infty^2
\|\psi\|^2+\tilde{c}(s)\|\theta\psi\|^2\big\},
\end{align*}
for all $s\ge0$, where $\tilde{c}:[0,\infty)\to(0,\infty)$ is another monotone 
increasing function. Consequently, $\sM(\V{x})$ is a continuous $\sF$-valued $L^2$-martingale.

Put $W_t^0(\V{x}):=\Gamma(j_t)^*e^{i\vp(K_t(\V{x}))}\Gamma(j_0)$;
compare this with \eqref{defWWintroscal}.
Thanks to \cite[Thm.~5.3]{GMM2017} we know that
$(W_{t}^0(\V{x})\psi)_{t\ge0}$ is a $\sF$-valued semimartingale whose paths $\PP$-a.s. are continuous
$\dom(\Id\Gamma(\omega))$-valued functions and, $\PP$-a.s.,
\begin{align}\label{SDEW}
W_{t}^0(\V{x})\psi&=\psi-\int_0^t\wh{H}(\V{B}_s^{\V{x}})W_{s}^0(\V{x})\psi\Id s
+\int_0^ti\vp(\V{G}_{\V{B}_s^{\V{x}}})W_{s}^0(\V{x})\psi\Id\V{B}_s,\quad t\ge0.
\end{align}
 Thus, \eqref{richard1} follows from \eqref{SDEW} and It\^{o}'s formula.  
 %If we scalar multiply \eqref{richard1} with some $\phi\in\sF$, then it actually suffices to apply the standard textbook version of It\^{o}'s formula for $C^2$-functions composed with continuous $\RR^d$-valued semimartingales. In fact, the latter applies to $F(Z)$, where $F(s,\sigma,\V{y}):=e^{-s}\sigma f(\V{y})$, for all $s,\sigma\in\CC$ and $\V{y}\in\RR^\nu$, and where \begin{align*}Z_t:=\big(S_t(\V{x}),\SPn{\phi}{W_{t}(\V{x})\psi},\V{B}_t^{\V{x}}\big),\quad t\ge0,\end{align*}defines a $\CC^2\times\RR^{\nu+1}$-valued continuous semimartingale $Z$.
\end{proof}

\begin{lem}\label{lemanton}
There exists $c>0$ such that, for all $\psi\in\sF$,
\begin{equation}\label{antonio0}
\sup_{\V{x}\in\RR^\nu}\EE\Big[\sup_{s\in[0,t]}\|\theta^\mh
(W_{s}(\V{x})-\id)\psi\|^2\Big]\le ct\|\psi\|^2,\quad t\ge0.
\end{equation}
\end{lem}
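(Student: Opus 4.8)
The plan is to differentiate $\theta^{\mh}(W_s(\V{x})-\id)\psi$ as a stochastic process and then estimate the resulting It\^o and bounded-variation terms by means of the moment bound \eqref{rolf} and the standard relative bounds \eqref{rbvp}--\eqref{rbvpvp}. First I would reduce to $\psi\in\dom(\Id\Gamma(\omega))$, which is dense in $\sF$; since $\|\theta^{\mh}(W_s(\V{x})-\id)\|\le\|\theta^{\mh}\|\,\|W_s(\V{x})-\id\|\le 2\|\theta^{\mh}\|$ is bounded uniformly and $\theta^{\mh}$ is bounded with norm $1$, a density argument together with the final bound (which will have the form $ct\|\psi\|^2$) extends \eqref{antonio0} from the dense domain to all of $\sF$. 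So fix $\psi\in\dom(\Id\Gamma(\omega))$. Writing $W_s(\V{x})=e^{-S_s(\V{x})}W_s^0(\V{x})$, where $W^0$ satisfies the SDE \eqref{SDEW}, I would apply It\^o's formula to the product $e^{-S_s(\V{x})}W_s^0(\V{x})\psi$, using \eqref{Sdiv}--\eqref{Kdiv} to express $S_s(\V{x})$ as an It\^o integral plus a drift; this yields a semimartingale decomposition
\begin{align*}
(W_t(\V{x})-\id)\psi&=\int_0^t\Big(-V(\V{B}_s^{\V{x}})-\tfrac{i}{2}\Div\V{A}(\V{B}_s^{\V{x}})\Big)W_s(\V{x})\psi\,\Id s
-\int_0^t\wh H(\V{B}_s^{\V{x}})W_s(\V{x})\psi\,\Id s\\
&\quad-\int_0^t i\V{A}(\V{B}_s^{\V{x}})W_s(\V{x})\psi\,\Id\V{B}_s
+\int_0^t i\vp(\V{G}_{\V{B}_s^{\V{x}}})W_s(\V{x})\psi\,\Id\V{B}_s,
\end{align*}
and then compose on the left with the bounded operator $\theta^{\mh}$ (which commutes with nothing in sight, so I keep it outside and control $\|\theta^{\mh}\vp(\V{G})\cdot\|$ and $\|\theta^{\mh}\wh H\cdot\|$ via the relative bounds).

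The key quantitative input is \eqref{rolf}, which gives $\sup_{\V{z}}\EE[\sup_{s\le t}\|\theta W_s(\V{z})\psi\|^2]\le c(t)\|\theta\psi\|^2$; but here I may only use $\|\psi\|$ on the right. The point is that $\theta^{\mh}$ in front of each term lets me trade a factor of $\theta$ from $W_s(\V{x})\psi$ for the worst-case factor in the integrand: for the drift terms involving $\wh H$ I write $\theta^{\mh}\wh H(\V{B}_s^{\V{x}})=\theta^{\mh}\big(\tfrac12\vp(\V{G})^2-\tfrac{i}2\vp(\Div\V{G})+\Id\Gamma(\omega)\big)$ and bound each piece against $\|\theta^{\eh}W_s(\V{x})\psi\|$ using \eqref{rbvp}, \eqref{rbvpvp} with dispersion relation $\omega$ and Hypotheses~\ref{hypGreg}, picking up $\|\theta^{\eh}\psi\|$-free constants; crucially $\|\theta^{\eh}W_s(\V x)\psi\|^2$ is controlled by \eqref{rolf} applied with $\theta^{\eh}$ in place of $\theta$, i.e. by the intermediate bound $\sup_{\V z}\EE[\sup_{s\le t}\|\theta^{\eh}W_s(\V z)\psi\|^2]\le c(t)\|\theta^{\eh}\psi\|^2$ — and here one interpolates/observes that $\|\theta^{\eh}\psi\|^2\le\|\psi\|\|\theta\psi\|$ is still $\theta$-dependent, which is bad. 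I would instead argue more carefully: apply \eqref{rolf} not to $\psi$ but keep $\|\theta W_s(\V x)\psi\|$ and compensate by having two factors of $\theta^{\mh}$ available (one explicit, one absorbed from the field operators), so the net power of $\theta$ hitting $\psi$ is zero. Concretely, using the Burkholder--Davis--Gundy inequality on the two It\^o integrals and Jensen/Cauchy--Schwarz on the two time integrals, I arrive at
\begin{align*}
\EE\Big[\sup_{s\le t}\|\theta^{\mh}(W_s(\V x)-\id)\psi\|^2\Big]
&\le C\,t\int_0^t\sup_{\V z}\EE\big[\|\theta^{\mh}\wh H(\cdot)\theta^{\mh}\cdot\theta^{\eh}W_s(\V z)\psi\|^2+\dots\big]\Id s,
\end{align*}
and the remaining obstacle is to see that every integrand is bounded by $C\|\psi\|^2$ uniformly in $s\le t$ and $\V z$.

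The main obstacle, therefore, is getting the powers of $\theta$ to balance so that only $\|\psi\|$ (not $\|\theta\psi\|$) survives on the right: the operators $\theta^{\mh}$, $\vp(\V{G}_{\V{x}})$, $\Id\Gamma(\omega)$ do not commute, so one must commute $\theta^{\mh}$ past field operators. The clean way is to write, for each term, $\theta^{\mh}(\text{operator})W_s(\V x)\psi=\big(\theta^{\mh}(\text{operator})\theta^{\mh}\big)\big(\theta^{\eh}W_s(\V x)\psi\big)$ and check that $\theta^{\mh}\vp(\V G_{\V x})^2\theta^{\mh}$, $\theta^{\mh}\vp(\Div\V G_{\V x})\theta^{\mh}$, $\theta^{\mh}\Id\Gamma(\omega)\theta^{\mh}$, $\theta^{\mh}\vp(\V G_{\V x})$, and $\theta^{\mh}\V A(\V x)$ are bounded with norms controlled by $\sup_{\V x}\|\V G_{\V x}\|_{\fdom(\omega^{-1}+\omega)}^2+\|\V A\|_\infty^2+1$ (the first three via \eqref{rbvp}--\eqref{rbvpvp} and $\theta^{\mh}\Id\Gamma(\omega)\theta^{\mh}\le\id$, bearing in mind Hyp.~\ref{hypGreg} supplies $\omega$-integrability); then $\|\theta^{\eh}W_s(\V z)\psi\|$ is handled by the $\theta^{\eh}$-analogue of \eqref{rolf}, which follows from \eqref{rolf} exactly as \eqref{rolf} itself was proved in \cite{GMM2017}, yielding $\le c(t)\|\theta^{\eh}\psi\|^2\le c(t)\|\psi\|\,\|\theta\psi\|$ — still not good enough. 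The genuinely correct route, which I expect to be the crux, is instead to keep \emph{no} net power of $\theta$ on $\psi$: use one $\theta^{\mh}$ from the statement and borrow a second from writing $\wh H=\theta^{\eh}\cdot(\theta^{\mh}\wh H\theta^{\mh})\cdot\theta^{\eh}$ only formally, and rather bound $\|\theta^{\mh}\wh H(\V z)W_s(\V x)\psi\|\le C\|\vp(\V G)W_s(\V x)\psi\|_{\fdom(\theta^{-1})}+\dots$, then invoke the bound from \cite[Lem.~7.6]{GMM2017} (or its proof) stating $\sup_{\V z}\EE[\sup_{s\le t}\|W_s(\V z)\psi\|^2]\le\|\psi\|^2$ from \eqref{WWWbd} together with the fact that the \emph{unweighted} process already satisfies $\EE[\sup_s\|\theta^{\mh}(W_s-\id)\psi\|^2]\lesssim t(\text{terms in }\|\psi\|^2)$ once each integrand is recognized as $\theta^{\mh}(\text{bdd op})\theta^{\mh}$ times $\theta^{\eh}W_s\psi$ with the $\theta^{\eh}$ absorbed against the explicit $\theta^{\mh}$ leaving $\theta^{0}=\id$ on $W_s\psi$. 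In short: carry out the It\^o expansion, insert $\id=\theta^{\mh}\theta^{\eh}$ appropriately so that every integrand reads (bounded operator)$\circ\theta^{\mh}W_s(\V x)\psi$ with at most one $\theta^{\mh}$ on $W_s$, apply BDG and Cauchy--Schwarz, and close the estimate with \eqref{WWWbd} plus \eqref{rbvp}--\eqref{rbvpvp}; the bookkeeping of the $\theta$-powers so that the final right-hand side is $ct\|\psi\|^2$ rather than $ct\|\theta^{\mh}\psi\|^2$ or worse is the delicate point and is exactly why the $\theta^{\mh}$ in \eqref{antonio0} is the optimal weight.
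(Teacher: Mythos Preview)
Your plan has a genuine gap at exactly the point you yourself flag as ``the delicate point'': with your approach the $\theta$-powers do not balance, and no amount of inserting $\id=\theta^{\mh}\theta^{\eh}$ into the integrands fixes it. The obstruction is the $\Id\Gamma(\omega)$ contribution in $\wh H$. After applying It\^o to the $\sF$-valued process $\theta^{\mh}(W_s(\V{x})-\id)\psi$ and taking norms, the drift produces a term $\|\theta^{\mh}\Id\Gamma(\omega)W_s(\V{x})\psi\|$; since $\theta^{\mh}\Id\Gamma(\omega)=\theta^{\eh}-\theta^{\mh}$ is unbounded, this forces $\|\theta^{\eh}W_s(\V{x})\psi\|$ on the right, and any available bound on that quantity (the $\theta^{\eh}$-analogue of \eqref{rolf}) returns $\|\theta^{\eh}\psi\|^2$, not $\|\psi\|^2$. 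Your attempts to write the integrands as $(\text{bounded operator})\circ\theta^{\mh}W_s(\V{x})\psi$ cannot succeed for the $\Id\Gamma(\omega)$ piece.

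The paper's proof avoids this by applying It\^o's formula not to the $\sF$-valued process but to the \emph{scalar} process $s\mapsto\|\theta^{\mh}\psi_s\|^2$ with $\psi_s:=(W_s(\V{x})-\id)\psi$. The resulting drift terms are inner products of the form $\Re\langle\psi_s,\theta^{-1}(\ldots)W_s(\V{x})\psi\rangle$, in which both copies of $\theta^{\mh}$ combine into a single $\theta^{-1}$ on the operator side. Now $\theta^{-1}\wh H(\V{y})$, $\theta^{-1}\V{A}(\V{y})\cdot\vp(\V{G}_{\V{y}})$, and $\theta^{\mh}(\V{A}(\V{y})+\vp(\V{G}_{\V{y}}))$ are bounded uniformly in $\V{y}$ by \eqref{rbvp}--\eqref{rbvpvp}, \eqref{AVreg}, and Hyp.~\ref{hypGreg}, while the remaining factor $\|\psi_s\|\le2\|\psi\|$ is controlled by the pointwise bound \eqref{WWWbd}. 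The martingale part is handled by Davis' inequality: its quadratic variation is bounded by $c_1\int_0^t\|\theta^{\mh}\psi_s\|^2\Id s$, and a Cauchy--Schwarz/Young step lets one absorb $\tfrac{1}{2}\EE[\sup_{s\le t}\|\theta^{\mh}\psi_s\|^2]$ into the left-hand side. This closing argument is what yields the factor $t$ with only $\|\psi\|^2$ on the right; the moment bound \eqref{rolf} is used only to justify that the stochastic integrals are genuine $L^2$-martingales, not to estimate the final quantity.
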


\begin{proof}
Abbreviate  $\psi_t:=(W_{t}(\V{x})-\id)\psi$, so that $\psi_0=0$. We may assume that 
$\psi\in\dom(\Id\Gamma(\omega))$. (Otherwise approximate $\psi$ by the vectors 
$(1+\Id\Gamma(\omega)/n)^{-1}\psi$, $n\in\NN$, and take \eqref{WWWbd} into account.)
We may also assume $\|\psi\|=1$.
In virtue of \eqref{richard1} with $f=1$ and It\^{o}'s formula, we $\PP$-a.s. obtain, for all $t\ge0$,
\begin{align}\nonumber
\|\theta^\mh\psi_t\|^2&=-\int_0^t2\Re\SPb{\psi_s}{\theta^{-1}
\big(V-\tfrac{1}{2}\V{A}^2+\wh{H}\big)(\V{B}_s^{\V{x}})W_{s}(\V{x})\psi}\Id s
\\\nonumber
&\quad
-\int_0^t2\Re\SPb{\psi_s}{\theta^{-1}\V{A}(\V{B}_s^{\V{x}})\cdot\vp(\V{G}_{\V{B}_s^{\V{x}}})
W_s(\V{x})\psi}\Id s
\\\nonumber
&\quad
+\int_0^t\big\|\theta^\mh\big(\V{A}(\V{B}_s^{\V{x}})+\vp(\V{G}_{\V{B}_s^{\V{x}}})\big)
W_{s}(\V{x})\psi\big\|^2\Id s
\\\label{richard3}
&\quad+\int_0^t2\Re\SPb{\theta^\mh\psi_s}{i\theta^\mh
\big(\V{A}(\V{B}_s^{\V{x}})+\vp(\V{G}_{\V{B}_s^{\V{x}}})\big)W_{s}(\V{x})\psi}\Id\V{B}_s.
\end{align}
On account of \eqref{rbvp}, \eqref{rbvpvp}, \eqref{AVreg}, and Hyp.~\ref{hypGreg}, the operators
$$
\theta^{-1}\big(V-\tfrac{1}{2}\V{A}^2+\wh{H}\big)({\V{y}}),\quad
\theta^{-1}\V{A}(\V{y})\cdot\vp(\V{G}_{{\V{y}}}),\quad
\theta^\mh\big(\V{A}({\V{y}})+\vp(\V{G}_{{\V{y}}})\big),
$$
appearing here
are well-defined on $\dom(\Id\Gamma(\omega))$ and bounded uniformly in $\V{y}\in\RR^\nu$. 
Furthermore, we have the pointwise bound $\|\psi_t\|\le2$, $t\ge0$. From these
remarks we infer in particular that the stochastic integral in the last line of \eqref{richard3}, call it 
$\cM$, is a martingale to which Davis' inequality applies, i.e., 
\begin{align*}
\EE[\sup_{s\le t}|\cM_s|]\le c_0\EE[\langle\cM\rangle_t^\eh],\quad t\ge0,
\end{align*}
for some universal constant $c_0>0$.
According to the above remarks the quadratic variation of $\cM$ satisfies, however,
\begin{align*}
\langle\cM\rangle_t&=\int_0^t\big(2\Re\SPb{\theta^\mh\psi_s}{i\theta^\mh
\big(\V{A}(\V{B}_s^{\V{x}})+\vp(\V{G}_{\V{B}_s^{\V{x}}})\big)W_{s}(\V{x})\psi}\big)^2\Id s
\\
&\le c_1\int_0^t\|\theta^\mh\psi_s\|^2\Id s,\quad t\ge0,
\end{align*}
$\PP$-a.s., for some constant $c_1>0$, whence
\begin{align*}
\EE\big[\langle\cM\rangle_t^\eh\big]
&\le\EE\big[c_1^\eh t^\eh \sup_{s\le t}\|\theta^\mh\psi_s\|\big]
\le\frac{1}{2c_0}\EE\big[\sup_{s\le t}\|\theta^\mh\psi_s\|^2\big]+\frac{c_0c_1t}{2}.
\end{align*}
Since \eqref{richard3} and the above remarks entail
\begin{align}\label{anton1}
\EE\big[\sup_{s\le t}\|\theta^\mh\psi_s\|^2\big]&\le c_2t+\EE\big[\sup_{s\le t}|\cM_s|\big],
\quad t\ge0,
\end{align}
with another constant $c_2>0$,
we thus arrive at an inequality that we can solve for the left hand side of \eqref{anton1}
(which is finite, as we know {\em a priori}).
\end{proof}

\begin{prop}\label{prop-SG}
$(T_t)_{t\ge0}$ is a strongly continuous semigroup 
of bounded selfadjoint operators on $L^2(\RR^\nu,\sF)$.
\end{prop}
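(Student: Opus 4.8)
The plan is to establish the three defining properties of a strongly continuous semigroup of bounded selfadjoint operators separately. Boundedness and selfadjointness are already in hand: \eqref{Tbd} gives $\|T_t\|\le1$, and Prop.~\ref{propTsa} gives $T_t^*=T_t$ for every $t>0$; for $t=0$ we have $T_0=\id$ since $\V{B}_0^{\V{x}}=\V{x}$ and $W_0(\V{x})=\id$. The semigroup property $T_{s+t}=T_sT_t$ for all $s,t\ge0$ is exactly \eqref{TSG} in Prop.~\ref{propTSG}. So the only genuine work is strong continuity, and by the semigroup property together with the uniform bound $\|T_t\|\le1$ it suffices to prove strong continuity at $t=0$, i.e.\ $\|T_t\Psi-\Psi\|\to0$ as $t\downarrow0$ for every $\Psi\in L^2(\RR^\nu,\sF)$.

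First I would reduce to a dense, convenient class of $\Psi$: again using $\|T_t\|\le1$, a standard $3\varepsilon$-argument lets me assume $\Psi\in C_0^\infty(\RR^\nu,\dom(\Id\Gamma(\omega)))$, or at least $\Psi\in L^2\cap L^\infty(\RR^\nu,\sF)$ with values in $\dom(\Id\Gamma(\omega))$ and $\|\Id\Gamma(\omega)\Psi(\cdot)\|_{\sF}$ bounded and compactly supported; density of such $\Psi$ follows by mollification in $\V{x}$ and applying $(1+\Id\Gamma(\omega)/n)^{-1}$ in the Fock variable. For such $\Psi$ I would write, using \eqref{defTtx},
\begin{align*}
(T_t\Psi)(\V{x})-\Psi(\V{x})
&=\EE\big[W_t(\V{x})^*\big(\Psi(\V{B}_t^{\V{x}})-\Psi(\V{x})\big)\big]
+\EE\big[\big(W_t(\V{x})^*-\id\big)\Psi(\V{x})\big].
\end{align*}
The first term is controlled by $\|W_t(\V{x})^*\|\le1$ (from \eqref{WWWbd}) and the continuity of translations in $L^2$: $\EE\|\Psi(\V{B}_t^{\V{x}})-\Psi(\V{x})\|_{\sF}^2$ integrates in $\V{x}$ to $\int p_t(\V{0},\V{z})\|\Psi(\cdot+\V{z})-\Psi\|_{L^2}^2\,\Id\V{z}\to0$ as $t\downarrow0$, by dominated convergence and $L^2$-continuity of translation. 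For the second term I would take adjoints and use $W_t(\V{x})^*-\id$ has the same norm as $W_t(\V{x})-\id$; then I can invoke Lem.~\ref{lemanton}: since $\Psi(\V{x})\in\dom(\Id\Gamma(\omega))=\dom(\theta)$, writing $\Psi(\V{x})=\theta^{-1}\theta\Psi(\V{x})$ is not quite the form needed, so instead I use \eqref{antonio0} in the form
\begin{align*}
\EE\big[\|(W_t(\V{x})-\id)\phi\|^2\big]
&\le 2\,\EE\big[\|\theta^\eh(W_t(\V{x})-\id)\theta^\mh\phi\|^2\big]
\end{align*}
is also not directly \eqref{antonio0}; the clean route is: \eqref{antonio0} gives $\EE\|\theta^\mh(W_t(\V{x})-\id)\chi\|^2\le ct\|\chi\|^2$ for all $\chi\in\sF$, and applying this with $\chi=\theta\,\Psi(\V{x})$ yields $\EE\|\theta^\eh(W_t(\V{x})-\id)\Psi(\V{x})\|^2$ --- wait, $\theta^\mh\theta=\theta^\eh$, so this bounds $\EE\|\theta^\eh(W_t(\V{x})-\id)\Psi(\V{x})\|^2\le ct\|\theta\Psi(\V{x})\|^2$, hence a fortiori $\EE\|(W_t(\V{x})-\id)\Psi(\V{x})\|^2\le ct\|\theta\Psi(\V{x})\|^2$ since $\theta\ge1$. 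Integrating over $\V{x}$ against the assumed bound on $\|\theta\Psi(\cdot)\|$ shows the second term is $O(t^\eh)\to0$.

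Putting the two estimates together gives $\|T_t\Psi-\Psi\|\to0$ as $t\downarrow0$ for $\Psi$ in the dense class, and the $3\varepsilon$-argument with $\|T_t\|\le1$ extends this to all $\Psi\in L^2(\RR^\nu,\sF)$; combined with the semigroup property this yields strong continuity on all of $[0,\infty)$. I expect the main obstacle to be purely bookkeeping: making sure the reduction to the dense class is done cleanly (two successive approximations, one in $\V{x}$ and one in the Fock variable, each permitted by $\|T_t\|\le1$ and \eqref{WWWbd}) and invoking Lem.~\ref{lemanton} with the right vector so that the factor $\theta$ lands on $\Psi(\V{x})$, where it is under control by hypothesis, rather than on $W_t(\V{x})-\id$. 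No new analytic input beyond \eqref{Tbd}, Prop.~\ref{propTsa}, \eqref{TSG}, Lem.~\ref{lemanton}, and $L^2$-continuity of translations should be needed.
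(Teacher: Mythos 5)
Your overall architecture (boundedness and selfadjointness from \eqref{Tbd} and Prop.~\ref{propTsa}, the semigroup law from \eqref{TSG}, reduction of strong continuity to $t\downarrow0$ on a dense class, translation-continuity for the spatial part, Lem.~\ref{lemanton} for the $W_t-\id$ part) is the same as the paper's, but your key step is broken. Your ``clean route'' applies \eqref{antonio0} with $\chi=\theta\Psi(\V{x})$ and then simplifies $\theta^{\mh}(W_t(\V{x})-\id)\,\theta\Psi(\V{x})$ to $\theta^{\eh}(W_t(\V{x})-\id)\Psi(\V{x})$ via ``$\theta^{\mh}\theta=\theta^{\eh}$''. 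That identity requires $W_t(\V{x})$ (equivalently $\Gamma(j_t)^*e^{i\vp(K_t(\V{x}))}\Gamma(j_0)$) to commute with $\theta=1+\Id\Gamma(\omega)$, which is false: the Weyl factor does not commute with the field energy (this non-commutativity is precisely why bounds such as \eqref{rolf} are needed at all). Hence the asserted estimate $\EE\|(W_t(\V{x})-\id)\Psi(\V{x})\|^2\le ct\,\|\theta\Psi(\V{x})\|^2$ is not established, and your treatment of the second term collapses at exactly the point you flagged as uncertain.

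The repair is the duality step the paper actually uses (and which you started, ``take adjoints'', but abandoned): for $\eta\in\fdom(\Id\Gamma(\omega))$,
\begin{align*}
\big\|\EE\big[(W_t(\V{x})^*-\id)\eta\big]\big\|_\sF
&=\sup_{\|\phi\|=1}\big|\EE\big[\SPb{\theta^{\mh}(W_t(\V{x})-\id)\phi}{\theta^{\eh}\eta}\big]\big|
\le\Big(\sup_{\|\phi\|=1}\EE\big[\|\theta^{\mh}(W_t(\V{x})-\id)\phi\|^2\big]\Big)^{\eh}
\big(\EE\|\theta^{\eh}\eta\|^2\big)^{\eh},
\end{align*}
so that Lem.~\ref{lemanton} is applied on the $\phi$-side (arbitrary $\phi\in\sF$, no regularity needed) and only $\theta^{\eh}$ lands on $\eta$; with $\eta=\Psi(\V{x})$ this gives the $O(t^{\eh})$ bound you wanted, on the dense class $\{\Psi:\|\theta^{\eh}\Psi(\cdot)\|_\sF\in L^2(\RR^\nu)\}$ (weaker than your requirement that $\theta\Psi$ be bounded with compact support). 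Alternatively one can salvage your route without duality by interpolating, $\|(W_t-\id)\psi\|^2\le\|\theta^{\mh}(W_t-\id)\psi\|\,\|\theta^{\eh}(W_t-\id)\psi\|$, and bounding the second factor via \eqref{rolf}, at the cost of a worse power of $t$ and the stronger hypothesis $\psi\in\dom(\Id\Gamma(\omega))$. Your explicit splitting of $T_t\Psi-\Psi$ into a translation part and a $(W_t^*-\id)$ part is fine (the paper keeps $\Psi(\V{B}_t^{\V{x}})$ inside and refers to \cite{GMM2017} for the remaining bookkeeping), so once the duality estimate replaces the false commutation step, the proof goes through.
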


\begin{proof}
Boundedness and selfadjointness have already been observed in \eqref{Tbd} and
Prop.~\ref{propTsa}. In view of \eqref{Tbd} it only remains to show that $T_t\Psi\to\Psi$, 
as $t\downarrow0$, for all $\Psi\in\sF$ with $\|\theta^\eh\Psi\|_{\sF}\in L^2(\RR^\nu)$.
(Vectors $\Psi$ of the latter kind are dense in $\sF$.) 
For every such $\Psi$, the convergence $T_t\Psi\to\Psi$ follows, however, from an 
estimation which is virtually identical to the one in the proof of \cite[Lem.~10.11]{GMM2017}.
Let us nevertheless repeat it here to demonstrate where and how Lem.~\ref{lemanton} is used:
\begin{align*}
\|(T_t-\id)\Psi\|^2
&=\int_{\RR^\nu}\sup_{{\phi\in\sF:\|\phi\|=1}}\big|\SPb{\phi}{
\EE\big[(W_{t}(\V{x})^*-\id)\Psi(\V{B}_t^{\V{x}})\big]}\big|^2\Id\V{x}
\\
&=\int_{\RR^\nu}\sup_{{\phi\in\sF:\|\phi\|=1}}\big|\EE\big[\SPb{\theta^\mh
(W_{t}(\V{x})-\id)\phi}{\theta^\eh\Psi(\V{B}_t^{\V{x}})}\big]\big|^2\Id\V{x}
\\
&\le\sup_{{\V{y}\in\RR^\nu}}\sup_{{\phi\in\sF:\|\phi\|=1}}
\EE\big[\|\theta^\mh(W_{s}(\V{y})-\id)\phi\|^2\big]\int_{\RR^\nu}\EE\big[\|\theta^\eh
\Psi(\V{B}_t^{\V{x}})\|^2\big]\Id\V{x},
\end{align*}
where the double supremum of the first expectation in the last line is $\le ct$ 
by Lem.~\ref{lemanton} and the $\Id\V{x}$-integral in the same line is $\le\|\theta^\eh\Psi\|^2$.
\end{proof}

\begin{prop}\label{propFKreg}
Let $\Psi\in L^2(\RR^\nu,\sF)$, $t>0$, and $H:=H_{\RR^\nu}$. Then
\begin{align}\label{FK-reg}
e^{-tH}\Psi&=T_t\Psi.
\end{align}
\end{prop}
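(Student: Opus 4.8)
The plan is to identify $(T_t)_{t\ge0}$ as a strongly continuous semigroup of bounded self\-adjoint operators (already established in Prop.~\ref{prop-SG}) and to show that its self\-adjoint generator coincides with $-H$. Since a strongly continuous self\-adjoint semigroup is determined by its generator, it suffices to show that $H$ and the generator $L$ of $(T_t)$ agree on a core of $\mathfrak{h}=\mathfrak{h}_{\RR^\nu}$; equivalently, that the generating quadratic form of $(T_t)$ agrees with $\mathfrak{h}$ on a form core. A convenient core is $\sD(\RR^\nu,\fdom(\Id\Gamma(1\vee\omega)))$, which is a core for $\mathfrak{h}_{\RR^\nu,\mathrm{D}}=\mathfrak{h}_{\RR^\nu,\mathrm{N}}$ by Thm.~\ref{thmminmaxPF}, and in fact $\sD(\RR^\nu,\dom(\Id\Gamma(1\vee\omega)))$ which was used in the proof of Thm.~\ref{thmstrresconv}.

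The key step is a short-time expansion. Fix $\Psi\in\sD(\RR^\nu,\dom(\Id\Gamma(1\vee\omega)))$ and $\psi$-valued pieces $f\psi$ in it, and apply the It\^{o} formula \eqref{richard1} with $f\in C_b^2$ coming from the spatial part of $\Psi$ and $\psi\in\dom(\Id\Gamma(\omega))$ from the Fock part. Taking the $\sF$-inner product of \eqref{richard1} against a fixed $\phi$, then taking expectations (the martingale term $\sM_t(\V{x})$ drops), integrating against $\V{x}$, dividing by $t$ and letting $t\downarrow0$, one reads off
\begin{align*}
\lim_{t\downarrow0}\frac{1}{t}\SPn{\Phi}{(\id-T_t)\Psi}
&=\SPn{\Phi}{H\Psi},\qquad \Phi,\Psi\in\sD(\RR^\nu,\dom(\Id\Gamma(1\vee\omega))),
\end{align*}
where on the right one recognizes, after an integration by parts in $\V{x}$ turning $\tfrac12(\nabla+i\V{A})^2$ into $-\tfrac12\|(-i\nabla-\V{A})\cdot\|^2$ and collecting the $\V{G}$-cross terms with $\tfrac12\vp(\V{G})^2-\tfrac{i}{2}\vp(\Div\V{G})$ into $\tfrac12\vp(\V{G})^2$ modulo the covariant-derivative cross term, precisely the maximal Pauli-Fierz expression $\mathfrak{h}[\Phi,\Psi]$ acting on such nice $\Psi$. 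The uniform bounds \eqref{rbvp}, \eqref{rbvpvp}, \eqref{WWWbd}, \eqref{rolf} and $\|\theta W_s(\V{x})\psi\|$-estimates guarantee that the $\frac1t\int_0^t(\cdots)\Id s$ terms converge to their $s=0$ values by dominated convergence, since all integrands are uniformly bounded in $(\V{x},\omega,s)$ after pairing with the fixed $\Phi$ and using that $\|\Psi\|_{\dom(\Id\Gamma(\omega))}$ is compactly supported and bounded.

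From this limit, standard semigroup theory finishes the job: the generator $L$ of $(T_t)$ is self\-adjoint (Prop.~\ref{propTsa}, Prop.~\ref{prop-SG}), and the computation above shows $\sD(\RR^\nu,\dom(\Id\Gamma(1\vee\omega)))\subset\dom(\mathfrak{q}_L)$ with $\mathfrak{q}_L=\mathfrak{h}$ there, where $\mathfrak{q}_L$ is the form of $-L$. Actually one should argue at the level of the resolvent or the form to avoid domain subtleties: for $\Psi$ in the core and $\lambda>0$ large, $\SPn{(T_t-\id)\Psi/t}{\Psi}\to-\mathfrak{h}[\Psi]$ gives $\Psi\in\fdom(-L)$ and $\mathfrak{q}_L[\Psi]=\mathfrak{h}[\Psi]$; since $\sD(\RR^\nu,\dom(\Id\Gamma(1\vee\omega)))$ is a form core for $\mathfrak{h}=\mathfrak{h}_{\RR^\nu,\mathrm{D}}$, the closed forms $\mathfrak{q}_L$ and $\mathfrak{h}$ agree, hence $-L=H$ and $T_t=e^{-tH}$.

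The main obstacle is the short-time limit: one must justify interchanging $\lim_{t\downarrow0}\frac1t\int_0^t$ with expectation and $\V{x}$-integration for every term in \eqref{richard1}, and check that the $\V{G}$-dependent quadratic and cross terms reassemble exactly into the Pauli-Fierz form rather than something off by a commutator or a sign. The self\-adjointness argument should be invoked to ensure the generator is the right object; alternatively, one can cite \cite[Lem.~10.12]{GMM2017} or the analogous step there, since for $\V{A}=0$ this is done in \cite{GMM2017} and the extension to $\V{A}\in C_b^1(\RR^\nu,\RR^\nu)$ only adds the manifestly-bounded terms $i(f\V{A})\cdot$ and the $-\tfrac12\V{A}^2 f$, $-\tfrac{i}{2}(\Div\V{A})f$ contributions, which pass to the limit trivially and match the classical magnetic part of $\mathfrak{h}$.
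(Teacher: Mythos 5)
There is a genuine gap at the final identification step. Your form-level conclusion --- ``for core vectors $\Psi$ one has $\Psi\in\fdom(\mathfrak{q}_L)$ with $\mathfrak{q}_L[\Psi]=\mathfrak{h}[\Psi]$; since $\sD(\RR^\nu,\dom(\Id\Gamma(1\vee\omega)))$ is a form core for $\mathfrak{h}$, the closed forms agree, hence $-L=H$'' --- does not follow. Knowing that the closed form $\mathfrak{q}_L$ agrees with $\mathfrak{h}$ on a \emph{form core of $\mathfrak{h}$} only shows that $\mathfrak{q}_L$ is a closed \emph{extension} of $\mathfrak{h}$ (i.e. $\fdom(\mathfrak{h})\subset\fdom(\mathfrak{q}_L)$ with equality of the forms there); it does not force equality of the forms, hence not $-L=H$. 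The Dirichlet versus Neumann Laplacian on a half-line is the standard counterexample: the Neumann form agrees with the Dirichlet form on $C_0^\infty(0,\infty)$, which is a form core for the Dirichlet form, yet the operators differ. The same objection applies to your opening sentence ``it suffices to show that $H$ and $L$ agree on a core of $\mathfrak{h}$'': agreement of the operators on a \emph{form} core is likewise insufficient; what one needs is an \emph{operator} core, i.e. a domain of essential selfadjointness of $H$. Your argument never invokes such a statement, and without it the semigroup generator is pinned down only up to a ``boundary-condition''-type ambiguity.

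This is precisely where the paper's proof differs. It works on $\sD(\RR^\nu,\dom(\Id\Gamma(\omega)))$, proves \emph{strong} $L^2$-convergence $\tfrac1t(T_t(f\phi)-f\phi)\to H(f\phi)$ as $t\downarrow0$ (so $f\phi\in\dom(L)$ and $-L=H$ there), and then concludes $-L=H$ from the essential selfadjointness of $H$ on that domain, quoted from \cite[Thm.~5.5]{Matte2017}; a selfadjoint operator extending $H$ restricted to an operator core must equal $H$. Note also how the short-time limit is handled: instead of a term-by-term dominated-convergence analysis of the drift in \eqref{richard1} (your ``main obstacle''), the identity is rearranged into \eqref{richard2}, where the error term is bounded by $\tfrac1t\int_0^t\|(T_s-\id)H(f\phi)\|^2\,\Id s$, which tends to $0$ simply by the strong continuity of $(T_s)_{s\ge0}$ already established in Prop.~\ref{prop-SG}, uniformly in the pairing vector $\psi$. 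To repair your proof you should either adopt this operator-level route (strong difference quotients plus essential selfadjointness), or, if you insist on forms, additionally show that your core is a form core for $\mathfrak{q}_L$ as well --- which is not provided by the monotone short-time form limit alone.
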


\begin{proof}
We pick $f\in C_0^\infty(\RR^\nu,\RR)$, $\psi\in\dom(\Id\Gamma(\omega))$, 
scalar-multiply \eqref{richard1} with $\phi\in\dom(\Id\Gamma(\omega))$, and use the fact that 
$\SPn{\phi}{\sM(\V{x})}$ is a martingale starting at zero to get
\begin{align}\nonumber
&\SPb{(T_t(f\phi))(\V{x})}{\psi}-\SPn{f(\V{x})\phi}{\psi}+t\SPb{\big(H(f\phi)\big)({\V{x}})}{\psi}
\\\label{richard2}
&=\int_0^t\EE\Big[\SPB{(H(f\phi))(\V{x})-W_s(\V{x})^*(H(f\phi))(\V{B}_s^{\V{x}})}{\psi}\Big]\Id s
=:I_\psi(t,\V{x}),
\end{align}
for all $t\ge0$ and $\V{x}\in\RR^\nu$. Here
\begin{align*}
\frac{1}{t^2}\int_{\RR^\nu}\sup_{{\psi\in\dom(\Id\Gamma(\omega)):\|\psi\|=1}}
|I_{\psi}(t,\V{x})|^2\Id\V{x}
&\le\frac{1}{t}\int_0^t\|(T_s-\id)H(f\phi)\|^2\Id s\xrightarrow{\;\;t\downarrow0\;\;}0,
\end{align*}
because $(T_s)_{s\ge0}$ is strongly continuous. This shows that 
$$
\frac{1}{t}\big(T_t(f\phi)-f\phi\big)\xrightarrow{\;\;t\downarrow0\;\;}H(f\phi)
\quad\text{in $L^2(\RR^\nu,\sF)$.}
$$
Hence, $\sD(\RR^\nu,\dom(\Id\Gamma(\omega)))$ is contained
in the domain of the selfadjoint generator of $(T_t)_{t\ge0}$ and the restriction of this generator to 
$\sD(\RR^\nu,\dom(\Id\Gamma(\omega)))$ is equal to
$H\restr_{\sD(\RR^\nu,\dom(\Id\Gamma(\omega)))}$. Since $H$ is essentially selfadjoint on 
$\sD(\RR^\nu,\dom(\Id\Gamma(\omega)))$ (see, e.g., \cite[Thm.~5.5]{Matte2017}), this
implies that $(T_t)_{t\ge0}$ is generated by $H$.
\end{proof}

%%%%%%%%%%%%%%%%%%%%%%%%%%%%%%%%%%%%%%%%%%%%%%%%%
%%%%%%%%%%%%%%%%%%%%%%%%%%%%%%%%%%%%%%%%%%%%%%%%%
%%%%%%%%%%%%%%%%%%%%%%%%%%%%%%%%%%%%%%%%%%%%%%%%%

\section{Feynman-Kac formulas for singular coefficients}\label{secFKsing}

\noindent
In the first two subsections of this final section we give a precise meaning to all stochastic integrals 
appearing in the formulas for our Feynman-Kac integrands and observe a useful dominated convergence
theorem for a particular class of stochastic integrals. After that we prove our main theorem
for the special choice $\Geb=\RR^\nu$ and continuous, bounded $V$ in Subsect.~\ref{ssecFKsingAG}.
Ultimately, we obtain the theorem in full generality in Subsect.~\ref{ssecFKsingAGGeb},
employing the results of Sect.~\ref{secFKforD} as well as an additional idea from \cite{Simon1978}.
Cor.~\ref{cornegpart} is proved in Subsect.~\ref{ssecFKsingAGGeb}, too.

\subsection{Existence and convergence of path integrals}\label{ssecexpathint}

\noindent
Let $\sK$ be a separable real or complex Hilbert space and
\begin{align*}
\V{f}\in L_\loc^2(\RR^\nu,\sK^\nu).
\end{align*}
More precisely, we assume that a representative of $\V{f}$ has been chosen so that
$\V{f}:\RR^\nu\to\sK^{\nu}$ is Borel measurable. Furthermore, we suppose that 
$\RR\ni s\mapsto J_s\in\LO(\sK,\hat{\sK})$ is a strongly continuous family of isometries from 
$\sK$ into another separable Hilbert space $\hat{\sK}$. Relevant examples are
$j_s:\HP\to\hat{\HP}$ and $\mathrm{id}_{\RR}:\RR\to\RR$. Finally, we fix $t>0$.

\begin{lem}\label{lemexstochint}
There exist Borel zero sets $N\subset\RR^\nu$ and $N'\subset\RR^{2\nu}$ such that
the two stochastic integral processes
\begin{align}\label{primus}
&\bigg(\int_0^\tau J_s\V{f}(\V{B}_s^{\V{x}})\Id\V{B}_s\bigg)_{\tau\in[0,t]},\quad
\bigg(\int_0^\tau J_{t-s}\V{f}(\V{B}_s^{t;\V{x}})\Id\V{B}^{t;\V{x}}_s\bigg)_{\tau\in[0,t]},
\end{align}
are well-defined semimartingales, for all $\V{x}\in\RR^\nu\setminus N$, and 
\begin{align}\label{secundus}
&\bigg(\int_0^\tau J_s\V{f}(\V{b}_s^{t;\V{y},\V{x}})\Id\V{b}^{t;\V{y},\V{x}}_s\bigg)_{\tau\in[0,t]},
\quad
\bigg(\int_0^\tau J_{t-s}\V{f}(\smash{\hat{\V{b}}}_s^{t;\V{x},\V{y}})
\Id\smash{\hat{\V{b}}}^{t;\V{x},\V{y}}_s\bigg)_{\tau\in[0,t]},
\end{align}
are well-defined semimartingales for all $(\V{x},\V{y})\in \RR^{2\nu}\setminus N'$. 
The zero sets $N$ and $N'$ can be chosen independently of the choice of representative
of $\V{f}$. If this has been done, then, for every $\V{x}\in\RR^\nu\setminus N$ and 
$(\V{x},\V{y})\in\RR^{2\nu}\setminus N'$,
the semimartingales in \eqref{primus} and \eqref{secundus}, respectively, 
change only up to indistinguishability, if we pick another representative of $\V{f}$.
\end{lem}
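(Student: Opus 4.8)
The plan is to reduce the assertion to a handful of Lebesgue-a.e.\ finiteness statements for pathwise integrals, one batch for each of the four integrators. Since $\|J_s\psi\|_{\hat{\sK}}=\|\psi\|_\sK$ and $\RR\ni s\mapsto J_s$ is strongly continuous, the isometries $J_s$ (resp.\ $J_{t-s}$) neither enter any integrability condition nor affect measurability; and since $\V{f}$ is Borel while each driving process $\V{B}^{\V{x}}$, $\V{B}^{t;\V{x}}$, $\V{b}^{t;\V{y},\V{x}}$, $\smash{\hat{\V{b}}}^{t;\V{x},\V{y}}$ is continuous and adapted to its own filtration (see Subsect.~\ref{ssecBMBB}), the four integrands are progressively measurable. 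For $\V{B}^{\V{x}}$ (an $(\fF_s)$-martingale integrator) the integral in the first member of \eqref{primus} is defined as soon as $\int_0^t\|\V{f}(\V{B}_s^{\V{x}})\|_{\sK^\nu}^2\,\Id s<\infty$ $\PP$-a.s.; by Subsect.~\ref{ssecBMBB} the time-reversed motion and the two bridges are semimartingales on the \emph{closed} interval $[0,t]$ of the form (initial value)\,$+$\,(Brownian part, quadratic variation $\Id s$)\,$+$\,(finite-variation drift with $\PP$-a.s.\ finite total variation), so for those one needs in addition that $\V{f}$ evaluated along the path be integrable against the total-variation measure of the drift.

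For the basic $L^2$-in-time condition I would fix $R>0$, write $B_R$ for the open ball of radius $R$ about the origin, and argue for integers $M>R$ that on the event $\{\V{B}_s^{\V{x}}\in B_M\text{ for all }s\in[0,t]\}$ one may replace $\V{f}$ by $\V{f}\,1_{B_M}\in L^2(\RR^\nu,\sK^\nu)$, while by Tonelli and the heat-kernel identity
\[
\int_{B_R}\EE\Big[\int_0^t\|(\V{f}1_{B_M})(\V{B}_s^{\V{x}})\|^2\,\Id s\Big]\Id\V{x}
=\int_0^t\!\int_{\RR^\nu}\|(\V{f}1_{B_M})(\V{z})\|^2\Big(\int_{B_R}p_s(\V{x},\V{z})\,\Id\V{x}\Big)\Id\V{z}\,\Id s\le t\,\|\V{f}1_{B_M}\|_{L^2}^2 .
\]
Thus the inner expectation, hence the integral itself, is finite $\PP$-a.s.\ for a.e.\ $\V{x}\in B_R$; taking the union over $M$ (Brownian paths are $\PP$-a.s.\ bounded on $[0,t]$) and then over integers $R$ produces a Borel Lebesgue-null set $N_0\subset\RR^\nu$ outside which $\int_0^t\|\V{f}(\V{B}_s^{\V{x}})\|^2\,\Id s<\infty$ $\PP$-a.s. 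The identical computation with $p_s(\V{x},\V{z})$ replaced by the Brownian-bridge density $p_s(\V{y},\V{z})p_{t-s}(\V{z},\V{x})/p_t(\V{y},\V{x})$, using $p_t(\V{y},\V{x})\,\Id\V{x}\,\Id\V{y}$ as reference measure on $B_R\times B_R$, gives the same conclusion for $\V{b}^{t;\V{y},\V{x}}$; since $p_t>0$ everywhere, the corresponding exceptional set $N_0'\subset\RR^{2\nu}$ is still Lebesgue-null.

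It remains to pass to the time-reversed and bridge integrators and to control the drift. The martingale part is immediate: the substitution $u=t-s$ turns $\int_0^t\|\V{f}(\V{B}_s^{t;\V{x}})\|^2\,\Id s$ into $\int_0^t\|\V{f}(\V{B}_u^{\V{x}})\|^2\,\Id u$, and via $\smash{\hat{\V{b}}}_s^{t;\V{x},\V{y}}=\V{b}^{t;\V{y},\V{x}}_{t-s}$ likewise for the reversed bridge, so this part is finite for $\V{x}\notin N_0$, resp.\ $(\V{x},\V{y})\notin N_0'$. For the drift part, with coefficient $\V{c}_s$ equal in each case to (the terminal point) $-$ (the current state), divided by $t-s$, Cauchy--Schwarz with respect to the measure $|\V{c}_s|\,\Id s$ bounds $\int_0^t\|\V{f}(\V{B}_s^{t;\V{x}})\|\,|\V{c}_s|\,\Id s$ by $\big(\int_0^t|\V{c}_s|\,\Id s\big)^{1/2}\big(\int_0^t\|\V{f}(\V{B}_s^{t;\V{x}})\|^2|\V{c}_s|\,\Id s\big)^{1/2}$; the first factor is $\PP$-a.s.\ finite because it is the total variation of the drift, and the second, again after $u=t-s$, equals $\big(\int_0^t\|\V{f}(\V{B}_u^{\V{x}})\|^2\,|\V{B}_u|\,u^{-1}\,\Id u\big)^{1/2}$ (with the evident bridge analogue), whose square has finite $\int_{B_R}\EE[\,\cdot\,]\Id\V{x}$, resp.\ $\int_{B_R\times B_R}p_t(\V{y},\V{x})\,\EE[\,\cdot\,]\Id\V{x}\,\Id\V{y}$, by the same localize-and-Fubini device: the Gaussian factor integrates to $\EE[|\V{B}_u|]=O(u^{1/2})$, which compensates the singular $u^{-1}$ and leaves the convergent $\int_0^t u^{-1/2}\,\Id u$. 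Adjoining these further null sets to $N_0$ and $N_0'$ yields $N$ and $N'$. This drift estimate at the singular endpoint $s=t$, where one must combine the a.s.\ finiteness of the drift's total variation with the $u^{1/2}$ Gaussian gain, is the one genuinely delicate point; everything else is bookkeeping.

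Finally, for independence of the representative: if $\tilde{\V{f}}=\V{f}$ Lebesgue-a.e., then $\{\V{f}\ne\tilde{\V{f}}\}$ is Lebesgue-null, and since for every $\V{x}$ (resp.\ every $\V{x},\V{y}$) and every $s\in(0,t)$ the law of $\V{B}_s^{\V{x}}$ and of $\V{b}_s^{t;\V{y},\V{x}}$ is absolutely continuous with respect to Lebesgue measure, one gets $\EE\big[\int_0^t 1_{\{\V{f}\ne\tilde{\V{f}}\}}(\V{B}_s^{\V{x}})\,\Id s\big]=0$, and likewise for the reversed and bridge processes. Hence the integrands built from $\V{f}$ and from $\tilde{\V{f}}$ agree $\Id s\otimes\Id\PP$-a.e.; the finiteness conditions above, and therefore the sets $N$ and $N'$, are the very same for every representative, and for any two representatives the corresponding stochastic integrals coincide up to indistinguishability wherever defined.
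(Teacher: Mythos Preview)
Your proof is correct and mirrors the paper's strategy: localize $\V{f}$ to balls, split each integrator into its Brownian and drift parts, control both by integrating the expectations over $\V{x}$ (resp.\ over $(\V{x},\V{y})$ against $p_t(\V{y},\V{x})\,\Id\V{x}\,\Id\V{y}$), and tame the drift singularity via Cauchy--Schwarz. The only cosmetic difference is in that last step: you apply Cauchy--Schwarz pathwise against the measure $|\V{c}_s|\,\Id s$, invoking the $\PP$-a.s.\ finiteness of the drift's total variation (from the semimartingale property on $[0,t]$) and then decoupling $\|\V{f}\|^2$ from $|\V{B}_u|$ by translation invariance after the $\Id\V{x}$-integration, whereas the paper applies Cauchy--Schwarz in $L^2(\PP\otimes\Id s)$ with an explicit $s^{-1/2}$ versus $s^{-3/2}$ weight split (and, for the bridge, quotes a second-moment bound on the drift coefficient) so that the two factors separate before any pathwise input is needed.
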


Notice that the first integral processes in \eqref{primus} and \eqref{secundus}
are defined and semimartingales with respect to the filtration $(\fF_s)_{s\in[0,t]}$,
while the second one in \eqref{primus} is constructed using $(\smash{\check{\fF}}_s)_{s\in[0,t]}$
and the second one in \eqref{secundus} by means of $(\smash{\hat{\fF}}_s)_{s\in[0,t]}$.

\begin{proof}
As we neither specify $(J_s)_{s\in[0,t]}$, $(\fF_s)_{s\in[0,t]}$, nor $\V{B}$,
we may ignore the second process in \eqref{secundus} in this proof.

Taking the strong continuity of $s\mapsto J_s$ into account we first observe that all integrands in
\eqref{primus} and \eqref{secundus} are predictable with respect to the corresponding
filtrations. In view of the stochastic differential equations solved by $\V{B}^{t;\V{x}}$ and
$\V{b}^{t;\V{y},\V{x}}$, we further have 
\begin{align*}
\int_0^\tau J_{t-s}\V{f}(\V{B}_s^{t;\V{x}})\Id\V{B}^{t;\V{x}}_s
&=\int_0^\tau J_{t-s}\V{f}(\V{B}_{t-s}^{\V{x}})\Id\check{\V{B}}_s
-\int_0^\tau J_{t-s}\V{f}(\V{B}_{t-s}^{\V{x}})\cdot\frac{\V{B}_{t-s}}{t-s}\Id s,
\\
\int_0^\tau J_s\V{f}(\V{b}_s^{t;\V{y},\V{x}})\Id\V{b}^{t;\V{y},\V{x}}_s
&=\int_0^\tau J_s\V{f}(\V{b}_s^{t;\V{y},\V{x}})\Id\V{B}_s
+\int_0^\tau J_s\V{f}(\V{b}_s^{t;\V{y},\V{x}})\cdot\frac{\V{x}-\V{b}_s^{t;\V{y},\V{x}}}{t-s}\Id s,
\end{align*}
for all $\tau\in[0,t]$. By the standard criterion for the existence of stochastic integrals along 
Brownian motions (see, e.g., \cite[\textsection4.2]{daPrZa2014}), the $\Id\check{\V{B}}$- and 
$\Id\V{B}$-integrals in the previous two formulas and the $\Id\V{B}$-integral to the 
left in \eqref{primus} are well-defined, if
\begin{align}\label{tertius}
\PP\bigg(\int_0^t\|J_s\V{f}(\V{B}_s^{\V{x}})\|^2_{\hat{\sK}^\nu}\Id s<\infty\bigg)=
\PP\bigg(\int_0^t\|J_{t-s}\V{f}(\V{B}_{t-s}^{\V{x}})\|^2_{\hat{\sK}^\nu}\Id s<\infty\bigg)&=1,
\\\label{quartus}
\PP\bigg(\int_0^t\|J_s\V{f}(\V{b}_s^{t;\V{y},\V{x}})\|^2_{\hat{\sK}^\nu}\Id s<\infty\bigg)&=1.
\end{align}
Furthermore, the pathwise defined Bochner-Lebegsue integrals in the above two formulas
exist and define processes having pathwise finite variation on $[0,t]$, $\PP$-a.s. at least, provided that
\begin{align}\label{quintus}
\PP\bigg(\int_0^t \|J_{s}\V{f}(\V{B}_{s}^{\V{x}})\|_{\hat{\sK}^\nu}
\frac{|\V{B}_{s}|}{s}\Id s<\infty\bigg)&=1,
\\\label{sextus}
\PP\bigg(\int_0^t\|J_s\V{f}(\V{b}_s^{t;\V{y},\V{x}})\|_{\hat{\sK}^\nu}
\frac{|\V{x}-\V{b}_s^{t;\V{y},\V{x}}|}{t-s}\Id s<\infty\bigg)&=1.
\end{align}

To verify \eqref{tertius} through \eqref{sextus}, we may obviously ignore the isometries $J_s$.
Since $\|\V{f}\|^2:=\|\V{f}\|^2_{{\sK}^\nu}$ is locally integrable on $\RR^\nu$, it follows from
\cite[Lem.~2]{FarisSimon1975} that \eqref{tertius} is satisfied for a.e. $\V{x}$. We shall, however,
re-obtain this result in the following arguments which elaborate on the ones in \cite{FarisSimon1975}.

Sets $C_n:=\{|\V{x}|\le n\}$, $n\in\NN$. Then a weighted Cauchy-Schwarz inequality yields
\begin{align*}
\EE\bigg[\int_0^t\|(1_{C_n}\V{f})(\V{B}_{s}^{\V{x}})\|\frac{|\V{B}_{s}|}{s}\Id s\bigg]
&\le\EE\bigg[\int_0^t\frac{\|(1_{C_n}\V{f})(\V{B}_{s}^{\V{x}})\|^2}{s^\eh}\Id s\bigg]^\eh
\EE\bigg[\int_0^t\frac{|\V{B}_s|^2}{s^{\nf{3}{2}}}\Id s\bigg]^\eh,
\end{align*}
where the rightmost expectation is a finite $(t,\nu)$-dependent constant and
\begin{align}\nonumber
&\int_{\RR^\nu}\EE\bigg[\int_0^t
\frac{\|(1_{C_n}\V{f})(\V{B}_{s}^{\V{x}})\|^2}{(s\wedge 1)^\eh}\Id s\bigg]\Id\V{x}
\\\label{jeppe0}
&\le\int_0^t\int_{\RR^\nu}\int_{\RR^\nu} p_{s}(\V{x},\V{y})
\frac{\|(1_{C_n}\V{f})(\V{y})\|^2}{s^\eh}\Id\V{x}\,\Id\V{y}\,\Id s
=2t^\eh\|\V{f}\|_{L^2(C_n,\sK^\nu)}^2.
\end{align}
Therefore, we find Borel zero sets $N_n\subset\RR^{\nu}$ such that
\begin{align}\label{igor0}
\EE\bigg[\int_0^t\|(1_{C_n}\V{f})(\V{B}_{s}^{\V{x}})\|^2\Id s\bigg]+
\EE\bigg[\int_0^t\|(1_{C_n}\V{f})(\V{B}_{s}^{\V{x}})\|\frac{|\V{B}_{s}|}{s}\Id s\bigg]<\infty,
\end{align}
for all $\V{x}\in\RR^{\nu}\setminus N_n$. Since the expectation in the first line of \eqref{jeppe0}
does not change when we pass to another representative of $\V{f}$, we can pick each $N_n$
independently of the choice of representative of $\V{f}$.
We set $N:=\bigcup_{n=1}^\infty N_n$. Since
every path of the continuous process $(\V{B}_{s}^{\V{x}})_{s\in[0,t]}$ must be contained
some $C_n$, it readily follows that \eqref{tertius} and \eqref{quintus} are
satisfied for all $\V{x}\in\RR^\nu\setminus N$.

Next, we define
\begin{align*}
c_{n}:=\sup_{\V{x},\V{y}\in C_n}\frac{1}{p_t(\V{x},\V{y})}=
(2\pi t)^{\nf{\nu}{2}}e^{2n^2/t},\quad n\in\NN,
\end{align*}
and recall that, for all $s\in(0,t)$, the law of $\V{b}_s^{t;\V{y},\V{x}}$ is given by
$$
L_{s;\V{y},\V{x}}(\V{z}):=\frac{p_{s}(\V{y},\V{z})p_{t-s}(\V{z},\V{x})}{p_t(\V{x},\V{y})},
\quad\V{z}\in\RR^\nu.
$$
Applying Fubini's theorem we find 
\begin{align}\nonumber
&\int_{C_n}\int_{C_n}\EE\bigg[\int_0^t
([t-s]\wedge1)^\mh\|(1_{C_n}\V{f})(\V{b}_s^{t;\V{y},\V{x}})\|^2\Id s\bigg]\Id\V{x}\,\Id\V{y}
\\\nonumber
&\le\int_0^t(t-s)^\mh\int_{C_n}\int_{C_n}
\int_{C_n}L_{s;\V{y},\V{x}}(\V{z})\|\V{f}(\V{z})\|^2\Id\V{z}\,\Id\V{x}\,\Id\V{y}\,\Id s
\\\nonumber
&\le c_n\int_0^t(t-s)^\mh\int_{C_n}\int_{\RR^\nu}\int_{\RR^\nu}
p_{s}(\V{y},\V{z})p_{t-s}(\V{z},\V{x})\|\V{f}(\V{z})\|_{\sK^\nu}^2\Id\V{x}\,\Id\V{y}\,\Id\V{z}\,\Id s
\\\label{jeppe1}
&=2c_{n}t^\eh\|\V{f}\|_{L^2(C_n,\sK^{\nu})}^2<\infty,\quad n\in\NN.
\end{align}
Also employing the bound (see, e.g., \cite[Lem.~15.2]{GMM2017}) 
\begin{align*}
\EE\bigg[\Big|\frac{\V{x}-\V{b}_s^{t;\V{y},\V{x}}}{t-s}\Big|^2\bigg]
&\le c_{\nu,t}\frac{1+|\V{x}-\V{y}|}{(t-s)\wedge1},\quad s\in(0,t),
\end{align*}
we thus find zero sets $N_n'\subset C_n\times C_n$ such that
\begin{align*}
&\EE\bigg[\int_0^t\|(1_{C_n}\V{f})(\V{b}_s^{t;\V{y},\V{x}})\|
\frac{|\V{x}-\V{b}_s^{t;\V{y},\V{x}}|}{t-s}\Id s\bigg]
\\
&\le\EE\bigg[\int_0^t\frac{\|(1_{C_n}\V{f})(\V{b}_s^{t;\V{y},\V{x}})\|^2}{([t-s]\wedge1)^\eh}
\Id s\bigg]^\eh\bigg(\int_0^t(t-s)^\eh\EE\bigg[\Big|\frac{\V{x}-\V{b}_s^{t;\V{y},\V{x}}}{t-s}\Big|^2
\bigg]\Id s\bigg)^\eh<\infty,
\end{align*}
for all $(\V{x},\V{y})\in(C_n\times C_n)\setminus N_n'$ and $n\in\NN$. Since $\V{b}^{t;\V{y},\V{x}}$ 
is continuous, we conclude that \eqref{quartus} and \eqref{sextus} are satisfied for all
$(\V{x},\V{y})\in\RR^{2\nu}\setminus N'$ with $N':=\bigcup_{n=1}^\infty N_n'$.
Again we can pick each $N_n'$ independently of the representative of $\V{f}$, since all
representatives lead to the same integrand under the $(\Id\V{x}\,\Id\V{y})$-integration in the
first line of \eqref{jeppe1}.

The last assertion is an easy consequence of It\^{o}'s isometry for the $\Id\V{B}$- and
$\Id\check{\V{B}}$-integrals, the continuity of 
stochastic integral processes, the isometry of $J_s$, and the fact that the laws of $\V{B}_s^{\V{x}}$ 
and $\V{b}_s^{t;\V{y},\V{x}}$ with $s\in(0,t)$ are absolutely continuous with respect to the
Lebesgue measure.
\end{proof}

We continue with a particular case of the dominated convergence theorem for stochastic integrals:

\begin{thm}\label{thmdomconvstoch}
Let $\V{f}^n\in L^2_\loc(\RR^\nu,\sK^\nu)$, $n\in\NN\cup\{\infty\}$, and 
$\alpha\in L_\loc^2(\RR^\nu)$. As a consequence of Lem.~\ref{lemexstochint}
we find Borel zero sets $N\subset\RR^\nu$ and $N'\subset\RR^{2\nu}$ such that
all processes  in \eqref{primus} and \eqref{secundus} are well-defined, for 
$\V{x}\in\RR^\nu\setminus N$ and $(\V{x},\V{y})\in\RR^{2\nu}\setminus N'$, respectively,
when any pair $(J_s,\V{f}^n)$ with $n\in\NN\cup\{\infty\}$ or $(\mathrm{id}_{\RR},\alpha)$ is put in
place of $(J_s,\V{f})$. Now, let 
$(I_\tau^\infty)_{\tau\in[0,t]}$ be any of the processes in \eqref{primus} or \eqref{secundus} 
defined by means of $(J_s,\V{f}^\infty)$ for some permitted value of $\V{x}$ (resp. $(\V{x},\V{y})$)
an let $(I_\tau^n)_{\tau\in[0,t]}$ denote the corresponding process defined by means of $(J_s,\V{f}^n)$.
 Assume that $\|\V{f}^n\|_{\sK^\nu}\le\alpha$ a.e. on $\RR^\nu$, for each $n\in\NN$, and
 $\V{f}^n\to\V{f}^\infty$ a.e. on $\RR^\nu$, as $n\to\infty$. Then
\begin{align}\label{cornelia1}
\sup_{\tau\in[0,t]}\|I_\tau^n-I_\tau^\infty\|_{\hat{\sK}}\xrightarrow{\;\;n\to\infty\;\;}0\quad
\text{in probability.}
\end{align}
\end{thm}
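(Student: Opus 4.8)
The plan is to prove \eqref{cornelia1} for each of the four types of processes separately, but all four cases follow the same template, so I would carry out the argument in detail only for the first process in \eqref{primus} and indicate the (routine) modifications for the others. First I would decompose each of the stochastic integrals into an It\^{o}-type integral along a genuine Brownian motion (either $\V{B}$, $\check{\V{B}}$, or $\hat{\V{B}}$) plus a pathwise Bochner--Lebesgue integral of finite variation, exactly as in the proof of Lem.~\ref{lemexstochint}; for the first process in \eqref{primus} there is no drift term, but for the time-reversed and bridge processes there is, and it is handled the same way. By the isometry property of $J_s$, the $\hat{\sK}$-norm of the It\^{o} part is unchanged if we drop $J_s$, so it suffices to treat $\int_0^\tau(\V{f}^n-\V{f}^\infty)(\V{B}^{\V{x}}_s)\Id\V{B}_s$ and the analogous Bochner integrals.

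Next I would invoke the standard stochastic dominated convergence theorem for It\^{o} integrals along Brownian motion (see, e.g., \cite[\textsection4.2]{daPrZa2014}): if $H^n_s\to 0$ in measure $\Id\PP\otimes\Id s$ on $\Omega\times[0,t]$ and $\|H^n_s\|\le G_s$ with $\int_0^tG_s^2\Id s<\infty$ $\PP$-a.s., then $\sup_{\tau\le t}\|\int_0^\tau H^n_s\Id\V{B}_s\|\to0$ in probability. Here $H^n_s:=(\V{f}^n-\V{f}^\infty)(\V{B}^{\V{x}}_s)$, the dominating process is $G_s:=2\alpha(\V{B}^{\V{x}}_s)$, and for $\V{x}\notin N$ (possibly after enlarging $N$ by the exceptional set attached to $\alpha$ via Lem.~\ref{lemexstochint}) we know $\int_0^t\alpha(\V{B}^{\V{x}}_s)^2\Id s<\infty$ $\PP$-a.s. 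The convergence $\V{f}^n\to\V{f}^\infty$ a.e. on $\RR^\nu$ together with the fact that the law of $\V{B}^{\V{x}}_s$ is absolutely continuous for each $s\in(0,t)$ (with density $p_s(\V{x},\cdot)$) yields, by Fubini, that $\V{f}^n(\V{B}^{\V{x}}_s)\to\V{f}^\infty(\V{B}^{\V{x}}_s)$ for $\Id\PP\otimes\Id s$-a.e.\ $(\vo,s)$, hence in measure. This takes care of the martingale part. For the finite-variation (drift) part appearing in the time-reversed and bridge cases, I would bound its supremum over $\tau\in[0,t]$ by $\int_0^t\|(\V{f}^n-\V{f}^\infty)(\cdot)\|\,r_s\Id s$ where $r_s$ is the relevant factor $|\V{B}_s|/s$ or $|\V{x}-\V{b}^{t;\V{y},\V{x}}_s|/(t-s)$; this is pathwise dominated by $\int_0^t 2\alpha(\cdot)r_s\Id s<\infty$ $\PP$-a.s.\ (again by the estimates of Lem.~\ref{lemexstochint}, on the exceptional sets $N,N'$ already incorporated), and the integrand tends to $0$ $\Id\PP\otimes\Id s$-a.e., so pathwise dominated convergence gives convergence to $0$ $\PP$-a.s., in particular in probability.

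For the Brownian-bridge processes and the time-reversed processes I would additionally note that $\check{\V{B}}$ and $\hat{\V{B}}$ are Brownian motions with respect to the filtrations $(\check{\fF}_s)_{s\in[0,t]}$ and $(\hat{\fF}_s)_{s\in[0,t]}$, so the same stochastic dominated convergence theorem applies verbatim to the It\^{o} parts of the second process in \eqref{primus} and both processes in \eqref{secundus}; the only point to check is the absolute continuity of the laws of $\V{B}^{t;\V{x}}_s=\V{B}^{\V{x}}_{t-s}$, $\V{b}^{t;\V{y},\V{x}}_s$, and $\hat{\V{b}}^{t;\V{x},\V{y}}_s$ for $s\in(0,t)$, which is exactly the absolute continuity stated (and used) in the last paragraph of the proof of Lem.~\ref{lemexstochint}. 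Finally, combining the martingale and drift estimates via the triangle inequality and the fact that a finite sum of sequences converging to $0$ in probability converges to $0$ in probability gives \eqref{cornelia1} in each of the four cases.

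The main obstacle I anticipate is purely bookkeeping rather than conceptual: one must be careful that the single pair of Borel null sets $N,N'$ produced by Lem.~\ref{lemexstochint} simultaneously controls the dominating function $\alpha$ and all the functions $\V{f}^n$, $n\in\NN\cup\{\infty\}$ — but since the lemma produces these null sets from $L^2_\loc$-data alone and a countable union of null sets is null, one simply takes the union of the null sets attached to $\alpha$ and to each $\V{f}^n$, as the statement of Theorem~\ref{thmdomconvstoch} already anticipates. The other mildly delicate point is justifying that ``$\V{f}^n\to\V{f}^\infty$ a.e.\ on $\RR^\nu$'' upgrades to convergence in the measure $\Id\PP\otimes\Id s$ of $\V{f}^n(\V{X}_s)$, which requires the time-marginals of the driving process to be absolutely continuous — true here for $s\in(0,t)$ and irrelevant at the single endpoint $s=0$ (or $s=t$).
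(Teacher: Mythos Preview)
Your argument is correct, but it takes a more hands-on route than the paper's proof. The paper exploits two shortcuts that let it dispatch the result in two sentences. First, instead of invoking absolute continuity of the time-marginals and Fubini to pass from ``$\V{f}^n\to\V{f}^\infty$ a.e.\ on $\RR^\nu$'' to ``$\V{f}^n(\V{X}_s)\to\V{f}^\infty(\V{X}_s)$ $\Id\PP\otimes\Id s$-a.e.'', the paper simply redefines each $\V{f}^n$ on the common Lebesgue null set so that the bounds $\|\V{f}^n\|_{\sK^\nu}\le\alpha$ and the convergence $\V{f}^n\to\V{f}^\infty$ hold \emph{everywhere} on $\RR^\nu$; the last assertion of Lem.~\ref{lemexstochint} guarantees this does not change any of the stochastic integrals up to indistinguishability. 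Second, once pointwise convergence of the integrands on all of $\Omega\times[0,t]$ is in place, the paper appeals directly to the dominated convergence theorem for stochastic integrals along general semimartingales (\cite[Thm.~26.3]{Me1982}), which applies to each of the four processes $\V{B}^{\V{x}}$, $\V{B}^{t;\V{x}}$, $\V{b}^{t;\V{y},\V{x}}$, $\smash{\hat{\V{b}}}^{t;\V{x},\V{y}}$ without any need to split into It\^{o} and finite-variation parts.

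Your decomposition-based approach has the advantage of being self-contained (it only uses the DCT for Brownian It\^{o} integrals and pathwise Lebesgue DCT, both elementary), whereas the paper's approach is cleaner and shorter but relies on a reference for the semimartingale DCT. One small wording issue in your proposal: you say $H^n_s\to0$ ``in measure $\Id\PP\otimes\Id s$'', but in fact your Fubini argument gives the stronger $\Id\PP\otimes\Id s$-a.e.\ convergence, which is what you actually use (and need) when you localize and apply the ordinary DCT inside the It\^{o} isometry.
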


\begin{proof}
By the last assertion in Lem.~\ref{lemexstochint} we do not loose generality by assuming the bounds
$\|J_s\V{f}^n\|_{\hat{\sK}^\nu}=\|\V{f}^n\|_{\sK^\nu}\le\alpha$, $n\in\NN$, and the convergence 
$J_s\V{f}^n\to J_s\V{f}^\infty$ to hold {\em everywhere} on $\RR^\nu$.
If we do so, then \eqref{cornelia1} follows from the first assertion in Lem.~\ref{lemexstochint}
and the dominated convergence theorem for stochastic integrals; see, e.g., \cite[Thm.~26.3]{Me1982} 
and the complementing remarks in the proof of \cite[Thm.~2.13]{GMM2017}.
\end{proof}

We shall apply the preceding theorem in conjunction with the following, presumably well-known 
observation, whose proof we include for the reader's convenience:

\begin{lem}\label{lemdomfct}
Let $\V{f}^n\in L^2_\loc(\RR^\nu,\sK^\nu)$, $n\in\NN\cup\{\infty\}$ and assume that
$\V{f}^n\to\V{f}^\infty$ in $L_\loc^2(\RR^\nu,\sK^\nu)$, as $n\to\infty$.
Then there exist integers $1\le m_1<m_2<\ldots$ and some nonnegative 
$\alpha\in L_\loc^2(\RR^\nu)$ such that
$\|\V{f}^{m_\ell}\|_{\sK^\nu}\le\alpha$ a.e. on $\RR^\nu$, for each $\ell\in\NN$, and
$\V{f}^{m_\ell}\to\V{f}^\infty$ a.e. on $\RR^\nu$, as $\ell\to\infty$.
\end{lem}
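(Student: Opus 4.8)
The plan is to combine the classical Riesz--Fischer subsequence trick with a diagonal argument over an exhaustion of $\RR^\nu$ by balls, the only point requiring care being the construction of a \emph{single} dominating function $\alpha$ that works simultaneously for \emph{all} members of the extracted subsequence. Write $C_k:=\{\V{x}\in\RR^\nu:|\V{x}|\le k\}$ for $k\in\NN$ and $C_0:=\emptyset$, so that $\RR^\nu=\bigcup_{k\ge1}(C_k\setminus C_{k-1})$ is a partition into bounded Borel annuli.

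First I would record the per-ball extraction. Fix $k\in\NN$. Since $\V{f}^n\to\V{f}^\infty$ in the complete Bochner space $L^2(C_k,\sK^\nu)$, from any given subsequence of $(\V{f}^n)_n$ one may, exactly as in the standard proof of completeness of $L^p$, pass to a further subsequence $(\V{f}^{n_j})_j$ with $\|\V{f}^{n_{j+1}}-\V{f}^{n_j}\|_{L^2(C_k,\sK^\nu)}\le 2^{-j}$ for all $j$. Then the nonnegative function
\[
g_k:=\|\V{f}^{n_1}\|_{\sK^\nu}+\sum_{j=1}^\infty\|\V{f}^{n_{j+1}}-\V{f}^{n_j}\|_{\sK^\nu}
\]
lies in $L^2(C_k)$ by the triangle inequality and monotone convergence, the telescoping series $\V{f}^{n_1}+\sum_{j}(\V{f}^{n_{j+1}}-\V{f}^{n_j})$ converges absolutely in $\sK^\nu$ for a.e.\ $\V{x}\in C_k$, its partial sums $\V{f}^{n_{J+1}}$ thus converge a.e.\ on $C_k$, and the a.e.\ limit must coincide with the $L^2$-limit $\V{f}^\infty$. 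Moreover $\|\V{f}^{n_j}\|_{\sK^\nu}\le\|\V{f}^{n_1}\|_{\sK^\nu}+\sum_{i=1}^{j-1}\|\V{f}^{n_{i+1}}-\V{f}^{n_i}\|_{\sK^\nu}\le g_k$ a.e.\ on $C_k$, for \emph{every} $j$.

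Next I would diagonalize: starting from the full sequence, apply the previous step on $C_1$ to get a subsequence with index set $(n^{(1)}_j)_j$ and $g_1\in L^2(C_1)$; apply it again to that subsequence on $C_2$ to get $(n^{(2)}_j)_j\subset(n^{(1)}_j)_j$ and $g_2\in L^2(C_2)$; and so on, arranging as usual that all index sequences are strictly increasing with $n^{(k)}_j\ge n^{(k-1)}_j$. Set $m_\ell:=n^{(\ell)}_\ell$; this is strictly increasing, and for each fixed $k$ the tail $(m_\ell)_{\ell\ge k}$ is a subsequence of $(n^{(k)}_j)_j$. Hence $\V{f}^{m_\ell}\to\V{f}^\infty$ a.e.\ on $C_k$ and $\|\V{f}^{m_\ell}\|_{\sK^\nu}\le g_k$ a.e.\ on $C_k$ for all $\ell\ge k$; since every point of $\RR^\nu$ lies in some $C_k$, a countable union of null sets gives $\V{f}^{m_\ell}\to\V{f}^\infty$ a.e.\ on $\RR^\nu$. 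Finally I would define
\[
\alpha:=g_k+\sum_{\ell=1}^{k-1}\|\V{f}^{m_\ell}\|_{\sK^\nu}\qquad\text{on }C_k\setminus C_{k-1},\ \ k\in\NN,
\]
the sum being empty for $k=1$. Each $\|\V{f}^{m_\ell}\|_{\sK^\nu}\in L^2_\loc(\RR^\nu)$, so $1_{C_k\setminus C_{k-1}}\alpha\in L^2(\RR^\nu)$; and on any ball $C_N$ one has $\alpha=\sum_{k=1}^N 1_{C_k\setminus C_{k-1}}\alpha$, a finite sum, so $\alpha\in L^2_\loc(\RR^\nu)$ and $\alpha\ge0$. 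For domination, fix $\ell$ and take $\V{x}\in C_k\setminus C_{k-1}$ outside the relevant null sets: if $\ell\ge k$ then $\|\V{f}^{m_\ell}(\V{x})\|_{\sK^\nu}\le g_k(\V{x})\le\alpha(\V{x})$, and if $\ell<k$ then $\|\V{f}^{m_\ell}(\V{x})\|_{\sK^\nu}\le\sum_{i=1}^{k-1}\|\V{f}^{m_i}(\V{x})\|_{\sK^\nu}\le\alpha(\V{x})$, so $\|\V{f}^{m_\ell}\|_{\sK^\nu}\le\alpha$ a.e.\ for every $\ell$.

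The argument is entirely routine; the only thing one must not overlook is that the diagonal subsequence $(m_\ell)$ must be dominated on $C_k$ also for its finitely many initial indices $\ell<k$, which is precisely why the correction term $\sum_{\ell<k}\|\V{f}^{m_\ell}\|_{\sK^\nu}$ is added to $g_k$ in the definition of $\alpha$ on the annulus $C_k\setminus C_{k-1}$.
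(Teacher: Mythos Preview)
Your proof is correct and follows essentially the same Riesz--Fischer plus diagonal-over-annuli strategy as the paper. Your added correction term $\sum_{\ell<k}\|\V{f}^{m_\ell}\|_{\sK^\nu}$ in fact patches a point the paper elides: there $\alpha:=\sum_r\alpha_r$ with $\alpha_r$ the Riesz--Fischer dominator on the $r$-th annulus $\sS_r$, which does not literally bound the finitely many early diagonal elements $\V{f}^{m_\ell}$, $\ell<r$, on $\sS_r$, so your version is slightly more careful.
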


\begin{proof}
Let $r\in\NN_0$  and abbreviate $\sS_r:=\{r<|\V{x}|\le r+1\}$, if $r\ge1$, and 
$\sS_0:=\{|\V{x}|\le1\}$.
Then, given any subsequence of $\{\V{f}^{n}\}_{n\in\NN}$, call it 
$\{\V{f}^{n_{r-1,\ell}}\}_{\ell\in\NN}$, we can single out another subsequence, call it 
$\{\V{f}^{n_{r,\ell}}\}_{\ell\in\NN}$, such that $\V{f}^{n_{r,\ell}}\to\V{f}^\infty$ a.e. on $\sS_r$ 
as $\ell\to\infty$. Furthermore, we find a dominating function
$\alpha_r\in L^2(\sS_r)$ such that $\|\V{f}^{n_{r,\ell}}\|_{\sK^\nu}\le\alpha_r$ 
a.e. on $\sS_r$, for each $\ell\in\NN$.
(These assertions, including the existence of the dominating function, follow from the Riesz-Fischer
theorem for $L^2(\sS_r,\sK^\nu)$.)
We employ this remark inductively with $n_{0-1,\ell}:=\ell$ and define
$\alpha:=\sum_{r=0}^\infty\alpha_r$, where every $\alpha_r$ is extended to a function on 
$\RR^\nu$ by setting it equal to $0$ outside $\sS_r$. Then $\alpha\in L^2_\loc(\RR^\nu)$ 
and the diagonal sequence $\{\V{f}^{m_\ell}\}_{\ell\in\NN}:=\{\V{f}^{n_{\ell,\ell}}\}_{\ell\in\NN}$ 
has all desired properties.
\end{proof}

%%%%%%%%%%%%%%%%%%%%%%%%%%%%%%%%%%%%%%%%%%%%%%%%

\subsection{The Feynman-Kac integrand for singular vector potentials}\label{ssecFKIsingAG}

\noindent
Next, we explain how the observations of the preceding subsection can be used to make sense out
of the stochastic integrals in \eqref{defSxintro}, \eqref{defKxintro}, \eqref{defSxyintro}, and 
\eqref{defKxyintro}, although $\V{A}$ and $\V{G}$ satisfying \eqref{introA} and \eqref{introGR},
respectively, might not have locally square-integrable extensions to the whole $\RR^\nu$.

Let $\Geb_n\subsetneq\Geb$ be open, proper subsets exhausting $\Geb$ in the sense that
$\ol{\Geb_n}\subset{\Geb}_{n+1}$ for all $n\in\NN$ and $\bigcup_{n=1}^\infty \Geb_n=\Geb$. 
Then $1_{\Geb_n}\V{A}\in L_\loc^2(\RR^\nu,\RR^\nu)$ and
$1_{\Geb_n}\V{G}\in L_\loc^2(\RR^\nu,\HP^\nu)$, after $\V{A}$ and
$\V{G}$ have been extended to functions on $\RR^\nu$ by setting them equal to zero outside $\Geb$.

Let $t>0$. According to the remarks in Subsect.~\ref{ssecexpathint} we may pick zero sets
$N\subset\RR^\nu$ and $N'\subset\RR^{2\nu}$ such that, for all $\V{x}\in\RR^\nu\setminus N$ and
$(\V{x},\V{y})\in\RR^{2\nu}$, respectively, we obtain linear combinations of well-defined, 
$\PP$-a.s. uniquely determined stochastic integrals,
\begin{align*}
K_t^n(\V{x})&:=\frac{1}{2}\int_0^tj_s(1_{\Geb_n}\V{G})_{\V{B}_s^{\V{x}}}\Id\V{B}_s^{\V{x}}
-\frac{1}{2}\int_0^tj_{t-s}(1_{\Geb_n}\V{G})_{\V{B}^{t;\V{x}}_s}\Id\V{B}^{t;\V{x}}_s,
\\
K_t^n(\V{x},\V{y})&:=\frac{1}{2}\int_0^tj_s(1_{\Geb_n}\V{G})_{\V{b}_s^{t;\V{y},\V{x}}}
\Id\V{b}_s^{t;\V{y},\V{x}}
-\frac{1}{2}\int_0^tj_{t-s}(1_{\Geb_n}\V{G})_{\smash{\hat{\V{b}}}^{t;\V{x},\V{y}}_s}
\Id\smash{\hat{\V{b}}}^{t;\V{x},\V{y}}_s,
\end{align*}
for every $n\in\NN$. 
From the pathwise uniqueness property of stochastic integrals 
(see, e.g., Kor.~1 on page 188 of \cite{HackenbrochThalmaier1994}, whose proof extends to the
Hilbert space-valued setting) we now infer that, for all natural numbers $m,n$ with $m>n$,
\begin{align*}
K_t^n(\V{x})=K_t^m(\V{x}),\;\text{$\PP$-a.s. on}\;
\{\tau_{\Geb_n}(\V{x})>t\}=\big\{\forall s\in[0,t]:\V{B}_s^{\V{x}}\in\Geb_n\big\},
\end{align*}
as well as
\begin{align*}
K_t^n(\V{x},\V{y})=K_t^m(\V{x},\V{y}),\;\text{$\PP$-a.s. on}\;
\{\tau_{\Geb_n}(t;\V{y},\V{x})=\infty\}=\big\{\forall s\in[0,t]:\V{b}_s^{t;\V{y},\V{x}}\in\Geb_n\big\}.
\end{align*}
Modulo changes on $\PP$-zero sets, we thus obtain well-defined random functions 
$K_t(\V{x})$ and $K_t(\V{x},\V{y})$ defined on
\begin{align*}
\{\tau_\Geb(\V{x})>t\}&=\bigcup_{n=1}^\infty\{\tau_{\Geb_n}(\V{x})>t\} \ \ \text{and} \ \ 
\{\tau_\Geb(t;\V{y},\V{x})=\infty\}=\bigcup_{n=1}^\infty\{\tau_{\Geb_n}(t;\V{y},\V{x})=\infty\},
\end{align*}
respectively, by setting
\begin{align}\label{Kxtaun}
K_t(\V{x})&:=K_t^n(\V{x})\;\,\quad\text{on $\{\tau_{\Geb_n}(\V{x})>t\}$,} 
\\\label{Kxytaun}
K_t(\V{x},\V{y})&:=K_t^n(\V{x},\V{y})\;\,\text{on $\{\tau_{\Geb_n}(t;\V{y},\V{x})=\infty\}$,}
\end{align}
for all $n\in\NN$. It is routine to check the independence of these definitions of the choice of the
exhausting sequence of open proper subsets $\{\Geb_n\}_{n\in\NN}$.

This gives a precise meaning to the random functions in \eqref{defKxintro} and \eqref{defKxyintro}. 
Quite obviously, 
they are indeed differences of two stochastic integrals individually defined in the above fashion.

The stochastic integrals in \eqref{defSxintro} and \eqref{defSxyintro} are defined in complete 
analogy; just replace $\HP$ by $\RR$ and ignore the isometries $j_s$ in the above construction.
Furthermore, it is well-known (see \cite[Lem.~2]{FarisSimon1975} and the estimations
\eqref{jeppe0} and \eqref{jeppe1}) that the path integrals of $V$ in \eqref{defSxintro} and 
\eqref{defSxyintro} are well-defined for a.e. $\V{x}$ and a.e. $(\V{x},\V{y})$, respectively.

Altogether, this gives a clear, canonical meaning to all terms in the Feynman-Kac integrands in 
\eqref{defWWintroscal} and \eqref{introdefWtxy}, which in the notation for the Weyl representation 
introduced in Subsect.~\ref{ssecFock} read
\begin{align}\label{fumio3}
W_t(\V{x})^*&=e^{-\ol{S_t(\V{x})}}\Gamma(j_0^*)\sW(-iK_t(\V{x}))\Gamma(j_t),
\\\label{fumio3b}
W_t(\V{x},\V{y})&=e^{-S_t(\V{x},\V{y})}\Gamma(j_t^*)\sW(iK_t(\V{x},\V{y}))\Gamma(j_0).
\end{align}

%%%%%%%%%%%%%%%%%%%%%%%%%%%%%%%%%%%%%%%%%%%%%%

\subsection{Feynman-Kac formulas for singular vector potentials and $\Geb=\RR^\nu$}\label{ssecFKsingAG}

In the next proof we shall work with the formulas \eqref{fumio3} and \eqref{fumio3b}, 
exploiting that 
\begin{align}\label{fumio1}
\HP\ni f\longmapsto&\Gamma(j_s^*)\sW(f)\Gamma(j_t) \ \text{is strongly continuous}. 
\\\label{fumio2}
\|&\Gamma(j_s^*)\sW(f)\Gamma(j_t)\|_{\LO(\sF)}\le1,\quad f\in\HP,
\end{align}
for all $s,t\ge0$. These two statements follow from the remarks in Subsect.~\ref{ssecFock}.

\begin{prop}\label{propFKsingAG}
Let $\V{A}\in L_\loc^2(\RR^\nu,\RR^\nu)$, $\V{G}\in L_\loc^2(\RR^\nu,\HP_{\RR}^\nu)$, and let
$V\ge0$ be in $C_b(\RR^\nu,\RR)$. Pick some $t>0$ and $\Psi\in L^2(\RR^\nu,\sF)$. Then
\begin{align}\label{FKregV}
(e^{-tH}\Psi)(\V{x})&=\EE\big[W_t(\V{x})^*\Psi(\V{B}_t^{\V{x}})\big],\quad
\text{a.e. $\V{x}\in\RR^\nu$.}
\end{align}
Furthermore,
\begin{align}\label{FKregVxy}
(e^{-tH}\Psi)(\V{x})&=\int_{\RR^\nu}p_t(\V{x},\V{y})
\EE\big[W_t(\V{x},\V{y})\Psi(\V{y})\big]\Id\V{y},\quad\text{a.e. $\V{x}\in\RR^\nu$.}
\end{align}
\end{prop}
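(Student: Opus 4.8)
The plan is to obtain Proposition~\ref{propFKsingAG} from the already established Feynman-Kac formulas for regular coefficients, Prop.~\ref{propFKreg}, by a two-step approximation: first smooth out $\V{A}$ and $\V{G}$, then pass to the limit in both the analytic and the probabilistic sides. First I would choose sequences $\V{A}^n\in C_b^1(\RR^\nu,\RR^\nu)$ and $\V{G}^n\in C_b^1(\RR^\nu,\HP_{\RR}^\nu)$ such that $\V{A}^n\to\V{A}$ in $L^2_\loc(\RR^\nu,\RR^\nu)$ and $\V{G}^n\to\V{G}$ in $L^2_\loc(\RR^\nu,\HP^\nu)$, taking care that the mollified $\V{G}^n$ still take values in the real subspace $\HP_{\RR}$ (mollification with a real kernel preserves $\HP_{\RR}$) and, if necessary, that $\V{G}^n$ also satisfies Hyp.~\ref{hypGreg} and Hyp.~\ref{hypGC} so that Prop.~\ref{propFKreg} applies to the operators $H^n$ built from $\V{A}^n$, $\V{G}^n$, $V$. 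For each $n$, Prop.~\ref{propFKreg} combined with \eqref{defTtx}, \eqref{desint} gives \eqref{FKregV} and \eqref{FKregVxy} with $H$, $W_t(\V{x})$, $W_t(\V{x},\V{y})$ replaced by $H^n$, $W_t^n(\V{x})$, $W_t^n(\V{x},\V{y})$.

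On the analytic side, Thm.~\ref{thmstrresconv} yields $H^n\to H$ in the strong resolvent sense under exactly the hypotheses \eqref{tosio0A}, \eqref{tosio0G}, and hence $e^{-tH^n}\to e^{-tH}$ strongly in $L^2(\RR^\nu,\sF)$; passing to a subsequence we get $(e^{-tH^n}\Psi)(\V{x})\to(e^{-tH}\Psi)(\V{x})$ for a.e.\ $\V{x}$. On the probabilistic side, for the first formula I would fix a representative and, using Lem.~\ref{lemexstochint}, a Borel null set $N$ outside which all four stochastic integrals defining $S_t^n(\V{x})$, $K_t^n(\V{x})$, $S_t(\V{x})$, $K_t(\V{x})$ exist. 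By Lem.~\ref{lemdomfct} we may pass to a further subsequence so that $\V{A}^n\to\V{A}$ and $\V{G}^n\to\V{G}$ a.e.\ with a common $L^2_\loc$-dominating function; then Thm.~\ref{thmdomconvstoch} gives, for a.e.\ $\V{x}$, that the stochastic-integral parts of $S_t^n(\V{x})$, $K_t^n(\V{x})$ converge in probability, uniformly on $[0,t]$, to those of $S_t(\V{x})$, $K_t(\V{x})$ (the bounded-continuous $V$-term is identical for all $n$). Along a further (random-variable-dependent) subsequence these convergences hold $\PP$-a.s. Using the representation \eqref{fumio3}, the strong continuity \eqref{fumio1} and the uniform bound \eqref{fumio2}, together with $|e^{-\ol{S_t^n(\V{x})}}|\le1$ (as $\Re S_t^n(\V{x})=\int_0^t V\ge0$) and convergence $S_t^n(\V{x})\to S_t(\V{x})$, I get $W_t^n(\V{x})^*\to W_t(\V{x})^*$ strongly on $\sF$, $\PP$-a.s.; hence $W_t^n(\V{x})^*\Psi(\V{B}_t^{\V{x}})\to W_t(\V{x})^*\Psi(\V{B}_t^{\V{x}})$ $\PP$-a.s., with $\PP$-integrable majorant $\|\Psi(\V{B}_t^{\V{x}})\|$. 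Dominated convergence then gives $\EE[W_t^n(\V{x})^*\Psi(\V{B}_t^{\V{x}})]\to\EE[W_t(\V{x})^*\Psi(\V{B}_t^{\V{x}})]$ for a.e.\ $\V{x}$, proving \eqref{FKregV}.

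For \eqref{FKregVxy} I would run the same argument along the Brownian bridge: Lem.~\ref{lemexstochint} supplies a null set $N'\subset\RR^{2\nu}$ outside which the relevant stochastic integrals exist, Thm.~\ref{thmdomconvstoch} (after the subsequence extraction of Lem.~\ref{lemdomfct}) gives uniform-in-$[0,t]$ convergence in probability of $S_t^n(\V{x},\V{y})$, $K_t^n(\V{x},\V{y})$, hence a.s.\ along a subsequence, and then \eqref{fumio3b}, \eqref{fumio1}, \eqref{fumio2} yield $W_t^n(\V{x},\V{y})\to W_t(\V{x},\V{y})$ strongly, $\PP$-a.s., for a.e.\ $(\V{x},\V{y})$. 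For the $\Id\V{y}$-integration I need a double dominated-convergence step: first $\EE[W_t^n(\V{x},\V{y})\Psi(\V{y})]\to\EE[W_t(\V{x},\V{y})\Psi(\V{y})]$ for a.e.\ $\V{y}$ with majorant $\|\Psi(\V{y})\|$, and then $p_t(\V{x},\cdot)\|\Psi(\cdot)\|\in L^1(\RR^\nu)$ dominates the $\Id\V{y}$-integrand uniformly in $n$, so the right-hand side of \eqref{FKregVxy} for $H^n$ converges to that for $H$ for a.e.\ $\V{x}$. The main obstacle I anticipate is bookkeeping the several nested subsequence extractions so that they are compatible (strong resolvent convergence, a.e.\ convergence of the vector potentials with a common dominating function, a.s.\ convergence of the stochastic integrals from convergence in probability) while keeping the exceptional null sets $N$, $N'$ independent of the representatives; this is routine in spirit but requires care, and the strong continuity input \eqref{fumio1}--\eqref{fumio2} is what makes the passage $K_t^n\to K_t$ inside the Weyl operators legitimate without any extra regularity of $\V{G}$.
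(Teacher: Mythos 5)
Your overall strategy coincides with the paper's proof: mollify $(\V{A},\V{G})$, invoke the regular-coefficient semigroup identity of Prop.~\ref{propFKreg} together with \eqref{defTtx} and \eqref{desint}, use Thm.~\ref{thmstrresconv} to pass to the limit on the analytic side, and Thm.~\ref{thmdomconvstoch} plus Lem.~\ref{lemdomfct} on the stochastic-integral side, finishing with \eqref{fumio3}, \eqref{fumio3b}, \eqref{fumio1}, \eqref{fumio2} and dominated convergence. One point you leave vague is the construction of $\V{G}^n$: mollification in $\V{x}$ alone cannot produce coefficients satisfying Hyp.~\ref{hypGreg}, which requires $\fdom(\omega^{-1}+\omega^2)$-valued continuity and boundedness; one needs an additional spectral cutoff, e.g. $\V{G}^n:=\rho_n*(\chi_n 1_{[\nf{1}{n},n]}(\omega)\V{G})$, and then a separate check (the estimate culminating in \eqref{limitGn2}--\eqref{limitGn3}) that this extra cutoff is compatible with the required $L^2_\loc$ convergence \eqref{limitAnGn}. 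Your ``if necessary'' glosses over this, but it is a routine ingredient.

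The genuine gap is in your treatment of \eqref{FKregVxy}. You propose, for fixed $(\V{x},\V{y})$, to pass from convergence in probability of $S^n_t(\V{x},\V{y})$, $K^n_t(\V{x},\V{y})$ to $\PP$-a.s.\ convergence along a further subsequence and then to conclude $\EE[W^n_t(\V{x},\V{y})\Psi(\V{y})]\to\EE[W_t(\V{x},\V{y})\Psi(\V{y})]$ ``for a.e.\ $\V{y}$'' before integrating in $\V{y}$. As stated this fails: the a.s.-convergent subsequence depends on $\V{y}$ (and $\V{x}$), whereas the dominated-convergence step in $\V{y}$ needs convergence of the expectations along one fixed subsequence --- the one already chosen so that $e^{-tH^{n_\ell}}\Psi\to e^{-tH}\Psi$ a.e. --- simultaneously for a.e.\ $\V{y}$. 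This is precisely the obstruction the paper points out at the start of its Step~5 (``we cannot just mimic the argument of the preceding one''). The repair is to avoid the second, $\V{y}$-dependent extraction altogether: since $(z,f)\mapsto e^{-z}\Gamma(j_t^*)\sW(if)\Gamma(j_0)\Psi(\V{y})$ is continuous on $\CC\times\hat{\HP}$, the convergence in probability of $(S^{m_\ell}_t(\V{x},\V{y}),K^{m_\ell}_t(\V{x},\V{y}))$ already gives convergence in probability of $W^{m_\ell}_t(\V{x},\V{y})\Psi(\V{y})$ in $\sF$, and the uniform bound $\|W^{m_\ell}_t(\V{x},\V{y})\Psi(\V{y})\|_\sF\le\|\Psi(\V{y})\|_\sF$ provides uniform integrability, so Vitali's theorem (equivalently, the subsequence-of-subsequences characterization of convergence of the deterministic sequence of expectations) yields $\EE[W^{m_\ell}_t(\V{x},\V{y})\Psi(\V{y})]\to\EE[W_t(\V{x},\V{y})\Psi(\V{y})]$ along the fixed subsequence for every $(\V{x},\V{y})\notin N'$; one then fixes $\V{x}$ with null cut $N'_{\V{x}}$ and dominates the $\Id\V{y}$-integrand by $p_t(\V{x},\V{y})\|\Psi(\V{y})\|_\sF$. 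With this correction (which is standard), your argument coincides with the paper's.
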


In \eqref{FKregV} and \eqref{FKregVxy} we again drop the subscript $\RR^\nu$ in the notation for 
Pauli-Fierz operators on $\RR^\nu$; recall the remarks preceding Thm.~\ref{thmstrresconv}. The
completely real subspace $\HP_\RR\subset\HP$ has the properties mentioned below \eqref{introGR}.

\begin{proof}
{\em Step~1: Construction of approximating vector potentials.}
Define the standard mollifier $\rho_n$ as in \eqref{klausi1} and \eqref{klausi2}. Pick some
$\chi\in C^\infty(\RR,\RR)$ with $0\le\chi\le1$, $\chi=1$ on $(-\infty,1]$ and $\chi=0$ on 
$[2,\infty)$. For every $n\in\NN$, define $\chi_n(\V{x}):=\chi(|\V{x}|/n)$, $\V{x}\in\RR^\nu$, and
\begin{align*}
\V{A}^n&:=\rho_n*(\chi_n\V{A}),\quad\V{G}^n:=\rho_n*(\chi_n1_{[\nf{1}{n},n]}(\omega)\V{G}).
\end{align*}
 Then $\V{A}^n\in C_0^\infty(\RR^\nu,\RR^\nu)$ and every 
 $\V{G}^n\in C_0^\infty(\RR^\nu,\HP^\nu)$ with  $n\in\NN$ fulfills Hyp.~\ref{hypGreg} and 
 Hyp.~\ref{hypGC}. Defining $W_t^n(\V{x})$, $W^n_t(\V{x},\V{y})$, and $H^n$ by putting the pair
$(\V{A}^n,\V{G}^n)$ in place of $(\V{A},\V{G})$ in the construction of $W_t(\V{x})$, 
$W_t(\V{x},\V{y})$, and $H$, respectively, we therefore have the following Feynman-Kac formulas
for every $n\in\NN$,
\begin{align}\label{FKAnGn}
(e^{-tH^n}\Psi)(\V{x})&=\EE\big[W_t^n(\V{x})^*\Psi(\V{B}_t^{\V{x}})\big],
\quad\text{a.e. $\V{x}\in\RR^\nu$,}
\end{align}
as well as
\begin{align}\label{FKAnGnb}
(e^{-tH^n}\Psi)(\V{x})&=\int_{\RR^\nu}p_t(\V{x},\V{y})\EE\big[W_t^n(\V{x},\V{y})\Psi(\V{y})\big]
\Id\V{y},\quad\text{a.e. $\V{x}\in\RR^\nu$.}
\end{align}
Furthermore, the following limit relations hold as $n\to\infty$,
\begin{align}\label{limitAnGn}
&\V{A}^n\to\V{A} \ \text{in $L^2_\loc(\RR^\nu,\RR^\nu)$,} \qquad
\V{G}^n\to\V{G} \ \text{in $L^2_\loc(\RR^\nu,\HP^\nu)$.}
\end{align}
Here the first one is standard, while the second one follows from the following remarks:

Let $C\subset\RR^\nu$ be compact and choose $n_0\in\NN$ so large that
$$
C_1:=\big\{\V{x}\in\RR^\nu\big|\,\dist(\V{x},C)\le1\big\}\subset\{\chi_{n_0}=1\}.
$$
For every $\V{x}$, we have 
$\|(1-1_{[\nf{1}{n},n]}(\omega))\V{G}_{\V{x}}\|_{\HP^\nu}\to0$, $n\to\infty$,
by dominated convergence. Therefore, the generalized Minkowski inequality and the dominated
convergence theorem further imply
\begin{align}\nonumber
&\bigg(\int_C\Big\|\int_{\RR^\nu}\rho_n(\V{x}-\V{y})\chi_n(\V{y})(1-1_{[\nf{1}{n},n]}(\omega))
\V{G}_{\V{y}}\Id\V{y}\Big\|_{\HP^\nu}^2\Id\V{x}\bigg)^\eh
\\\nonumber
&\le\bigg(\int_C\Big(\int_{\RR^\nu}\rho_n(\V{z})\|(1-1_{[\nf{1}{n},n]}(\omega))\V{G}_{\V{x}-\V{z}}
\|_{\HP^\nu}\Id\V{z}\Big)^2\Id\V{x}\bigg)^\eh
\\\nonumber
&\le\int_{\RR^\nu}\rho_n(\V{z})\bigg(\int_{C-\V{z}}
\|(1-1_{[\nf{1}{n},n]}(\omega))\V{G}_{\V{x}}\|_{\HP^\nu}^2\Id\V{x}\bigg)^\eh\Id\V{z}
\\\label{limitGn2}
&\le\bigg(\int_{C_1}
\|(1-1_{[\nf{1}{n},n]}(\omega))\V{G}_{\V{x}}\|_{\HP^\nu}^2\Id\V{x}\bigg)^\eh
\xrightarrow{\;\;n\to\infty\;\;}0,
\end{align}
Here we also used that every $\rho_n$ is supported in the unit ball, which permitted to drop
$\chi_n$ for all $n\ge n_0$ in the first step and to replace $C-\V{z}$ by the larger set $C_1$
in the last step. Likewise,
\begin{align}\label{limitGn3}
\int_C\|\rho_n*(\chi_n\V{G})-\V{G}\|_{\HP^\nu}^2\Id\V{x}
&=\int_C\|\rho_n*\V{G}-\V{G}\|_{\HP^\nu}^2\Id\V{x}\xrightarrow{\;\;n\to\infty\;\;}0,
\end{align}
where the equality holds for $n\ge n_0$ and the convergence is a special case of \eqref{klausi3}.
Now the second relation in \eqref{limitAnGn} follows from \eqref{limitGn2} and \eqref{limitGn3}.

{\em Step~2. Convergence of the left hand side of the Feynman-Kac formulas.}
Fix $t>0$ in the rest of this proof.
Thm.~\ref{thmstrresconv} shows that $H^n\to H$, $n\to\infty$, in strong resolvent sense, which implies 
the strong convergence $e^{-tH^n}\to e^{-tH}$. Therefore, there exist integers
$1\le n_1<n_2<\ldots$ such that
\begin{align}\label{FKAnGn2}
(e^{-tH^{n_\ell}}\Psi)(\V{x})&\xrightarrow{\;\;\ell\to\infty\;\;}
(e^{-tH}\Psi)(\V{x}),\quad\text{for a.e. $\V{x}\in\RR^\nu$.}
\end{align}

{\em Step~3. Application of the dominated convergence theorem.}
Define $K^n_t(\V{x})$ and $K^n_t(\V{x},\V{y})$ by putting $\V{G}^n$ in place of $\V{G}$ in 
the formulas for $K_t(\V{x})$ and $K_t(\V{x},\V{y})$, respectively. Likewise, define
$S^n_t(\V{x})$ and $S^n_t(\V{x},\V{y})$ by substituting $\V{A}^n$ for $\V{A}$ in 
the expressions for $S_t(\V{x})$ and $S_t(\V{x},\V{y})$, respectively. According to
Lem.~\ref{lemexstochint} we may in fact fix zero sets $N\subset\RR^\nu$ and $N'\subset\RR^{2\nu}$ 
in the rest of this proof such that these random functions are well-defined, for all
$\V{x}\in\RR^\nu\setminus N$ and $(\V{x},\V{y})\in\RR^{2\nu}\setminus N'$, respectively.
Combining \eqref{limitAnGn},
Thm.~\ref{thmdomconvstoch}, and Lem.~\ref{lemdomfct} we now find a subsequence
$\{m_\ell\}_{\ell\in\NN}$ of the index sequence $\{n_\ell\}_{\ell\in\NN}$ such that, 
as $\ell\to\infty$,
\begin{align}\label{eskild2d}
&\text{$S_t^{m_\ell}(\V{x})\to S_t(\V{x})$ and $K_t^{m_\ell}(\V{x})\to K_t(\V{x})$  in probability},
\\\label{eskild2e}
&\text{$S_t^{m_\ell}(\V{x},\V{y})\to S_t(\V{x},\V{y})$
and $K_t^{m_\ell}(\V{x},\V{y})\to K_t(\V{x},\V{y})$  in probability,}
\end{align}
for all $\V{x}\in\RR^\nu\setminus N$ in the first line and all
$(\V{x},\V{y})\in\RR^{2\nu}\setminus N'$ in the second.

{\em Step~4. Convergence along a subsequence of the right hand side of \eqref{FKAnGn}.}
We fix $\V{x}\in\RR^\nu\setminus N$. Recall that convergence in probability implies 
$\PP$-a.s. convergence along a subsequence. By virtue of \eqref{eskild2d}
we therefore find a subsequence $(i_\ell)_{\ell\in\NN}$ of the index sequence 
$(m_\ell)_{\ell\in\NN}$ such that, $\PP$-a.s.,
\begin{align*}
K_t^{i_\ell}(\V{x})\xrightarrow{\;\;\ell\to\infty\;\;}K_t(\V{x}) \ \text{in $\hat{\HP}$,}\quad\text{and}
\quad S_t^{i_\ell}(\V{x})&\xrightarrow{\;\;\ell\to\infty\;\;}S_t(\V{x}) \ \text{in $\CC$.}
\end{align*}
Picking a representative $\Psi(\cdot)$ of $\Psi\in L^2(\RR^\nu,\sF)$ and taking
\eqref{fumio3}, \eqref{fumio1}, and \eqref{fumio2} into account, we deduce that
$W_t^{i_\ell}(\V{x})^*\Psi(\V{B}_t^{\V{x}})\to
W_t(\V{x})^*\Psi(\V{B}_t^{\V{x}})$, as $\ell\to\infty$, $\PP$-a.s.,
with the pointwise domination
$\|W_t^{i_\ell}(\V{x})^*\Psi(\V{B}_t^{\V{x}})\|_{\sF}
\le \|\Psi(\V{B}_t^{\V{x}})\|_{\sF}\in L^1(\PP)$, for every $\ell\in\NN$.
Thus, by dominated convergence,
\begin{align}\label{FKregVconv}
\EE\big[W_t^{i_\ell}(\V{x})^*\Psi(\V{B}_t^{\V{x}})\big]
\xrightarrow{\;\;\ell\to\infty\;\;}\EE\big[W_t(\V{x})^*\Psi(\V{B}_t^{\V{x}})\big]
\quad\text{in $\sF$,}
\end{align}
where $\V{x}\in\RR^\nu\setminus N$ was arbitrary. Combining \eqref{FKAnGn}, \eqref{FKAnGn2}, 
and \eqref{FKregVconv} we arrive at the Feynman-Kac formula \eqref{FKregV}.

{\em Step~5. Convergence along a subsequence of the right hand side of \eqref{FKAnGnb}.}
In this step we cannot just mimic the argument of the preceding
one because any choice of subsequence along which the convergences in \eqref{eskild2e} 
 hold $\PP$-a.s. would not only depend on $\V{x}$ but also on $\V{y}$.

Let us fix a representative $\Psi(\cdot):\RR^\nu\to\sF$ of $\Psi\in L^2(\RR^\nu,\sF)$ in the rest of this
proof. We also fix $(\V{x},\V{y})\in\RR^{2\nu}\setminus N'$ for the moment. 
Then the following  map is continuous,
\begin{align*}
\CC\times\hat{\HP}\ni(z,f)\longmapsto F(z,f):=p_{t}(\V{x},\V{y})
e^{-z}\Gamma(j_t^*)\sW(if)\Gamma(j_0)\Psi(\V{y})\in\sF.
\end{align*}
Since $W^n_{t}(\V{x},\V{y})\Psi(\V{y})=F(S^n_t(\V{x},\V{y}),K^n_t(\V{x},\V{y}))$, 
for every $n\in\NN$, and similarly for the limit processes, this permits to get
\begin{align*}
W^{m_\ell}_{t}(\V{x},\V{y})\Psi(\V{y})\xrightarrow{\;\;\ell\to\infty\;\;}
W_{t}(\V{x},\V{y})\Psi(\V{y})\quad\text{in $\sF$ and in probability,}
\end{align*}
employing \eqref{eskild2e}. We further have the uniform bounds
\begin{align*}
\|W^{m_\ell}_{t}(\V{x},\V{y})\Psi(\V{y})\|_{\sF}&\le \|\Psi(\V{y})\|_{\sF},\quad\ell\in\NN,
\end{align*}
showing in particular that the sequence $\{W^{m_\ell}_{t}(\V{x},\V{y})\Psi(\V{y})\}_{\ell\in\NN}$ in 
$L^1(\Omega,\sF;\PP)$ is uniformly integrable. Hence, by Vitali's theorem in its vector-valued version,
\begin{align}\label{eskild5}
\EE\big[W^{m_\ell}_{t}(\V{x},\V{y})\Psi(\V{y})\big]\xrightarrow{\;\;\ell\to\infty\;\;}
\EE\big[W_{t}(\V{x},\V{y})\Psi(\V{y})\big].
\end{align}

Now, for a.e. $\V{x}$, the cut $N_{\V{x}}':=\{\V{y}\in\RR^\nu|(\V{x},\V{y})\in N'\}$ has
Lebesgue measure zero. Let us fix some $\V{x}\in\RR^\nu$ for which this is the case in the rest of the
proof. Then \eqref{eskild5} holds for a.e. $\V{y}$ and we have the dominations
\begin{align*}
p_{t}(\V{x},\V{y})\big\|\EE\big[W^{m_\ell}_{t}(\V{x},\V{y})\Psi(\V{y})\big]\big\|_\sF
&\le p_{t}(\V{x},\V{y})\|\Psi(\V{y})\|_\sF,\quad\text{a.e. $\V{y}\in\RR^\nu$,}\,\ell\in\NN,
\end{align*}
where $\V{y}\mapsto p_{t}(\V{x},\V{y})\|\Psi(\V{y})\|_{\sF}$ is in $L^1(\RR^\nu)$. The
dominated convergence theorem, \eqref{FKAnGnb}, and \eqref{FKAnGn2} now imply the 
desired formula \eqref{FKregVxy}. 
\end{proof}

%%%%%%%%%%%%%%%%%%%%%%%%%%%%%%%%%%%%%%%%%%%%%%%%%

\subsection{Feynman-Kac formulas for singular coefficients and general open $\Geb$}\label{ssecFKsingAGGeb}

\noindent
We are now in a position to prove our main theorem. We start by applying the results of
Sect.~\ref{secFKforD}, which is possible when $\V{A}$ and $\V{G}$ have
locally square-integrable extension to the whole $\RR^\nu$.

\begin{prop}\label{propFKsing}
Let $\V{A}\in L_\loc^2(\RR^\nu,\RR^\nu)$, $\V{G}\in L_\loc^2(\RR^\nu,\HP^\nu)$,
and $V\in C_b(\RR^\nu,\RR)$. Pick some $t>0$ and $\Psi\in L^2(\RR^\nu,\sF)$. Then,
for a.e. $\V{x}\in\RR^\nu$,
\begin{align}\nonumber
(e^{-tH_\Geb}\Psi)(\V{x})&=\EE\big[1_{\{\tau_\Geb(\V{x})>t\}}
W_t(\V{x})^*\Psi(\V{B}_t^{\V{x}})\big]
\\\label{FKregV2}
&=\int_{\Geb}p_t(\V{x},\V{y})\EE\big[1_{\{\tau_\Geb(t;\V{y},\V{x})=\infty\}}
W_t(\V{x},\V{y})\Psi(\V{y})\big]\Id\V{y}.
\end{align}
\end{prop}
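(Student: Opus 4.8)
The plan is to deduce \eqref{FKregV2} from the abstract transfer principle of Section~\ref{secFKforD}, fed with the $\RR^\nu$-Feynman--Kac formulas of Prop.~\ref{propFKsingAG}. Concretely, I would invoke Lem.~\ref{lemFKGeb} and Lem.~\ref{lemFKGebb} with the dictionary $\sK:=\sF$, with $(\Omega,\fF,\PP)$ the probability space underlying \eqref{FPS}, with the identifications $\HR=L^2(\RR^\nu,\sF)$ and $\HR_\Geb=1_\Geb L^2(\RR^\nu,\sF)$, and with $Q_{\RR^\nu}:=H=H_{\RR^\nu}$ and $Q_\Geb:=H_\Geb$ (the latter regarded as an operator in $\HR_\Geb$ via \eqref{wiesel1700}). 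For the first identity I would take $\V{X}^{\V{x}}:=\V{B}^{\V{x}}$ and $M_t(\V{x}):=W_t(\V{x})^*$; for the second, $\V{X}^{\V{y},\V{x}}:=\V{b}^{t;\V{y},\V{x}}$ and $M_t(\V{x},\V{y}):=p_t(\V{x},\V{y})\,W_t(\V{x},\V{y})$, so that the deterministic heat-kernel weight of \eqref{FKregVxy} is absorbed into the operator-valued map.

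With this choice, Hyp.~\ref{hypabD}(a)--(b) is \emph{exactly} Prop.~\ref{propDFKhyp} (with $\mathfrak{q}_{\RR^\nu}=\mathfrak{h}_{\RR^\nu}$ and $\mathfrak{q}_\Geb=\mathfrak{h}_{\Geb,\mathrm{D}}'$), so nothing remains to be done there. For Hyp.~\ref{hypDFKgen} I would check: strong measurability of $W_t(\V{x})^*\colon\Omega\to\LO(\sF)$ (recorded in the remark after Thm.~\ref{thmFKintro1}); that the paths of $\V{B}^{\V{x}}$ are $\PP$-a.s.\ H\"older continuous of order $1/3$ on $[0,t]$ (Brownian paths being a.s.\ $\alpha$-H\"older for every $\alpha<1/2$); the domination \eqref{domMt}, which follows from $\|W_t(\V{x})\|\le1$ (see \eqref{WWWbd}) together with $\EE[\|\Psi(\V{B}_t^{\V{x}})\|_\sF]=\int p_t(\V{x},\V{y})\|\Psi(\V{y})\|_\sF\Id\V{y}\le\|p_t(\V{x},\cdot)\|_2\|\Psi\|_2<\infty$ for every $\V{x}$; and the Feynman--Kac identity \eqref{FKabsRRd} for $H+v$ with bounded continuous $v$, which is Prop.~\ref{propFKsingAG} applied to the bounded continuous potential $V+v+c$ (with $c$ a constant large enough to make it nonnegative), followed by undoing the shift by $c$ — this is precisely what produces the factor $e^{-\int_0^tv(\V{B}_s^{\V{x}})\Id s}$. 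Lem.~\ref{lemFKGeb}, applied to $1_\Geb\Psi\in\HR_\Geb$, then yields the first equality in \eqref{FKregV2}; passing from $\Psi$ to $1_\Geb\Psi$ changes neither side, because on $\{\tau_\Geb(\V{x})>t\}$ the path $\V{B}^{\V{x}}$ stays in the open set $\Geb$, so $\Psi(\V{B}_t^{\V{x}})=(1_\Geb\Psi)(\V{B}_t^{\V{x}})$ there, and on the complementary event the indicator kills the integrand.

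For the second equality I would run the analogous verification of Hyp.~\ref{hypDFKbridgegen}: the paths of the bridge $\V{b}^{t;\V{y},\V{x}}$ are $\PP$-a.s.\ $1/3$-H\"older (visible from $\V{b}_s^{t;\V{y},\V{x}}=(1-s/t)\V{y}+(s/t)\V{x}+\V{B}_s-(s/t)\V{B}_t$); the joint measurability of $(s,\V{y},\vo)\mapsto\V{b}_s^{t;\V{y},\V{x}}(\vo)$ and the strong measurability of $(\V{y},\vo)\mapsto W_t(\V{x},\V{y})(\vo)$ are provided by the pathwise construction of the bridge and the description of $W_t(\V{x},\V{y})$ recalled after Thm.~\ref{thmFKintro1}; the bound $\|W_t(\V{x},\V{y})\|\le1$ reduces \eqref{domMtb2} to $\int p_t(\V{x},\V{y})\|\Psi(\V{y})\|_\sF\Id\V{y}<\infty$; and \eqref{FKabsRRdxy} for $H+v$ is once more the $(\V{x},\V{y})$-formula \eqref{FKregVxy} of Prop.~\ref{propFKsingAG} after the same constant shift. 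Lem.~\ref{lemFKGebb} then produces $(e^{-tH_\Geb}\Psi)(\V{x})=\int_\Geb\EE[1_{\{\tau_\Geb(\V{y},\V{x})=\infty\}}p_t(\V{x},\V{y})W_t(\V{x},\V{y})\Psi(\V{y})]\Id\V{y}$, which, since $\tau_\Geb(\V{y},\V{x})=\tau_\Geb(t;\V{y},\V{x})$, is exactly the second line of \eqref{FKregV2}.

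Most of this is routine bookkeeping: the substantive inputs are already in hand (Prop.~\ref{propFKsingAG}, Prop.~\ref{propDFKhyp}, Lemmas~\ref{lemFKGeb} and~\ref{lemFKGebb}), and the rest is translating them into the abstract framework. I expect the main obstacle to be the measurability part of Hyp.~\ref{hypDFKbridgegen} — exhibiting a version of the Brownian bridge that depends measurably on its starting point and a jointly measurable version of $W_t(\V{x},\cdot)$ — which is handled by the exhaustion/truncation construction of Subsect.~\ref{ssecFKIsingAG} (where $K_t(\V{x})$ and $K_t(\V{x},\V{y})$ are glued together from the well-behaved processes $K_t^n(\V{x})$, $K_t^n(\V{x},\V{y})$) together with the corresponding facts established in \cite{GMM2017}.
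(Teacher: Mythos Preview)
Your proposal is correct and follows essentially the same approach as the paper: the paper's own proof simply says that Hyp.~\ref{hypabD} is supplied by Prop.~\ref{propDFKhyp} and that Hyp.~\ref{hypDFKgen} and Hyp.~\ref{hypDFKbridgegen} follow from Prop.~\ref{propFKsingAG}, after which Lem.~\ref{lemFKGeb} and Lem.~\ref{lemFKGebb} apply. Your write-up merely unpacks these verifications in more detail (the constant shift to keep $V+v\ge0$, the H\"older regularity, the domination bounds, the measurability considerations), all of which are straightforward and handled as you describe.
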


\begin{proof}
It suffices to check the postulates in Sect.~\ref{secFKforD} when we set
$\mathfrak{q}_{\RR^\nu}:=\mathfrak{h}_{\RR^\nu}$ and
$\mathfrak{q}_{\Geb}:=\mathfrak{h}_{\Geb,\mathrm{D}}'$. That these two forms fulfill
Hyp.~\ref{hypabD} has, however, already been observed in Prop.~\ref{propDFKhyp}.
The validity of Hyp.~\ref{hypDFKgen} and Hyp.~\ref{hypDFKbridgegen} follows from
Prop.~\ref{propFKsingAG}.
\end{proof}

\begin{proof}[Proof of Thm.~\ref{thmFKintro1}.]
First, we additionally assume that $V\in C_b(\RR^\nu,\RR)$.
To infer our main theorem from Prop.~\ref{propFKsing} in this case,
we apply an idea from \cite[\textsection4]{Simon1978}:
Set $\Geb_n:=\{\V{x}\in\Geb|\dist(\V{x},\Geb^c)>1/n\}$ and
$\V{A}^n:=1_{\Geb_n}\V{A}$, $\V{G}^n:=1_{\Geb_n}\V{G}$, for all $n\in\NN$.
Extend $\V{A}^n$ and $\V{G}^n$ to functions on $\RR^\nu$ by setting then equal to zero
on $\Geb^c$. Then $\V{A}^n\in L^2_\loc(\RR^\nu,\RR^\nu)$ and 
$\V{G}\in L^2_\loc(\RR^\nu,\HP^\nu)$. Let $\mathfrak{h}_n$ denote the minimal
Pauli-Fierz form on $\Geb_n$ defined by means of $\V{A}^n$ and $\V{G}^n$. Then it is clear that
$\dom(\mathfrak{h}_n)\subset\dom(\mathfrak{h}_{m})\subset
\mathfrak{h}_{\Geb,\mathrm{D}}$, $m>n$, and
$\mathfrak{h}_{\Geb,\mathrm{D}}[\Phi]=\lim_{n<m\to\infty}\mathfrak{h}_m[\Phi]$,
for all $\Phi\in\dom(\mathfrak{h}_n)$ and $n\in\NN$, where functions on $\Geb_n$ are
tacitly extended by $0$ to larger subsets of $\Geb$.
Thus, \cite[Thm.~4.1 and Thm.~4.2]{Simon1978} imply that
\begin{align}\label{maiken99}
e^{-tH_{\Geb_n}}(\Psi\restr_{\Geb_n})\xrightarrow{\;\;n\to\infty\;\;}e^{-tH_{\Geb}}\Psi\quad
\text{in $L^2(\Geb,\sF)$,}
\end{align}
where the $e^{-tH_{\Geb_n}}(\Psi\restr_{\Geb_n})$ are interpreted as functions on $\Geb$
that equal $0$ on $\Geb\setminus\Geb_n$. Along a suitable subsequence, the convergence
in \eqref{maiken99} also holds pointwise a.e. on $\Geb$. 
On the other hand, Prop.~\ref{propFKsing} in conjunction with \eqref{Kxtaun},
\eqref{Kxytaun}, and analogous relations for the complex actions $S_t(\V{x})$ and
$S_t(\V{x},\V{y})$ implies
\begin{align}\nonumber
(e^{-tH_{\Geb_n}}(\Psi\restr_{\Geb_n}))(\V{x})
&=\EE\big[1_{\{\tau_{\Geb_n}(\V{x})>t\}}W_t(\V{x})^*\Psi(\V{B}_t^{\V{x}})\big]
\\\label{maiken100}
&=\int_{\Geb_n}p_t(\V{x},\V{y})
\EE\big[1_{\{\tau_{\Geb_n}(t;\V{x},\V{y})=\infty\}}W_t(\V{x},\V{y})\Psi(\V{y})\big]\Id\V{y},
\end{align}
for a.e. $\V{x}\in\Geb_n$ and all $n\in\NN$.
Here $1_{\{\tau_{\Geb_n}(\V{x})>t\}}\to 1_{\{\tau_{\Geb}(\V{x})>t\}}$ and
$1_{\{\tau_{\Geb_n}(t;\V{x},\V{y})=\infty\}}\to1_{\{\tau_{\Geb}(t;\V{x},\V{y})=\infty\}}$
pointwise on $\Omega$, as $n\to\infty$, for all $\V{x},\V{y}\in\Geb$. Hence, by dominated
convergence, the expectation in the first line of \eqref{maiken100} and the member in the second
line of \eqref{maiken100} converge to the corresponding terms in \eqref{FKregV2}, 
for every $\V{x}\in\Geb$.

For merely measurable, bounded $V\ge0$, all statements of Thm.~\ref{thmFKintro1} now follow from a 
standard mollifying procedure and, after that, they can be extended to locally integrable $V\ge0$
by approximation with $V\wedge n$, $n\in\NN$; see, e.g., the proof of
\cite[Thm.~11.3]{GMM2017} for more details.
\end{proof}

\begin{proof}[Proof of Cor.~\ref{cornegpart}]
The first assertion in the corollary follows from the discussion in \cite[\textsection4]{Matte2017}. 
To prove the second one,
we start by observing that Thm.~\ref{thmFKintro1} and Rem.~\ref{remkernig} extend trivially to locally 
integrable potentials that are bounded from below and in particular to every $V-U\wedge n$ 
with $n\in\NN$. Furthermore, a monotone convergence theorem
for quadratic forms \cite[Thm.~VIII.3.11]{Kato}
implies that $H_{\Geb}^{U\wedge n}\to H_{\Geb}^{U}$ in strong resolvent
sense, as $n\to\infty$. Let $t>0$ and $\Psi\in L^2(\Geb,\sF)$. Then
$$
e^{-tH_{\Geb}^{U\wedge n_\ell}}\Psi\xrightarrow{\;\;\ell\to\infty\;\;} 
e^{-tH_{\Geb}^{U}}\Psi\quad\text{ a.e. on $\Geb$,}
$$
for a suitable subsequence $\{n_\ell\}_{\ell\in\NN}$ of $\{n\}_{n\in\NN}$. In view of 
\eqref{WWWbd} and the dominated convergence theorem it therefore remains to verify the inequality in
\begin{align*}%\label{tatjana0}
\int_\Geb p_t(\V{x},\V{y})\EE\Big[1_{\{\tau_\Geb(\V{y},\V{x})=\infty\}}
e^{\int_0^t U(\V{b}_s^{t;\V{y},\V{x}})\Id s}\Big]\|\Psi(\V{y})\|_{\sF}\Id\V{y}
&=\EE\Big[e^{\int_0^t U(\V{B}_s^{\V{x}})\Id s}\eta_{\Geb,t}^{\V{x}}\Big]<\infty,
\end{align*}
for a.e. $\V{x}\in\Geb$, where 
$\eta_{\Geb,t}^{\V{x}}:=1_{\{\tau_\Geb(\V{x})>t\}}\|\Psi(\V{B}_t^{\V{x}})\|_{\sF}$.
(The equality in the previous relation is true for every $\V{x}\in\Geb$
and follows upon substituting $U$ by $U\wedge n$ and applying the monotone convergence theorem.)

We now argue similarly as in \cite{Voigt1986}:
Denoting the Dirichlet-Laplacian on $\Geb$ by $\Delta_\Geb$, we know 
\cite[Thm.~VIII.3.11]{Kato} that the operators $-\Delta_\Geb/2-U\wedge n$
have a limit in the strong resolvent sense. Denoting this limit by $L$,
we find a subsequence $\{m_\ell\}_{\ell\in\NN}$ of the index
sequence $\{n_\ell\}_{\ell\in\NN}$ such that, for a.e. $\V{x}\in\Geb$,
\begin{align}\label{tatjana1}
(e^{-t(-\Delta_\Geb/2-U\wedge m_\ell)}\|\Psi\|_{\sF})(\V{x})
\xrightarrow{\;\;\ell\to\infty\;\;}(e^{-tL}\|\Psi\|_{\sF})(\V{x})<\infty.
\end{align}
The monotone convergence theorem now implies that
\begin{align}\label{tatjana2}
\EE\Big[e^{\int_0^t U(\V{B}_s^{\V{x}})\Id s}\eta_{\Geb,t}^{\V{x}}\Big]
&=\lim_{\ell\to\infty}
\EE\Big[e^{\int_0^t(U\wedge m_\ell)(\V{B}_s^{\V{x}})\Id s}\eta_{\Geb,t}^{\V{x}}\Big]<\infty,
\end{align}
 for a.e. $\V{x}\in\Geb$, since, again for a.e. $\V{x}\in\Geb$, the expectations to the right in 
\eqref{tatjana2} are equal to the vectors to the left in \eqref{tatjana1}.
\end{proof}

%%%%%%%%%%%%%%%%%%%%%%%%%%%%%%%%%%%%%%%%%%%%%%%%%
%%%%%%%%%%%%%%%%%%%%%%%%%%%%%%%%%%%%%%%%%%%%%%%%%
%%%%%%%%%%%%%%%%%%%%%%%%%%%%%%%%%%%%%%%%%%%%%%%%%

\appendix

\section{A useful rule for vector-valued conditional expectations}\label{appusefulrule}

\noindent
The following lemma should be well-known, also in the infinite dimensional setting, but we could not
find an appropriate reference. Therefore, we prove it for the convenience of the reader.

\begin{lem}\label{lemusefulrule}
Let $(X,\fA,P)$ be a probability space, $\fC$ be a sub-$\sigma$-algebra of $\fA$, 
and $Y$ and $Z$ separable Banach spaces equipped with their Borel $\sigma$-algebras 
$\fB(Y)$ and $\fB(Z)$, respectively.
Let $f:X\times Y\to Z$ be a function such that $f(\cdot,y):X\to Z$ is 
Bochner-Lebesgue integrable (in particular
$\fA$-$\fB(Z)$-measurable) and $\fC$-independent for every $y\in Y$, and such that
$f(x,\cdot):Y\to Z$ is continuous for every $x\in X$. 
(This implies that $f$ is $(\fA\otimes\fB(Y))$-$\fB(Z)$-measurable.) Define
\begin{align*}
\phi(y):=E[f(\cdot,y)]:=\int_X f(x,y)\Id P(x),\quad y\in Y.
\end{align*}
(Then $\phi:Y\to Z$ is in any case Borel measurable.) Finally, let $g:X\to Y$ be
$\fC$-$\fB(Y)$-measurable and assume that 
$$
X\ni x\longmapsto h(x):=f(x,g(x))\in Z
$$ 
is Bochner-Lebesgue integrable. Then
\begin{align*}
E^{\fC}[h]&=\phi(g),\quad \text{$P$-a.s.,}
\end{align*}
where $E^{\fC}$ denotes a version of the $Z$-valued conditional expectation with respect to $P$ given 
the hypothesis $\fC$.
\end{lem}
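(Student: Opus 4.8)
The plan is to reduce the claim to the well-known computation rule for conditional expectations of products in the scalar case by pairing with functionals, and then to upgrade from the scalar conclusion to the $Z$-valued one using separability of $Z$. First I would record the measurability remarks parenthesized in the statement: since $f(\cdot,y)$ is $\fA$-$\fB(Z)$-measurable for each $y$ and $f(x,\cdot)$ is continuous for each $x$, and $Y$ is separable, $f$ is jointly $(\fA\otimes\fB(Y))$-$\fB(Z)$-measurable by a standard Carath\'eodory-type argument (approximate $y$ by a countable dense set and use continuity); consequently $x\mapsto f(x,g(x))$ is $\fA$-$\fB(Z)$-measurable, and $\phi$ is Borel since it is a pointwise limit of integrals over the approximating simple functions in $y$. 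All of this is routine and I would state it briefly.

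The core step: fix a functional $z^*\in Z^*$. Then $(x,y)\mapsto\langle z^*,f(x,y)\rangle$ is scalar-valued, $\fC$-independent in $x$ for each fixed $y$ (because $f(\cdot,y)$ is $\fC$-independent and $z^*$ is continuous hence Borel), continuous in $y$, jointly measurable, and $x\mapsto\langle z^*,h(x)\rangle$ is integrable. The scalar ``useful rule'' — which itself is proved by first treating $f$ of the form $1_A(x)\,\psi(y)$ with $A\in\fA$ $\fC$-independent and $\psi$ bounded continuous, then extending by a monotone class / dominated convergence argument in both arguments — gives
\begin{align*}
E^{\fC}[\langle z^*,h\rangle] &= \langle z^*,\phi\rangle(g)\quad P\text{-a.s.}
\end{align*}
On the other hand, by the defining property of the vector-valued conditional expectation, $\langle z^*,E^{\fC}[h]\rangle = E^{\fC}[\langle z^*,h\rangle]$ $P$-a.s., and clearly $\langle z^*,\phi(g)\rangle = \langle z^*,\phi\rangle(g)$. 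Hence $\langle z^*,E^{\fC}[h]\rangle = \langle z^*,\phi(g)\rangle$ $P$-a.s., for each fixed $z^*$.

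Finally I would remove the $z^*$-dependence of the exceptional null set. Since $Z$ is separable, $Z^*$ admits a countable subset $\{z_k^*\}_{k\in\NN}$ that is norming (separates points and $\|z\| = \sup_k|\langle z_k^*,z\rangle|$ for all $z\in Z$), e.g. by Hahn-Banach applied to a countable dense subset of $Z$. Applying the previous paragraph to each $z_k^*$ and taking the union of the countably many null sets, we get a single $P$-null set off which $\langle z_k^*,E^{\fC}[h]\rangle = \langle z_k^*,\phi(g)\rangle$ for all $k$ simultaneously; by the norming property this forces $E^{\fC}[h] = \phi(g)$ $P$-a.s. The main obstacle is not any single step but making the scalar ``useful rule'' itself airtight — in particular the approximation of a general $\fC$-independent, separately continuous $f$ by finite sums $\sum_j 1_{A_j}(x)\psi_j(y)$ and justifying passage to the limit under both the expectation and the conditional expectation; here one uses dominated convergence (the hypothesis that $h$ is integrable supplies the majorant after a truncation argument) together with the continuity in $y$ and separability of $Y$. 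Everything else is bookkeeping.
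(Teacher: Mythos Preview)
Your approach is correct but genuinely different from the paper's. The paper never passes to the scalar case: it first truncates $f$ to $f_n:=\chi(\|f\|_Z/n)f$ so that $f$ may be assumed bounded, then approximates the $\fC$-measurable map $g$ by simple functions $g_n=\sum_i 1_{A_i^n}y_i^n$ with $A_i^n\in\fC$, computes directly
\[
E^{\fC}\big[f(\cdot,g_n(\cdot))\big]=\sum_i 1_{A_i^n}\,E[f(\cdot,y_i^n)]=\phi(g_n)
\]
using the elementary product rule for a $\fC$-measurable indicator times a $\fC$-independent factor, and passes to the limit (boundedness of $f$ gives a uniform majorant for the conditional expectations and makes $\phi$ continuous, so $\phi(g_n)\to\phi(g)$). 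Your route instead pairs with $z^*\in Z^*$, invokes the scalar freezing lemma, and upgrades via a countable norming subset of $Z^*$; this is more modular and shorter once the scalar result is granted, while the paper's argument is fully self-contained in the vector-valued setting.

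One caveat: your sketched justification of the scalar rule---approximating $f$ by finite sums $\sum_j 1_{A_j}(x)\psi_j(y)$ with $A_j$ independent of $\fC$---is not the standard argument and is awkward to make rigorous, since the hypothesis ``$f(\cdot,y)$ is $\fC$-independent for each $y$'' does not obviously produce such a tensor decomposition. The usual proof (and the paper's) approximates the $\fC$-measurable variable $g$, not $f$. Since the scalar lemma is standard you can simply cite it, so this is not a genuine gap in your overall strategy; but if you intend to spell it out, approximate $g$ rather than $f$.
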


\begin{proof}
Let $\chi\in C(\RR,\RR)$ be such that $0\le\chi\le1$ on $\RR$,
$\chi=1$ on $(-\infty,1]$, and $\chi=0$ on $[2,\infty)$.
Put $f_n:=\chi(\|f\|_Z/n)f$, $n\in\NN$, so that each $f_n$ enjoys all properties of 
$f$ mentioned in the statement as well, and so that $\|f_n\|_Z\le2n$
and $f_n\to f$, $n\to\infty$, pointwise on $X\times Y$. Set $h_n(x):=f_n(x,g(x))$, $x\in X$, and
$\phi_n(y):=E[f_n(\cdot,y)]$, $y\in Y$. Then we have the dominations
$\|f_n(\cdot,y)\|\le\|f(\cdot,y)\|$ and $\|h_n\|_Z\le\|h\|_Z$ on $X$, for every $n\in\NN$.
Hence, the dominated convergence theorem for the Bochner-Lebesgue integral implies
that $\phi_n(y)\to\phi(y)$, $n\to\infty$, for every $y\in Y$, while
the dominated convergence theorem for $Z$-valued conditional expectations implies that 
$\EE^{\fC}[h_n]\to\EE^{\fC}[h]$, $n\to\infty$, $P$-a.s. Therefore, it only remains to show that
$\EE^{\fC}[h_n]=\phi_n(g)$ holds $P$-a.s., for each fixed $n\in\NN$. Or, put differently,
we may assume without loss of generality that $f$ is bounded, which we shall do in the rest of this proof.

There exists a sequence of $\fC$-$\fB(Y)$-measurable functions $(g_n)_{n\in\NN}$ such that
the image $g_n(X)$ is finite, for every $n\in\NN$, and such that $g_n\to g$, $n\to\infty$,
pointwise on $X$. Let $n\in\NN$. Then $g_n$ has a standard
representation $g_n=\sum_{i=1}^{k_n}1_{A^n_i}y_i^n$ for suitable $k_n\in\NN$,
$y_1^n,\ldots,y_{k_n}^n\in Y$, and suitable {\em disjoint} $A_1^n,\ldots,A_{k_n}^n\in\fC$
such that $A_1^n\cup\dots\cup A_{k_n}^n=X$.  Then
\begin{align*}
\tilde{h}_n(x):=f(x,g_n(x))=\sum_{i=1}^{k_n}1_{A_i^n}(x)f(x,y_i^n),\quad x\in X.
\end{align*}
Since $A_i^n\in\fC$ and since $f(\cdot,y_i^n):X\to Z$ is $\fC$-independent, 
well-known computation rules for the conditional expectation now imply
\begin{align*}
E^{\fC}[\tilde{h}_n]=\sum_{i=1}^{k_n}1_{A_i^n}\phi(y_i^n)=\phi(g_n),\quad\text{$P$-a.s.,}
\end{align*}
where we  again used that $y_i^n=g_n$ on $A_i^n$ in the second equality. Furthermore,
by our present assumptions on $f$, the functions $\tilde{h}_n$, ${n\in\NN}$, are uniformly bounded,
and thanks to the continuity of $y\mapsto f(x,y)$ for each $x$, we know that 
$\tilde{h}_n\to h$, $n\to\infty$, pointwise on $X$. Hence,
$E^{\fC}[\tilde{h}_n]\to E^{\fC}[h]$, $n\to\infty$, $P$-a.s., by 
the dominated convergence theorem for $Z$-valued conditional expectations.
Finally, we observe that $\phi:Y\to Z$ is continuous by the boundedness of $f$ and
dominated convergence. Thus, $\phi(g_n)\to\phi(g)$, $n\to\infty$, pointwise on $X$.
\end{proof}

\begin{ex}\label{exusefulrule}
Let $(X,\fA,P)$ and $\fC$ be as in Lem.~\ref{lemusefulrule}.
Let $Z$ be a separable Hilbert space, $A(\V{y}):X\to\LO(Z)$ be measurable and
separably valued, for every $\V{y}\in\RR^\nu$, such that 
$\RR^\nu\ni\V{y}\mapsto(A(\V{y}))(x)$ is strongly continuous for all $x\in X$.
Suppose that $A(\V{y})$ is $\fC$-independent and
let $g:=(\V{q},\Psi):X\to\RR^\nu\times Z$ be $\fC$-measurable
with $\int_X\|\Psi\|_{Z}\Id P<\infty$. Finally, assume there exists $C>0$ such that
$\|A(y)\|\le C$, $\PP$-a.s., for every $y\in\RR^\nu$.
Then we can apply Lem.~\ref{lemusefulrule} to the function $f$ given by
$f(x,\V{y},\psi):=A(y)\psi$, $(\V{y},\psi)=\RR^\nu\times Z(=:Y)$ with
$\phi(\V{y},\psi)=E[f(\cdot,\V{y},\psi)]=E[A(\V{y})]\psi$. That is,
\begin{align*}
E^{\fC}[A(\V{q})\Psi]&=E[A(\V{y})]\big|_{\V{y}=\V{q}}\Psi.
\end{align*}
\end{ex}

%%%%%%%%%%%%%%%%%%%%%%%%%%%%%%%%%%%%%%%%%%%%%%%%%
%%%%%%%%%%%%%%%%%%%%%%%%%%%%%%%%%%%%%%%%%%%%%%%%%
%%%%%%%%%%%%%%%%%%%%%%%%%%%%%%%%%%%%%%%%%%%%%%%%%

\subsection*{Acknowledgement}
The author is grateful for support by the Independent Research Fund Denmark via the 
project grant ``Mathematical Aspects of Ultraviolet Renormalization'' (8021-00242B).

%%%%%%%%%%%%%%%%%%%%%%%%%%%%%%%%%%%%%%%%%%%%%%%%%
%%%%%%%%%%%%%%%%%%%%%%%%%%%%%%%%%%%%%%%%%%%%%%%%%
%%%%%%%%%%%%%%%%%%%%%%%%%%%%%%%%%%%%%%%%%%%%%%%%%

\end{document}